\documentclass[aps,prx,reprint,10pt,twocolumn,superscriptaddress,floatfix,nofootinbib,showpacs,longbibliography]{revtex4-2}
\usepackage{makecell}
\usepackage{array}
\newcolumntype{C}[1]{>{\centering\arraybackslash}m{#1}}
\usepackage{booktabs,adjustbox}
\usepackage{dsfont}
\usepackage{amsmath, enumerate}
\usepackage[normalem]{ulem}
\usepackage[utf8]{inputenc}  
\usepackage[T1]{fontenc}     
\usepackage[table]{xcolor}
\usepackage{color,soul}
\usepackage[british]{babel}  
\usepackage{libertine}
\usepackage[libertine]{newtxmath}
\usepackage[scaled=0.86]{berasans}  
\usepackage[colorlinks=true, citecolor=blue, urlcolor=blue]{hyperref}  
\usepackage{graphicx} 
\usepackage[babel]{microtype}  

\usepackage{amsmath,amssymb,amsthm,amsfonts,mathrsfs,bbm} 

\usepackage{xspace}  
\usepackage{pgf,tikz}
\usepackage{xcolor}
\usepackage{multirow}
\usepackage{array}
\usepackage{bigstrut}
\usepackage{braket}
\usepackage{color}
\usepackage{natbib}
\usepackage{multirow}
\usepackage{enumerate}
\usepackage{mathtools}
\usepackage{blkarray}
\usepackage[caption = false]{subfig}
\usepackage{xcolor,colortbl}
\usepackage{color}
\usepackage{rotating}
\usepackage{tikz}
\usepackage{tabularx}
\usepackage{comment}
\usepackage[shortlabels]{enumitem}
\usepackage{cancel}
\usepackage{mathrsfs}   
\usepackage{accents}

\newcommand{\Tr}{\operatorname{Tr}}

\newcommand{\be}{\begin{equation}}
\newcommand{\ee}{\end{equation}}
\newcommand{\ba}{\begin{eqnarray}}
\newcommand{\ea}{\end{eqnarray}}
\newcommand{\ketbra}[2]{|#1\rangle \langle #2|}
\newcommand{\tr}{\operatorname{Tr}}

\newcommand{\Lcal}{\mathcal{L}}
\newcommand{\Pcal}{\mathcal{P}}
\newcommand{\Hcal}{\mathcal{H}}

\definecolor{axiomatic}{HTML}{FCF3E5}
\definecolor{microscopic}{HTML}{e9eef1}
\definecolor{operational}{HTML}{fceeee}
\definecolor{theorem}{HTML}{E1F1EC}

\usepackage{tcolorbox}
\tcbuselibrary{breakable}
\newtcolorbox[]{defbox}[1][]{breakable,colback=RoyalBlue!10!white,colframe=
white,title=#1}

\newtcolorbox[]{standardbox}[1][]{breakable,colback=RoyalBlue!5!white,colframe=white,#1}
\newtcolorbox[]{axiomaticbox}[1][]{breakable,colback=axiomatic,colframe=white, #1}
\newtcolorbox[]{microscopicbox}[1][]{colback=microscopic,colframe=white,#1}
\newtcolorbox[]{operationalbox}[1][]{breakable,colback=operational,colframe=white,#1}
\newtcolorbox[]{theorembox}[1][]{breakable,colback=theorem,colframe=white,#1}

\newtheorem{theorem}{Theorem}
\newtheorem{prop}{Proposition}
\newtheorem{corollary}{Corollary}
\newtheorem{definition}{Definition}

\newtheorem{example}{Example}
\newtheorem{remark}{Remark}
\newtheorem{lemma}{Lemma}

\newtheorem{principle}{Principle}

\newcommand{\ident}{\mathbbm{1}}

\def\>{\rangle}
\def\<{\langle}

\def\bbra#1{\mathinner{\langle \! \langle{#1}|}}

\def\kket#1{\mathinner{|{#1}\rangle \! \rangle}}

\def\kketbra#1#2{\kket{#1\vphantom{#2}}\!\bbra{#2\vphantom{#1}}}

\newcommand{\Ical}{\mathcal{I}}

\usepackage{centernot}
\usepackage{subfig}

\begin{document}

\title{Higher-order quantum thermodynamics: equilibrium and causal structure}
\author{Simon Milz}
\affiliation{Institute of Photonics and Quantum Sciences, School of Engineering and Physical Sciences, Heriot-Watt University, Edinburgh EH14 4AS, United Kingdom}
\affiliation{School of Physics, Trinity College Dublin, Dublin 2, Ireland}

\author{Kyrylo Simonov}
\affiliation{Fakult\"{a}t f\"{u}r Mathematik, Universit\"{a}t Wien, Oskar-Morgenstern-Platz 1, 1090 Vienna, Austria}

\author{Zolt\'an Zimbor\'as}
\affiliation{Department of Physics, University of Helsinki, Yliopistonkatu 4 00100 Helsinki, Finland}
\affiliation{HUN-REN Wigner Research Centre for Physics, Budapest, Hungary}
\affiliation{Algorithmiq Ltd, Kanavakatu 3C 00160 Helsinki, Finland}

\author{Tamal Guha}
\affiliation{Mathematical Institute, Slovak Academy of Sciences, \v{S}tef\'anikova 49, 814 73 Bratislava, Slovakia}

\author{Saptarshi Roy}
\affiliation{Center for Quantum Engineering, Research, and Education, TCG CREST, Bidhan Nagar, Kolkata - 700091, India}

\author{Giulio Chiribella}
\affiliation{QICI Quantum Information and Computation Initiative, School of Computing and Data Science, The University of Hong Kong, Pokfulam Road, Hong Kong}
\affiliation{Quantum Group, Department of Computer Science, University of Oxford, Wolfson Building, Parks Road, Oxford, OX1 3QD, United Kingdom}
\affiliation{Perimeter Institute for Theoretical Physics, 31 Caroline Street North, Waterloo, Ontario, Canada}

\date{\today}
\begin{abstract}
Quantum thermodynamics is traditionally formulated as a theory of equilibrium states and state transformations. Recent advances in higher-order quantum transformations, which describe physical scenarios beyond states and channels and provide a systematic framework for causal order, raise the question of how equilibrium should be defined and preserved in this more general setting. Starting from the Gibbs state as the unique equilibrium state, we identify \textit{equilibrium preservation} as its natural higher-order extension. We show that, while this principle can in general give rise to distinct classes of transformations, all such distinctions disappear when equilibrium preservation is required \textit{completely}, namely under arbitrary ancillary extensions. Remarkably, every transformation satisfying this condition is causally ordered, making causal order an emergent consequence of thermodynamic equilibrium. We establish this result for arbitrary higher-order maps and use the resulting framework to introduce free-energy-like quantities for quantum channels. Our findings reveal a fundamental connection between thermodynamic equilibrium and causal structure, providing a foundation for a fully fledged theory of higher-order quantum thermodynamics.
\end{abstract}

\maketitle

\section{Introduction}
Thermodynamics is among the most successful physical theories, applying across scales from microscopic systems to macroscopic phenomena~\cite{haddad_thermodynamics_2017, binder_thermodynamics_2018, beretta_universal_2026}. Originally formulated as a phenomenological theory of heat, work, and temperature, it has since been extended to the quantum regime, revealing deep connections between information and thermodynamics~\cite{szilard_uber_1929, landauer_1961, bennett_thermodynamics_1982, kim_quantum_2011, parrondo_thermodynamics_2015, vinjanampathy_quantum_2016, goold_role_2016}, establishing fundamental constraints on quantum thermodynamic processes~\cite{horodecki_fundamental_2013, aberg_truly_2013, reeb_improved_2014, brandao_second_2015, lostaglio_description_2015, bera_generalized_2017, lostaglio_introductory_2019, luo_trinity_2026}, and providing microscopic explanations for the thermodynamic arrow of time~\cite{jarzynski_nonequilibrium_1997, crooks_entropy_1999, PhysRevE.75.050102, aberg_fully_2018, seif_machine_2021}. 

A cornerstone of thermodynamics is the notion of thermodynamical equilibrium. For quantum \textit{states}, thermodynamical equilibrium is uniquely characterized by the Gibbs state, which emerges from diverse physical and information-theoretic principles, including weak system-bath coupling, constrained entropy maximization, and complete passivity~\cite{BreuerOpenQS, landau_statistical_2011, jaynes_information_1957, pusz_passive_1978, lenard_thermodynamical_1978}. Deviations from equilibrium constitute a thermodynamic resource~\cite{brandao_resource_2013, brandao_second_2015, aberg_truly_2013, lostaglio_description_2015, lostaglio_quantum_2015, gour_role_2022}, whose availability affects the accuracy of information processing at the fundamental level \cite{chiribella2022nonequilibrium}. For quantum \textit{processes}, the situation is more complex. Several inequivalent classes of thermodynamically relevant processes have been proposed, including thermal operations~\cite{Janzing2000, horodecki_fundamental_2013, brandao_resource_2013, brandao_second_2015}, constrained thermal operations~\cite{lostaglio_elementary_2018, czartowski_thermal_2023}, Gibbs-preserving maps~\cite{faist_gibbs-preserving_2015, e19060241, muller_correlating_2018, Shiraishi2021, tajima_gibbs-preserving_2025}, and dynamical thermalization models~\cite{davies_markovian_1974, lostaglio_continuous_2022}. Despite their differences, all these classes of processes share one basic feature: they preserve the Gibbs state. This observation motivates Gibbs-preserving transformations as the broadest notion of equilibrium processes, requiring only that equilibrium states remain invariant.  Similar to the case of quantum states, \textit{any} transformation that does not preserve equilibrium must naturally constitute a thermodynamic resource in its own right. Consequently, in recent years, the resource content of quantum processes has seen increased attention~\cite{wang_resource_2019, li_quantifying_2020, liu_operational_2020, takagi_application_2020, gour_dynamical_2020, gour_dynamical_res_2020, gour_entanglement_2021} leading to extensions of fundamental thermodynamic concepts like entropy and free energy from quantum states to transformations~\cite{gour_entropy_2021, gour_inevitable_2025, badhani_thermodynamic_2025, badhani_thermodynamics_2025a}. 

The question of thermodynamic equilibrium becomes particularly nontrivial beyond the level of states and state transformations. Once quantum processes themselves are regarded as physical resources, it is natural to consider transformations acting on processes, and more generally the hierarchy of \textit{higher-order} quantum transformations~\cite{Bisio2011, taranto_higher-order_2025}. These objects arise naturally in quantum circuit architectures~\cite{chiribella_transforming_2008, chiribella_quantum_2008, Chiribella2009}, quantum communication~\cite{chiribella_quantum_2019, Taddei2019, Kristjansson2020, Milz2022}, and non-Markovian quantum dynamics~\cite{kretschmann_quantum_2005, pollock_operational_2018, pollock_non-markovian_2018, berk_resource_2021}, and provide the natural framework for describing complex quantum processes involving preparations, transformations, and measurements at multiple times. As a theory that applies universally, thermodynamics should therefore extend to such general scenarios, raising the basic question of which higher-order transformations ought to be regarded as \textit{in equilibrium}, and what fundamental constraints equilibrium imposes on them.

\begin{figure}
    \centering
    \includegraphics[width = 0.9\linewidth]{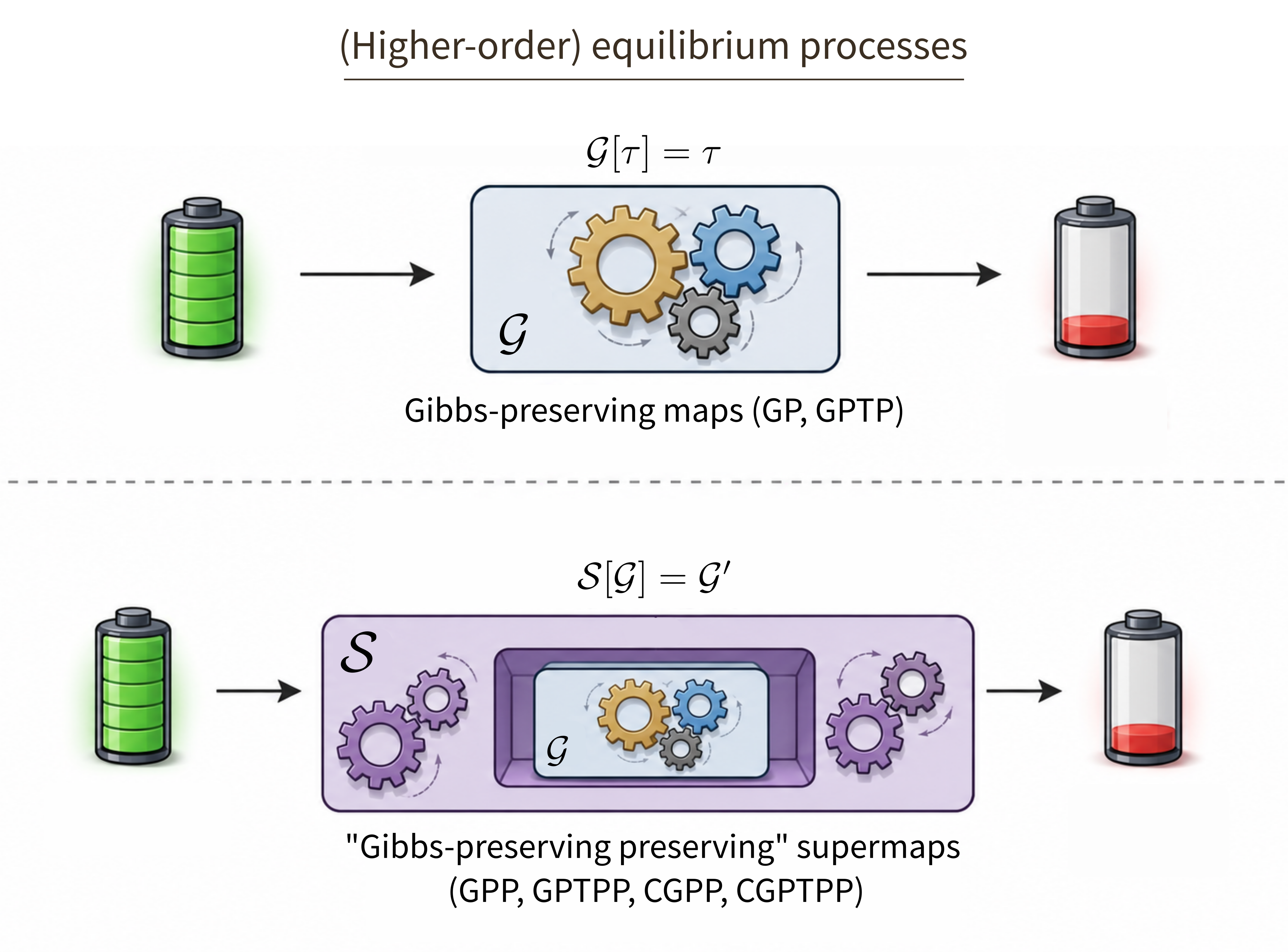}
    \caption{\textbf{Equilibrium processes.} (top) To transform quantum \textit{states}, an experimenter receives a thermodynamic system in a state $\rho$ (depicted by the charged battery) and manipulates it via a transformation $\mathcal{G}$, depleting free energy (depicted by the discharged battery). We consider such a process "in equilibrium", if it preserves the Gibbs state $\tau$, leading to the set of GP and GPTP transformations (see Sec.~\ref{sec::Gibbs_eq_trans}). (bottom) Similarly, an equilibrium \textit{supermap} $\mathcal{S}$ transforms state transformations, leaving the set of Gibbs-preserving transformations invariant. This requirement leads to several possible inequivalent sets of equilibrium supermaps ($\mathbf{GPP}$, $\mathbf{GPTPP}$, $\mathbf{CGPP}$, $\mathbf{CGPTPP}$, see Sec.~\ref{subsec::eq_supmap}). Remarkably, imposing equilibrium preservation in a "complete" sense singles out a \textit{unique} class of physically viable equilibrium transformations (see Sec.~\ref{subsec:completeSupermaps}).}
    \label{fig::supmap}
\end{figure}

The higher-order setting introduces a second question that has no analogue at the level of states or channels: does thermodynamic equilibrium constrain causal structure? Unlike ordinary quantum channels, higher-order transformations can describe genuine multi-time processes and hence encode nontrivial causal relations. In particular, depending on the physical situation they describe, they may or may not admit an underlying causally ordered realization. While some higher-order transformations, known as quantum combs~\cite{chiribella_transforming_2008, chiribella_quantum_2008, Chiribella2009},  are causally ordered by construction, more general higher-order transformations can exhibit indefinite causal structure~\cite{chiribella2009beyond,oreshkov_quantum_2012,chiribella_quantum_2013,Bisio2019, simmons_higher-order_2022, hoffreumon_projective_2026, milz_characterising_2024, Apadula2024, jencova_structure_2026, Apadula2026}, and have become a central framework for studying causality in quantum theory~\cite{oreshkov_quantum_2012, brukner_bounding_2015, Araujo2015, branciard_simplest_2015, castro-ruiz_dynamics_2018, wechs_definition_2019, costa_indefinite_2026}. 

Prominent examples include the quantum SWITCH~\cite{chiribella2009beyond,chiribella_quantum_2013}, which coherently superposes alternative orders of multiple operations, and the quantum time flip~\cite{Liu2022_Flip}, which superposes alternative input-output directions of a single operation.  Over the years, these causally indefinite transformations have been found to offer a variety of information-theoretic advantages ~\cite{chiribella_quantum_2013, chiribella_perfect_2012, araujo_computational_2014, ebler_enhanced_2018, zhao_quantum_2020, bavaresco_strict_2021, Liu2022_Flip, spencer2025indefinite, simonov2026comp} and to give rise to intriguing thermodynamical features ~\cite{felce_quantum_2020, Guha2020, liu_thermodynamics_2022, simonov2022work, dieguez_thermal_2023, Zhu2023, luo_thermodynamic_2025, simonov2025activation, lisboa2026, xue2026anomalous}. While these  features suggest that the quantum SWITCH represents a valuable thermodynamical resource, the link between thermodynamics and indefinite causal order has  not been substantiated in a systematic way thus far.  

A first natural question is whether thermodynamic equilibrium places restrictions on the causal structure of higher-order transformations. Put differently, is there a fundamental relationship between causal order and the preservation of thermodynamical equilibrium?   In this work, we answer this question affirmatively, showing that  all causally indefinite processes necessarily generate non-equilibrium.  We start by putting forward a  thermodynamic principle for admissible higher-order transformations and show that all transformations satisfying it are necessarily causally ordered. As a consequence, causal order emerges directly from thermodynamic assumptions. In doing so, our work contributes to the broader programme of deriving constraints on higher-order quantum transformations from physical principles, such as dynamical consistency requirements~\cite{castro-ruiz_dynamics_2018} or extendibility to unitary processes~\cite{araujo_purification_2017, barrett_cyclic_2021}. Quite remarkably,  we find out  that \textit{thermodynamic} constraints alone are sufficient to single out causally ordered processes.

Throughout this work, we put forward a general   \textit{principle of equilibrium preservation}: thermodynamic processes should not be able to generate non-equilibrium resources from systems at thermodynamical equilibrium  (see Fig.~\ref{fig::supmap}). Concretely, we require higher-order transformations to preserve all  Gibbs-preserving processes, including both deterministic processes (quantum channels) and probabilistic processes (quantum operations). Building on  this principle, we put forward several candidate classes of higher-order transformations that preserve equilibrium, showing that such classes are generally distinct, and establishing  inclusion relations among them. Remarkably, we find out that  the distinctions among these classes disappear once equilibrium preservation is required to hold \textit{completely,  i.e.,}  even when the higher-order transformations act locally on parts of  multipartite quantum processes.   In particular, complete equilibrium preservation singles out a unique set of admissible processes: causally ordered thermal transformations, namely those that admit a causally ordered realization and do not allow for the extraction of athermality.

This collapse into a single, physically meaningful set under \textit{complete} equilibrium preservation mirrors a recurring theme throughout quantum theory; physical channels must be \textit{completely} positive, higher-order transformations  must preserve complete positivity under arbitrary ancillary extensions~\cite{chiribella_quantum_2008, araujo_purification_2017}, and many constructions of higher-order transformations rely crucially on similar completeness requirements~\cite{burniston_necessary_2020,simmons_higher-order_2022, Milz2022, milz_characterising_2024, hoffreumon_projective_2026}. Similarly, complete equilibrium preservation emerges as the natural requirement for a higher-order theory of quantum thermodynamics. We show that the resulting characterization of causally ordered thermal transformations is not confined to a particular level of the hierarchy of possible processes, but extends to \textit{arbitrary} higher-order transformations.  Overall, this result provides a foundation for a fully fledged theory of higher-order quantum thermodynamics.  As an example of further research along this path,  we provide several  extensions of the notion of free energy from quantum states to quantum channels,  generalizing  recent proposals~\cite{badhani_thermodynamic_2025, badhani_thermodynamics_2025a}. 

In summary, our findings  indicate a systematic path towards  higher-order quantum thermodynamics, and   establish a fundamental relation between causality and the preservation of thermodynamical equilibrium. In this respect, they complement the large body of literature connecting the arrow of time to thermodynamic considerations~\cite{jarzynski_nonequilibrium_1997, crooks_entropy_1999, PhysRevE.75.050102, parrondo_entropy_2009, korzekwa_structure_2017, aberg_fully_2018, seif_machine_2021}, by relating causal order -- rather than the directionality of processes -- with axiomatic thermodynamic principles.

\section{Equilibrium transformations}
\label{sec:preliminaries}

Here we develop the general framework of higher-order transformations preserving thermodynamical equilibrium, providing the main definitions and notation that will be used throughout the paper.

\subsection{State transformations: quantum channels and operations}

A central topic in quantum thermodynamics is the characterization of the state transformations $\rho \mapsto \rho'$ achievable under thermodynamic constraints. Such  transformations are implemented by quantum processes, described by linear maps $\mathcal{T} \in \mathcal{L}(\mathcal{L}(\mathcal H_I), \mathcal{L}(\mathcal{H}_O))$ that map input states $\rho_I \in \mathcal{L}(\mathcal{H}_I)$ onto output states $\rho_O'=\mathcal T[\rho_I] \in \mathcal{L}(\mathcal{H}_O)$, where we used the notation $\mathcal{L}  (X,Y)$ for the set of linear transformation from a vector space $X$ to another vector space $Y$.    More specifically,  a linear map $\mathcal{T}$ representing a valid quantum process  must map valid  quantum states (positive semidefinite trace-1 operators) into valid quantum states, even when acting locally  on a part of a larger composite system. This leads to the notions of positivity, complete positivity, and trace preservation.

\begin{figure}[t!]
    \centering
    \includegraphics[width = 0.9\linewidth]{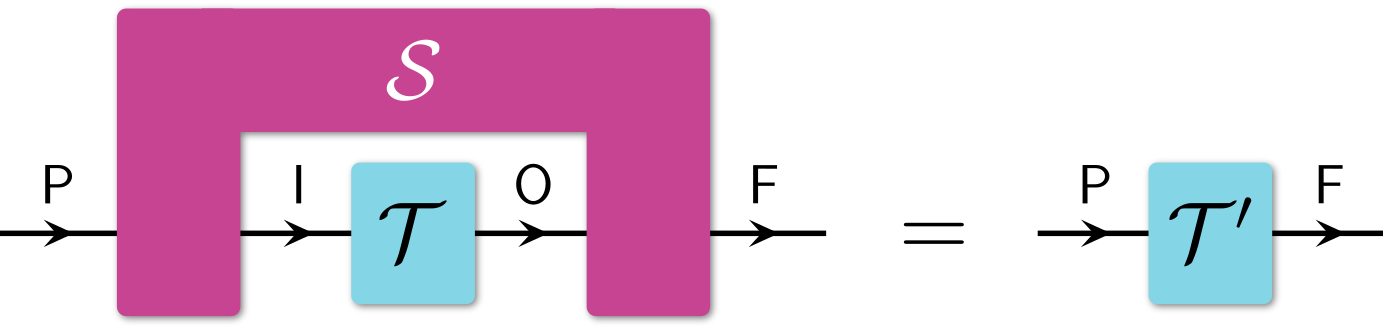}
    \caption{\textbf{The basic type of quantum supermap.} The basic type of quantum  supermap $\mathcal{S}$ (depicted in magenta)  transforms a quantum channel/operation $\mathcal{T}$ with input $I$ and output $O$  (light blue on the left side of the equality sign), into  a new quantum channel/operation $\mathcal{T}'$ with input $P$ and output $F$ (light blue on the right side of the equality sign).}
    \label{fig::one_slot_supmap}
\end{figure}

\begin{definition}[Positive, completely positive, and  trace-preserving maps]
    A linear map $\mathcal{T} \in \mathcal{L}(\mathcal{L}(\mathcal{H}_I), \mathcal{L}(\mathcal{H}_O))$ is
    \begin{enumerate}[1)]
    \item {\em positive} if it maps every positive semidefinite operator $\rho  \in  \mathcal{L}(\mathcal{H}_I)$ into a positive semidefinite operator $\mathcal{T}[\rho]  \in  \mathcal{L}(\mathcal{H}_O)$,   
    \item {\em completely positive (CP)} if the map $  \mathcal{T} \otimes \mathcal{I}_c$  is positive for every finite-dimensional Hilbert space $\mathcal{H}_c$ (here, $\mathcal{I}_c$ denotes the identity map on $\mathcal{L}  (\mathcal{H}_c)$),  
    \item  {\em trace-preserving} if $\tr [\mathcal{T}[\rho]]  =  \tr[\rho]$ for every operator $\rho  \in  \mathcal{L}  (\mathcal{H}_I)$.  
    \end{enumerate}
\end{definition}
In the following, the set of CP maps with input system $I$ and output system $O$ will be denoted  by $\mathsf{CP}(I,O)$.  In general statements, we will sometimes omit  the specification of the input and output systems $I$ and $O$.   

It is easy to prove that a linear map $\mathcal{T}$ transforms normalized quantum states into normalized quantum states (even when acting on part of a composite system) if and only if it is   completely positive and trace-preserving (CPTP).   In the following, the set of completely positive trace-preserving maps with input $I$ and output $O$ will be denoted by $\mathsf{CPTP}(I,O)$.

CPTP maps are also known as {\it quantum channels} \cite{holevo2019quantum, heinosaari2011mathematical} and represent quantum processes that happen with unit probability.
 More generally, quantum processes can take place with non-unit probability, as it happens when quantum measurements are performed. Such processes are associated to measurement outcomes, and are mathematically described by  completely positive {\em trace non-increasing} maps, {\em i.e.} CP maps $\mathcal T$ satisfying the property $\tr[\mathcal{T} [\rho]] \le \tr[\rho]$ for every operator $\rho  \in \mathcal{L}(\mathcal{H}_I)$. Completely positive trace non-increasing maps are known as {\em quantum operations}. 

Before concluding this section, we stress out the crucial role of complete positivity: this property provides the weakest  consistency requirement for a map to represent a transformation of quantum states, applicable locally on a part of a composite system.   In the next section, we will see a similar consistency requirement for  higher-order transformations.

\subsection{Higher-order transformations: quantum supermaps}

\begin{figure}[t!]
    \centering
    \includegraphics[width = 0.9\linewidth]{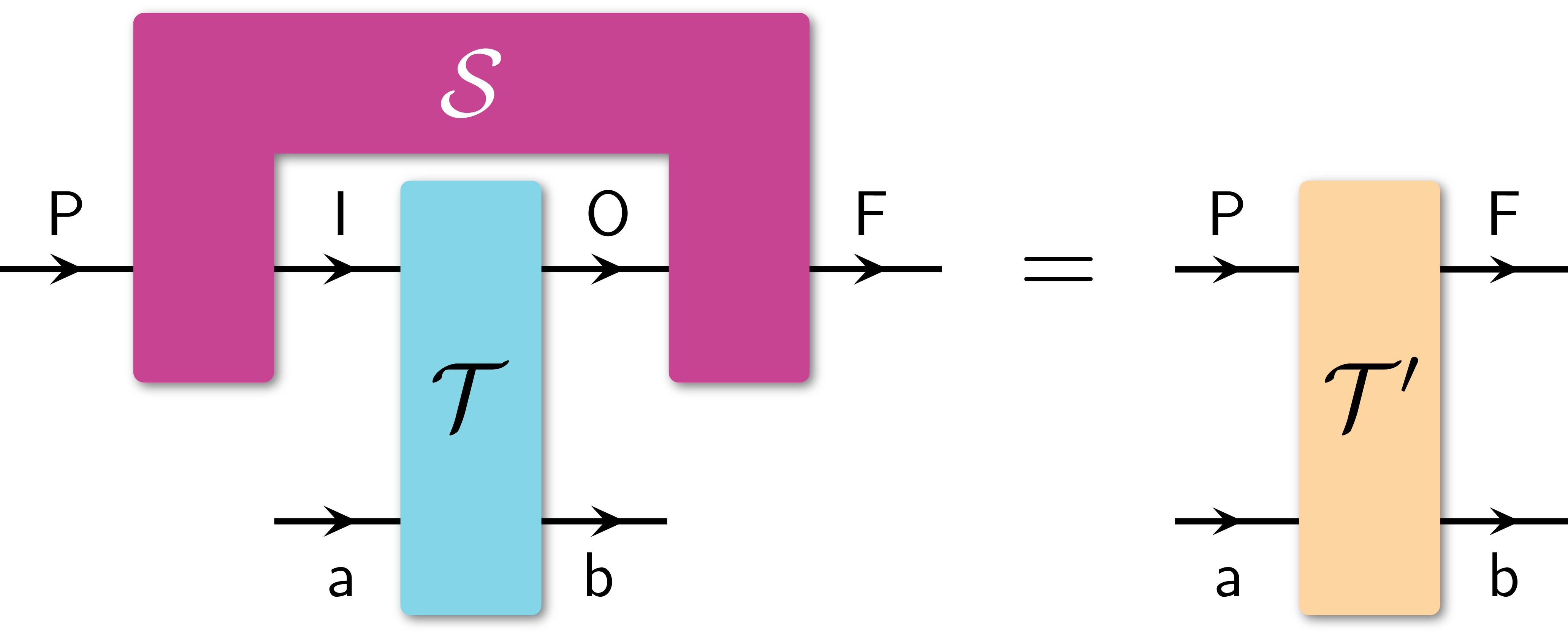}
    \caption{\textbf{CP supermaps.} A  supermap $\mathcal{S}$ is CP if the map $\mathcal{S}\otimes \mathcal I_{a\to b}$ transforms  CP maps $\mathcal{T} \in \textup{CP}(I\otimes a, O \otimes b)$ to CP maps $\mathcal{T}' \in \textup{CP}(P\otimes a, F \otimes b)$ for every possible pair of finite-dimensional systems $a$ and $b$. }
    \label{fig::CP_supmap}
\end{figure}

The defining feature of \textit{higher-order} thermodynamics is that quantum processes  themselves are treated as  resources, which can be subject to higher-order transformations, also known as \emph{quantum supermaps}~\cite{chiribella_transforming_2008,Chiribella2009, chiribella_quantum_2013}.  The basic type of supermap is a linear map   $\mathcal S$  that maps   quantum channels/operations into quantum channels/operations, as illustrated in  Fig. \ref{fig::one_slot_supmap}.   Mathematically, this  linear map  $\mathcal S$ is of the type
\begin{gather}\label{basicsupermap}
    \mathcal{S}: \Lcal(\Lcal(\Hcal_I), \Lcal(\Hcal_O)) \longrightarrow     \Lcal(\Lcal(\Hcal_P), \Lcal(\Hcal_F)) \, ,
\end{gather}
where $\mathcal{H}_P$ is the Hilbert space of the new input system, sometimes referred to as the  (system in the) \emph{global past}, and  $\mathcal{H}_F$ is the Hilbert space of the new output system, sometimes referred to as the (system in the) \emph{global future}, for reasons that will become clear later in this section.

\begin{figure}[t!]
 \centering
 \includegraphics[width=0.98\linewidth]{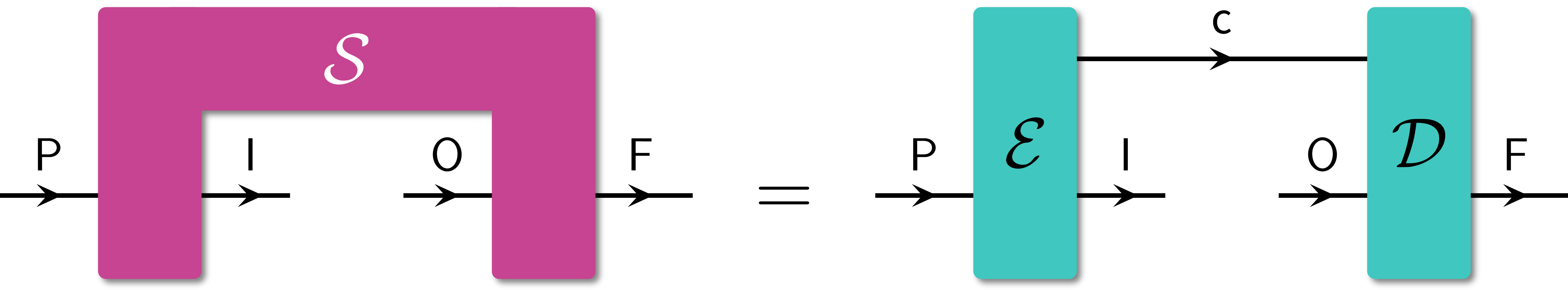}
 \caption{\textbf{Structure of superchannels.} Any proper supermap $\mathcal{S}: \mathsf{CPTP}(I,O) \rightarrow \mathsf{CPTP}(P,F)$ can be obtained via the concatenation of two CPTP maps $\mathcal{E}: \mathcal{L}(\mathcal{H}_P) \rightarrow \mathcal{L}(\mathcal{H}_I \otimes \mathcal{H}_c)$  and $\mathcal{D}: \mathcal{L}(\mathcal{H}_O \otimes \mathcal{H}_c) \rightarrow \mathcal{L}(\mathcal{H}_F)$, where $\Hcal_c$ is an auxiliary space representing memory. Whenever a supermap can be realized via an underlying circuit, we will denote the temporal order of occurrence of spaces by the $\prec$ symbol, i.e., for this superchannel we have $P \prec I \prec O \prec F$. }
 \label{fig::enc_dec}
\end{figure}

More specifically, a quantum supermap must map valid quantum channels/operations into valid quantum channels/operations even when acting locally on part of a larger quantum process~\cite{chiribella_transforming_2008}. To make requirement clear, suppose that a supermap $\mathcal S$ acts locally on a quantum channel/operation $\mathcal T$ with  additional  input and output systems $a$ and $b$, respectively, as illustrated in Fig.~\ref{fig::CP_supmap}.    
  In this setting, the basic consistency requirement is that $\cal S$ must  map every valid quantum channel/operation $ \mathcal T\in\mathsf{CP}(I\otimes a,O\otimes b)$ into a valid quantum/channel  operation  $ \mathcal T'\in\mathsf{CP}(P\otimes a,F\otimes b)$, for every possible choice of additional systems $a$ and $b$.  This requirement leads to the following definition:

\begin{definition}[Positive, completely positive, and normalization-preserving supermaps]
A linear supermap $\mathcal{S}: \mathsf{CP}(I,O) \rightarrow \mathsf{CP}(P,F)$ is 
\begin{enumerate}[1)]
\item {\em positive}, if it transforms completely positive maps into completely positive maps, namely $\mathcal S[\mathcal T]  \in  \textup{CP}  (P,F)$ for every $\mathcal{T} \in  \textup{CP} (I,O)$ 
\item {\em completely positive (CP)} if the supermap   $\mathcal{S} \otimes \mathcal I_{a\to b}$ is positive for every pair of finite-dimensional quantum systems $a$ and $b$ (here, $I_{a \to b}$ denotes the identity supermap on $\mathcal{L}(\mathcal{L}( \mathcal{H}_a ), \mathcal{L}( \mathcal{H}_b))$, the space of linear maps from $\mathcal{L}( \mathcal{H}_a$ to $\mathcal{L}( \mathcal{H}_b)$). 
\item {\em normalization-preserving}  if $\mathcal{S}[\mathcal{T}]$ is  trace-preserving whenever $\mathcal{T}$ is trace-preserving. 
\end{enumerate}
\end{definition}

A supermap that is both completely positive and normalization-preserving is called a {\em deterministic supermap} or, in the later literature, a {\em superchannel}.    Here in this paper, we will sometime use the wording {\em ``proper supermap"} as a synonim of deterministic supermap.  

For quantum supermaps at this level of the hierarchy, it turns out that complete positivity and normalization preservation are sufficient to guarantee the existence of an underlying, causally ordered quantum circuit, as illustrated in Fig.~\ref{fig::enc_dec}).  Explicitly, this fact is proven by the following: 

\begin{prop}[Superchannels and quantum circuits~\cite{chiribella_transforming_2008, chiribella_quantum_2008}]
    The action of a superchannel $\mathcal{S}: \textup{CP}(I,O) \rightarrow  \textup{CP}(P,F)$ can always be written in terms of a quantum circuit 
\begin{equation}\label{eq:propMapDecomp}
    \mathcal{S}[\mathcal{T}] = \mathcal{D} \circ (\mathcal{T} \otimes \mathcal{I}_c) \circ \mathcal{E},
\end{equation}
where $c$ is a finite dimensional quantum system, and  $\mathcal{E} \in \mathsf{CPTP}(P, I\otimes c)$ and $\mathcal{D} \in \mathsf{CPTP}(O\otimes c, F)$ are two quantum channels. 
\end{prop}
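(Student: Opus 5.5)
The plan is to pass to the Choi representation and turn the two defining properties of a superchannel — complete positivity and normalization preservation — into linear constraints on a single positive operator. From that operator we then read off the encoder $\mathcal{E}$ and decoder $\mathcal{D}$.

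\textbf{Step 1 (Choi operators).} Represent every map $\mathcal{T}\in\mathcal{L}(\mathcal{L}(\mathcal{H}_I),\mathcal{L}(\mathcal{H}_O))$ by its Choi operator $T\in\mathcal{L}(\mathcal{H}_I\otimes\mathcal{H}_O)$. The supermap $\mathcal{S}$ then induces a linear map $\widehat{\mathcal{S}}$ from $\mathcal{L}(\mathcal{H}_I\otimes\mathcal{H}_O)$ to $\mathcal{L}(\mathcal{H}_P\otimes\mathcal{H}_F)$, and this map has its own Choi operator $S\in\mathcal{L}(\mathcal{H}_P\otimes\mathcal{H}_I\otimes\mathcal{H}_O\otimes\mathcal{H}_F)$.

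Complete positivity of $\mathcal{S}$ is equivalent to $S\geq 0$. To see this, note that $\mathcal{S}\otimes\mathcal{I}_{a\to b}$ maps positive Choi operators to positive ones for every choice of $a,b$. Taking $a\simeq I$ and $b\simeq O$, feed in the (unnormalized) maximally entangled Choi operator of the identity on $IO$, which is positive. The output is exactly $S$, so $S\geq 0$. The converse follows from the link-product formula $\mathcal{S}[\mathcal{T}]\leftrightarrow \Tr_{IO}[S\,(T^{\mathsf{T}}\otimes\mathbb{1}_{PF})]$, which preserves positivity under ancilla extensions.

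\textbf{Step 2 (normalization constraints).} A map $\mathcal{T}$ is trace-preserving iff $\Tr_O T=\mathbb{1}_I$. The set $\{T:\Tr_O T=\mathbb{1}_I\}$ spans an affine subspace, and normalization preservation must hold on all of it; note that the CPTP Choi operators span this affine space. This gives $\Tr_F S=\mathbb{1}_O\otimes S_{PI}$ for some operator $S_{PI}$. In addition, $\Tr_I S_{PI}=\mathbb{1}_P$.

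The way to derive this is to write $\Tr_F S$ as a linear functional that is constant on the affine space, which forces it to depend on $T$ only through $\Tr_O T$. This is a standard dual-affine-space argument: it uses $T=\mathbb{1}/d_O+X$ with $X$ ranging over operators with $\Tr_O X=0$. I expect this step to be the main bookkeeping obstacle. The concern is getting the transposes and partial traces right, and checking that the constraint is truly forced on operators outside the positive cone.

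\textbf{Step 3 (constructing the circuit).} Set $\sigma:=S_{PI}\geq 0$ with $\Tr_I\sigma=\mathbb{1}_P$, so $\sigma$ is the Choi operator of a channel $P\to I$. Take $\mathcal{H}_c\simeq\mathcal{H}_P\otimes\mathcal{H}_I$ and define the encoder $\mathcal{E}$ by the Choi operator $\sqrt{\sigma}$-purification. Concretely, $\mathcal{E}[\rho]=(\mathbb{1}_I\otimes\sqrt{\sigma}^{\,\mathsf{T}\text{-suitably}})(\rho^{\mathsf{T}}\otimes\dots)$, realized as the isometry $V:|\psi\rangle_P\mapsto (\mathbb{1}\otimes\sqrt{\sigma})|\psi^*\rangle\ldots$, chosen so that tracing out $c$ recovers $\sigma$. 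The decoder then has Choi operator
\begin{equation}
D=(\mathbb{1}_{OF}\otimes\sigma^{-1/2})\,S\,(\mathbb{1}_{OF}\otimes\sigma^{-1/2}),
\end{equation}
with the inverse taken on the support and completed arbitrarily off it. This is positive, and by Step 2 it satisfies $\Tr_F D=\mathbb{1}_{O}\otimes\Pi_{\sigma}$, so completing it gives a CPTP $\mathcal{D}$.

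Finally, compute $\mathcal{D}\circ(\mathcal{T}\otimes\mathcal{I}_c)\circ\mathcal{E}$ via the link product. The $\sigma^{\pm1/2}$ factors cancel on the support of $\sigma$, and the result is $\Tr_{IO}[S(T^{\mathsf{T}}\otimes\mathbb{1})]$, which is exactly $\mathcal{S}[\mathcal{T}]$. The support issue matters here: one checks that $S$ is supported within $\mathrm{supp}(\sigma)\otimes\mathcal{H}_O\otimes\mathcal{H}_F$, which follows from $S\geq0$ together with $\Tr_{OF}S\propto\sigma$.
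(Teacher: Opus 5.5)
Your argument is correct and is essentially the standard Choi-operator realization proof from the cited works of Chiribella, D'Ariano and Perinotti; the paper itself gives no proof of this Proposition beyond the citation, and only records the same normalization conditions $\Tr_F S=S_{PI}\otimes\ident_O$, $\Tr_I S_{PI}=\ident_P$ in Prop.~\ref{prop::alt_charac_caus}. The one step you leave vague is the encoder: take its Choi operator to be the canonical purification $E=(\sigma^{1/2}\otimes\ident_c)\,\kketbra{\ident}{\ident}_{(PI)c}\,(\sigma^{1/2}\otimes\ident_c)$ with $c\simeq PI$. This is a channel because $\Tr_{Ic}E=\Tr_I\sigma=\ident_P$, and it gives $E\star D=\sigma^{1/2}D\,\sigma^{1/2}=\Pi_\sigma S\,\Pi_\sigma=S$, which is exactly the cancellation you invoke; the off-support completion of $D$ drops out of this product.
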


This proposition shows that all the quantum superchannels at the basic level are compatible with a well-defined underlying causal order.   More generally, a class of higher-order transformations known as quantum combs \cite{chiribella_quantum_2008, Chiribella2009, Bisio2011} satisfies this property. 
Quantum combs arise, for example, in  scenarios of distributed quantum computation, such as the one shown in Fig.~\ref{fig::multislot}.  In this example, four users (Alice, Bob, Charlie, and Dave) can freely manipulate different qubits, but they have no control over the quantum circuit in between their respective operations.
\begin{figure}
\centering
    \includegraphics[width = 0.95\columnwidth]{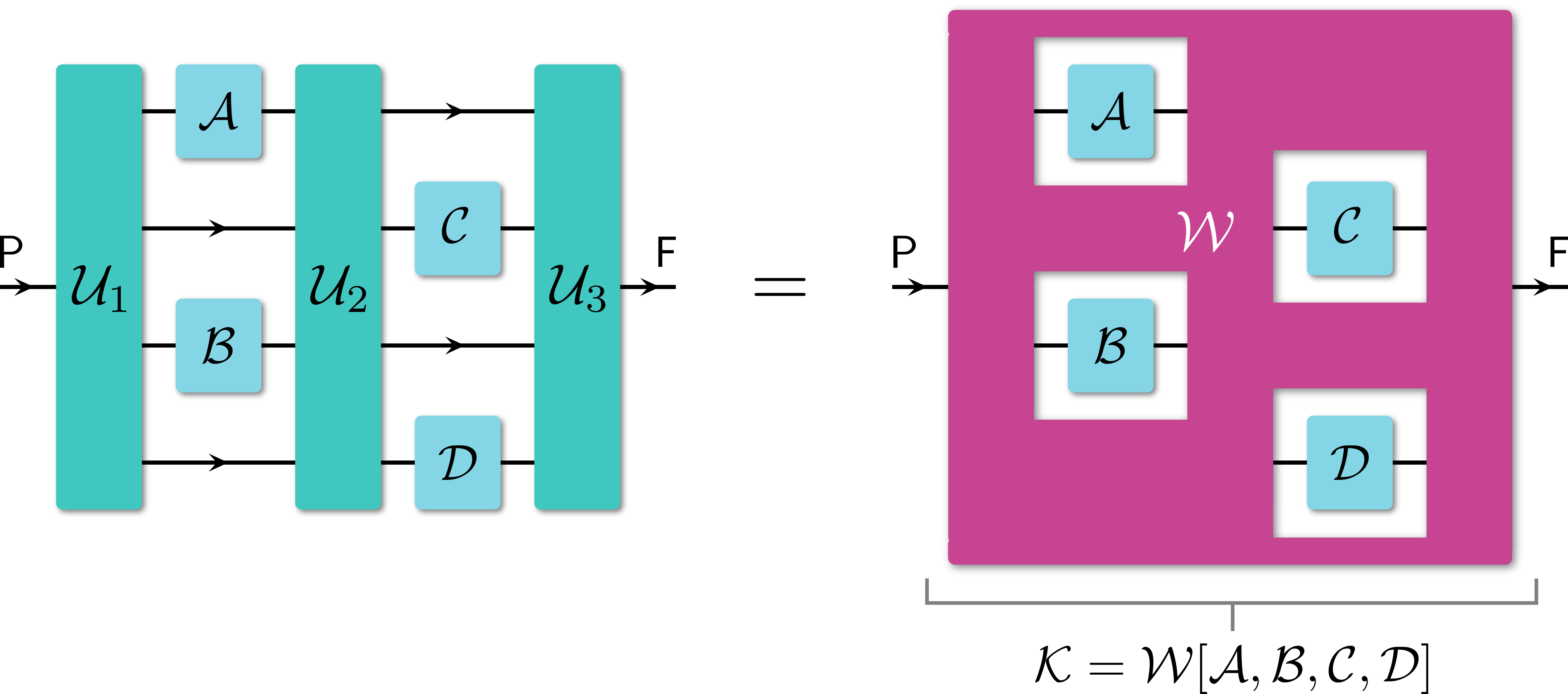}
    \caption{\textbf{Higher-order transformation from a quantum circuit.} Different users (here, Alice, Bob, Charlie and Dave) in a distributed quantum computation -- schematically given by a circuit with unitaries $\mathcal{U}_1, \mathcal{U}_2,$ and $\mathcal{U}_3$ -- can perform manipulations $\mathcal{X}\in\mathsf{CP}(I_X, O_X)$ on different qubits at different points in time. The overall action of the circuit onto the corresponding operations is given by a higher-order transformation $\mathcal{W}$, such that $\mathcal{K} = \mathcal{W}[\mathcal{A}, \mathcal{B}, \mathcal{C}, \mathcal{D}] \in \mathsf{CP}(P,F)$. Here, $\mathcal{W}$ is causally ordered -- since it stems from a quantum circuit. More generally, higher-order transformations can also account for causally indefinite correlations between different operations.}
    \label{fig::multislot}
\end{figure}
In this situation, the users can perform  arbitrary quantum operations $\mathcal{A}, \mathcal{B}, \mathcal{C},$ and $\mathcal{D}$, which inserted  in the  circuit yield an overall quantum operation   $\mathcal{K} \in \Lcal(\Lcal(\Hcal_P),\Lcal(\Hcal_F))$. This operation depends linearly on each of  the operations $\mathcal{A}, \mathcal{B}, \mathcal{C},$ and $\mathcal{D}$, and the dependence can be represented by a quantum supermap $\mathcal{W}$, as follows
\begin{gather}
    \mathcal{K} = \mathcal{W}[\mathcal{A}, \mathcal{B}, \mathcal{C}, \mathcal{D}] \, . 
\end{gather}
The map  $\mathcal{W}$ here is  an example of a quantum comb, {\em i.e.} of a quantum supermap compatible with an underlying causally ordered circuit (when the operations of two parties are performed in parallel, one can nominally just assume that one of the two operations happens before the other, without affecting the overall supermap).  
While many important transformations are described by quantum combs, there also exist higher-order transformations that cannot be embedded in a definite causal order~\cite{chiribella2009beyond, oreshkov_quantum_2012, chiribella_quantum_2013, branciard_simplest_2015, Araujo2015,  castro-ruiz_dynamics_2018, wechs_definition_2019, milz_characterising_2024,  costa_indefinite_2026}.   In the following, our aim will be  to identify which  higher-order transformations can  be regarded as thermodynamically admissible. Surprisingly, we will find that while quantum mechanics in principle allows for  causally indefinite transformation, the preservation of thermodynamical equilibrium necessarily implies a definite causal order.

\subsection{Preservation of thermal equilibrium at the state level}
\label{sec::Gibbs_eq_trans}

The core idea of this paper is to determine the largest class of  higher-order transformations that preserve the condition of thermal equilibrium.   Our guiding principle is simple: 
\begin{principle}[Equilibrium preservation]\label{basicprinciple} In a thermodynamical context, the allowed transformations   should not generate deviations from thermal  equilibrium.
\end{principle}

Before considering the case of general higher-order transformations, it is useful to summarize the key notion at the basic level of quantum states. At this level, the condition of thermal equilibrium singles out one and only one state, namely the \textit{Gibbs state}. For a quantum system with Hamiltonian $\hat{H}$, coupled to a thermal bath at inverse temperature $\beta$, the Gibbs state is 
\begin{gather}
    \tau = e^{-\beta \hat{H}}/Z, \qquad \text{with} \quad Z = \tr[e^{-\beta \hat{H}}].  
\end{gather}
This is the unique equilibrium state, maximizing the entropy of the system for a fixed average energy $E = \langle \hat{H}\rangle_\rho$ and preventing the extraction of any amount of work from the system.

For transformations of quantum states, the requirement  that the allowed transformations   should not generate deviations from equilibrium  (Principle \ref{basicprinciple}) translates into the requirement that the allowed quantum channels/operations must preserve the Gibbs state (see Fig. \ref{fig::GP_GPTP}). 

\begin{figure}
    \centering
    \includegraphics[width = 0.75\linewidth]{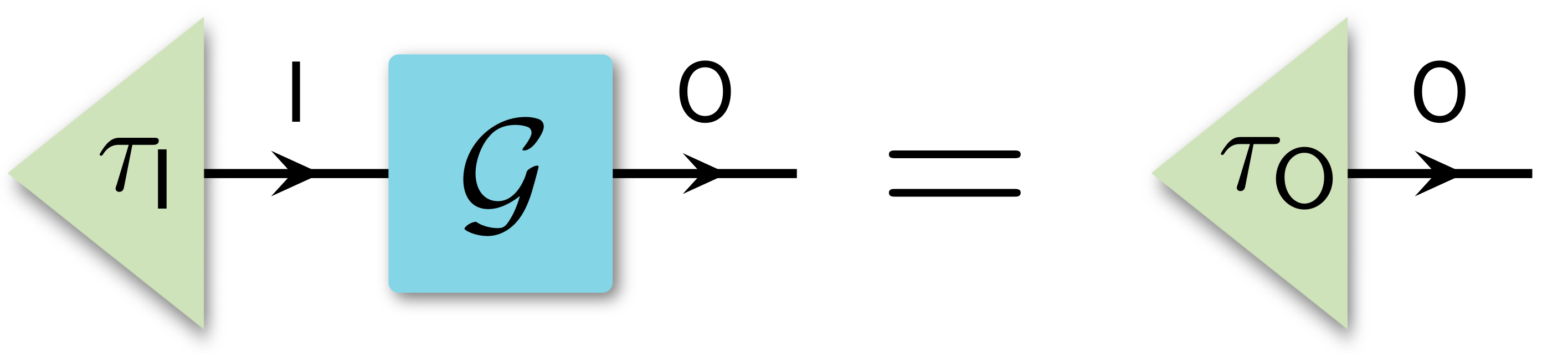}
    \caption{\textbf{Equilibrium-preserving transformations of quantum states.} A map $\mathcal{G} \in \Lcal(\Lcal(\Hcal_I), \Lcal(\Hcal_O))$ is Gibbs-preserving (GP) if it is CP and maps Gibbs states to Gibbs states. If, in addition, $\mathcal{G}$ is also TP, then it is Gibbs-preserving and trace-preserving (GPTP).}
    \label{fig::GP_GPTP}
\end{figure}

\begin{definition}[Gibbs-preserving transformations]
    A linear map $\mathcal{G} \in \mathsf{CP}(I, O)$ is called Gibbs-preserving (GP) if
    \begin{equation}
        \mathcal{G}[\tau_{I}] = \tau_{O},
    \end{equation}
    where $\tau_{I}$ and $\tau_{O}$ are the Gibbs states on the spaces $\mathcal{H}_I$ and $\mathcal{H}_O$, respectively. The corresponding class of GP maps is denoted $\mathsf{GP}(I,O)$.
    
    If, in addition, $\mathcal{G} \in \mathsf{CPTP}(I, O)$, then it is called a Gibbs-preserving and trace-preserving (GPTP) map. The corresponding class of GPTP maps is denoted $\mathsf{GPTP}(I,O)$. \label{def:mapsGP}
\end{definition}
Throughout this work each Hilbert space $\mathcal H_X$ will be be associated with its own Hamiltonian $\hat H_X$ and corresponding Gibbs state $\tau_X$. In this way, we will allow  transformations between quantum systems with different Hamiltonians and, more generally, between systems coupled to baths of different temperatures.

In the literature, other possible classes of thermodynamical transformations -- most prominently among them thermal operations~\cite{Janzing2000, horodecki_fundamental_2013, brandao_second_2015, brandao_resource_2013, faist_gibbs-preserving_2015} -- have been proposed based  on restrictions on their underlying implementation. In this work we focus on GP and GPTP maps since they a) form the largest possible set of equilibrium transformations, requiring the least number of assumptions and admitting a simple algebraic characterization (see App.~\ref{app:GP_GPTP}); and b) they are solely based on thermodynamic input-output relations and do not require an explicit circuit model for their conceptualization. This allows the systematic extension of equilibrium to higher-order quantum transformations, enabling a detailed study of its relation to causal structure. 

As for the case of positivity, one may wonder if requiring Gibbs preservation in a \textit{complete} sense adds further restrictions. Here, complete Gibbs-preservation would mean that a map $\mathcal{G}$ not only preserves the Gibbs state, but the set of \textit{locally thermal} states $\eta \in \Lcal(\Hcal_c \otimes \Hcal_I)$, with $\tr_c[\eta] = \tau_I$. That is, a map $\mathcal{G}$ is completely Gibbs preserving (CGP) if  
\begin{gather}
    \tr_c[(\Ical \otimes \mathcal{G})[\eta]] = \tau_O \quad \forall \ \text{locally thermal} \ \eta. 
\end{gather}
Since we have 
\begin{gather}
    \tr_c[(\Ical \otimes \mathcal{G})[\eta]]  = \mathcal{G}[\tr_c[\eta]] = \mathcal{G}[\tau_I] = \tau_O,
\end{gather}
it is easy to see that all Gibbs-preserving maps already satisfy this property, such that completeness does not add any additional restrictions to $\mathsf{GP}$ and $\mathsf{GPTP}$.   Later in the paper (Sec.~\ref{subsec:completeSupermaps}), we will see that the situation will change radically for higher-order transformations: in that case,  the requirement of \textit{complete} equilibrium preservation adds non-trivial restrictions.

\subsection{Higher-order equilibrium transformations}
\label{subsec:higherorder}

We now extend the notion of equilibrium preservation from quantum states and processes  to more advanced scenarios described by  higher-order transformations.  For quantum supermaps of the basic type (see Def.~\ref{basicsupermap}),  the principle of equilibrium preservation  (Principle \ref{basicprinciple}) amounts to the fact that the allowed supermaps should  transform Gibbs-preserving quantum channels/operations into  Gibbs-preserving quantum channels/operations.  More generally,  Principle \ref{basicprinciple}  generates an infinite hierarchy of equilibrium transformations:
\begin{definition}[Higher-order equilibrium-preserving transformations]
\label{def::equ_pres}
    A higher-order transformation is  equilibrium-preserving if it maps equilibrium-preserving inputs to equilibrium-preserving outputs. At the state level, the Gibbs states  are regarded as the basic equilibrium-preserving objects.
\end{definition}
In the following, equilibrium-preserving transformations will be just called {\it equilibrium transformations}, for short.   The goal of the next sections will be to characterize the set of higher-order equilibrium transformations, starting from the basic example of supermaps transforming quantum channels/operations into quantum channels/operations.

\section{The simplest higher order case: one-slot supermaps}
\label{eqn::one_slot_case}
We initially restrict our attention to  the first nontrivial level of the hierarchy, namely, supermaps of the type (\ref{basicsupermap}),  transforming  quantum channels/operations into quantum channels/operations. 

\subsection{Equilibrium supermaps}
\label{subsec::eq_supmap}
While the equilibrium state is \textit{unique}, there are already two natural notions of equilibrium state transformations: probabilistic GP maps and deterministic GPTP maps. At the level of supermaps then, two independent questions arise. First, should equilibrium be defined relative to GP maps or GPTP maps? Second, should equilibrium supermaps themselves be required to be superchannels? The latter question is motivated by causality: neither the preservation of GP maps nor the preservation of GPTP maps is, by itself, sufficient to guarantee causal order, rendering it an independent requirement. These two choices lead to four natural candidates for equilibrium supermaps $\mathcal{S} : \mathcal{L}(\mathcal{L}(\mathcal{H}_I), \mathcal{L}(\mathcal{H}_O)) \rightarrow \mathcal{L}(\mathcal{L}(\mathcal{H}_P), \mathcal{L}(\mathcal{H}_F))$:
\begin{enumerate}
    \item[(R$1$)] \textbf{GP-preserving supermaps.} 
    $\mathcal{S}$ preserves the set of Gibbs-preserving transformations, i.e.,
    \[ \mathcal S[\mathcal G]\in\mathsf{GP}(P,F) \quad \forall \mathcal G\in\mathsf{GP}(I,O). \]
    We denote the set of such \emph{GP-preserving} supermaps by $\mathbf{GPP}$.

    \item[(R$2$)] \textbf{GPTP-preserving supermaps.} 
    $\mathcal{S}$ preserves the set of deterministic Gibbs-preserving transformations, i.e., 
    \[ \mathcal S[\mathcal G]\in\mathsf{GPTP}(P,F) \qquad \forall \mathcal G\in\mathsf{GPTP}(I,O). \]
    We denote the set of such \emph{GPTP-preserving} supermaps by $\mathbf{GPTPP}$.
    
    \item[(R$1'$)] \textbf{Deterministic GP-preserving supermaps.} 
    $\mathcal{S}$ is GP-preserving and it is a deterministic supermap, {\em a.k.a.} a  superchannel. We denote the set of  \emph{deterministic GP-preserving} supermaps by $\mathbf{pGPP}$.

    \item[(R$2'$)] \textbf{Deterministic GPTP-preserving supermaps.} 
    $\mathcal{S}$ is both GPTP-preserving and a superchannel. We denote the set of such \emph{proper GPTP-preserving} supermaps $\mathcal{S}$ by $\mathbf{pGPTPP}$.
\end{enumerate}
All four requirements above follow from the principle of equilibrium preservation. R$1$ and R$2$ impose that probabilistic and deterministic equilibrium state transformations must be preserved, respectively. R$1'$ and R$2'$ strengthen these requirements by additionally demanding that the equilibrium supermaps themselves are superchannels. Clearly, 
\[ \mathbf{pGPP}\subseteq \mathbf{GPP}, \qquad \mathbf{pGPTPP}\subseteq \mathbf{GPTPP}. \] 
In addition, all four sets of possible equilibrium transformations are genuinely distinct. We analyze the properties and mutual relationship below and in more detail in App.~\ref{app:proofHierarchy}. Their inclusion structure is depicted in Fig.~\ref{fig::classes} and summarized by the following Theorem: 

\begin{theorem}[Relation between classes of free transformations]\label{thm:hierarchy}
    The classes of free transformations satisfy
    \begin{align}\label{eq:hierarchy}
        \mathbf{pGPP} \subsetneq \mathbf{pGPTPP} \subsetneq \mathbf{GPTPP},
    \end{align}
    and
    \begin{align}
        \mathbf{GPP} \cap \mathbf{pGPTPP} = \mathbf{pGPP}.
    \end{align}
    In addition, $\mathbf{GPP}$ and $\mathbf{GPTPP}$ do not coincide, i.e.,
    \begin{gather}
        \mathbf{GPP} \setminus \mathbf{GPTPP} \neq \emptyset \quad \text{and} \quad  \mathbf{GPTPP} \setminus \mathbf{GPP} \neq \emptyset.
    \end{gather}
\end{theorem}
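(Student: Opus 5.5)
The inclusions follow directly from the definitions, so the real work is in the strictness and non-coincidence claims. For those I will use explicit counterexamples built from the simplest possible supermaps, namely ``measure-and-replace'' constructions in the circuit form of Eq.~\eqref{eq:propMapDecomp}. I will assume throughout that the relevant systems ($I$, and $P$ where needed) have dimension at least two; otherwise some of the classes collapse.

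\emph{Inclusions.} Let $\mathcal S\in\mathbf{pGPP}$ and $\mathcal G\in\mathsf{GPTP}(I,O)$. Since $\mathcal S$ is a superchannel, it is normalization-preserving, so $\mathcal S[\mathcal G]$ is CPTP. Since $\mathcal G\in\mathsf{GP}$ and $\mathcal S\in\mathbf{GPP}$, we also have $\mathcal S[\mathcal G]\in\mathsf{GP}(P,F)$. Hence $\mathcal S[\mathcal G]\in\mathsf{GPTP}(P,F)$, which gives $\mathbf{pGPP}\subseteq\mathbf{pGPTPP}$. The inclusion $\mathbf{pGPTPP}\subseteq\mathbf{GPTPP}$ holds by definition. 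For the intersection identity: $\mathbf{pGPP}$ is contained in both $\mathbf{GPP}$ and $\mathbf{pGPTPP}$ by what was just shown. Conversely, an element of $\mathbf{GPP}\cap\mathbf{pGPTPP}$ is a superchannel that preserves $\mathsf{GP}$, and that is exactly the definition of $\mathbf{pGPP}$.

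\emph{$\mathbf{pGPP}\neq\mathbf{pGPTPP}$.} Take the superchannel in which the encoder $\mathcal E$ discards $P$ and prepares a state $\sigma_I\neq\tau_I$, and the decoder $\mathcal D$ discards $O$ and prepares $\tau_F$. This gives
\begin{equation*}
\mathcal S[\mathcal T][\rho]=\tr[\rho]\,\tr\bigl[\mathcal T[\sigma_I]\bigr]\,\tau_F .
\end{equation*}
On TP maps this is the replacement channel onto $\tau_F$, which is GPTP, so $\mathcal S\in\mathbf{pGPTPP}$. Now consider the non-TP GP map $\mathcal G[\rho]=\tr[M\rho]\,\tau_O$ with $0\le M$, $\tr[M\tau_I]=1$, and $M$ chosen so that $\tr[M\sigma_I]\neq 1$; such an $M$ exists because $\sigma_I\neq\tau_I$. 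Then $\mathcal S[\mathcal G][\tau_P]=\tr[M\sigma_I]\,\tau_F\neq\tau_F$, so $\mathcal S\notin\mathbf{GPP}$.

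\emph{$\mathbf{pGPTPP}\neq\mathbf{GPTPP}$, and $\mathbf{GPTPP}\setminus\mathbf{GPP}\neq\emptyset$.} Define a linear functional
\begin{equation*}
f(\mathcal T)=\tr\bigl[\mathcal T[\tau_I]\bigr]+\lambda\bigl(\tr[\mathcal T[\sigma_I]]-\tr[\mathcal T[\tau_I]]\bigr),
\end{equation*}
and set $\mathcal S[\mathcal T][\rho]=f(\mathcal T)\tr[\rho]\,\tau_F$. The bracket multiplying $\lambda$ vanishes on all TP maps, so $f(\mathcal G)=1$ for every $\mathcal G\in\mathsf{GPTP}$, and therefore $\mathcal S\in\mathbf{GPTPP}$. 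Choose $\lambda$ large and negative, and evaluate on a CP map such as $\mathcal T[\rho]=\tr[M\rho]\,\tau_O$ with $\tr[M\sigma_I]>\tr[M\tau_I]$. Then $f(\mathcal T)<0$, so $\mathcal S[\mathcal T]$ is not even positive, and $\mathcal S$ is not a superchannel. Applied to the GP map $\mathcal G$ from the previous paragraph, the same supermap gives $f(\mathcal G)\neq 1$, so $\mathcal S\notin\mathbf{GPP}$. One example therefore settles both claims.

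\emph{$\mathbf{GPP}\setminus\mathbf{GPTPP}\neq\emptyset$.} Take
\begin{equation*}
\mathcal S[\mathcal T][\rho]=\tr\bigl[\mathcal T[\tau_I]\bigr]\,\tr[N\rho]\,\tau_F,\qquad N\ge 0,\quad \tr[N\tau_P]=1,\quad N\neq\ident .
\end{equation*}
It is CP, and it maps every GP map to a GP map because $\tr[\mathcal G[\tau_I]]=1$ for $\mathcal G\in\mathsf{GP}$. It also maps every GPTP map to the non-TP map $\rho\mapsto\tr[N\rho]\,\tau_F$, so $\mathcal S\notin\mathbf{GPTPP}$.

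The main obstacle is not conceptual but lies in checking the examples carefully. First, the required operators must exist: $N\neq\ident$ with $\tr[N\tau_P]=1$ needs $\dim P\ge 2$, and an $M$ separating $\sigma_I$ from $\tau_I$ needs $\dim I\ge 2$. Second, we must confirm membership or non-membership in the relevant class. For this I would pass to Choi operators, as in App.~\ref{app:GP_GPTP}, and verify complete positivity of each supermap there.
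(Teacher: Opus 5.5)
Your inclusions, the intersection identity, the trace-and-replace superchannel separating $\mathbf{pGPP}$ from $\mathbf{pGPTPP}$, and your $\mathbf{GPP}\setminus\mathbf{GPTPP}$ witness are all correct and essentially coincide with the paper's arguments. The gap is in your witness for $\mathbf{pGPTPP}\subsetneq\mathbf{GPTPP}$.

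In the paper every element of $\mathbf{GPTPP}$ is a CP supermap. Supermaps are maps $\mathsf{CP}(I,O)\to\mathsf{CP}(P,F)$, the appendix assumes positive Choi operators throughout, and Lem.~\ref{lem:app-GPTPP} includes $\breve S\ge 0$. Your own plan to ``verify complete positivity of each supermap'' reflects this requirement. Your $\lambda$-supermap fails it: you use its non-positivity to conclude that it is not a superchannel. Moreover, $f(\mathcal T)=1$ for every CPTP $\mathcal T$, so the map is normalization-preserving. The only superchannel property it violates is (complete) positivity, which is exactly what it would need to lie in $\mathbf{GPTPP}$. So it lies outside $\mathbf{GPTPP}$ and separates nothing.

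This cannot be fixed by tuning $\lambda$. Write $f(\mathcal T)=\tr[\mathcal T[X]]$ with $X=(1-\lambda)\tau_I+\lambda\sigma_I$, a unit-trace Hermitian operator. The Choi operator of your map is then $\ident_P\otimes X\otimes\ident_O\otimes\tau_F$ (up to transposition). This is positive iff $X\ge 0$, i.e.\ iff $X$ is a state. In that case your map is just the trace-and-replace superchannel again, which is proper. A supermap that only probes the input side this way can never be CP and non-proper.

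The missing idea is to probe the \emph{output} side with a non-identity effect, as the paper does. Take $\mathcal S_{\mathsf A}[\mathcal T][\rho]=\tr\bigl[A\,\mathcal T[\tau_I]\bigr]\tr[\rho]\,\tau_F$ with $A\ge0$, $\tr[A\tau_O]=1$ and $A\ne\ident_O$, which needs $\dim O\ge 2$.
\begin{itemize}
\item Its Choi operator $\ident_P\otimes\tau_I\otimes A^{\mathrm T}\otimes\tau_F$ is positive, so it is CP.
\item It sends every GPTP (indeed every GP) map to $\overline{\mathcal G}$, so it lies in $\mathbf{GPTPP}$.
\item A replacement channel $\rho\mapsto\tr[\rho]\,\sigma_O$ with $\tr[A\sigma_O]\ne1$ is sent to a non-TP map, so $\mathcal S_{\mathsf A}$ is not proper.
\end{itemize}

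You also used the $\lambda$-example to witness $\mathbf{GPTPP}\setminus\mathbf{GPP}\neq\emptyset$, so that claim needs a replacement too. Your first example already does the job: the trace-and-replace superchannel with $\sigma_I\ne\tau_I$ lies in $\mathbf{pGPTPP}\setminus\mathbf{GPP}\subseteq\mathbf{GPTPP}\setminus\mathbf{GPP}$, which is exactly how the paper argues.
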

At first sight, causal order may appear to be a natural property for equilibrium supermaps. However, at this point, it is neither implied by equilibrium preservation, nor it is a requirement to yield mathematically well-defined and physically motivated transformations. For example, the quantum time flip~\cite{Liu2022_Flip, stromberg_experimental_2024} lies in $\mathbf{GPTPP}$ but not in $\mathbf{pGPTPP}$ (see Sec.~\ref{subsec::GPTPP_pGPTPP} below).
\begin{figure}
\centering
    \includegraphics[width = \columnwidth]{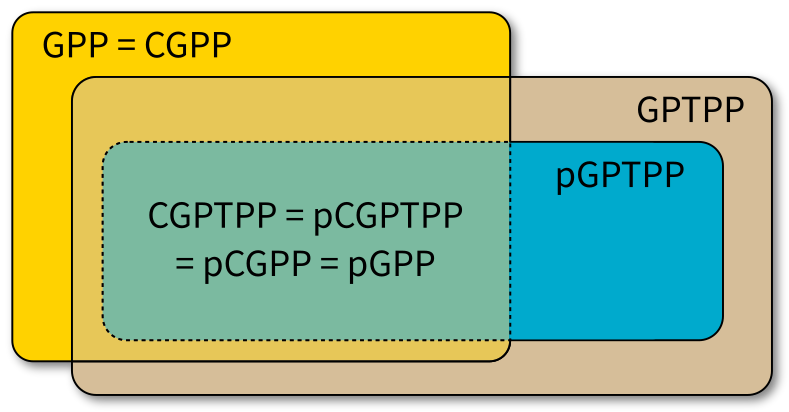}
    \caption{\textbf{Relation between the classes of free transformations.} The requirements (R$1'$), (R$2'$), and (R$2$) define a hierarchy of classes: $\mathbf{pGPP} \subsetneq \mathbf{pGPTPP} \subsetneq \mathbf{GPTPP}$. By contrast, the class $\mathbf{GPP}$ is not contained in this chain but rather overlaps with both $\mathbf{GPTPP}$ and $\mathbf{pGPTPP}$. In particular, its intersection with the latter yields the class $\mathbf{pGPP}$.}
    \label{fig::classes}
\end{figure}

\subsection{Classes of equilibrium transformations}
\label{subsec:free_transform}

\subsubsection{GPP and pGPP transformations}
\label{subsubsec::GPP_pGPP}
We begin with supermaps $\mathcal{S} \in \mathbf{GPP}$ that satisfy R$1$. That is, they preserve probabilistic Gibbs-preserving transformations. Any such supermap satisfies $\mathcal S[\mathcal G]\in\mathsf{GP}(P,F)$ for all $\mathcal G\in\mathsf{GP}(I,O)$ and thus 
\begin{equation}
    (\mathcal{S}[\mathcal{G}])[\tau_P] = \tau_F \quad \forall \mathcal G \in \mathsf{GP}(I,O).
\end{equation}
In App.~\ref{app:GPP}, we provide a full mathematical characterization of $\mathbf{GPP}$. In particular, we show that this set admits a particularly simple physical interpretation: while $\mathcal{S}$ is defined through its action on GP maps $\mathcal{G}$, we can also consider what happens when it is evaluated on a Gibbs state $\tau_P\in\mathcal{L}(\mathcal{H}_P)$. The elements of $\mathbf{GPP}$ are then exactly those supermaps that do not allow for the extraction of non-equilibrium resources. That is, evaluated on $\tau_P$, the supermap $\mathcal{S}$ induces a Gibbs state $\tau_I \in \mathcal{L}(\mathcal{H}_I)$ and a Gibbs-preserving transformation $\mathcal G_{\mathcal S} \in \mathsf{GP}(O,F)$ (see Fig.~\ref{fig::charGPP}). More explicitly, we have the following Lemma: 
\begin{lemma}[Thermality of (p)GPP supermaps] 
$\mathcal{S}$ is a GPP supermap iff it satisfies 
\begin{gather}
    \label{eq:PropGPP}
    (\mathcal{S}[\mathcal{T}])[\tau_{P}] = \mathcal{G}_{\mathcal{S}}\!\left[\mathcal{T}[\tau_{I}]\right],
\end{gather}
for arbitrary $\mathcal{T}\in \mathsf{CP}(I,O)$, where $\mathcal{G}_{\mathcal{S}} \in \mathsf{GP}(O,F)$. If, in addition, $\mathcal{S}$ is a superchannel, then it is a pGPP map.
\end{lemma}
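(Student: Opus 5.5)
The plan is to study the linear map $\mathcal{L}_{\mathcal S}:\mathcal{T}\mapsto(\mathcal{S}[\mathcal{T}])[\tau_P]$ from $\Lcal(\Lcal(\Hcal_I),\Lcal(\Hcal_O))$ to $\Lcal(\Hcal_F)$. Throughout, $\mathcal{S}$ is taken to be a CP supermap. The paper leaves this implicit, but it is needed to obtain complete positivity of $\mathcal{G}_{\mathcal S}$.

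The converse direction is immediate. If the stated identity holds with $\mathcal{G}_{\mathcal S}\in\mathsf{GP}(O,F)$, take any $\mathcal{G}\in\mathsf{GP}(I,O)$. Then $\mathcal{S}[\mathcal{G}]$ is CP because $\mathcal S$ is CP, and
\[
(\mathcal{S}[\mathcal{G}])[\tau_P]=\mathcal{G}_{\mathcal S}[\mathcal{G}[\tau_I]]=\mathcal{G}_{\mathcal S}[\tau_O]=\tau_F .
\]
So $\mathcal{S}\in\mathbf{GPP}$. The last sentence of the lemma is simply the definition of $\mathbf{pGPP}$.

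For the forward direction, suppose $\mathcal{S}\in\mathbf{GPP}$, so that $\mathcal{L}_{\mathcal S}$ equals $\tau_F$ on all of $\mathsf{GP}(I,O)$. The key step is to show that $\mathcal{L}_{\mathcal S}$ vanishes on the kernel $K=\{\Delta:\Delta[\tau_I]=0\}$ of the evaluation map $\mathrm{ev}:\mathcal{T}\mapsto\mathcal{T}[\tau_I]$.

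1. I will use the replacement map $\mathcal{R}[X]=\tr[X]\,\tau_O$. It is GP, and its Choi operator $\ident_I\otimes\tau_O$ is full rank because $\tau_O>0$.
2. Take any Hermiticity-preserving $\Delta\in K$. For small enough $\epsilon$, the Choi operator of $\mathcal{R}\pm\epsilon\Delta$ remains positive, and $(\mathcal{R}\pm\epsilon\Delta)[\tau_I]=\tau_O$. Hence $\mathcal{R}\pm\epsilon\Delta\in\mathsf{GP}(I,O)$.
3. Linearity of $\mathcal{L}_{\mathcal S}$ then gives $\mathcal{L}_{\mathcal S}(\Delta)=\big(\mathcal{L}_{\mathcal S}(\mathcal{R}+\epsilon\Delta)-\mathcal{L}_{\mathcal S}(\mathcal{R})\big)/\epsilon=0$.
4. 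A general $\Delta\in K$ splits as $\Delta_1+i\Delta_2$ with Hermiticity-preserving $\Delta_1,\Delta_2\in K$. This works because $\tau_I$ is Hermitian, so the adjoint-conjugated map $\Delta^\dagger[X]:=\Delta[X^\dagger]^\dagger$ also lies in $K$.

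This "interior point" argument, which shows that $\mathsf{GP}$ spans its affine hull, is the main obstacle. It relies crucially on the Gibbs states being full rank.

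Since $\mathrm{ev}$ is surjective onto $\Lcal(\Hcal_O)$ (use the maps $X\mapsto\tr[X]\,\sigma$), $\mathcal{L}_{\mathcal S}$ factors as $\mathcal{G}_{\mathcal S}\circ\mathrm{ev}$ for a unique linear map $\mathcal{G}_{\mathcal S}:\Lcal(\Hcal_O)\to\Lcal(\Hcal_F)$. This is exactly the stated identity. Gibbs preservation follows from $\mathcal{G}_{\mathcal S}[\tau_O]=\mathcal{L}_{\mathcal S}(\mathcal{R})=\tau_F$.

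It remains to show $\mathcal{G}_{\mathcal S}$ is CP. For an arbitrary ancilla $b$ and positive $\omega\in\Lcal(\Hcal_O\otimes\Hcal_b)$, consider the CP map $\mathcal{T}_\omega[X]=\tr[X]\,\omega\in\mathsf{CP}(I,O\otimes b)$, taking the trivial input ancilla $a$. Complete positivity of $\mathcal S$ gives $(\mathcal{S}\otimes\mathcal{I}_{a\to b})[\mathcal{T}_\omega]\in\mathsf{CP}(P,F\otimes b)$. Expanding $\omega=\sum_k A_k\otimes B_k$ and using linearity together with the factorization above, its value on $\tau_P$ equals $(\mathcal{G}_{\mathcal S}\otimes\Ical_b)[\omega]$, which is therefore positive. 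Since $\omega$ and $b$ are arbitrary, $\mathcal{G}_{\mathcal S}$ is CP, and hence $\mathcal{G}_{\mathcal S}\in\mathsf{GP}(O,F)$.
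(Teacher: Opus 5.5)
Your proof is correct, and it takes a genuinely different route from the paper's.

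\textbf{The paper's route.} The paper proves this statement (Prop.~\ref{prop:app-GPP-link} in the appendix) as a corollary of a full projective characterization of $\mathbf{GPP}$ (Lem.~\ref{lem:app-GPP-projective}). It passes to the Gibbs-transformed Choi operator $\breve S$ and applies the general affine-map theorem, Lem.~\ref{lem::FullyGenProj}, with the non-self-adjoint projectors ${}_{X_\alpha}\bullet$ and ${}_{X_\beta}\bullet$. From the resulting projector identity and trace condition it extracts ${}_{P_\alpha}\breve S={}_{P_\alpha I_\beta}\breve S$, i.e.\ $\tau_P\star S=\tau_I\otimes G_{OF}$. Positivity of $G_{OF}$ is read off from $\breve S\geq 0$. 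Gibbs preservation is first obtained only up to the factor $\tr[\tau_O G_{OF}]$, which is then fixed separately using the trace condition.

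\textbf{Your route.} You replace all of this by an elementary interior-point argument. Because $\overline{\mathcal{G}}$ has a full-rank Choi operator, the complex affine hull of $\mathsf{GP}(I,O)$ is the whole affine space $\{\mathcal{T}:\mathcal{T}[\tau_I]=\tau_O\}$. Hence $\mathcal{T}\mapsto(\mathcal{S}[\mathcal{T}])[\tau_P]$ must factor through $\mathcal{T}\mapsto\mathcal{T}[\tau_I]$. The normalization $\mathcal{G}_{\mathcal S}[\tau_O]=\tau_F$ then comes for free from evaluating on $\overline{\mathcal{G}}$. Your output-ancilla argument for complete positivity is the map-level counterpart of the paper's $S\geq 0\Rightarrow\tau_P\star S\geq 0$. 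The paper itself uses exactly this perturbation $\overline{\mathcal G}\pm\varepsilon\mathcal X$ later, in Lem.~\ref{lem:FstDependsOnlyOnGibbsResponse}.

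\textbf{Trade-offs.} Your route is shorter and does not rely on the Choi-basis conventions. It also makes transparent that full rank enters only through $\tau_O>0$, whereas the Gibbs transform additionally inverts $\tau_I$ and uses that the Gibbs states are real in the Choi basis. The paper's route, in exchange, yields the explicit projector description of $\mathbf{GPP}$ as a by-product. It also fits the same Gibbs-transform machinery that is reused for $\mathbf{GPTPP}$, for complete GPTP preservation, and for the $n$-slot case.
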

Consequently, GPP and pGPP supermaps do not allow for the extraction of non-equilibrium resources, both when acting on an equilibrium transformation $\mathcal G\in\mathsf{GP}(I,O)$ and when probed with an equilibrium input state $\tau_P$, further underlining their interpretation as \textit{equilibrium} transformations. Throughout, we call supermaps with this property [i.e., they satisfy Eq.~\eqref{eq:PropGPP}] \textit{thermal}.

We emphasize that causal ordering does \textit{not} follow from equilibrium principles, but must be imposed manually (we provide an example of a supermap $\mathcal{S} \in \mathbf{GPP} \setminus \mathbf{pGPP}$ in the next section). This is not surprising though: GPP supermaps are only required to preserve probabilistic GP maps. This is too weak a requirement to enforce the corresponding supermaps to be proper, and thus deterministic. At first sight, then, the class $\mathbf{pGPP}$ appears to provide the most satisfactory notion of equilibrium supermaps, combining thermality and causal order. However, since the latter property enters as an additional assumption rather than as a consequence of equilibrium preservation itself, this naturally raises the question whether a stronger formulation of equilibrium can simultaneously enforce thermality \textit{and} causality from a \textit{single} thermodynamic principle.

Remarkably, thermality, the shared property of GPP and pGPP supermaps, is not shared by all candidate notions of equilibrium. As we see next, GPTPP and pGPTPP supermaps may preserve deterministic equilibrium transformations while nevertheless permitting the extraction of non-equilibrium resources. 

\begin{figure}
\centering
    \includegraphics[width = \columnwidth]{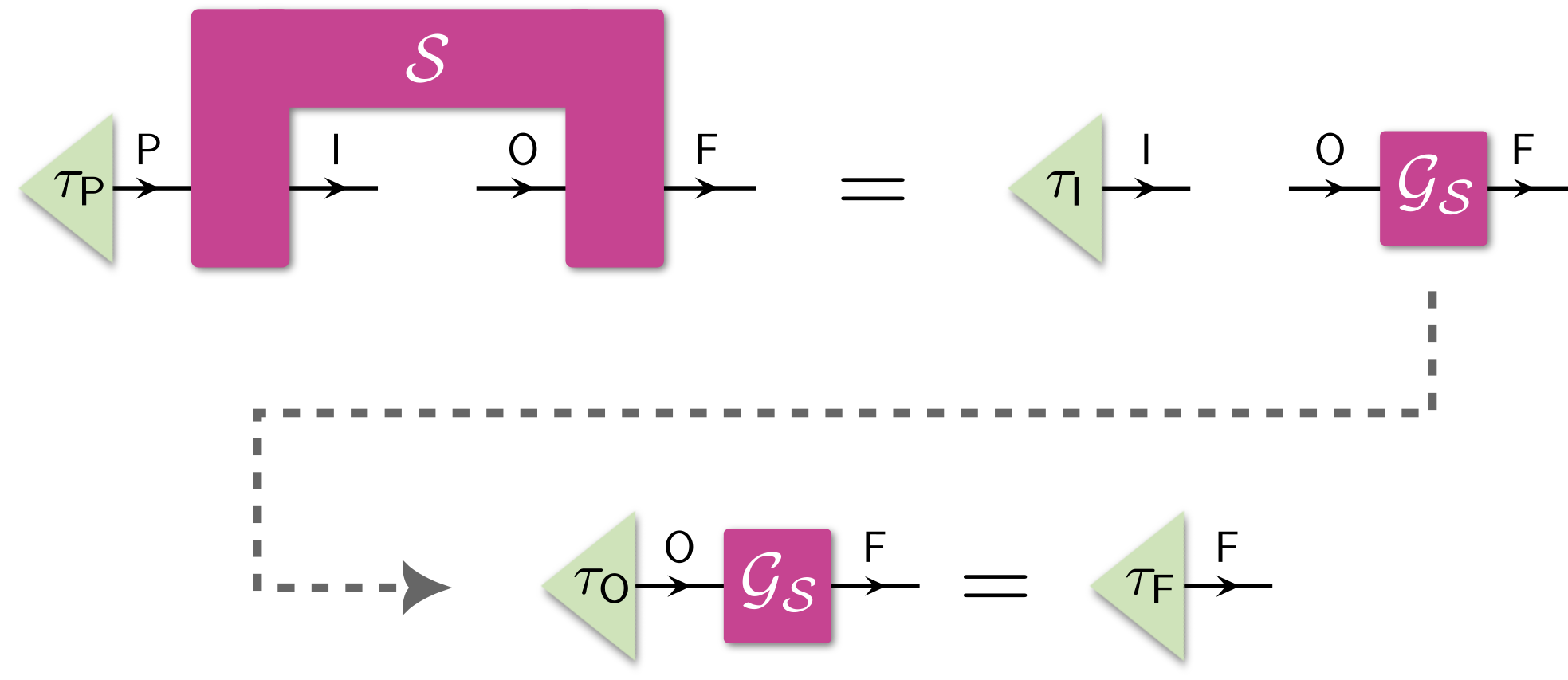}
    \caption{\textbf{Characterization of $\mathbf{GPP}$ supermaps.} If the Gibbs state is prepared in the global past $P$, any supermap $\mathcal{S} \in \mathbf{GPP}$ sends it to the input space $\mathcal{H}_I$ of the input operation. The induced transformation then acts as a Gibbs-preserving map $\mathcal{S}_{\mathrm{op}}: \mathcal{L}(\mathcal{H}_O) \rightarrow \mathcal{L}(\mathcal{H}_F)$ on the output of the operation. Throughout, we call supermaps with this property \textit{thermal}. If, in addition, $\mathcal{S}$ is a superchannel, then $\mathcal{S} \in \mathbf{pGPP}$.}
    \label{fig::charGPP}
\end{figure}

\subsubsection{GPTPP and pGPTPP transformations}
\label{subsec::GPTPP_pGPTPP}
Requirements R$2$ and R$2'$ yield the classes $\mathbf{GPTPP}$ and $\mathbf{pGPTPP}$ of equilibrium supermaps and superchannels that preserve GPTP transformations. Remarkably, preservation of GPTP transformations alone does not imply that $\mathcal{S}\in \mathbf{GPTPP}$ is causally ordered; the latter implies that $\mathcal{S}$ maps \textit{any} CPTP map onto a CPTP map, while $\mathcal{S} \in \mathbf{pGPTPP}$ only forces preservation of the subset of CPTP maps that are Gibbs preserving. We thus have the inclusion $\mathbf{pGPTPP} \subsetneq \mathbf{GPTPP}$. In App.~\ref{app:proofHierarchy}, we provide a supermap $\mathcal{S} \in \mathbf{GPTPP} \setminus \mathbf{pGPTPP}$, proving that this inclusion is strict. For the infinite temperature case, where $\tau=\ident/d$, a well-known example of such a supermap is given by the quantum time flip $\mathcal{S}_\mathsf{QTF}$~\cite{Liu2022_Flip, stromberg_experimental_2024} which superposes the temporal directions of its input transformation. While it preserves bistochastic maps (the GPTP maps for the infinite temperature case), it cannot be implemented through a quantum circuit, implying $\mathcal{S}_\mathsf{QTF} \in  \mathbf{GPTPP} \setminus \mathbf{pGPTPP}$.

To analyze these two candidates of equilibrium transformations, we first consider $\mathbf{pGPTPP}$. It is easy to see that this class has an overlap with $\mathbf{GPP}$, i.e., $\mathbf{pGPTPP}\cap \mathbf{GPP} \neq \emptyset$ (see below). However, they do not coincide. The mathematical structure of $\mathbf{pGPTPP}$ is more intricate than that of $\mathbf{GPP}$ and does not admit a similarly intuitive characterization (for the full mathematical characterization, see App.~\ref{app:GPTPP}). In particular, unlike $\mathbf{GPP}$, the class $\mathbf{pGPTPP}$ contains supermaps from which non-equilibrium resources can be extracted:

\begin{example}[Athermality extraction from pGPTPP supermaps]\label{example:TR}
    Consider the supermap $\mathcal{S}_\mathsf{TR}$ defined for any $\mathcal{T} \in \mathsf{CP}(I,O)$ as
    \begin{equation}
        \mathcal{S}_\mathsf{TR}[\mathcal{T}] = \operatorname{Tr}\big[\mathcal{T}[\eta]\big] \, \overline{\mathcal{G}},
    \end{equation}
    where $\overline{\mathcal{G}}[\rho] = \operatorname{Tr}[\rho] \, \tau_{F}$ for all $\rho \in \mathcal{L}(\mathcal{H}_P)$ is the completely thermalizing channel. Operationally, $\mathcal S_{\mathsf{TR}}$ inserts the state $\eta$ into $\mathcal T$, records only the success probability $\operatorname{Tr}[\mathcal T[\eta]]$, and outputs the completely thermalizing channel weighted by this factor. Evidently, $\mathcal{S}_\mathsf{TR}$ preserves the sets $\mathsf{GPTP}$ as well as $\mathsf{CPTP}$ and therefore lies in $\mathbf{pGPTPP}$ (as well as $\mathbf{GPTPP}$). On the other hand, if $\eta \neq \tau_I$ we have $\mathcal{S}_\mathsf{TR}\notin \mathbf{GPP}$. We can see this directly by observing that $\mathcal{S}_\mathsf{TR}$ outputs $\eta \neq \tau_I$ -- and thus allows for the extraction of athermality -- when evaluated on the Gibbs state $\tau_P$. Similarly, direct computation shows that $\mathcal{S}_\mathsf{TR}[\mathcal{G}]$ is not necessarily a GP map for all $\mathcal{G} \in \mathsf{GP}(I,O)$:
\begin{gather}
    \mathcal{S}_\mathsf{TR}[\mathcal{G}][\tau_P] = \operatorname{Tr}\big[\mathcal{G}[\eta]\big] \, \operatorname{Tr}[\tau_P] \tau_F = \operatorname{Tr}\big[\mathcal{G}[\eta]\big] \, \tau_F.
\end{gather}
If $\mathcal{G}$ is not TP, then there exists a state $\eta \in \mathcal{L}(\mathcal{H}_I)$ such that $\operatorname{Tr}\big[\mathcal{G}[\eta]\big] < 1$. For such a choice,  we see that $ \mathcal{S}_\mathsf{TR}[\mathcal{G}]$ maps $\tau_P$ to $\alpha \tau_F$, with $\alpha < 1$, which is not a Gibbs state. Hence \( \mathcal S_{\mathsf{TR}}[\mathcal G]\notin\mathsf{GP}(P,F). \)
\end{example}
The trace-and-replace construction above immediately shows that 
\begin{gather}
    \mathbf{pGPTPP}\setminus\mathbf{GPP}\neq\emptyset \quad \text{and} \quad  \mathbf{GPTPP}\setminus\mathbf{GPP}\neq\emptyset.
\end{gather} It follows that GPTPP preservation alone is insufficient to guarantee thermality, even if one requires the corresponding supermaps to be causal.
The converse inclusion also fails and simple modifications of the trace-and-replace construction above show that \[ \mathbf{pGPTPP}\cap\mathbf{GPP}\neq\emptyset, \qquad \mathbf{GPP}\setminus\mathbf{GPTPP}\neq\emptyset. \] Hence neither class is contained in the other. In particular, choosing $\eta = \tau_I$ in the definition of $\mathcal{S}_\mathsf{TR}$ yields a supermap that lies both in $\mathbf{GPP}$ \textit{and} $\mathbf{pGPTPP}$. Additionally replacing $\overline{\mathcal{G}}$ by a transformation $\mathcal{G} \in \mathsf{GP}(P,F) \setminus \mathsf{GPTP}(P,F)$ yields a supermap that lies in $\mathbf{GPP}\setminus \mathbf{GPTPP}$ as well as in $\mathbf{GPP}\setminus \mathbf{pGPP}$, proving the corresponding claim in the section above. We thus arrive at the set inclusions depicted in Fig.~\ref{fig::classes}. 

Crucially, GPTPP supermaps are neither causally ordered, nor thermal. The latter pathology is also not cured by manually imposing causal order: even \textit{proper} GPTPP supermaps can contain extractable non-equilibrium resources (see Ex.~\ref{example:TR}). On the other hand, as we show in App.~\ref{app:proofHierarchy}, we have
\begin{gather}
    \mathbf{GPP} \cap \mathbf{pGPTPP} = \mathbf{pGPP}.
\end{gather}
That is, the set of superchannels that are thermal and preserve GPTP transformations corresponds exactly with the set $\mathbf{pGPP}$ of thermal superchannels discussed in the previous section. This observation motivates the search for a stronger equilibrium principle -- provided by \textit{complete} equilibrium preservation -- that enforces both thermality and causal order simultaneously, thereby singling out $\mathbf{pGPP}$ as the unique physically meaningful class of equilibrium supermaps.

\subsection{Complete equilibrium preservation}
\label{subsec:completeSupermaps}
The four notions of equilibrium supermaps introduced above are defined in terms of their action on individual transformations. Experience from quantum theory suggests that this may not be sufficient: positive maps are not physically admissible unless they remain positive when acting on arbitrary subsystems, leading to the concept of \textit{complete} positivity. Motivated by this observation, we require equilibrium preservation to hold in the same operationally robust sense. That is, a supermap should preserve equilibrium not only for isolated transformations, but also when acting on part of a larger equilibrium transformation. This guarantees that it does not generate non-equilibrium outputs from equilibrium inputs solely by exploiting correlations with an ancillary system (see Fig.~\ref{fig::Completeness}). We thus obtain the following notion of \textit{complete} equilibrium preservation. 
\begin{definition}[Complete equilibrium preservation]
    Let $\mathcal{S}: \mathsf{CP}(I,O) \rightarrow \mathsf{CP}(P,F)$ be supermap. We call $\mathcal{S}$ completely equilibrium-preserving if, for every pair of auxiliary systems $\mathcal{H}_{a}$ and $\mathcal{H}_{b}$, the extended supermap
    \begin{align}
        \nonumber \mathcal{S} \otimes \mathfrak{id}\!:\; &\mathcal{L}(\mathcal{L}(\mathcal{H}_I\otimes\mathcal{H}_{a}),\mathcal{L}(\mathcal{H}_O\otimes\mathcal{H}_{b})) \\
        \qquad &\rightarrow \mathcal{L}(\mathcal{L}(\mathcal{H}_P\otimes\mathcal{H}_{a}),\mathcal{L}(\mathcal{H}_F\otimes\mathcal{H}_{b})),
    \end{align}
    is an equilibrium transformation, where $\mathfrak{id}$ is an identity supermap.
\end{definition}
\begin{figure}
    \centering
    \includegraphics[width = 0.9\linewidth]{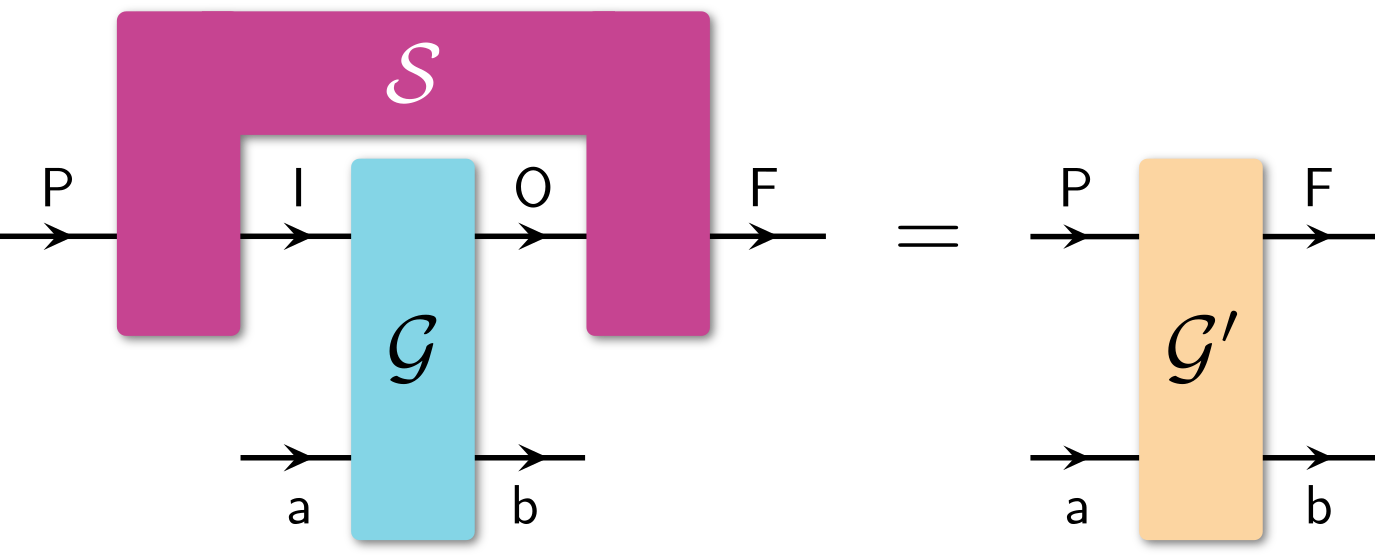}
    \caption{\textbf{Completeness of properties.} A supermap $\mathcal{S}$ preserves a property \textit{completely}, if it preserves it when acting on an arbitrary part of a larger transformation. For example, $\mathcal{S} \in \mathbf{CGPP}$ if it maps any Gibbs-preserving map $\mathcal{G}$ to a Gibbs-preserving map $\mathcal{G}'$ even when only acting on a part of it.}
    \label{fig::Completeness}
\end{figure}
Throughout, each system is assumed to be equipped with its own, well-defined Hamiltonian and corresponding Gibbs state.  Equilibrium preservation of $\mathcal{G}$ therefore requires the joint thermal state of the system and ancilla to be mapped to the corresponding joint thermal state at the output.
That is, a Gibbs-preserving map $\mathcal{G}: \mathcal{L}(\mathcal{H}_I \otimes \mathcal{H}_a) \rightarrow \mathcal{L}(\mathcal{H}_O \otimes \mathcal{H}_b)$ satisfies 
\begin{gather}
    \mathcal{G}[\tau_I \otimes \tau_a] =  \tau_O \otimes \tau_b. 
\end{gather}
The corresponding sets of GP and GPTP transformations are denoted by 
\begin{gather}
    \mathsf{GP}(I\otimes a, O \otimes b), \quad \text{and} \quad \mathsf{GPTP}(I\otimes a, O \otimes b),
\end{gather}
and similarly for input and output spaces $P$ and $F$, respectively. With this, the requirements for equilibrium transformations provided in Sec.~\ref{subsec::eq_supmap} can be strengthened to incorporate \textit{complete} equilibrium preservation as follows:
\begin{enumerate}
    \item[(CR1)] \textbf{Completely GP-preserving supermaps.} 
    Any trivial extension $\mathcal{S} \otimes \mathfrak{id}$ of $\mathcal{S}$ preserves the set of Gibbs-preserving transformations, i.e., 
    \begin{equation}
        \bigl(\mathcal{S} \otimes \mathfrak{id})[\mathcal{G}] \in \mathsf{GP}(P\otimes a, F \otimes b) 
    \end{equation}
    for all $\mathcal{G} \in \mathsf{GP}(I \otimes a, O \otimes b)$. We denote this class of \emph{completely GP-preserving} supermaps $\mathcal{S}$ by $\mathbf{CGPP}$.

    \item[(CR2)] \textbf{Completely GPTP-preserving supermaps.} 
    Any trivial extension $\mathcal{S} \otimes \mathfrak{id}$ of $\mathcal{S}$ preserves the set of Gibbs-and-trace-preserving transformations, i.e., 
    \begin{equation}
        \bigl(\mathcal{S} \otimes \mathfrak{id})[\mathcal{G}] \in \mathsf{GPTP}(P\otimes a, F \otimes b)
        \end{equation}
        for all $\mathcal{G} \in \mathsf{GPTP}(I \otimes a, O \otimes b)$. We denote this class of \emph{completely GPTP-preserving} supermaps $\mathcal{S}$ by $\mathbf{CGPTPP}$.
    
    \item[(CR$1'$)] \textbf{Proper completely GP-preserving supermaps.} 
     $\mathcal{S}$ is completely GP-preserving and a superchannel. We denote this class of \emph{proper completely GP-preserving} supermaps by $\mathbf{pCGPP}$.

    \item[(CR$2'$)] \textbf{Proper complete GPTP-preserving supermaps.} 
     $\mathcal{S}$ is completely GPTP-preserving and a superchannel. We denote this class of \emph{proper completely GPTP-preserving} supermaps by $\mathbf{pCGPTPP}$.
\end{enumerate}

Characterizing the structure of the supermaps that are completely equilibrium preserving, we establish two important results. First, completeness does not impose any new constraints on GP-preserving supermaps.

\begin{theorem}[$\mathbf{GPP}$ and $\mathbf{pGPP}$ supermaps are complete]\label{theo:ComplGPP}
    Completeness does not impose additional constraints on $\mathbf{GPP}$ and $\mathbf{pGPP}$:
    \begin{align}
        \mathbf{CGPP} &= \mathbf{GPP}, \\
        \mathbf{pCGPP} &= \mathbf{pGPP}.
    \end{align}
\end{theorem}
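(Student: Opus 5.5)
The inclusions $\mathbf{CGPP}\subseteq\mathbf{GPP}$ and $\mathbf{pCGPP}\subseteq\mathbf{pGPP}$ follow by taking trivial one-dimensional ancillas $a,b$, for which $\mathcal{S}\otimes\mathfrak{id}=\mathcal{S}$. The content of the theorem is therefore the reverse inclusion. Moreover, the second equality follows from the first, because being a superchannel is a property of $\mathcal{S}$ alone and does not depend on the equilibrium condition. So I will show that every $\mathcal{S}\in\mathbf{GPP}$ is completely GP-preserving.

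The tool is the thermality Lemma: $\mathcal{S}\in\mathbf{GPP}$ if and only if $(\mathcal{S}[\mathcal{T}])[\tau_P]=\mathcal{G}_{\mathcal{S}}[\mathcal{T}[\tau_I]]$ for all $\mathcal{T}\in\mathsf{CP}(I,O)$, with $\mathcal{G}_{\mathcal{S}}\in\mathsf{GP}(O,F)$. Both sides of this identity are linear in $\mathcal{T}$. Since CP maps span $\mathcal{L}(\mathcal{L}(\mathcal{H}_I),\mathcal{L}(\mathcal{H}_O))$, the identity extends to all linear maps $\mathcal{T}$. I can therefore read it as an identity of linear maps from $\mathcal{L}(\mathcal{L}(\mathcal{H}_I),\mathcal{L}(\mathcal{H}_O))$ to $\mathcal{L}(\mathcal{H}_F)$:
\begin{equation}
\mathrm{ev}_{\tau_P}\circ\mathcal{S}=\mathcal{G}_{\mathcal{S}}\circ\mathrm{ev}_{\tau_I},
\end{equation}
where $\mathrm{ev}_\sigma[\mathcal{T}]=\mathcal{T}[\sigma]$.

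Now I tensor this identity with the identity on the ancilla. Take $\mathcal{G}\in\mathsf{GP}(I\otimes a,O\otimes b)$. The space of maps on the composite system factorizes as $\mathcal{L}(\mathcal{L}(\mathcal{H}_I),\mathcal{L}(\mathcal{H}_O))\otimes\mathcal{L}(\mathcal{L}(\mathcal{H}_a),\mathcal{L}(\mathcal{H}_b))$, so I expand $\mathcal{G}=\sum_k \mathcal{A}_k\otimes\mathcal{B}_k$ with linear (not necessarily CP) maps $\mathcal{A}_k,\mathcal{B}_k$. Then
\begin{align}
\bigl((\mathcal{S}\otimes\mathfrak{id})[\mathcal{G}]\bigr)[\tau_P\otimes\tau_a]&=\sum_k (\mathcal{S}[\mathcal{A}_k])[\tau_P]\otimes\mathcal{B}_k[\tau_a]\nonumber\\
&=\sum_k \mathcal{G}_{\mathcal{S}}[\mathcal{A}_k[\tau_I]]\otimes\mathcal{B}_k[\tau_a]\nonumber\\
&=(\mathcal{G}_{\mathcal{S}}\otimes\mathcal{I}_b)\bigl[\mathcal{G}[\tau_I\otimes\tau_a]\bigr]\nonumber\\
&=(\mathcal{G}_{\mathcal{S}}\otimes\mathcal{I}_b)[\tau_O\otimes\tau_b]=\tau_F\otimes\tau_b.
\end{align}
This is exactly Gibbs preservation for the joint Gibbs state $\tau_P\otimes\tau_a$.

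It remains to check that $(\mathcal{S}\otimes\mathfrak{id})[\mathcal{G}]$ is completely positive, as membership in $\mathsf{GP}$ requires. I would take this to be part of the standing assumption that supermaps are CP, so that $\mathcal{S}\otimes\mathfrak{id}$ maps CP maps to CP maps by definition. The step I expect to be the main obstacle is this bookkeeping, together with the linear extension of the Lemma's identity. The Lemma is stated only for CP $\mathcal{T}$, whereas the $\mathcal{A}_k$ need not be CP, so the spanning argument above must be made explicit. If the Lemma's characterization also relies on CP of $\mathcal{S}$, I would include that hypothesis explicitly. With these points settled, the two equalities of the theorem follow.
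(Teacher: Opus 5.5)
Your proof is correct and follows the paper's argument. Both reduce $\mathbf{pCGPP}=\mathbf{pGPP}$ to $\mathbf{CGPP}=\mathbf{GPP}$, get $\subseteq$ from trivial ancillas, and obtain the reverse inclusion by pushing the thermality factorization through the ancilla-extended map to reach $\tau_F\otimes\tau_b$. The paper works with the Choi-level identity $\tau_P\star S_{PIOF}=\tau_I\otimes G_{OF}$, which is an operator equation and applies directly to any $G_{IaOb}$. The linear-extension step you flag as the main obstacle is therefore automatic there, and your spanning argument for it is also fine.
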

The proof can be found in App.~\ref{app:completenessGPP}. This result shows that thermality of $\mathcal{S}$ is already sufficient for complete GP-preservation and the classes $\mathbf{GPP}$ and $\mathbf{pGPP}$ are not reduced beyond ordinary equilibrium preservation by this requirement. We emphasize that, again, causal ordering must be imposed manually and does not follow from completeness. This is not surprising though; GP-preservation only means that $\mathcal{S}$ preserves the probabilistic structure of GP maps, which is too weak to automatically render $\mathcal{S}$ a superchannel. 

In contrast, imposing completeness on GPTP-preserving supermaps has a significantly stronger consequence. While ordinary GPTP-preservation admits higher-order transformations that are neither thermal nor causally ordered, requiring complete preservation eliminates these pathologies entirely. In particular, complete GPTP-preservation forces supermaps to be both thermal and causally ordered:

\begin{theorem}[Completeness enforces thermality and causal order]\label{theo:ClassesCollapse}
    The set of completely GPTP-preserving supermaps coincides with the set of thermal and causally ordered supermaps (see Fig.~\ref{fig::classes}). That is, we have
    \begin{align}
        \mathbf{CGPTPP} = \mathbf{pCGPTPP} = \mathbf{pCGPP} = \mathbf{pGPP}.
    \end{align}
\end{theorem}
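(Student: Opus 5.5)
The plan is to prove the chain of equalities by establishing the inclusions
\[
\mathbf{pGPP} \subseteq \mathbf{pCGPTPP} \subseteq \mathbf{CGPTPP} \subseteq \mathbf{pGPP},
\]
together with $\mathbf{pCGPP} = \mathbf{pGPP}$, which is Theorem~\ref{theo:ComplGPP}.

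\textbf{The easy inclusions.} The inclusion $\mathbf{pCGPTPP}\subseteq\mathbf{CGPTPP}$ holds by definition. For $\mathbf{pGPP}\subseteq\mathbf{pCGPTPP}$, take $\mathcal S\in\mathbf{pGPP}$. By Theorem~\ref{theo:ComplGPP}, $\mathcal S\otimes\mathfrak{id}$ preserves $\mathsf{GP}(I\otimes a,O\otimes b)$. Since $\mathcal S$ is a superchannel, so is $\mathcal S\otimes\mathfrak{id}$ (concatenate the circuit of the Proposition with identity wires on $a,b$), so it preserves trace preservation. Therefore it maps $\mathsf{GPTP}$ into $\mathsf{GP}\cap\mathsf{CPTP}=\mathsf{GPTP}$. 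Complete positivity of the extension follows from $\mathcal S$ being CP.

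\textbf{The main step, $\mathbf{CGPTPP}\subseteq\mathbf{pGPP}$.} Let $\mathcal S\in\mathbf{CGPTPP}$. Two things must be extracted: (i) $\mathcal S$ is a superchannel, i.e. CP and normalization preserving; (ii) $\mathcal S$ is thermal in the sense of Eq.~\eqref{eq:PropGPP}. I would use the ancilla to embed \emph{arbitrary} CP maps into GPTP maps.

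\emph{Embedding.} Let $\mathcal T\in\mathsf{CP}(I,O)$ be rescaled by a small $\epsilon>0$. Choose a classical flag ancilla $a=b$, a qubit with Hamiltonian chosen so that $\tau_a=\mathrm{diag}(p,1-p)$. Define
\[
\mathcal G=\epsilon\,\mathcal T\otimes\Pi_0+\mathcal R\otimes\Pi_0+\mathcal C\otimes\Pi_1,
\]
where $\Pi_i$ denotes measuring and repreparing the flag $i$, and $\mathcal C\in\mathsf{GPTP}(I,O)$ is e.g.\ $\overline{\mathcal G}$. The correction $\mathcal R$ is a CP map chosen so that $\epsilon\mathcal T+\mathcal R$ is TP and GP. The natural candidate is $\mathcal R=(\tau_O-\epsilon\mathcal T[\tau_I])$-replacement along the trace-complement; checking it is CP and TP needs $\epsilon$ small and $\tau_I$ full rank. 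Then $\mathcal G\in\mathsf{GPTP}(I\otimes a,O\otimes a)$.

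By linearity, $(\mathcal S\otimes\mathfrak{id})[\mathcal G]$ is CPTP and GP. Taking differences of two such embeddings, e.g.\ with $\mathcal T$ and with $0$, isolates $\epsilon\,\mathcal S[\mathcal T]$ as a difference of flag blocks. Using a richer ancilla, for instance a maximally entangled-type GPTP extension in the spirit of the standard CP-supermap argument, the requirement that $(\mathcal S\otimes\mathfrak{id})[\mathcal G]$ be CP for all such $\mathcal G$ should force the Choi operator of $\mathcal S$ to be positive. The spanning property of GPTP maps with ancilla (GPTP Choi operators span the full affine space once ancillas are allowed) then yields: $\mathcal S$ is CP, and $\mathcal S$ maps all CPTP maps to TP maps, i.e. $\mathcal S$ is a superchannel. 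By the Proposition, we then get $\mathcal S[\mathcal T]=\mathcal D\circ(\mathcal T\otimes\mathcal I_c)\circ\mathcal E$.

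\emph{Thermality.} Feed $\tau_P$ and use Gibbs preservation of $(\mathcal S\otimes\mathfrak{id})[\mathcal G]$ on $\tau_P\otimes\tau_a$ for the flagged $\mathcal G$ above. The flag-$0$ block gives $(\mathcal S[\epsilon\mathcal T+\mathcal R])[\tau_P]$ with fixed total weight; varying $\mathcal T$ while keeping the block GPTP shows that $(\mathcal S[\mathcal T])[\tau_P]$ depends on $\mathcal T$ only through its value on $\tau_I$. Concretely, I would show that $\mathrm{tr}_c\,\mathcal E[\tau_P]=\tau_I$ and that the reduced map $\rho_O\mapsto\mathcal D[\rho_O\otimes\sigma_c]$ is GP, with $\sigma_c$ the conditional memory. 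This gives Eq.~\eqref{eq:PropGPP}, hence $\mathcal S\in\mathbf{GPP}$ by the Lemma, and therefore $\mathcal S\in\mathbf{pGPP}$.

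\textbf{Expected obstacle.} The hard step is the embedding: proving that the GPTP maps on $I\otimes a\to O\otimes b$, restricted by the flag and ancilla correlations, are rich enough to certify CP, normalization, and the factorization through $\mathcal T[\tau_I]$. A fallback is to work directly with Choi operators. From the characterization of GPTP maps in App.~\ref{app:GP_GPTP}, the linear span of GPTP Choi operators on the extended systems yields exactly the constraints $\mathrm{tr}_O$ and $\tau_I$-contraction. One would then show that the dual affine constraints on the Choi operator of $\mathcal S$ coincide with the pGPP characterization of App.~\ref{app:GPP}.
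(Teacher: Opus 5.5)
Your easy inclusions are fine and agree with the paper. The main step $\mathbf{CGPTPP}\subseteq\mathbf{pGPP}$, however, has a genuine gap: the flagged embedding never actually uses completeness. The flag is measured and re-prepared with the \emph{same} value from $a$ to $b$, and $\tau_a$ is diagonal. Hence $(\mathcal S\otimes\mathfrak{id})[\mathcal G]=\mathcal S[\epsilon\mathcal T+\mathcal R]\otimes\Pi_0+\mathcal S[\mathcal C]\otimes\Pi_1$ is GPTP if and only if each block is separately GPTP. Since $\epsilon\mathcal T+\mathcal R$ and $\mathcal C$ are themselves GPTP, this is literally the non-complete $\mathbf{GPTPP}$ condition. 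Differencing two embeddings does not help either. The correction must depend on $\mathcal T$, because it repairs both the trace and the Gibbs response of $\epsilon\mathcal T$. So you obtain $\epsilon\,\mathcal S[\mathcal T]+\mathcal S[\mathcal R_{\mathcal T}-\mathcal R_0]$, about which nothing is known.

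The paper's own examples show the route cannot work. $\mathcal S_{\mathsf A}$ from App.~\ref{app:proofHierarchy} and $\mathcal S_{\mathsf{TR}}$ from Example~\ref{example:TR} map every GPTP map to $\overline{\mathcal G}$. They therefore map every one of your test maps to $\overline{\mathcal G}\otimes(\Pi_0+\Pi_1)$, which is GPTP. Yet $\mathcal S_{\mathsf A}$ is not a superchannel, and $\mathcal S_{\mathsf{TR}}$ (with $\eta\neq\tau_I$) is not thermal. Consequently the ``spanning property'', and the claim that varying $\mathcal T$ inside a GPTP block ties $(\mathcal S[\mathcal T])[\tau_P]$ to $\mathcal T[\tau_I]$, are unsupported exactly where completeness has to do the work.

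The missing idea is that the Gibbs condition on $I\otimes a\to O\otimes b$ is only \emph{joint}, so the ancilla must relax one of the two constraints branchwise.
\begin{itemize}
\item \emph{Properness.} Use a consumed input flag with trivial $b$: $\mathcal G[\rho\otimes\omega_a]=\sum_i\langle i|\omega_a|i\rangle\,\mathcal T_i[\rho]$. This is GPTP whenever each $\mathcal T_i$ is CPTP and only $\sum_ip_i\mathcal T_i$ is GP. With $p_1\le\lambda_{\min}(\tau_O)$ and $\mathcal T_2$ a suitable replacement channel, $\mathcal T_1$ can be any channel. Trace preservation of $(\mathcal S\otimes\mathfrak{id})[\mathcal G]$ then forces $\mathcal S[\mathcal T_1]$ to be TP. CP is a standing assumption in the paper, and your swap-type idea for deriving it is sound.
\item \emph{Thermality.} Use an output flag recording an instrument outcome, with trivial $a$. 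One then only needs $\mathcal T_j[\tau_I]=q_j\tau_O$ and $\sum_j\mathcal T_j$ TP, while Gibbs preservation of the output forces $(\mathcal S[\mathcal T_j])[\tau_P]=q_j\tau_F$. Take $\mathcal T_1=q\overline{\mathcal G}+\epsilon\mathcal X$ and $\mathcal T_2=(1-q)\overline{\mathcal G}-\epsilon\operatorname{Tr}[\mathcal X[\cdot]]\,\tau_O$. This gives $(\mathcal S[\mathcal X])[\tau_P]=0$ for every $\mathcal X$ with $\mathcal X[\tau_I]=0$, not merely the trace-annihilating ones that plain GPTP-preservation covers. Together with $\mathcal S[\overline{\mathcal G}]\in\mathsf{GP}(P,F)$, this yields Eq.~\eqref{eq:PropGPP}.
\end{itemize}
The paper does the Choi-level version of this. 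It applies Lemma~\ref{lem::FullyGenProj} to the Gibbs transform of $S\otimes\kketbra{\ident}{\ident}_{aa'}\otimes\kketbra{\ident}{\ident}_{bb'}$. It then separates the linearly independent components multiplying $\ident_a\otimes\tau_{a'}\otimes\kketbra{\ident}{\ident}_{bb'}$ (giving thermality) and $\kketbra{\ident}{\ident}_{aa'}\otimes\ident_b\otimes\tau_{b'}$ (giving causal order). Normalizations are fixed via $\overline{\mathcal G}$ and the affine trace condition.
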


\begin{figure*}
\centering
    \includegraphics[width = 0.7\linewidth]{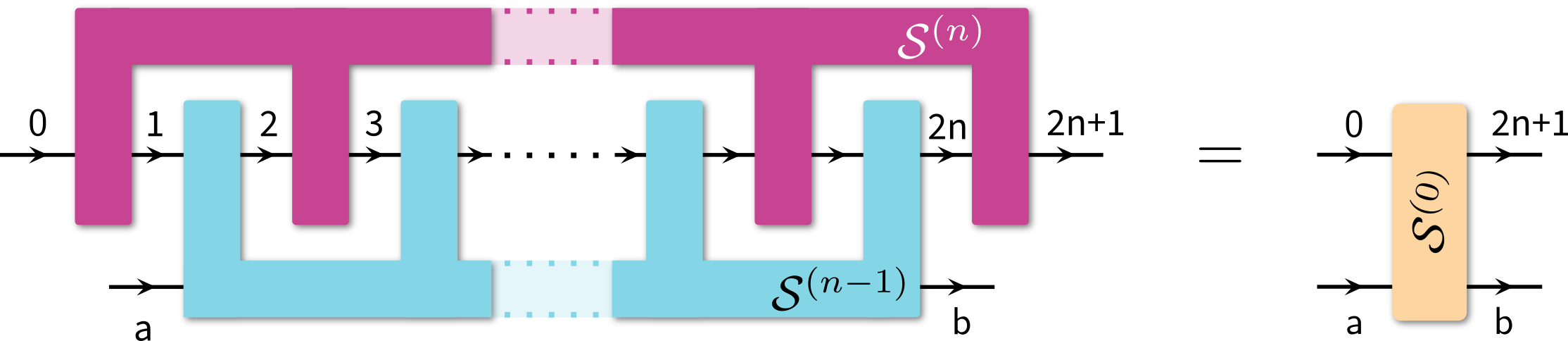}
    \caption{A $\mathbf{CGPTPP}^{(n)}$ supermap $\mathcal{S}^{(n)}$ maps any $\mathbf{CGPTPP}^{(n-1)}$ supermap $\mathcal{S}^{(n-1)}$ into a GPTP map $\mathcal{S}^{(0)}$ even when only acting on a part of it. We emphasize that the ordering of the respective supermaps, suggested by the arrangement of the respective spaces, is not a priori assumed, but follows from complete equilibrium preservation.}
    \label{fig::multi-slot_CGPP}
\end{figure*}
We provide the proof of this Theorem in App.~\ref{app:completenessCausal}. As a result of it, we see that the completeness requirement collapses the classes $\mathbf{GPTPP}$, $\mathbf{pGPTPP}$ into the single class $\mathbf{pGPP}$ of causally ordered equilibrium superchannels from which no non-equilibrium resources can be extracted. Crucially, both thermality and causal order now \textit{follow} from the principle of complete equilibrium preservation rather than being imposed independently. In turn, this makes causal ordering a property that is implied by thermodynamic principles, and renders causal indefiniteness incompatible with this notion of equilibrium. Thm.~\ref{theo:ClassesCollapse} identifies $\mathbf{pGPP}$ -- or, equivalently, $\mathbf{CGPTPP}$ and $\mathbf{pCGPTPP}$ -- as the main candidate for equilibrium supermaps, satisfying, both axiomatically as well as in terms of its causal and thermodynamic properties. From now on, we thus predominantly focus on this class of superchannels as well as its extension to the multi-time case.  

\subsection{Representation of completely GPTP-preserving superchannels}
Above, we have singled out $\mathbf{pGPP}$ as the unique class of deterministic equilibrium supermaps. It is therefore natural to ask how such transformations can be \textit{realized} physically and connect their thermodynamic properties with a concrete physical implementation. Since every element of $\mathbf{pGPP}$ is a superchannel, it admits a circuit implementation in terms of an encoder $\mathcal E$ and decoder $\mathcal D$ (Fig.~\ref{fig::enc_dec}). The question then becomes: what thermodynamic constraints must $\mathcal{E}$ and $\mathcal{D}$ satisfy, and in particular, must they be Gibbs preserving themselves? 

To answer this question, we first note that the quantum channels $\mathcal{E}$ and $\mathcal{D}$ are GPTP if 
\begin{gather}
    \mathcal{E}[\tau_P] = \tau_I \otimes \tau_c \quad \text{and} \quad \mathcal{D}[\tau_O \otimes \tau_c] = \tau_F 
\end{gather}
holds, where $\tau_c$ is the Gibbs state on the auxiliary space $\mathcal{H}_c$. Now, we first show that a superchannel $\mathcal{S}$ made up from $GPTP$ channels $\mathcal{E}$ and $\mathcal{D}$ automatically lies in $\mathbf{pGPP}$ (see App.~\ref{app:encdecpGPPproof} for a proof):

\begin{theorem}[Superchannels from GPTP maps]
\label{theo:encdecpGPP}
    Let $\mathcal{S}$ be a superchannel with $\mathcal{S}[\mathcal{T}] = \mathcal{D} \circ \mathcal{T} \circ \mathcal{E}$. Then $\mathcal{S} \in \mathbf{pGPP}$ if channels $\mathcal{E} \in \mathsf{GPTP}(P, I\otimes c)$ and $\mathcal{D} \in \mathsf{GPTP}(O\otimes c, F)$.
\end{theorem}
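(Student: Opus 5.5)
Since $\mathcal{E}$ and $\mathcal{D}$ are CPTP, $\mathcal{S}[\mathcal{T}] = \mathcal{D}\circ(\mathcal{T}\otimes\mathcal{I}_c)\circ\mathcal{E}$ is a superchannel by construction (as in the proposition on superchannels and quantum circuits). It therefore suffices to show $\mathcal{S}\in\mathbf{GPP}$, since $\mathbf{pGPP}$ is by definition the set of GP-preserving superchannels. I would verify the thermality condition of the Lemma on (p)GPP supermaps, namely $(\mathcal{S}[\mathcal{T}])[\tau_P] = \mathcal{G}_{\mathcal{S}}[\mathcal{T}[\tau_I]]$ with $\mathcal{G}_{\mathcal{S}}\in\mathsf{GP}(O,F)$. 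Alternatively, I would check directly that $\mathcal{S}[\mathcal{G}]\in\mathsf{GP}(P,F)$ for every $\mathcal{G}\in\mathsf{GP}(I,O)$. The direct check is elementary, so I would lead with it and mention the thermality form as a byproduct.

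For $\mathcal{G}\in\mathsf{GP}(I,O)$, complete positivity of $\mathcal{S}[\mathcal{G}]$ follows because it is a composition of CP maps. Preservation of the Gibbs state is the short computation
\begin{align}
(\mathcal{S}[\mathcal{G}])[\tau_P] &= \mathcal{D}\big[(\mathcal{G}\otimes\mathcal{I}_c)[\tau_I\otimes\tau_c]\big] \\
&= \mathcal{D}[\tau_O\otimes\tau_c] = \tau_F ,
\end{align}
using $\mathcal{E}[\tau_P]=\tau_I\otimes\tau_c$, then $\mathcal{G}[\tau_I]=\tau_O$, then $\mathcal{D}[\tau_O\otimes\tau_c]=\tau_F$. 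Hence $\mathcal{S}\in\mathbf{GPP}$, and being a superchannel, $\mathcal{S}\in\mathbf{pGPP}$.

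The thermality form follows from the same computation with $\mathcal{T}\in\mathsf{CP}(I,O)$ arbitrary. We get $(\mathcal{S}[\mathcal{T}])[\tau_P]=\mathcal{D}[\mathcal{T}[\tau_I]\otimes\tau_c]$, so we may set $\mathcal{G}_{\mathcal{S}}[X]:=\mathcal{D}[X\otimes\tau_c]$. This map is CP, being a composition of appending a fixed state and a CP map, and it is GP because $\mathcal{G}_{\mathcal{S}}[\tau_O]=\tau_F$.

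The only point requiring care is the convention that the joint Gibbs state of $I\otimes c$ is the product $\tau_I\otimes\tau_c$. This holds for non-interacting joint Hamiltonians and is exactly the convention the paper adopts for composite systems. With that fixed, there is no real obstacle; the statement is only a sufficient condition, so the converse need not be addressed.
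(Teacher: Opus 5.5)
Your proposal is correct and essentially the paper's argument: the paper computes at the Choi level $\tau_P\star S_{PIOF}=\tau_I\otimes(\tau_c\star D_{cOF})$ and then applies its thermality characterization of $\mathbf{GPP}$ supermaps, and your $\mathcal{G}_{\mathcal{S}}[X]=\mathcal{D}[X\otimes\tau_c]$ is exactly the map with Choi operator $\tau_c\star D_{cOF}$. Your leading direct check that $\mathcal{S}[\mathcal{G}]\in\mathsf{GP}(P,F)$ is the same computation written at the level of maps, and it has the small advantage of not needing that characterization at all.
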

Conversely, given a pGPP superchannel $\mathcal{S}$, we can also characterize the structure of the channels $\mathcal{E}$ and $\mathcal{D}$ appearing in its decomposition (see App.~\ref{app:decomppGPPproof} for a proof).
\begin{theorem}[Decomposition of a $\mathbf{pGPP}$ supermap]
\label{theo:decomppGPP}
    Let $\mathcal{S} \in \mathbf{pGPP}$, with $\mathcal{S}[\mathcal{T}] = \mathcal{D} \circ \mathcal{T} \circ \mathcal{E}$. Then $\mathcal{D} \in \mathsf{GPTP}(O\otimes c, F)$ and the channel $\mathcal{E}: \mathcal{L}(\mathcal{H}_P) \rightarrow \mathcal{L}(\mathcal{H}_I \otimes \mathcal{H}_c)$ is \emph{locally} Gibbs-preserving, i.e., $\mathcal{E}[\tau_{P}] = \xi_{Ic}$, with
    \begin{gather}
        \operatorname{Tr}_I[\xi_{Ic}] = \tau_{c} \quad \text{and} \quad \operatorname{Tr}_c[\xi_{Ic}] = \tau_{I}.
    \end{gather}
\end{theorem}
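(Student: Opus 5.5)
The plan is to derive everything from the thermality characterization in the Lemma on (p)GPP supermaps. Since $\mathcal{S}\in\mathbf{pGPP}$, there is a map $\mathcal{G}_{\mathcal{S}}\in\mathsf{GP}(O,F)$ such that $(\mathcal{S}[\mathcal{T}])[\tau_P]=\mathcal{G}_{\mathcal{S}}[\mathcal{T}[\tau_I]]$ for every $\mathcal{T}\in\mathsf{CP}(I,O)$. Set $\xi_{Ic}:=\mathcal{E}[\tau_P]$. Inserting the decomposition gives, for all CP $\mathcal{T}$,
\begin{equation*}
\mathcal{D}\bigl[(\mathcal{T}\otimes\mathcal{I}_c)[\xi_{Ic}]\bigr]=\mathcal{G}_{\mathcal{S}}\bigl[\mathcal{T}[\tau_I]\bigr].
\end{equation*}
I will then evaluate this identity on well-chosen measure-and-prepare maps $\mathcal{T}$.

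First, I show that $\mathcal{G}_{\mathcal{S}}$ is trace preserving. Take $\mathcal{T}[\rho]=\operatorname{Tr}[\rho]\,\sigma$ for an arbitrary state $\sigma$ on $O$. This is a channel, so $\mathcal{S}[\mathcal{T}]$ is a channel, and the left-hand side has unit trace. The right-hand side is $\mathcal{G}_{\mathcal{S}}[\sigma]$, so $\operatorname{Tr}\,\mathcal{G}_{\mathcal{S}}[\sigma]=1$; by linearity, $\mathcal{G}_{\mathcal{S}}$ is TP. The same choice also yields
\begin{equation*}
\mathcal{D}[\sigma\otimes\xi_c]=\mathcal{G}_{\mathcal{S}}[\sigma]\qquad\text{for all }\sigma,
\end{equation*}
where $\xi_c:=\operatorname{Tr}_I\xi_{Ic}$.

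Next, I show that the $I$-marginal is thermal. Take $\mathcal{T}[\rho]=\operatorname{Tr}[M\rho]\,\sigma$ with $0\le M\le\ident$, which is CP and trace non-increasing. The left-hand side becomes $\mathcal{D}[\sigma\otimes\operatorname{Tr}_I((M\otimes\ident)\xi_{Ic})]$. Taking traces and using that $\mathcal{D}$ is TP gives $\operatorname{Tr}[M\xi_I]$. The right-hand side, using the TP property of $\mathcal{G}_{\mathcal{S}}$ just established, gives $\operatorname{Tr}[M\tau_I]$. Since $M$ is arbitrary, $\operatorname{Tr}_c\xi_{Ic}=\tau_I$.

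The main obstacle is the memory marginal. The memory $c$ is an internal system of the circuit, so its Hamiltonian is not fixed by the statement. I therefore plan to choose it so that $\tau_c=\xi_c$, which requires $\xi_c$ to be full rank. I would secure this by compressing $c$ to $\operatorname{supp}\xi_c$. This is harmless for the following reason. Because $\tau_P$ is full rank, every $\rho$ on $P$ satisfies $\rho\le\lambda\tau_P$ for some $\lambda>0$. Hence $\operatorname{supp}\mathcal{E}[\rho]\subseteq\operatorname{supp}\xi_{Ic}\subseteq\mathcal{H}_I\otimes\operatorname{supp}\xi_c$, so $\mathcal{E}$ can be corestricted and $\mathcal{D}$ restricted without changing $\mathcal{S}$. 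With this identification, $\operatorname{Tr}_I\xi_{Ic}=\tau_c$.

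Finally, $\mathcal{D}$ is Gibbs preserving. Using the identity $\mathcal{D}[\sigma\otimes\tau_c]=\mathcal{G}_{\mathcal{S}}[\sigma]$ and $\mathcal{G}_{\mathcal{S}}\in\mathsf{GP}(O,F)$, we get $\mathcal{D}[\tau_O\otimes\tau_c]=\mathcal{G}_{\mathcal{S}}[\tau_O]=\tau_F$. Since $\mathcal{D}$ is already CPTP, it follows that $\mathcal{D}\in\mathsf{GPTP}(O\otimes c,F)$, which completes the proof.
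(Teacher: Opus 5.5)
Your proof is correct and is essentially the paper's argument written at the level of maps. Testing the thermality identity with measure-and-prepare and trace-and-replace maps amounts to taking the partial traces over $F$ and over $I$ of the Choi identity $\xi_{Ic}\star D_{cOF}=\tau_I\otimes G_{OF}$, which is what the paper does. Two of your steps are more careful than the paper's. You derive that $\mathcal{G}_{\mathcal{S}}$ is trace preserving, which the paper only asserts. Your compression of $c$ to $\operatorname{supp}\operatorname{Tr}_I\xi_{Ic}$ handles the rank issue, whereas the paper simply sets $\tau_c:=\operatorname{Tr}_I\xi_{Ic}$ and remarks that it is a Gibbs state when it happens to be full rank.
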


Taken together, these results provide sufficient and necessary conditions for a superchannel to belong to $\mathbf{pGPP}$, connecting it with the Gibbs-preservation properties of the channels that make it up. We conjecture that Thm.~\ref{theo:decomppGPP} can be further strengthened, namely that a superchannel belongs to $\mathbf{pGPP}$ if and only if it admits a decomposition consisting entirely of Gibbs-preserving channels. We leave this as an open problem.

\section{Multi-slot equilibrium transformations}
\label{sec::multi-Slot}
The case of superchannels considered above shows that complete equilibrium preservation singles out thermal and causally ordered superchannels. As discussed in Sec.~\ref{subsec:higherorder} (see Fig.~\ref{fig::multislot}), such \textit{one-slot} supermaps only form a first level in the hierarchy of higher-order transformations. This raises the question whether the observed relationship between equilibrium preservation and causal ordering persists beyond the simplest nontrivial levels of the hierarchy. We therefore now turn to general higher-order transformations acting on multiple slots and derive the structural properties imposed by complete equilibrium preservation.

\begin{figure*}
 \centering
 \includegraphics[width=0.98\linewidth]{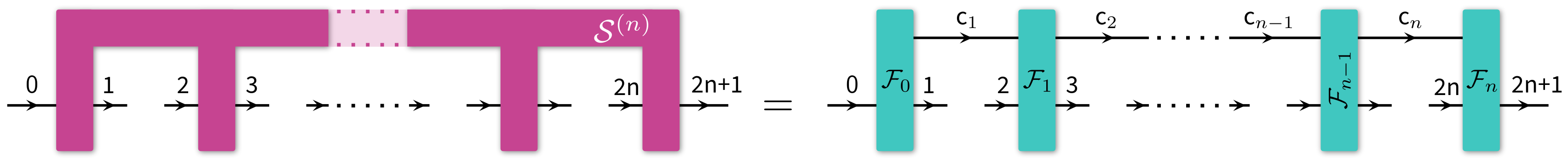}
 \caption{\textbf{Causally ordered $n$-slot supermaps.} An $n$-slot supermap $\mathcal{S}^{(n)}$ is causally ordered with causal order $0\prec 1 \prec \dots \prec 2n \prec 2n+1$ if its action can be decomposes as a sequence of $n+1$ CPTP maps $\mathcal{F}_i: \mathcal{L}(\mathcal{H}_{2i} \otimes \mathcal{H}_{c_i}) \rightarrow \mathcal{L}(\mathcal{H}_{2i+1} \otimes \mathcal{H}_{c_{i+1}})$$ \rightarrow$ for $i = 0, 1, \dots, n$, with $\mathcal{H}_{c_0} \cong \mathcal{H}_{c_{n+1}} \cong \mathbbm{C}$.}
 \label{fig::caus_ord}
\end{figure*}

\subsection{Complete equilibrium transformation for \texorpdfstring{$n$}{n}-slot supermaps}
Operationally, higher-order transformations act as devices with slots that can be filled by lower-order transformations. Filling the slots of a higher-order transformation with suitable lower-order objects produces a new object, whose type depends on the particular higher-order structure under consideration. The exact set of multi-slot transformations one obtains then depends on the respective physical constraints one imposes on them. Consequently, there is no \textit{unique} notion of a higher-order transformation. Rather, different operational closure requirements give rise to different classes of multi-slot objects. For example, requiring of a two-slot transformation to map all superchannels onto CPTP maps yields the set of two-slot combs~\cite{chiribella_quantum_2008}, which display a fixed causal order. By contrast, only demanding that two-slot supermaps map pairs of CPTP maps onto CPTP maps yields the larger class of process matrices~\cite{oreshkov_quantum_2012, Araujo2015}, which also contains causally indefinite processes. 

Here, to obtain a characterization of \textit{equilibrium} $n$-slot supermaps, we first drop the distinction between the different notions of equilibrium-preserving supermaps discussed above. Indeed, Sec.~\ref{subsec:completeSupermaps} showed that complete GPTP preservation singles out the class $\mathbf{pGPP}$ of thermal and causally ordered supermaps, while alternative notions collapse to the same set under the requirement of completeness. We therefore invoke the principle of complete equilibrium preservation as the unique defining property of $n$-slot equilibrium transformations and build the higher-order hierarchy starting from the set of Gibbs- and trace-preserving channels $\mathsf{GPTP} =: \mathbf{CGPTPP}^{(0)}$. This provides us with a recursive characterization of higher-order equilibrium transformations. Starting from Gibbs-preserving channels (i.e., zero-slot equilibrium transformations) at the bottom of the hierarchy, we can successively identify multi-slot equilibrium transformations. The first step of this process has already been carried out in Sec.~\ref{subsec:completeSupermaps}, which derived $\mathbf{pGPP}$ as the unique class of one-slot equilibrium transformations that preserve Gibbs-preserving channels in a complete sense. The next step of the hierarchy then consists of two-slot transformations $\mathcal{S}^{(2)}$ that map one-slot equilibrium transformations onto GPTP maps. Requiring the same closure property at all higher orders then naturally leads to a recursive definition of equilibrium $n$-slot supermaps $\mathcal S^{(n)}$ as those supermaps that map \textit{all} $n-1$ equilibrium supermaps onto equilibrium channels (see Fig.~\ref{fig::multi-slot_CGPP}). For notational convenience, in what follows we label the systems appearing in $n$-slot transformations by arabic numerals $\{0,1,\dots, 2n, 2n+1\}$ instead of $\{I, O, F, P\}$ used in the previous section; while the ordering of these labels suggests a causal ordering of the corresponding supermaps, no such structure is assumed a priori. With this, we obtain the following definition of the set $\mathbf{CGPTPP}^{(n)}$ of $n$-slot equilibrium supermaps (see Fig.~\ref{fig::multi-slot_CGPP}):

\begin{definition}[$n$-slot equilibrium supermaps]
\label{def::n_slot_eq}
    Let $n\in\mathbb{N}$, with $\mathbf{CGPTPP}^{(0)} = \mathsf{GPTP}$. An $n$-slot supermap $\mathcal{S}^{(n)}$ belongs to $\mathbf{CGPTPP}^{(n)}(0,\ldots,2n+1)$ if, for every $(n-1)$-slot supermap ${\mathcal{S}}^{(n-1)} \in \mathbf{CGPTPP}^{(n-1)}(1\otimes a,2,\ldots,2n-1,2n\otimes b)$, it holds that
    \begin{equation}
        (\mathcal{S}^{(n)} \otimes \mathfrak{id}_{a,b})[\mathcal{S}^{(n-1)}] \in \mathsf{GPTP}(0\otimes a,(2n+1)\otimes b),
    \end{equation}
    where $\mathfrak{id}_{a,b}$ denotes the identity supermap with respect to the auxiliary spaces $\mathcal{H}_a$ and $\mathcal{H}_b$. 
\end{definition}

\begin{remark}[Minimal equilibrium preservation requirement]
The above Definition imposes the weakest recursive notion of equilibrium preservation: an $n$-slot equilibrium transformation must map every $(n-1)$-slot equilibrium transformation to an equilibrium channel. One could instead require preservation of equilibrium objects at all lower levels of the hierarchy. As we show below, the weaker requirement already enforces both thermality and causal order, so stronger closure requirements do not lead to a smaller class of equilibrium transformations.
\end{remark}
We reiterate that the labels $0,1, \dots, 2n, 2n+1$ merely distinguish the different spaces $\mathcal{S}^{(n)}$ acts on and should not be interpreted as an assumed causal ordering. The definition only singles out $0$ and $2n+1$ as the global input and output of the resulting GPTP map, and thus as the global past and future of $\mathcal{S}^{(n)}$. We show below that complete equilibrium preservation itself enforces both thermality of $\mathcal{S}^{(n)}$ as well as the complete overall ordering $0\prec 1 \prec \cdots \prec 2n \prec 2n+1$. 

In Def.~\ref{def::n_slot_eq}, we deliberately only attach auxiliary systems to the boundary spaces of the inserted $(n-1)$-slot supermap, i.e., $\mathcal S^{(n-1)} \in \mathbf{CGPTPP}^{(n-1)} (1\otimes a,\ldots,2n\otimes b)$. More general choices with independent auxiliary systems attached to each slot may equally be considered. Under the \textit{consistency assumption} that auxiliary input (output) systems remain input (output) systems of the resulting GPTP map, these can be grouped into the effective auxiliary systems $a$ and $b$ appearing in the Definition (see Fig.~\ref{fig::aux_ident} for a graphical representation).

\begin{remark}[Relation to quantum combs] For trivial auxiliary spaces, replacing the target set $\mathsf{GPTP}$ in Def.~\ref{def::n_slot_eq} by the set of all CPTP maps recovers the standard definition of an $n$-slot quantum comb, i.e., causally ordered supermaps. Since here, we start from the strictly smaller base set $\mathbf{CGPTPP}^{(0)} = \mathsf{GPTP}$ one might (correctly) expect  resulting class of higher-order transformations to be strictly larger than the set of quantum combs. However, we require equilibrium preservation in a \textit{complete} sense, which is restrictive enough to enforce both causal ordering as well as thermality of the resulting $n$-slot supermaps.
\label{rem::rel_qc}
\end{remark}

\subsection{Causally ordered and thermal \texorpdfstring{$n$}{n}-slot supermaps}
With the notion of complete equilibrium preservation in hand, we can now show that this principle suffices to enforce causal ordering and thermality on $\mathcal{S}^{(n)}$. As mentioned in Rem.~\ref{rem::rel_qc}, the subset of $n$-slot supermaps with a fixed causal order (i.e., quantum combs) can be obtained recursively by taking CPTP maps as zero-slot combs and requiring that every $n$-slot comb maps all $(n-1)$-slot combs to a CPTP map. Equivalently, quantum combs are exactly those supermaps that can be obtained from an underlying quantum circuit~\cite{Chiribella2009, Bisio2011, Bisio2019}; we have already seen this for the case of superchannels, which can always be decomposed in terms of an encoder $\mathcal{E}$ and a decoder $\mathcal{D}$ (see Fig.~\ref{fig::enc_dec}). This notion of causal ordering of a supermap in terms of its underlying circuit representation is the one that we employ throughout:

\begin{definition}[Quantum combs/causally ordered $n$-slot supermaps]
    A supermap $\mathcal{S}^{(n)}$ is causally ordered $0\prec 1 \prec \cdots \prec 2n \prec 2n+1$ if it admits a decomposition in terms of $n+1$ CPTP maps $\{\mathcal{F}_i\}_{i=0}^{i=n}$ as in Fig.~\ref{fig::caus_ord}.
\end{definition}
We emphasize that such an order is not assumed in the definition of equilibrium supermaps, but follows as a consequence of complete equilibrium preservation (see Thm.~\ref{prop::multi-slot} below). The notion of thermality, already discussed for one-slot supermaps in Sec.~\ref{subsubsec::GPP_pGPP} (see also Fig.~\ref{fig::charGPP}) generalizes similarly: thermal $n$-slot supermaps are exactly those that do not allow for the extraction of athermality; that is, inserting Gibbs states into $\mathcal{S}^{(n)}$ only yields Gibbs states and equilibrium supermaps, but no athermal resources. Similar to the one-slot case, an $n$-slot supermap contains multiple interfaces at which Gibbs states may be inserted. Consequently, thermality is naturally defined relative to a specified causal ordering (in the one-slot case, this ordering was $P\prec I \prec O \prec F$, see Fig.~\ref{fig::charGPP}):

\begin{definition}[Thermal $n$-slot supermaps]
\label{def::thermal_n_slot_main}
We call an $n$-slot supermap thermal (with respect to the ordering $0\prec1\prec\cdots\prec2n+1$) if successively inserting the Gibbs states $\tau_0,\tau_2,\dots,\tau_{2n}$ along that order recursively produces Gibbs states and equilibrium supermaps of decreasing slot number (see Fig.~\ref{fig::thermRecursion}).
\end{definition}

\begin{figure}[t!]
\centering
    \includegraphics[width = 0.99\linewidth]{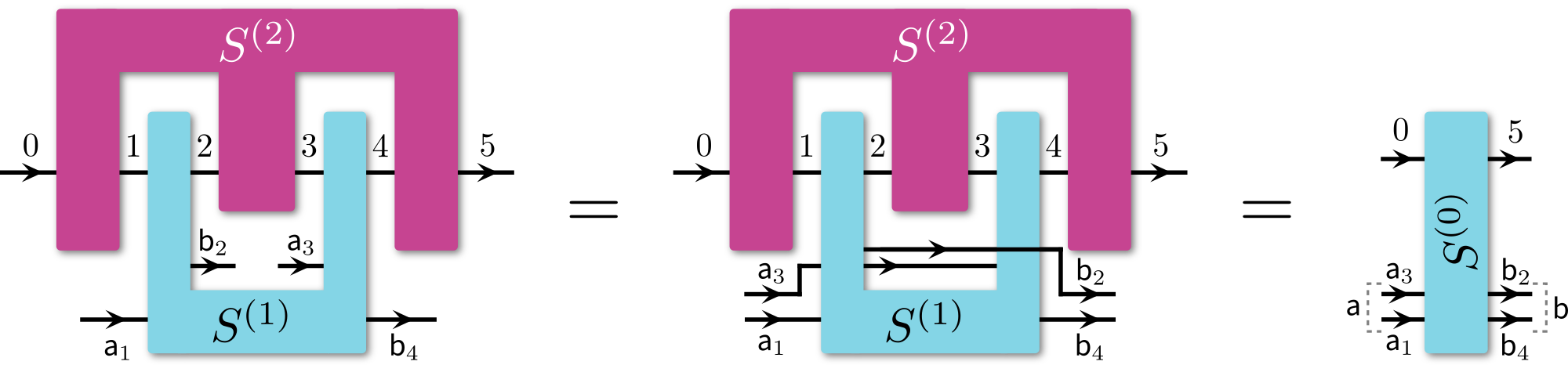}
    \caption{\textbf{Complete equilibrium preservation for $n$-slot supermaps.} For the definition of complete equilibrium preservation of $n$-slot supermaps, we assume throughout that all auxiliary input spaces $a_1, a_2, \dots$ and auxiliary output spaces $b_1, b_2, \dots$ become the input and output spaces of the resulting GPTP map. This allows us to group all auxiliary input (output) spaces into a single label $a$ ($b$).}
    \label{fig::aux_ident}
\end{figure}

With this, we can now state the central result of this section. Remarkably, the recursive notion of equilibrium supermaps derived solely from complete equilibrium preservation coincides exactly with the class of thermal and causally ordered $n$-slot supermaps:
\begin{theorem}
\label{prop::multi-slot}
    A supermap $\mathcal{S}^{(n)}$ lies in $\mathbf{CGPTPP}^{(n)}(0,\dots, 2n+1)$ iff it is causally ordered and thermal with respect to the ordering $0\prec 1\prec \cdots \prec 2n \prec 2n+1$. 
\end{theorem}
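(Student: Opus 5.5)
We will argue by induction on $n$, working in the Choi representation. The base case $n=1$ is Thm.~\ref{theo:ClassesCollapse} (together with Lemma 1): $\mathbf{CGPTPP}^{(1)}=\mathbf{pGPP}$, i.e.\ the causally ordered superchannels satisfying Eq.~\eqref{eq:PropGPP}, which is thermality for the ordering $0\prec1\prec2\prec3$. Thermality of an $n$-slot comb will be recast recursively, mirroring Def.~\ref{def::thermal_n_slot_main}: after feeding $\tau_0$ into the first tooth $\mathcal F_0$, the induced state on $1$ must be $\tau_1$, and the remaining object, viewed as an $(n-1)$-slot comb on $(2,\dots,2n+1)$ with memory $c_1$ marginalised appropriately, must again be thermal.

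\emph{Sufficiency.} Let $\mathcal S^{(n)}$ be causally ordered and thermal, and let $\mathcal S^{(n-1)}\in\mathbf{CGPTPP}^{(n-1)}(1\otimes a,\dots,2n\otimes b)$. By the induction hypothesis $\mathcal S^{(n-1)}$ is itself a thermal comb. The composition $(\mathcal S^{(n)}\otimes\mathfrak{id}_{a,b})[\mathcal S^{(n-1)}]$ is a link product of two combs with interleaved orders, hence a CPTP map $0\otimes a\to(2n+1)\otimes b$. For Gibbs preservation, we feed in $\tau_0\otimes\tau_a$ and propagate: thermality of $\mathcal S^{(n)}$ produces $\tau_1$ on slot $1$ (correlated with memory only through a locally thermal state, as in Thm.~\ref{theo:decomppGPP}). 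Thermality of $\mathcal S^{(n-1)}$ on $\tau_1\otimes\tau_a$ produces $\tau_2$ and a reduced thermal object, and so on alternately. This is the same computation as the one-slot argument leading to Thm.~\ref{theo:ComplGPP}, iterated $n$ times: at each step only marginals on the next input matter, because later teeth act on the output. The result is $\tau_{2n+1}\otimes\tau_b$.

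\emph{Necessity.} This is the main work. First, causal order. We will restrict the inserted objects to special $(n-1)$-slot equilibrium supermaps built from GPTP pieces, which lie in $\mathbf{CGPTPP}^{(n-1)}$ by the sufficiency direction at level $n-1$. Following the strategy behind Thm.~\ref{theo:ClassesCollapse}, we choose ancillas $a,b$ large enough that any CPTP comb on $(1,\dots,2n)$ can be embedded into a GPTP one. Concretely, given a CPTP comb $\mathcal C$, we form a convex/controlled combination $p\,\mathcal C\otimes(\text{flag})+(1-p)\,\mathcal C'$ with a correcting comb $\mathcal C'$ so that the total is Gibbs preserving with respect to $\tau_a,\tau_b$. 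Trace preservation of the outputs for all such inputs then forces, by linearity and spanning, the standard comb normalisation conditions $\tr_{2n+1}W^{(n)}=\ident_{2n}\otimes W^{(n-1)}$, etc. We will derive these level by level, starting from the last output: inserting objects that discard $2n$ and reprepare shows that the partial trace over $2n+1$ is independent of slot $2n$. By the characterisation of combs~\cite{chiribella_quantum_2008}, this yields the decomposition into the teeth $\mathcal F_i$.

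Second, thermality. We will insert objects whose first tooth is a thermalising ancilla-assisted channel, and use Gibbs preservation of the total with arbitrary GP correlations through $a$ to show that $\mathcal F_0[\tau_0]$ has marginal $\tau_1$. This is the analogue of the proof of Lemma 1. We then apply the induction hypothesis to the residual $(n-1)$-slot comb obtained by fixing this input, checking that it still completely preserves $\mathbf{CGPTPP}^{(n-2)}$ objects (route the memory $c_1$ through the ancilla $a$).

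The expected obstacle is the embedding step: showing that the ancilla-extended GPTP combs of level $n-1$ span the same affine space as all CPTP combs, so that trace preservation on them alone enforces every normalisation constraint. I would first try to generalise the one-slot construction by induction, building each embedding tooth by tooth from GPTP channels that swap system and ancilla.
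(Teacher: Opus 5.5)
The genuine gap is in the necessity of thermality. You only aim to show that $\mathcal F_0[\tau_0]$ has marginal $\tau_1$ and that the residual comb is thermal. But thermality (Def.~\ref{def::thermal_n_slot}) is the product form $\tau_0\star R^{(n)}=\tau_1\otimes R^{(n-1)}$: the rest of the comb must not be correlated with the slot-$1$ output. Your two conditions do not imply this.

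Already for $n=1$ there is a counterexample. Let the encoder discard $P$ and prepare a correlated $\xi_{Ic}$ with marginals $\tau_I$ and $\tau_c:=\tau_F$, and let the decoder discard $O$ and emit $c$ as $F$. The first tooth is locally Gibbs preserving, and the residual channel $O\to F$ is the completely thermalizing one. Yet inserting the GPTP swap ($I\to b$, $a\to O$) maps $\tau_P\otimes\tau_a$ to $\xi$ on $b\otimes F$, so this superchannel is not in $\mathbf{CGPTPP}$. This is also why Thm.~\ref{theo:decomppGPP} is only a necessary condition.

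Tests with correlations ``through $a$'' cannot see this: if the slot-$1$ output never reaches $b$, every inserted GPTP map yields $\tau_F$ here. The product form has to come from inserted objects that carry the slot-$1$ output into the output ancilla $b$. In the paper these are the terms proportional to $\ident_a\otimes\tau_{a'}\otimes\kketbra{\ident}{\ident}_{bb'}$ (the condition $f_1=0$) in the identity obtained from Lem.~\ref{lem::FullyGenProj}.

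Within your operational route you can repair this by regrouping. Read $R^{(n)}$ as a one-slot map with $P=0$, $I=1$, $O=2\otimes4\otimes\cdots\otimes2n$ and $F=3\otimes5\otimes\cdots\otimes(2n+1)$. For any GPTP $\mathcal G:1\otimes a\to O\otimes b$, consider the comb that applies $\mathcal G$ at its first tooth, releases $2,4,\dots,2n$ one at a time, and stores $3,5,\dots,2n-1$ in an extra output ancilla. This comb is causally ordered and thermal, hence admissible by the induction hypothesis. Plugging it into $R^{(n)}$ gives the regrouped map applied to $\mathcal G$. So the regrouped map is completely GPTP preserving, and Thm.~\ref{theo:ClassesCollapse} yields $\tau_0\star R^{(n)}=\tau_1\otimes G$. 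Only now does your residual induction on $G=\tr_{01}[\tau_0R^{(n)}]$ (discard slot $1$, feed slot $2$ from $a$) finish the argument.

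The remaining parts are viable, with two caveats. First, your sufficiency computation is the paper's, but it works because of this product form, not because ``only marginals on the next input matter''; the example above shows marginals are not enough.

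Second, the causal-order half is a genuinely different, more operational route than the paper's projector calculus, and the embedding you leave open can be made explicit. Take a full-rank thermal comb $Z$ (discard and prepare $\tau_{2i}$ at every tooth). On a qubit ancilla $a$ take POVM elements $E_0=\epsilon|0\rangle\langle0|$ and $E_1=\ident-E_0$, set $p=\tr[E_0\tau_a]$, and let $C'=(Z-pC)/(1-p)$, which is a valid comb for small $\epsilon$. Then $C\otimes E_0^{\mathrm T}\otimes\tau_b+C'\otimes E_1^{\mathrm T}\otimes\tau_b$ is causally ordered and thermal. Trace preservation of the output, together with linear independence of $E_0,E_1$, gives $\tr_{2n+1}[R^{(n)}\star C]=\ident_0$ for every CPTP comb $C$. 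The branch must be read from the input ancilla $a$: a flag emitted on $b$ would correlate $b$ with the slot outputs and destroy thermality of the inserted object.

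In fact, regrouping at every cut point, $P=0\otimes2\otimes\cdots\otimes2j$, $I=1\otimes3\otimes\cdots\otimes(2j+1)$, and applying Thm.~\ref{theo:ClassesCollapse} yields all comb and thermality conditions at once.
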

\edef\mainpropnum{\theprop}
The proof of this Theorem can be found in App.~\ref{app:multislot}. For $n=1$, it recovers Thm.~\ref{theo:ClassesCollapse} for the superchannel case. More generally, complete equilibrium preservation singles out thermal and causally ordered $n$-slot combs as the unique equilibrium transformations for all $n \geq 1$. We emphasize that, while equilibrium transformations and thermal, causally ordered combs both admit recursive definitions, their equivalence is not a priori evident. As already seen in the one-slot setting, this coincidence is a genuine consequence of \textit{complete} equilibrium preservation; without the completeness requirement, the corresponding notions of equilibrium-preserving higher-order transformations remain distinct. 

Thm.~\ref{prop::multi-slot} identifies equilibrium transformations at arbitrary order as precisely the thermal and causally ordered quantum combs. In particular, complete equilibrium preservation singles out a hierarchy of higher-order processes that neither generate athermality from equilibrium inputs nor allow it to be extracted from equilibrium systems. Having characterized the structure of equilibrium transformations, a natural next question is how to quantify athermality at the level of quantum processes themselves. In ordinary thermodynamics this role is played by the nonequilibrium free energy. Focussing on quantum channels as the next `elements up the hierarchy', we now show that the operational picture developed above naturally gives rise to corresponding free-energy-like quantities for quantum processes.

\begin{figure}[t!]
\centering
    \includegraphics[width = \linewidth]{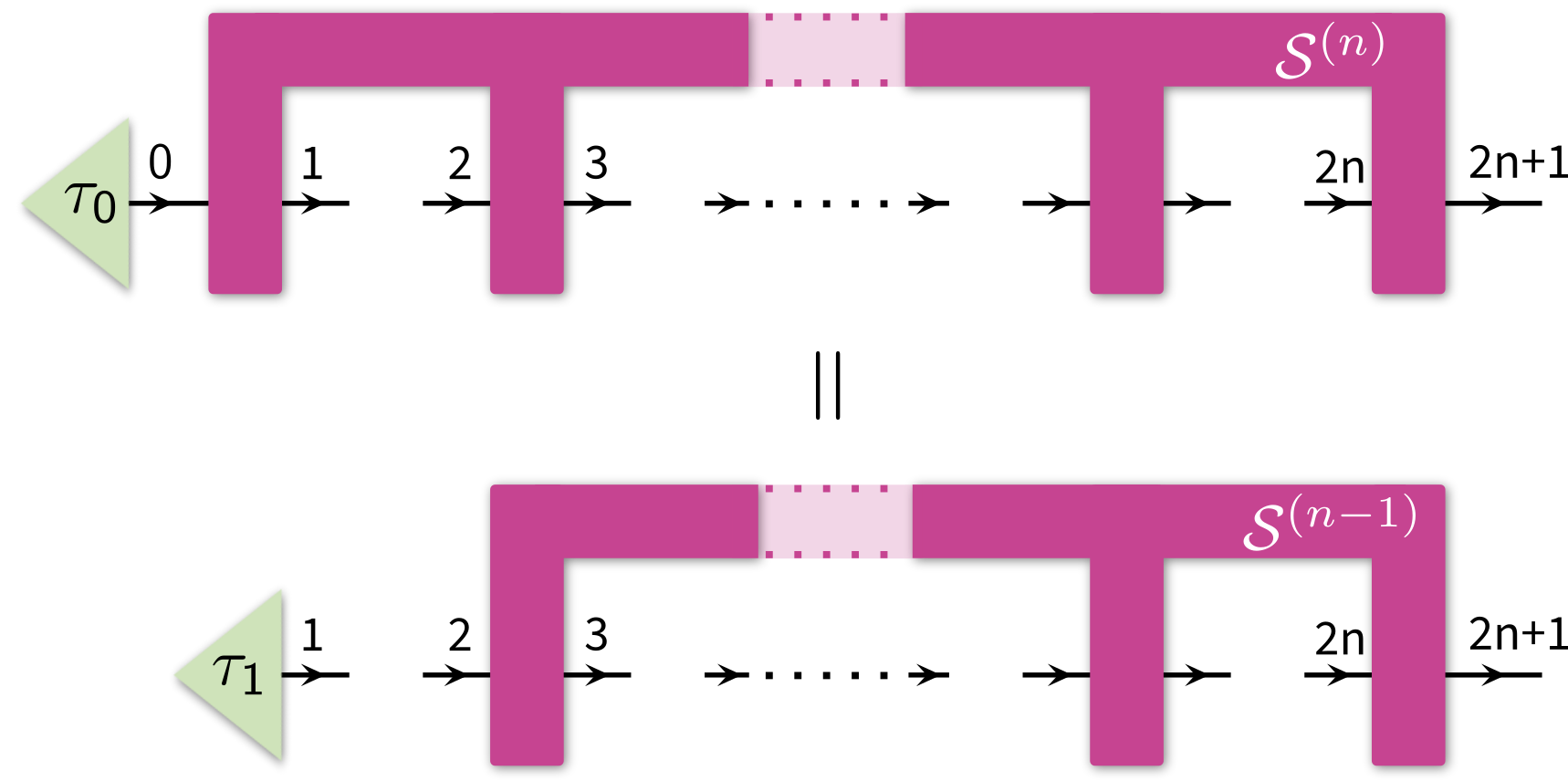}
    \caption{\textbf{Thermality of $n$-slot supermaps.} If $\mathcal{S}^{(n)}$ is thermal with respect to the ordering $0\prec 1\prec \cdots \prec 2n \prec 2n+1$, then inserting the Gibbs state $\tau_0$ into $\mathcal{S}^{(n)}$ yields the Gibbs state $\tau_1$ together with $(n-1)$-slot equilibrium supermap $\mathcal{S}^{(n-1)}$. Repeating this procedure, insertion of the Gibbs state $\tau_2$ into $\mathcal{S}^{(n-1)}$ would then yield the Gibbs state $\tau_3$ and an $(n-2)$-slot equilibrium supermap $\mathcal{S}^{(n-2)}$, and so on. Thus, equilibrium insertions recursively generate only Gibbs states and equilibrium supermaps of lower slot number.}
    \label{fig::thermRecursion}
\end{figure}

\section{Free energy of quantum channels}
\label{sec:monotones}
In standard quantum thermodynamics, the athermality of a quantum state $\rho$ is commonly quantified by the nonequilibrium free energy, which can be expressed in terms of a (pseudo-) distance from the equilibrium state $\tau$ as
\begin{equation}
    \mathcal F[\rho]=\beta^{-1} D(\rho\|\tau),
\end{equation}
where $D$ is the quantum relative entropy. Importantly, this quantity is monotone under Gibbs-preserving channels $\mathcal G$:
\begin{equation}
\label{eqn::monotonicity_free_en}
    \mathcal F[\mathcal G[\rho]]\leq \mathcal F[\rho].
\end{equation}
In thermodynamic state conversion, the free energy plays a distinguished operational role: If $\mathcal F[\rho] \geq \mathcal F[\rho']$ then $\rho$ can be transformed to $\rho'$ with arbitrary precision via an equilibrium transformation if one allows for a catalyst and correlations~\cite{muller_correlating_2018, rethinasamy_relative_2020,  Shiraishi2021, shiraishi_recovery_2025, shiraishi_quantum_2025}. More broadly, exact or more constrained catalytic transformations are characterized by \textit{families} of generalized free energies \cite{Ng2018}. Operationally, the free energy determines the average amount of work extractable from a quantum system in contact with a thermal bath at inverse temperature $\beta$ \cite{skrzypczyk2014work}. Recent results further connect free-energy- and work-extraction-based thermodynamic protocols to the certification of quantum correlations, including steering and entanglement \cite{biswas2025quantum, sarkar2025lost}.

\subsection{Free energy and equilibrium transformations}
Going beyond quantum states, it is natural to regard quantum processes themselves as thermodynamic resources. Having identified equilibrium transformations with thermal and causally ordered combs in the previous section, we now focus on quantum channels and ask how to quantify their athermality. In other words, we seek a free-energy-like quantity that extends the notion of nonequilibrium free energy from states to higher-order processes. A basic requirement for any such quantity $\mathtt{F}$ is that it be non-increasing under equilibrium transformations $\mathcal{S}$. Following the defining property of the state free energy [see~Eq.~\eqref{eqn::monotonicity_free_en}], we therefore require \begin{gather} 
\label{eqn::min_req_free}
\mathtt{F}[\mathcal{S}[\mathcal{T}]] \leq \mathtt{F}[\mathcal{T}], \end{gather} 
for every channel $\mathcal{T}\in\mathsf{CPTP}(I,O)$ and every equilibrium transformation $\mathcal{S}$. The monotonicity requirement in Eq.~\eqref{eqn::min_req_free} is closely related to conditions that have recently been used in Ref.~\cite{badhani_thermodynamics_2025a} to define free energies of quantum channels. The key difference is that, in the present work, monotonicity is imposed with respect to the class of equilibrium-preserving higher-order transformations identified in the previous sections, whereas Ref.~\cite{badhani_thermodynamics_2025a} considers a different underlying set of thermodynamically resourceless transformations.

The characterization obtained in Thm.~\ref{theo:ClassesCollapse} provides a natural operational interpretation of Eq.~\eqref{eqn::min_req_free}: a valid free energy of channels should not increase under arbitrary thermal and causally ordered higher-order processing. Since such transformations cannot generate athermality from equilibrium resources, this condition guarantees that $\mathtt F$ measures a genuine thermodynamic resource. Throughout this Section, unless stated otherwise, we only consider the set of equilibrium transformations $\mathfrak{F} = \mathbf{pGPP}$. This is the set of thermal and causally ordered superchannels identified in Sec.~\ref{subsec:completeSupermaps}. In addition, it contains the identity supermap and is closed under composition [see App.~\ref{app:compositionId}], which provides the natural closure properties expected of a set of thermodynamic transformations. Most of the results presented below remain valid if $\mathfrak F=\mathbf{pGPTPP}$ is chosen instead. For completeness, explicit proofs for both choices are provided in the Appendix, and we will indicate whenever a statement is specific to one of them.

\subsection{Gibbs-response free energy}
\label{sec::Gibbs_res}

A natural way to quantify the athermality of a quantum \textit{channel} is to extract a quantum \textit{state} from it via equilibrium higher-order processing and evaluate its free energy. The larger the free energy that can be obtained in this way, the greater the athermality of the underlying channel.

To formalize this intuition, we consider the subclass $\mathfrak{F}_{\rm TR}\subset \mathfrak{F}$ of equilibrium supermaps whose output is a trace-and-replace channel. Specifically, for every $\mathcal{S}\in \mathfrak{F}_{\rm TR}$ and every channel $\mathcal{T}\in \mathsf{CPTP}(I,O)$, the action of the channel $\mathcal{S}[\mathcal{T}]\in \mathsf{CPTP}(P,F)$ is given by 
\begin{equation}
    (\mathcal{S}[\mathcal{T}])[\rho_P] = \operatorname{Tr}[\rho_P] \eta_F^{\mathcal T}.
\end{equation}
That is, $\mathcal{S}[\mathcal{T}]$ replaces every input state $\rho_P$ by $\eta_F^{\mathcal T}\in \mathcal{L}(\mathcal{H}_F)$. This leads to the following quantity, which measures the maximal free energy that can be extracted from $\mathcal T$ in the form of a quantum state by equilibrium higher-order processing:
\begin{align}\label{eqn::free_energy}
    \nonumber \mathtt{F}_{\rm st}^{\mathfrak{F}}[\mathcal{T}] &:= \sup_{\mathcal{S} \in \mathfrak{F}_{\rm TR}}\mathcal{F}\Bigl[(\mathcal{S}[\mathcal{T}])[\rho_P]\Bigr] \\
    &= \sup_{\mathcal{S} \in \mathfrak{F}_{\rm TR}} \mathcal{F}[\eta^{\mathcal{T}}_F],
\end{align}
where $\rho_P$ is arbitrary. The following result reveals a more direct thermodynamic interpretation of $\mathtt{F}_{\rm st}^{\mathfrak{F}}[\mathcal{T}]$ in terms of the response of equilibrium-processed versions of $\mathcal T$ to Gibbs-state inputs: 
\begin{lemma}[$\mathtt{F}_{\rm st}^{\mathfrak{F}}$ as Gibbs-response free energy]
\label{lem:entirepGPP}
    Let $\mathcal{T} \in \mathsf{CPTP}(I,O)$. Then
    \begin{equation}\label{eq:opt_pGPP}
        \mathtt{F}_{\rm st}^{\mathfrak{F}}[\mathcal{T}] = \sup_{\mathcal{S} \in \mathfrak{F}} \mathcal{F}\Bigl[(\mathcal{S}[\mathcal{T}])[\tau_{P}]\Bigr].
    \end{equation}
    \begin{proof}
        See App.~\ref{app:sub:lemmaResp1}.
    \end{proof}
\end{lemma}

Lem.~\ref{lem:entirepGPP} shows that the extraction task that defines $ \mathtt{F}_{\rm st}^{\mathfrak{F}}$ can equivalently be written as an optimization over \textit{all} equilibrium transformations, provided that the processed channel is evaluated on the Gibbs state $\tau_P$. For this reason, we can interpret $\mathtt{F}_{\rm st}^{\mathfrak{F}}$ as a \emph{Gibbs-response free energy}: It measures the maximal athermality obtainable from $\mathcal{T}$ when it is subjected to equilibrium higher-order processing and probed with a Gibbs-state input.
 
Besides its operational interpretation, the quantity $\mathtt F_{\rm st}^{\mathfrak F}$ satisfies the key properties expected of a higher-order free energy: it is monotonic under equilibrium transformations and reduces to the standard state free energy for trace-and-replace channels. Moreover, for $\mathfrak F=\mathbf{pGPP}$ it admits a particularly simple closed form:

\begin{theorem}[Monotonicity and state free energy correspondence of $\mathtt{F}_{\rm st}^{\mathfrak{F}}$ and its closed form for $\mathbf{pGPP}$]\label{theo:GrespClosed}
    $\mathtt{F}_{\rm st}^{\mathfrak{F}}$ is monotonically non-increasing under transformations in $\mathfrak{F}$ and, for trace-and-replace channels
    \begin{equation}\label{eq:channelTR}
        \mathcal{T}_\eta[\rho] = \operatorname{Tr}[\rho]\eta,
    \end{equation}
    satisfies
    \begin{equation}
        \mathtt{F}_{\rm st}^{\mathfrak{F}}[\mathcal{T}_\eta] = \mathcal{F}[\eta] = D(\eta\|\tau).
    \end{equation}
    For $\mathfrak{F} = \mathbf{pGPP}$, it further admits the closed form
    \begin{equation}\label{eqn::simple_free_energy}
        \mathtt{F}_{\rm st}^{\mathbf{pGPP}}[\mathcal{T}] = \mathcal{F}[\mathcal{T}[\tau_I]].
    \end{equation}
    \begin{proof}
        See App.~\ref{app:sub:lemmaResp2}.
    \end{proof}
\end{theorem}

\subsection{Free energies from channel divergences}
\label{sec::free_en_div}
$\mathtt{F}_{\rm st}^{\mathfrak{F}}$ provides a natural extension of free energy to quantum channels. However, it probes a channel only through its action on a single state, the Gibbs state, and therefore cannot be expected to fully capture the athermality of a general quantum process. In ordinary thermodynamics, the free energy of a state can be expressed as its relative entropy from thermal equilibrium. A natural generalization is therefore to compare quantum channels directly with a thermodynamically resourceless channel. This leads to free-energy-like quantities based on channel divergences, which quantify distinguishability at the level of quantum processes themselves. In contrast to $\mathtt F_{\rm st}^{\mathfrak F}$, which captures the athermality accessible through equilibrium probing, channel divergences provide more refined measures of the distinguishability of a quantum process from thermal equilibrium.

Accordingly, we seek free-energy-like quantities obtained by replacing the state relative entropy in the standard free energy with suitable notions of channel distinguishability. Among the many possible channel divergences~\cite{Gour2021}, here we focus on two particularly important cases. First, the \emph{channel relative entropy}
\begin{equation}\label{eq:chanDiv}
    D_{\rm ch}(\mathcal{T}\|\mathcal{T}') = \sup_{\rho_{RI}} D\!\left( (\mathcal{I}_R \otimes \mathcal{T})[\rho_{RI}] \,\middle\|\, (\mathcal{I}_R \otimes \mathcal{T}')[\rho_{RI}] \right),
\end{equation}
which measures the maximal distinguishability between $\mathcal T$ and $\mathcal T'$ obtainable from their action on an arbitrary input state, possibly correlated with a reference system.

Second, the \emph{amortized channel relative entropy}
\begin{align}
    \nonumber D_{\rm am}(\mathcal{T}\|\mathcal{T}') &= \sup_{\rho_{RI},\sigma_{RI}} \Big[ D\bigl( (\mathcal{I}_R \otimes \mathcal{T})[\rho_{RI}] \,\big\|\, (\mathcal{I}_R \otimes \mathcal{T}')[\sigma_{RI}] \bigr) \\
    &\qquad - D(\rho_{RI}\|\sigma_{RI}) \Big], \label{eq:chanAmorDiv}
\end{align}
which measures the net distinguishability generated by the channels themselves, after removing any distinguishability already present in the inputs. Intuitively, the channel relative entropy captures one-shot channel distinguishability, whereas the amortized channel relative entropy incorporates arbitrary distinguishability already present in the inputs and therefore provides a more refined measure.

Following the intuition from the quantum \textit{state} case, we thus introduce free energy quantities for quantum channels via their relative entropy with respect to the completely thermalizing channel $\overline{\mathcal{G}}$ that replaces every input $\rho_I \in \mathcal{L}(\mathcal{H}_I)$ with the Gibbs state $\tau_O$, i.e., 
\begin{equation}\label{eq:channelTherm}
    \overline{\mathcal{G}}[\rho_I] = \operatorname{Tr}[\rho_I] \tau_{O}.
\end{equation}
With this, we obtain:
\begin{definition}[Free energies from channel divergences]\label{def:chanFreeEn}
Let $\mathcal{T} \in \mathsf{CPTP}(I,O)$. Then
\begin{align}
\label{eqn::chann_free}
    \mathtt{F}_{\rm ch}^{\mathfrak{F}}[\mathcal{T}] &:= \sup_{\mathcal{S}\in\mathfrak{F}} \beta^{-1}D_{\rm ch}\bigl(\mathcal{S}[\mathcal{T}]\|\overline{\mathcal{G}}\bigr),\\
    \text{and} \quad 
    \mathtt{F}_{\rm am}^{\mathfrak{F}}[\mathcal{T}] &:= \sup_{\mathcal{S}\in\mathfrak{F}} \beta^{-1}D_{\rm am}\bigl(\mathcal{S}[\mathcal{T}]\|\overline{\mathcal{G}}\bigr)
\end{align}
are the \textit{channel free energy} and \textit{amortized channel free energy} of $\mathcal{T}$, respectively.
\end{definition}
We note that the quantity $\beta^{-1}D_{\rm ch}(\mathcal T\|\overline{\mathcal G})$ without the optimization over equilibrium transformations was recently proposed as a notion of channel free energy in Ref.~\cite{badhani_thermodynamics_2025a}. Here, we obtain the following Theorem (see App.~\ref{app:freeEnHier} for a proof), establishing the monotonicity of $\mathtt{F}_{\rm ch}^{\mathfrak{F}}$ and $\mathtt{F}_{\rm am}^{\mathfrak{F}}$ as well as their relative ordering: 
\begin{theorem}
\label{thm::hier_monotones}
The quantities $\mathtt{F}_{\rm ch}^{\mathfrak{F}}$ and $\mathtt{F}_{\rm am}^{\mathfrak{F}}$ are monotonic under equilibrium superchannels $\mathcal{S} \in \mathfrak{F}$. In addition, they satisfy
\begin{equation}\label{eq:freeEnHier}
    \mathtt{F}_{\rm st}^{\mathfrak{F}}[\mathcal{T}] \leq \mathtt{F}_{\rm ch}^{\mathfrak{F}}[\mathcal{T}] \leq \mathtt{F}_{\rm am}^{\mathfrak{F}}[\mathcal{T}].
\end{equation}
\end{theorem}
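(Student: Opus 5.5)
Monotonicity follows from closure of $\mathfrak{F}$ under composition (App.~\ref{app:compositionId}); the two inequalities follow from restricting the input states in the definitions of $D_{\rm ch}$ and $D_{\rm am}$.

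\emph{Monotonicity.} For $\mathcal{S}'\in\mathfrak{F}$ we have
\begin{equation}
\mathtt{F}_{\rm ch}^{\mathfrak{F}}[\mathcal{S}'[\mathcal{T}]]=\sup_{\mathcal{S}\in\mathfrak{F}}\beta^{-1}D_{\rm ch}\bigl((\mathcal{S}\circ\mathcal{S}')[\mathcal{T}]\|\overline{\mathcal{G}}\bigr).
\end{equation}
Since $\mathcal{S}\circ\mathcal{S}'\in\mathfrak{F}$, this supremum runs over a subset of the one defining $\mathtt{F}_{\rm ch}^{\mathfrak{F}}[\mathcal{T}]$, so the left-hand side is at most $\mathtt{F}_{\rm ch}^{\mathfrak{F}}[\mathcal{T}]$. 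The same argument applies verbatim to $\mathtt{F}_{\rm am}^{\mathfrak{F}}$.

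The only subtlety is that the output spaces $P,F$ of $\mathcal{S}$ must be fixed, or else allowed to range over all systems, consistently on both sides. I will take the supremum over all equilibrium supermaps with fixed target type $\mathsf{CPTP}(P,F)$. Composition then still lands in that type, because $\mathcal{S}'$ maps $(I,O)\to(I',O')$ and $\mathcal{S}$ maps $(I',O')\to(P,F)$.

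\emph{First inequality.} By Lem.~\ref{lem:entirepGPP}, $\mathtt{F}_{\rm st}^{\mathfrak{F}}[\mathcal{T}]=\sup_{\mathcal{S}\in\mathfrak{F}}\beta^{-1}D\bigl((\mathcal{S}[\mathcal{T}])[\tau_P]\,\|\,\tau_F\bigr)$. Note that $\overline{\mathcal{G}}[\tau_P]=\tau_F$. In the definition of $D_{\rm ch}$, choose a trivial reference system $R$ and input $\rho_{RI}=\tau_P$. This gives, for every $\mathcal{S}$,
\begin{equation}
D_{\rm ch}\bigl(\mathcal{S}[\mathcal{T}]\|\overline{\mathcal{G}}\bigr)\ge D\bigl((\mathcal{S}[\mathcal{T}])[\tau_P]\,\|\,\tau_F\bigr).
\end{equation}
Taking the supremum over $\mathcal{S}$ on both sides yields the first inequality. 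If Lem.~\ref{lem:entirepGPP} is to be avoided, the same argument works directly with the trace-and-replace subclass $\mathfrak{F}_{\rm TR}\subset\mathfrak{F}$: there $(\mathcal{S}[\mathcal{T}])[\rho_P]=\eta_F^{\mathcal{T}}$ for any input, so the definition with arbitrary $\rho_P$ gives the bound immediately.

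\emph{Second inequality.} In the definition of $D_{\rm am}$, set $\sigma_{RI}=\rho_{RI}$. Then $D(\rho_{RI}\|\sigma_{RI})=0$ and the objective reduces to that of $D_{\rm ch}$. Hence $D_{\rm am}\ge D_{\rm ch}$ pointwise for every pair of channels, and taking suprema over $\mathcal{S}\in\mathfrak{F}$ gives the claim.

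\emph{Main obstacle.} The main care point is not mathematical depth but bookkeeping: handling infinite values, and ensuring $D(\rho\|\sigma)<\infty$ is imposed in the amortized supremum so that expressions of the form $\infty-\infty$ are excluded. I would state that convention explicitly.

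For $\mathfrak{F}=\mathbf{pGPTPP}$ the same proof works, provided that class is also closed under composition. That closure is expected to hold: a concatenation of superchannels is a superchannel, and preservation of $\mathsf{GPTP}$ composes.
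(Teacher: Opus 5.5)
Your proposal is correct and follows the paper's proof essentially step for step. Monotonicity comes from closure of $\mathfrak{F}$ under composition (Lem.~\ref{lem:composition}). The first inequality comes from evaluating $D_{\rm ch}$ on a trivial reference with Gibbs input $\tau_P$, combined with Lem.~\ref{lem:entirepGPP}. The second comes from setting $\sigma_{RI}=\rho_{RI}$ in $D_{\rm am}$.
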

The hierarchy~\eqref{eq:freeEnHier} reflects a progression from athermality accessible through equilibrium probing ($\mathtt F_{\rm st}^{\mathfrak F}$) to increasingly refined quantifiers based on channel-level distinguishability.

Having identified thermal and causally ordered combs as the free higher-order thermodynamic transformations, a natural next question concerns the interconversion of quantum processes under such operations. While the free-energy-like quantities introduced above provide useful monotones, they do not in general fully determine process convertibility. In App.~\ref{app::Convertibility}, we investigate this question in more detail. We show that broad classes of monotones based on Gibbs responses and fixed thermal references are insufficient to characterize higher-order thermodynamic conversions, but that complete characterizations can be obtained from suitably generalized divergence-based constructions. These results provide a first step towards a full theory of process conversion in higher-order thermodynamics and indicate that the convertibility structure is considerably richer than what can be captured by any single free-energy quantity.

\section{Conclusions and outlook}
\label{sec::conclusions}
In this work, we have extended the thermodynamic framework beyond quantum states and channels and derived the properties of thermodynamically admissible higher-order transformations. Specifically, we have established equilibrium preservation as a thermodynamic principle and showed that it leads to several inequivalent classes of equilibrium supermaps. In general, these candidate classes differ in their causal and thermodynamic properties. Requiring equilibrium preservation to hold \textit{completely} collapses these distinctions and uniquely singles out thermal and causally ordered superchannels. This is similar to the standard case of quantum channels, where it is the requirement of \textit{complete} positivity that renders them physical. Here, completeness of a thermodynamic property enforces the causal ordering of the respective supermaps, and ensures that no athermality can be extracted from them. We extended this characterization to \textit{arbitrary} higher-order transformations, showing that complete equilibrium preservation generically leads to thermal and causally ordered quantum combs. As a consequence, these results establish a direct link between causal order and thermodynamic principles, and identify causal indefiniteness as an out-of-equilibrium property. 

Building on this characterization, we introduced several free-energy-like quantifiers for quantum channels, established their monotonicity under equilibrium transformations, and derived a hierarchy between them. In addition, we investigated the extent to which such quantities constrain channel convertibility. Together, these developments provide a foundation for a systematic study of equilibrium, resourcefulness, and process interconversion at arbitrary levels of the hierarchy of quantum transformations.

The foundational and practical insights provided in this work open several directions for future research. While we focused on Gibbs-preserving processes, it would be interesting to investigate analogous constructions for more restrictive thermodynamic frameworks, such as thermal operations, and to determine whether causal order of the corresponding equilibrium transformations can likewise be derived from thermodynamic principles. 

More broadly, our results raise the question of whether thermal and causally ordered higher-order transformations can be derived from principles alternative to equilibrium preservation. Just as Gibbs states admit several independent characterizations, one may ask whether the same is true for higher-order equilibrium processes. Promising directions include extending thermal operations to higher-order transformations through suitable restrictions on their purifications, generalizing passivity and complete passivity from states to channels and higher-order maps, or introducing higher-order thermodynamic potentials whose minimizers coincide with equilibrium transformations. Should these approaches lead to the same class of equilibrium transformations, thermal and causally ordered higher-order processes would play a role analogous to that of the Gibbs state in ordinary thermodynamics. Conversely, a divergence between these constructions would indicate that higher-order thermodynamics requires additional principles beyond those needed to characterize equilibrium states.

Having established causal indefiniteness as an out-of-equilibrium property, a natural next step is to quantify the connection between causal structure and thermodynamic resourcefulness. While measures for causal indefiniteness -- like, e.g., causal robustness~\cite{Araujo2015} -- exist, their relation to the free-energy-like quantities introduced here remains unknown. Deriving it would furnish causal indefiniteness with a clear thermodynamic interpretation and reveal the thermodynamic value of causally indefinite correlations. 

Finally, the higher-order free energies introduced here raise both operational and conceptual questions. While they are naturally aligned with the equilibrium transformations identified in this work, their interpretation in terms of concrete thermodynamic tasks remains open. Recent works have established such operational meanings for divergence-based channel free energies, relating them to extractable work and resource interconversion between quantum channels~\cite{badhani_thermodynamics_2025a,badhani_thermodynamic_2025}. Despite the different equilibrium superchannels considered here, it is natural to conjecture that analogous interpretations can be obtained for the free-energy-like quantities introduced in this work.

On a more conceptual level, the role of free energy in state thermodynamics suggests a number of intriguing directions. In suitably relaxed catalytic settings, allowing for approximate transformations and correlations with catalysts, the nonequilibrium free energy becomes the unique monotone governing state convertibility~\cite{muller_correlating_2018,rethinasamy_relative_2020,Shiraishi2021,shiraishi_recovery_2025,shiraishi_quantum_2025}. It would therefore be particularly interesting to investigate whether an analogous collapse occurs in higher-order thermodynamics and whether one of the free energies introduced here acquires a distinguished role in determining the convertibility of quantum processes. More broadly, the framework developed in this work not only identifies a physically motivated notion of equilibrium for higher-order quantum transformations, but also provides a foundation for a fully fledged thermodynamics of general quantum processes beyond the traditional state-and-channel paradigm.
 
\section*{Acknowledgements}

S.M. acknowledges funding from the European Union's Horizon Europe research and innovation programme under the Marie Sk\l odowska-Curie grant agreement No. 101068332.

K.S. acknowledges: This research was funded in whole or in part by the Austrian Science Fund (FWF) 10.55776/PAT4559623. For open access purposes, the author has applied a CC BY public copyright license to any author-accepted manuscript version arising from this submission. 

T.G. is supported by the Slovak Research and Development Agency through Grant no. APVV-22-0570, the Scientific Grant Agency of the Ministry of Education, Slovak Republic through Grant no. VEGA 2/0128/24 and the \v{S}tefan Schwarz Support Fund 2025/OV1/046 by the Slovak Academy of Sciences.

G.C. acknowledges support from the Chinese Ministry of Science and Technology (MOST) through grant 2023ZD0300600, from the Hong Kong Research Grant Council (RGC) through grants 17310725 and SRFS2021-7S02, and from the State Key Laboratory of Quantum Information Technologies and Materials, Chinese University of Hong Kong. Research at the Perimeter Institute is supported by the Government of Canada through the Department of Innovation, Science and Economic Development Canada and by the Province of Ontario through the Ministry of Research, Innovation and Science.

\bibliographystyle{apsrev4-1}
\bibliography{main}

\appendix

\section{Transformations and Choi-Jamio\l kowski isomorphism}

\subsection{Choi-Jamio\l kowski isomorphism}
The characterization of higher-order transformations can be significantly simplified by exploiting the duality between linear maps and operators provided by the Choi-Jamio\l kowski isomorphism.
\begin{definition}[Choi operator]
    Let $\mathcal{T} \in \mathcal{L}(\mathcal{L}(\mathcal{H}_I), \mathcal{L}(\mathcal{H}_O))$ be a linear map. Then the Choi operator $T \in \mathcal{L}(\mathcal{H}_I \otimes \mathcal{H}_O)$ of $\mathcal{T}$ is
    \begin{equation}
        T = (\mathcal{I} \otimes \mathcal{T})[|\ident\rangle\!\rangle\langle\!\langle\ident|],
    \end{equation}
    where $\mathcal{I}$ denotes the identity map on $\mathcal{L}(\mathcal{H}_I)$ and $|\ident\rangle\!\rangle = \sum_{i=1}^{d_I} |ii\rangle$ is the (unnormalized) maximally entangled vector on $\mathcal{H}_I \otimes \mathcal{H}_I$, written in the canonical computational basis $\{|i\rangle\}_i$ of $\mathcal{H}_I$, with $d_I = \dim(\mathcal{H}_I)$.
\end{definition}
\noindent Note that, throughout, we consider the canonical basis for each space $\Hcal_X$ to be the eigenbasis of the corresponding Gibbs states $\tau_X$, respectively. Importantly, a map $\mathcal{T} \in \mathcal{L}(\mathcal{L}(\mathcal{H}_I), \mathcal{L}(\mathcal{H}_O))$ is CP iff its Choi operator $T\in \Lcal(\Hcal_I\otimes \Hcal_O)$ satisfies $T\geq 0$ and TP iff we have $\tr_O[T] = \ident_I$.\\
Due to the Choi-Jamio\l kowski isomorphism, supermaps $\mathcal{S}$ acting on quantum operations can be equivalently represented as maps acting on the corresponding Choi operators. Iterating this correspondence, an arbitrary linear supermap $\mathcal{S}: \mathcal{L}(\mathcal{L}(\mathcal{H}_I), \mathcal{L}(\mathcal{H}_O)) \rightarrow \mathcal{L}(\mathcal{L}(\mathcal{H}_P), \mathcal{L}(\mathcal{H}_F))$ can be represented by a Choi operator $S \in \mathcal{L}(\mathcal{H}_P \otimes \mathcal{H}_I \otimes\mathcal{H}_O \otimes\mathcal{H}_F)$. More generally, \textit{any} higher-order map can equivalently be represented by its Choi operator $M$, which is defined on all the input and output spaces of $M$. Throughout, we assume 
\begin{equation}\label{eq:condCP}
    M\geq 0    
\end{equation}
for all maps we consider, corresponding to the fundamental requirement of complete positivity preservation. 
\begin{remark}
For ease of notation, we will phrase most statements in this appendix in terms of Choi operators rather than the corresponding maps, and drop their explicit distinction whenever there is no risk of confusion both in the text as well as the figures. Throughout, we adopt the convention that maps are denoted by calligraphic letters and their Choi operators by the corresponding upright letter. This interchangeable use also extends to set inclusions, e.g., we will write both $\mathcal{S} \in \mathbf{GPP}$ and $S \in \mathbf{GPP}$ when we want to express that the supermap preserves Gibbs-preserving maps. 
\end{remark}

\noindent The concatenation/action of maps in terms of their Choi operators is given by the \emph{link product}:

\begin{definition}[Link product~\cite{Chiribella2009,taranto_higher-order_2025}]
    Let $A \in \mathcal{L}(\mathcal{H}_{\mathsf{I}})$ and $B \in \mathcal{L}(\mathcal{H}_{\mathsf{J}})$, where $\mathsf{I}$ and $\mathsf{J}$ are finite index sets and $\mathcal{H}_\mathsf{X} = \bigotimes_{i \in \mathsf{X}} \mathcal{H}_i$. The \emph{link product} $A \star B \in \mathcal{L}(\mathcal{H}_{\mathsf{I}\setminus\mathsf{J}} \otimes \mathcal{H}_{\mathsf{J}\setminus\mathsf{I}})$ is defined as
    \begin{equation}
        A \star B := \operatorname{Tr}_{\mathsf{I}\cap \mathsf{J}} \bigl[(\mathbbm{1}_{\mathsf{J}\setminus\mathsf{I}} \otimes B^{\mathrm{T}_{\mathsf{I}\cap\mathsf{J}}})(\mathbbm{1}_{\mathsf{I}\setminus\mathsf{J}} \otimes A)\bigr],
    \end{equation}
    where $T_{\mathsf{I}\cap\mathsf{J}}$ denotes the partial transpose over the subsystems in $\mathsf{I}\cap\mathsf{J}$.
\end{definition}

Operationally, the link product maps two Choi operators $A$ and $B$ to the Choi operator $A \star B$ corresponding to the concatenation of the associated linear maps $\mathcal{A}$ and $\mathcal{B}$. In particular, action of a map $\mathcal{T}$ on a quantum state $\rho$ is equivalently represented by
\begin{equation}\label{eq:app:mapAction}
    \mathcal{T}[\rho] = T \star \rho.
\end{equation}
Similarly, the Choi operator of the map $\mathcal{T}' \in \mathcal{L}(\mathcal{L}(\mathcal{H}_P), \mathcal{L}(\mathcal{H}_F)) $ resulting from the action of a supermap $\mathcal{S}: \mathcal{L}(\mathcal{L}(\mathcal{H}_I), \mathcal{L}(\mathcal{H}_O)) \rightarrow \mathcal{L}(\mathcal{L}(\mathcal{H}_P), \mathcal{L}(\mathcal{H}_F))$ on a channel $\mathcal{T} \in  \mathcal{L}(\mathcal{L}(\mathcal{H}_I) , \mathcal{L}(\mathcal{H}_O))$ is given by 
\begin{gather}
    T'_{PF} = (S_{PIOF} \star T_{IO}) \in \mathcal{L}(\mathcal{H}_P \otimes \mathcal{H}_F),
\end{gather}
where we have added subscripts to denote the spaces the respective operators are defined on. The link product reduces to the tensor product $A \otimes B$ when $A$ and $B$ act on disjoint Hilbert spaces ($\mathsf{I} \cap \mathsf{J} = \emptyset$), and to the trace $\operatorname{Tr}[B^{T}A]$ when they act on the same Hilbert space ($\mathsf{I} = \mathsf{J}$). In addition, the link product is associative, i.e., 
\begin{gather}
    A\star (B \star C) = (A \star B) \star C = A \star B\star C
\end{gather}
and -- for all the cases we consider -- commutative, i.e., $A \star B = B\star A$.

\subsection{Characterization of maps and supermaps}

Representing supermaps in terms of their Choi operators defined on all involved input and output Hilbert spaces $\mathcal{H}_{\rm i}$ and $\mathcal{H}_{\rm o}$ allows for a compact characterization of the relevant classes of higher-order transformations in terms of projection operators~\cite{Araujo2015}. 
\begin{definition}[Projection operator]\label{def:proj}
Let $M \in \mathcal{L}(\mathcal{H})$. Then a self-adjoint, unital projection operator ${}_{X}\bullet$ that commutes with transposition is defined by
   \begin{gather}\label{eq:projMMS}
        {}_{X}M := \tr_X[M] \otimes \ident_X/d_X,
    \end{gather}
where $d_X = \operatorname{dim}(\mathcal{H}_X)$.
\end{definition}
We will employ this projector -- or variants thereof -- throughout to characterize the various sets of maps we consider. Some of its properties will prove particularly helpful:
\begin{lemma}
    The operator $L_X[\bullet] := {}_X\bullet = \tr_X[\bullet] \otimes \ident_X/d_X$ 
    \begin{itemize}
    \item is a projector, i.e., $L_X^2 = L_X$,
        \item is \textit{unital}, i.e., $L_X[\ident] = \ident$,
    \item is \textit{self-adjoint}, i.e., $\tr[(L_X[M])^\dagger N] =  \tr[M^\dagger L_X[N]]$ for all $M,N$,
    \item is \textit{trace preserving}, i.e., $\tr[L_X[M]] = \tr[M]$ for all $M$, 
    \item \textit{commutes with transposition}, i.e., $L_X[M^\mathrm{T}] = L_X[M]^\mathrm{T}$ for all $M$,
    \item and satisfies $L_X \circ L_Y$ = $L_Y \circ L_X$ for all $X,Y$.
    \end{itemize} 
\end{lemma}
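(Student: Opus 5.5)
The lemma is elementary, so I will verify each bullet directly from the definition $L_X[M] = \tr_X[M]\otimes \ident_X/d_X$, using only linearity of the partial trace and the identity $\tr_X[\ident_X] = d_X$. Throughout, I fix the ordering $\mathcal{H} = \mathcal{H}_X\otimes\mathcal{H}_{\bar X}$, where $\bar X$ is the complement. Every claim then reduces to a statement about how partial traces interact with tensor products, adjoints and transposes.

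\textbf{Projector, unitality, trace preservation.} For idempotence, I compute
\[
L_X[L_X[M]] = \tr_X\!\bigl[\tr_X[M]\otimes \ident_X/d_X\bigr]\otimes \ident_X/d_X = \tr_X[M]\otimes\ident_X/d_X .
\]
Unitality follows because $\tr_X[\ident] = d_X\ident_{\bar X}$. Trace preservation follows from $\tr[A\otimes\ident_X/d_X]=\tr[A]$ together with $\tr[\tr_X M]=\tr M$.

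\textbf{Self-adjointness.} I use the duality $\tr[(A\otimes\ident_X)N] = \tr[A\,\tr_X N]$ and the fact that $(\tr_X M)^\dagger = \tr_X(M^\dagger)$. This gives
\[
\tr\bigl[(L_X M)^\dagger N\bigr] = \tfrac{1}{d_X}\tr\bigl[\tr_X(M)^\dagger\,\tr_X(N)\bigr].
\]
The right-hand side is symmetric in the roles of $M$ and $N$, so the same manipulation applied to $\tr[M^\dagger L_X N]$ produces the identical expression.

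\textbf{Transposition.} Here I use that transposition in the fixed product basis factorizes over the tensor factors. Consequently $\tr_X[M^{\mathrm T}] = (\tr_X M)^{\mathrm T}$, and $\ident_X^{\mathrm T} = \ident_X$. Since the Choi convention fixes the basis, partial transposes are likewise compatible with $L_X$.

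\textbf{Commutation.} For disjoint $X, Y$, both compositions $L_X\circ L_Y$ and $L_Y\circ L_X$ equal $\tr_{XY}[M]\otimes \ident_{XY}/(d_Xd_Y)$, because partial traces over disjoint factors commute. If $X$ and $Y$ overlap, I decompose them into the disjoint pieces $X\setminus Y$, $X\cap Y$, $Y\setminus X$. Each $L_X$ factors as a product of the commuting projectors on these pieces, e.g. $L_X = L_{X\setminus Y}\circ L_{X\cap Y}$, since $\tr_{AB} = \tr_A\tr_B$ and $d_{AB}=d_Ad_B$. Because $L_{X\cap Y}$ is idempotent, both compositions reduce to $L_{X\cup Y}$.

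This overlapping case is the only step requiring more than a single line of computation, but it is bookkeeping rather than a genuine obstacle.
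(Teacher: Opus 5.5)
Your proposal is correct and is the same approach as the paper, whose proof is just ``direct insertion''; your bullet-by-bullet check, including the factorization $L_X = L_{X\setminus Y}\circ L_{X\cap Y}$ for overlapping subsystems, spells that computation out. One cosmetic point: $\frac{1}{d_X}\tr[(\tr_X M)^\dagger \tr_X N]$ is not literally symmetric in $M$ and $N$ (swapping them gives the complex conjugate), but your actual argument, which computes $\tr[M^\dagger L_X N]$ directly and obtains the same expression, is what proves self-adjointness, and it is fine.
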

\begin{proof}
The proof follows from direct insertion.
\end{proof}
We note that \textit{any} projector $L = \sum_i c_i L_{X_i}$ with $c_i \in \mathbbm{R}$ inherits the properties of unitality, self-adjointness and commutation with the transposition, as well as trace preservation for the case $\sum_i c_i =1$. 

Using this projector, we can, for example, characterize the set of CPTP maps characterized in the following way.
\begin{prop}[Characterization of CPTP maps]
\label{prop::TP}
    An operator $T \in \mathcal{L}(\mathcal{H}_I \otimes \mathcal{H}_O)$ is the Choi operator of a CPTP map $\mathcal{T}: \mathcal{L}(\mathcal{H}_I) \rightarrow  \mathcal{L}(\mathcal{H}_O)$ iff 
    \begin{align}
        \label{eqn::TP1} T &\geq 0, \\ 
        \label{eqn::TP2} \tr[T] &= d_I, \\
        \label{eqn::TP_proj} T &= T - {}_OT + {}_{IO}T =: \mathcal{P}_{\textup{CPTP}}[T].
    \end{align}
    \begin{proof}
        ($\Rightarrow$) Condition \eqref{eqn::TP1} follows from the fact that every CPTP map is CP.
    
        A map $\mathcal{T}$ is trace-preserving if and only if its Choi operator satisfies
        \begin{equation}
            \tr_O[T] = \ident_I.
        \end{equation}
        Taking the trace of both sides gives $\tr[T]=d_I$, which yields condition \eqref{eqn::TP2}.

        Furthermore,
        \begin{align}
            \nonumber {}_O T &= \tr_O[T] \otimes \ident_O/d_O \\
            \nonumber &= \ident_I \otimes \ident_O/d_O  \\
            &= \ident_{IO}/d_O, \\
            \nonumber {}_{IO}T &= \tr_{IO}[T] \otimes \ident_{IO}/(d_I d_O) \\
            \nonumber &= d_I \,\ident_{IO}/(d_I d_O) \\
            &= \ident_{IO}/d_O.
        \end{align}
        Hence ${}_O T = {}_{IO}T$, which implies
        \begin{equation}
            T - {}_O T + {}_{IO}T = T,
        \end{equation}
        and, therefore, condition \eqref{eqn::TP_proj} holds. 
    
        ($\Leftarrow$) Conversely, if condition \eqref{eqn::TP_proj} holds, then ${}_O T = {}_{IO}T$, which implies that $\tr_O[T]$ is proportional to the identity on $\mathcal{H}_I$. Using \eqref{eqn::TP2}, we obtain
        \begin{equation}
            \tr_O[T] = \ident_I,
        \end{equation}
        and therefore $\mathcal{T}$ is trace-preserving. Together with \eqref{eqn::TP1}, this implies that $\mathcal{T}$ is CPTP.
    \end{proof}
\end{prop}
We note that the operator $\mathcal{P}_{\text{CPTP}}$ defined in Eq.~\eqref{eqn::TP_proj} is a projector, i.e.,
\begin{equation}
    \mathcal{P}_{\text{CPTP}} \circ \mathcal{P}_{\text{CPTP}} = \mathcal{P}_{\text{CPTP}},
\end{equation}
which can be verified by direct substitution. 

The characterization of maps in terms of projectors and trace constraints as in the above Lemma is a convenient tool to deduce the properties of the supermaps we introduce. To this aim, we first recall a result on the characterization of linear transformations between affine spaces in terms of projectors~\cite{Milz2024}.
\begin{lemma}[Projective characterization of transformations -- specialised version]
\label{lem::SpecProj}
    Let $\mathcal{P}^{\rm i}:\mathcal{L}(\mathcal{H}_{\rm i})\to\mathcal{L}(\mathcal{H}_{\rm i})$ and $\mathcal{P}^{\rm o}:\mathcal{L}(\mathcal{H}_{\rm o})\to\mathcal{L}(\mathcal{H}_{\rm o})$ be linear projectors, corresponding to the input and output spaces, respectively. Consider the affine subspaces
    \begin{align}
        \mathsf{A}_{\rm i} &:= \left\{ W \in \mathcal{L}(\mathcal{H}_{\rm i}) \;\middle|\; \mathcal{P}^{\rm i}[W]=W,\; \operatorname{Tr}[W]=\gamma_{\rm i} \right\}, \\ 
        \mathsf{A}_{\rm o} &:= \left\{ W' \in \mathcal{L}(\mathcal{H}_{\rm o}) \;\middle|\; \mathcal{P}^{\rm o}[W']=W',\; \operatorname{Tr}[W']=\gamma_{\rm o} \right\}.
    \end{align}
    If $\mathcal{P}^{\rm i}:\mathcal{L}(\mathcal{H}_{\rm i})\to\mathcal{L}(\mathcal{H}_{\rm i})$ and $\mathcal{P}^{\rm o}:\mathcal{L}(\mathcal{H}_{\rm o})\to\mathcal{L}(\mathcal{H}_{\rm o})$ are unital, self-adjoint, and commute with the transposition, then, for $\gamma_{\rm i}\neq 0$, a linear map $\mathcal{M}:\mathcal{L}(\mathcal{H}_{\rm i})\to\mathcal{L}(\mathcal{H}_{\rm o})$ satisfies
    \begin{equation}
        \mathcal{M}[\mathsf{A}_{\rm i}] \subseteq \mathsf{A}_{\rm o}
    \end{equation}
    if and only if its Choi operator $M \in \mathcal{L}(\mathcal{H}_{\rm i}\otimes\mathcal{H}_{\rm o})$ satisfies
    \begin{align}
    \notag
        &M = M - \mathcal{P}^{\rm i}[M] + \bigl(\mathcal{P}^{\rm i} \otimes \mathcal{P}^{\rm o}\bigr)[M] - \mathcal{P}^{\rm i}[{}_{\rm o}M ] + {}_{\rm io}M\\
        &\phantom{M}=: \mathcal{P}^{\rm io}[M], \label{eqn::special_proj1} \\
        &\tr[M] = \frac{\gamma_{\rm o}}{\gamma_{\rm i}}d_{\rm i},  \label{eqn::special_proj2}
\end{align}
where $d_{\rm i} = \text{dim}(\mathcal{H}_{\rm i})$.
    \begin{proof}
        See Theorem~2 in Ref. \cite{Milz2024}.
    \end{proof}
\end{lemma}

Following the above lemma, the recipe to characterize the (Choi states of the) supermaps we consider in this manuscript will be to first find the respective projectors onto their input and output spaces together with the corresponding trace constraints, and then use Lem.~\ref{lem::SpecProj}(or Lem.~\ref{lem::FullyGenProj}, see below) for the supermap characterization. As an example, we demonstrate this approach here to find the (well-known) characterization of proper supermaps/superchannels~\cite{Araujo2015}:

\begin{prop}[Projective characterization of superchannels]
    \label{prop::supchan}
    An operator $S \in \mathcal{L}(\mathcal{H}_P \otimes \mathcal{H}_I \otimes \mathcal{H}_O \otimes \mathcal{H}_F)$ is the Choi operator of a proper supermap (superchannel) $\mathcal{S}: \mathsf{CPTP}(I, O) \rightarrow \mathsf{CPTP}(P, F)$ iff 
    \begin{align}
        S &\geq 0, \label{eqn::UTP1} \\
        \tr[S] &= d_P d_O, \label{eqn::UTP2} \\
        S &= S - {}_F S + {}_{OF} S - {}_{IOF} S + {}_{PIOF} S =: \mathcal{P}_{\textup{SC}}[S]. \label{eqn::UTP_proj}
    \end{align}
    \begin{proof}
        The set of proper (one-slot) supermaps $\mathcal{S}$ is set of completely positive linear transformations that map every channel $\mathcal{T}\in \mathsf{CPTP}(I,O)$ to a channel $\mathcal{S}[\mathcal{T}]\in \mathsf{CPTP}(P,F)$. First, we note that complete positivity is equivalent to $S\geq 0$. By Proposition \ref{prop::TP}, the Choi operators of channels from $I$ to $O$ are characterized by
        \begin{align}
            T &\geq 0, \\
            \operatorname{Tr}[T] &= d_I, \\
            T &= T - {}_{O}T + {}_{IO}T =: \mathcal P_{\rm CPTP}^{\rm i}[T].
        \end{align}
        Similarly, the Choi operators of channels from $P$ to $F$ are characterized by
        \begin{align}
            T' &\geq 0, \\
            \operatorname{Tr}[T'] &= d_P, \\
            T' &= T' - {}_{F}T' + {}_{PF} T' =: \mathcal P_{\rm CPTP}^{\rm o}[T'].
        \end{align}
        The channel projectors are unital, self-adjoint and commute with transposition, so Lem.\ref{lem::SpecProj} yields the characterization
        \begin{align}
            \operatorname{Tr}[S] &= \frac{d_P}{d_I}\,d_I d_O = d_Pd_O, \\
            S &= S-{}_{F}S+{}_{OF}S-{}_{IOF}S+{}_{PIOF}S,
        \end{align}
        which, together with $S\geq 0$, yields the characterization provided in the Proposition. 
    \end{proof}
\end{prop}
In a similar vein, the Choi states of causally ordered $n$-slot combs $S^{(n)}$ can be characterized in terms of a projector and a trace constraint~\cite{Araujo2015}. For notational simplicity, whenever we consider the $n$-slot case, following the convention in the main text, we label the respective spaces by $0,1,\dots, 2n, 2n+1$. We then have the generalization of the one-slot case:
\begin{prop}[$n$-slot quantum combs~\cite{Araujo2015, Milz2024}]
\label{prop::n_slot_def}
A matrix $S^{(n)} \in \mathcal{L}(\Hcal_0 \otimes \Hcal_1 \otimes \cdots \otimes \Hcal_{2n} \otimes \Hcal_{2n+1})$ is the Choi operator of a causally ordered supermap $\mathcal{S}^{(n)}$ with ordering $0\prec 1\prec \cdots \prec 2n \prec 2n+1$ iff
\begin{align}
    S^{(n)} &\geq 0, \\
    \tr[S^{(n)}] &= d_0 d_2 \cdots d_{2n-2}d_{2n}\\
    \notag
    S^{(n)}  &= S^{(n)}  - {}_{(2n+1)}S^{(n)} + {}_{(2n)(2n+1)}S^{(n)} \\
    \notag
    &\phantom{=}- {}_{(2n-1)(2n)(2n+1)}S^{(n)} + \cdots -{}_{12\dots (2n)(2n+1)}S^{(n)} \\
    &\phantom{=}+  {}_{012\dots (2n)(2n+1)}S^{(n)} := \mathcal{P}_{\rm SC}^{(n)}[S^{(n)}].
\end{align}
\end{prop}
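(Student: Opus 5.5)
We would prove the Proposition in two stages. First, show that the projector identity together with the trace constraint is equivalent to the familiar nested partial-trace conditions of a quantum comb. Second, invoke (or re-derive by induction) the realization theorem linking these conditions to the circuit decomposition of Fig.~\ref{fig::caus_ord}.

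\emph{Stage 1: projector identity $\Leftrightarrow$ nested trace conditions.} Concretely, we aim to show that $S^{(n)}=\mathcal{P}_{\rm SC}^{(n)}[S^{(n)}]$ and $\tr[S^{(n)}]=d_0d_2\cdots d_{2n}$ hold iff there exist operators $S^{(k)}\in\Lcal(\Hcal_0\otimes\cdots\otimes\Hcal_{2k+1})$ with
\begin{equation*}
\tr_{2k+1}[S^{(k)}]=S^{(k-1)}\otimes\ident_{2k}, \qquad k=n,\dots,0,
\end{equation*}
where $S^{(-1)}=1$ and $S^{(k-1)}$ lives on $0,\dots,2k-1$. Since $\mathcal{P}_{\rm SC}^{(n)}$ is an alternating sum of the commuting projectors ${}_X\bullet$, this is an induction on $n$ that mirrors the proof of Prop.~\ref{prop::supchan}.
\begin{itemize}
\item \emph{Splitting the identity.} Rewrite $S=\mathcal{P}^{(n)}_{\rm SC}[S]$ as
\begin{equation*}
{}_{(2n+1)}S-{}_{(2n)(2n+1)}S \;=\; \bigl(\text{remaining alternating terms}\bigr).
\end{equation*}
Applying $L_{2n+1}$ and using $L_X^2=L_X$ and commutativity, the left-hand side is seen to vanish exactly when $\tr_{2n+1}S=\tr_{(2n)(2n+1)}S\otimes\ident_{2n}/d_{2n}$.
\item \emph{Recursion.} The remaining terms then say that $S^{(n-1)}:=\tr_{(2n)(2n+1)}S/d_{2n}$ satisfies the $(n-1)$-slot projector identity.
\item \emph{Trace bookkeeping.} Track normalization via $\tr[S^{(n-1)}]=\tr[S]/d_{2n}$. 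At the bottom of the recursion, the base case is $\tr_1 S^{(0)}=\ident_0$, which is Prop.~\ref{prop::TP}.
\end{itemize}
Alternatively, one can apply Lem.~\ref{lem::SpecProj} recursively: take $\mathcal{P}^{\rm i}=\mathcal{P}^{(n-1)}_{\rm SC}$ on $1,\dots,2n$ and $\mathcal{P}^{\rm o}=\mathcal{P}_{\rm CPTP}$ on $0,2n+1$, and expand $\mathcal{P}^{\rm io}$. This gives directly that the projector characterizes maps sending $(n-1)$-combs to channels, i.e.\ the recursive definition mentioned in Rem.~\ref{rem::rel_qc}.

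\emph{Stage 2: nested conditions $\Leftrightarrow$ circuit decomposition.}
\begin{itemize}
\item \emph{($\Leftarrow$)} Given the circuit $\{\mathcal{F}_i\}$, the Choi operator is the link product $F_0\star F_1\star\cdots\star F_n$ over the memories $c_i$. Positivity follows since each $F_i\geq0$. Tracing out $2n+1$ uses trace preservation of $\mathcal{F}_n$, which yields $\ident_{2n}\otimes\ident_{c_n}$ and hence the nested condition; iterating gives all of them.
\item \emph{($\Rightarrow$)} This is the main step, done by induction on $n$. Let $S^{(n-1)}\geq0$ be the reduced comb. By the induction hypothesis it has a realization in which the last channel outputs $2n-1$ together with a memory. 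We choose this memory to purify $\sqrt{S^{(n-1)}}$, i.e.\ take $\Hcal_{c_n}\cong\mathrm{supp}(S^{(n-1)})$. The last channel $\mathcal{F}_n$ is then defined by the operator
\begin{equation*}
\bigl(S^{(n-1)}\bigr)^{-1/2}\, S^{(n)}\, \bigl(S^{(n-1)}\bigr)^{-1/2},
\end{equation*}
with the inverse taken on the support. This is the standard construction of Chiribella, D'Ariano and Perinotti.
\end{itemize}

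\emph{Expected main obstacle.} The hardest part is the support issues in this construction. One must check that the operator above is positive and trace-preserving on the memory system: its partial trace over $2n+1$ must give the projector onto the support, tensored with $\ident_{2n}$, and this must be completed to a full channel on the orthogonal complement. The inclusion $\mathrm{supp}\,S^{(n)}\subseteq \mathrm{supp}(S^{(n-1)}\otimes\ident_{2n}\otimes\ident_{2n+1})$, which follows from the nested trace condition and $S^{(n)}\geq0$, is what makes this work. Since this is exactly Theorem~2 of Ref.~\cite{Chiribella2009}, we would cite it for the realization step and concentrate the written argument on Stage 1.
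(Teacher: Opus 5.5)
Your overall plan is sound. Note that the paper does not prove this proposition at all. It imports it from Araújo et al.\ and Milz 2024, and quotes the nested partial-trace form separately as Prop.~\ref{prop::alt_charac_caus}, taken from Chiribella et al. Your Stage~1 is the algebraic bridge between these two propositions, and Stage~2 is the Chiribella--D'Ariano--Perinotti realization theorem, which you reasonably cite rather than reprove.

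Your ``alternative'' via Lem.~\ref{lem::SpecProj} is the Milz-type route, and it is how the paper itself handles $n=1$ in Prop.~\ref{prop::supchan}. It does work out:
\begin{itemize}
\item The expansion gives $\mathcal{P}^{\rm io}=\mathrm{id}-\mathcal{P}^{(n-1)}_{\rm SC}L_{2n+1}+L_{01\cdots(2n+1)}$, with $\mathcal{P}^{(n-1)}_{\rm SC}$ acting on $1,\dots,2n$. This equals $\mathcal{P}^{(n)}_{\rm SC}$.
\item The trace is $d_0\, d_1\cdots d_{2n}/(d_1d_3\cdots d_{2n-1})=d_0d_2\cdots d_{2n}$.
\item It needs one extra remark: Lem.~\ref{lem::SpecProj} concerns affine sets, so you must note that positive $(n-1)$-combs affinely span the set cut out by $\mathcal{P}^{(n-1)}_{\rm SC}$ and the trace. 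This holds because $\ident_{1\cdots 2n}/(d_2d_4\cdots d_{2n})$ is a full-rank comb.
\item It only yields the recursive definition, so it still relies on the realization theorem to reach circuits.
\end{itemize}

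\textbf{The splitting step fails as written.} Every term of $S-\mathcal{P}^{(n)}_{\rm SC}[S]$ carries $2n+1$ in its subscript, so $L_{2n+1}$ fixes both sides of your rewritten identity and separates nothing. Your statement that the left side vanishes iff $\tr_{2n+1}S=\tr_{(2n)(2n+1)}S\otimes\ident_{2n}/d_{2n}$ is just the definition of ${}_{X}\bullet$. Your Recursion bullet then silently assumes that both sides vanish separately.

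The fix is to apply $L_{2n}$ instead:
\begin{itemize}
\item Since $L_{2n}L_{2n+1}=L_{(2n)(2n+1)}$, it annihilates ${}_{(2n+1)}S-{}_{(2n)(2n+1)}S$.
\item Every remaining term has $2n$ in its subscript, so $L_{2n}$ fixes the right-hand side.
\item Hence the remaining terms sum to zero, and then so does the left side.
\end{itemize}
Equivalently, $S-\mathcal{P}^{(n)}_{\rm SC}[S]=\sum_{k=0}^n Q_kS$ with $Q_k=L_{(2k+1)\cdots(2n+1)}-L_{(2k)\cdots(2n+1)}$, and the $Q_k$ are mutually orthogonal projectors. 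With this repair, the rest of Stage~1 is correct: the recursion on $\tr_{(2n)(2n+1)}S/d_{2n}$, the trace bookkeeping, and the base case via Prop.~\ref{prop::TP}.

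\textbf{A smaller point in Stage~2.} Per Fig.~\ref{fig::caus_ord}, the induction hypothesis gives a realization whose final memory is trivial. So ``choosing'' the memory to carry a purification of $S^{(n-1)}$ is not free. You need the stronger hypothesis of an isometric realization that keeps its last environment, together with the fact that two purifications differ by an isometry on the purifying system. Citing the realization theorem covers this, but your sketch should say so.
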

The above characterizations of causally ordered superchannels and $n$-slot combs in terms of the projectors $\mathcal{P}_{\rm SC}$ and $\mathcal{P}_{\rm SC}^{(n)}$ will be frequently employed below. In addition, we will also use the equivalent definition of $n$-slot combs originally derived in~\cite{chiribella_transforming_2008, Chiribella2009}:
\begin{prop}[$n$-slot combs, alternative characterization]
\label{prop::alt_charac_caus}
A matrix $S^{(n)} \in \mathcal{L}(\Hcal_0 \otimes \Hcal_1 \otimes \cdots \otimes \Hcal_{2n} \otimes \Hcal_{2n+1})$ is the Choi operator of a causally ordered supermap $\mathcal{S}^{(n)}$ with ordering $0\prec 1\prec \cdots \prec 2n \prec 2n+1$ iff
\begin{align}
    S^{(n)} &\geq 0, \\
    \tr_{2n+1} [S^{(n)}] &= S^{\prime (n-1)} \otimes \ident_{2n},\\
     \tr_{2n-1} [S^{\prime (n-1)}] &= S^{\prime (n-2)} \otimes \ident_{2n-2},\\
     \notag &\vdots\\ 
     \tr_{3}[S^{\prime(1)}] &= S^{\prime(0)} \otimes \ident_2,\\
     \tr_{1}[S^{\prime(0)}] &= \ident_0.
\end{align}
\end{prop}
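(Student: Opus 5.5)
We show that the recursive trace conditions of Prop.~\ref{prop::alt_charac_caus} are equivalent to the projector characterization of Prop.~\ref{prop::n_slot_def}. The argument is an induction on $n$, and we also keep track of the positivity and trace constraints. Positivity $S^{(n)}\geq 0$ appears identically in both statements, so only the linear (affine) conditions need to be compared. We use the basic properties of the maps $L_X$ throughout: they are projectors, they commute with one another, and ${}_X\bigl({}_YM\bigr)={}_{XY}M$. As a consequence, $\mathcal{P}^{(n)}_{\rm SC}$ can be written recursively as
\begin{equation*}
\mathcal{P}^{(n)}_{\rm SC}[S]=S-{}_{(2n+1)}S+{}_{(2n)(2n+1)}\mathcal{P}^{(n-1)}_{\rm SC}[S],
\end{equation*}
where $\mathcal{P}^{(n-1)}_{\rm SC}$ acts on the spaces $0,\dots,2n-1$.

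For the direction ($\Leftarrow$), assume the chain of partial-trace conditions holds. Tracing out $2n+1$ gives $\tr_{2n+1}S^{(n)}=S'^{(n-1)}\otimes\ident_{2n}$. From this, ${}_{(2n+1)}S^{(n)}={}_{(2n)(2n+1)}S^{(n)}$. Since ${}_{(2n)(2n+1)}S^{(n)}$ is, up to the factor $d_{2n}/(d_{2n}d_{2n+1})$, equal to $S'^{(n-1)}\otimes\ident_{2n,2n+1}$, we can apply the induction hypothesis to $S'^{(n-1)}$, which is a comb on $0,\dots,2n-1$. This gives $\mathcal{P}^{(n-1)}_{\rm SC}[S'^{(n-1)}]=S'^{(n-1)}$. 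Substituting into the recursive form of $\mathcal{P}^{(n)}_{\rm SC}$ yields $\mathcal{P}^{(n)}_{\rm SC}[S^{(n)}]=S^{(n)}$. The trace constraint follows by iterating: $\tr S^{(n)}=d_{2n}\tr S'^{(n-1)}=\cdots=d_{2n}\cdots d_2\, d_0$. The base case $n=0$ is Prop.~\ref{prop::TP}.

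For the direction ($\Rightarrow$), we reverse these steps. Apply $L_{2n+1}$ to the fixed-point equation $S=\mathcal{P}^{(n)}_{\rm SC}[S]$. Because $L_{2n+1}$ is idempotent and commutes with all other $L_X$, this gives ${}_{(2n+1)}S={}_{(2n)(2n+1)}\mathcal{P}^{(n-1)}_{\rm SC}[S]$. Applying $L_{2n}$ to the same equation shows that ${}_{(2n+1)}S$ is already invariant under $L_{2n}$. Hence $\tr_{2n+1}S=S'\otimes\ident_{2n}$ for some $S'$. The equation $S'=\mathcal{P}^{(n-1)}_{\rm SC}[S']$ then follows by taking the partial trace over $2n,2n+1$. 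Moreover, $S'\geq0$ because it is a partial trace of a positive operator, and its trace is $d_0\cdots d_{2n-2}$. The induction hypothesis then delivers the remaining trace conditions.

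The main obstacle is the bookkeeping in the recursive rewriting of $\mathcal{P}^{(n)}_{\rm SC}$. The alternating sum must factor exactly as stated. This requires the identity ${}_{(2n)(2n+1)}{}_{X}S={}_{X(2n)(2n+1)}S$ for the nested index sets $X=\{k,\dots,2n-1\}$. We also need to track the normalisation factors $1/d_X$ carefully, so that the inductive operator $S'$ has the correct trace $d_0\cdots d_{2n-2}$. Here the stated trace constraint does the work: it is what pins $\tr_1 S'^{(0)}$ to $\ident_0$, rather than to a multiple of $\ident_0$.
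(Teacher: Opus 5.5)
Your algebra is sound. The recursion $\mathcal{P}^{(n)}_{\rm SC}=\mathrm{id}-L_{2n+1}+L_{2n}L_{2n+1}\mathcal{P}^{(n-1)}_{\rm SC}$ is correct. Applying $L_{2n+1}$ and $L_{2n}$ to the fixed-point equation does give ${}_{(2n+1)}S={}_{(2n)(2n+1)}S$ and $S'=\mathcal{P}^{(n-1)}_{\rm SC}[S']$ for $S'=\tr_{(2n)(2n+1)}S/d_{2n}$. The converse substitution and the trace bookkeeping go through, and Prop.~\ref{prop::TP} is the right base case.

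This is, however, a genuinely different route from the paper's. The paper gives no argument of its own and defers to Chiribella et al.~\cite{chiribella_transforming_2008, Chiribella2009}, who derive the trace chain directly from the circuit definition of Fig.~\ref{fig::caus_ord}:
\begin{itemize}
\item Necessity: trace preservation of the last tooth $\mathcal{F}_n$ gives $\tr_{2n+1}S=S'^{(n-1)}\otimes\ident_{2n}$, where $S'^{(n-1)}$ is the Choi operator of the truncated circuit with the last memory discarded; induction then finishes the argument.
\item Sufficiency: the realization theorem constructs the teeth explicitly, as isometries with memory, from square roots of the successive reduced operators $S'^{(j)}$.
\end{itemize}

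Your route buys a short, self-contained proof that Props.~\ref{prop::n_slot_def} and~\ref{prop::alt_charac_caus} are equivalent. The paper uses these two characterizations interchangeably (both are invoked in App.~\ref{app:multislot}) but only asserts their equivalence.

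What it does not buy is any contact with the actual definition of causal order. The existence of the CPTP maps $\mathcal{F}_i$ is inherited wholesale from Prop.~\ref{prop::n_slot_def}, which is itself only cited. Moreover, in the literature the projector form of Ara\'ujo et al.~\cite{Araujo2015} is obtained precisely by rewriting these same trace conditions. So, read as a derivation from Fig.~\ref{fig::caus_ord}, your argument is circular at the level of the cited results. To make it independent, you would need the necessity computation above together with the realization theorem for sufficiency.
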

\noindent Since not all the projectors we use are self-adjoint (see, e.g., Lem.~\ref{lem::adjoint}), we also require a version of Prop.~\ref{lem::SpecProj} that does not presuppose this property:
\begin{lemma}[Projective characterization of transformations -- general version]
\label{lem::FullyGenProj}
    Let $\mathcal{P}^{\rm i}:\mathcal{L}(\mathcal{H}_{\rm i})\to\mathcal{L}(\mathcal{H}_{\rm i})$ and $\mathcal{P}^{\rm o}:\mathcal{L}(\mathcal{H}_{\rm o})\to\mathcal{L}(\mathcal{H}_{\rm o})$ be linear projectors, corresponding to the input and output spaces, respectively. Consider the affine subspaces
    \begin{align}
        \mathsf{A}_{\rm i} &:= \left\{ W \in \mathcal{L}(\mathcal{H}_{\rm i}) \;\middle|\; \mathcal{P}^{\rm i}[W]=W,\; \operatorname{Tr}[W]=\gamma_{\rm i} \right\}, \\ 
        \mathsf{A}_{\rm o} &:= \left\{ W' \in \mathcal{L}(\mathcal{H}_{\rm o}) \;\middle|\; \mathcal{P}^{\rm o}[W']=W',\; \operatorname{Tr}[W']=\gamma_{\rm o} \right\}.
    \end{align}
    Then, for $\gamma_{\rm i}\neq 0$, a linear map $\mathcal{M}:\mathcal{L}(\mathcal{H}_{\rm i})\to\mathcal{L}(\mathcal{H}_{\rm o})$ satisfies
    \begin{equation}
        \mathcal{M}[\mathsf{A}_{\rm i}] \subseteq \mathsf{A}_{\rm o}
    \end{equation}
    if and only if its Choi operator $M \in \mathcal{L}(\mathcal{H}_{\rm i}\otimes\mathcal{H}_{\rm o})$ satisfies
    \begin{align}
        &M = M - (\mathcal{P}^{\rm i})^{\rm T}[M] + \bigl((\mathcal{P}^{\rm i})^{\rm T}\otimes \mathcal{P}^{\rm o}\bigr)[M] =: \mathcal{P}^{\rm io}[M], \label{eqn::FullyGenProj1} \\
        &(\mathcal{P}^{\rm i})^{\rm T}\!\bigl[\operatorname{Tr}_{\rm o}[M]\bigr] = \frac{\gamma_{\rm o}}{\gamma_{\rm i}} (\mathcal{P}^{\rm i})^{\rm T}[\ident_{\rm i}]. \label{eqn::FullyGenProj2}
\end{align}
Here $(\mathcal{P}^{\rm i})^{\rm T}[\bullet] = \mathcal{P}^{\rm i \dagger}[\bullet^*]^*$ denotes the transpose of the projector $\mathcal{P}^{\rm i}$, i.e., if $\mathcal{P}^{\rm i}[\bullet] = \sum_\alpha L_\alpha \bullet R_\alpha^\dagger$, then $(\mathcal{P}^{\rm i})^{\rm T}[\bullet] = \sum_\alpha L_\alpha^\mathrm{T} \bullet R_\alpha^*$.
    \begin{proof}
        See Theorem 5 in Ref. \cite{Milz2024}.
    \end{proof}
\end{lemma}

\section{Characterization of GP and GPTP maps}
\label{app:GP_GPTP}

Here, we provide a characterization of the sets $\mathsf{GP}(I,O)$ and $\mathsf{GPTP}(I,O)$ of Gibbs preserving and trace \textit{and} Gibbs preserving maps in terms of projectors $\mathcal{P}_{\rm GP}$ and $\mathcal{P}_{\rm GPTP}$. Throughout, we assume that the Gibbs states $\tau_X$ on all systems $X$ are full rank, so that their inverses $\tau_X^{-1}$ exist. 

\subsection{Gibbs projectors and Gibbs transform}
To characterize GP and GPTP maps, similar to Def.~\ref{def:proj} it proves helpful to introduce the following \textit{Gibbs projectors}
\begin{definition}[Gibbs projectors]
    Let $M \in \mathcal{L}(\mathcal{H})$, and let $\tau_{X}$ be the Gibbs state on a subsystem $\mathcal{H}_X \subseteq \mathcal{H}$. The projection operators ${}_{X_\alpha}\bullet$ and ${}_{X_\beta}\bullet$ are defined as
    \begin{align}
        {}_{X_\alpha}M &:= \tr_X[M] \otimes \tau_{X}, \label{eq:projGibbs_al1} \\
        {}_{X_\beta}M &:= \tr_X[\tau_{X} M] \otimes \ident_X. \label{eq:projGibbs_al2}
    \end{align}
\end{definition}
It is straightforward to verify that both maps are projectors, i.e.,
\begin{align}
    {}_{X_\alpha}({}_{X_\alpha}M) &= {}_{X_\alpha}M, \\
    {}_{X_\beta}({}_{X_\beta}M) &= {}_{X_\beta}M.
\end{align}
Note that, for the infinite temperature case $\tau_X = \ident_X/d_X$, we have ${}_{X_\alpha}M = {}_{X_\beta}M = {}_XM$, but in general ${}_{X_\alpha}M \neq {}_{X_\beta}M$. In addition, it is easy to see that ${}_{X_\alpha}({}_{Y_\alpha} M) = {}_{Y_\alpha}({}_{X_\alpha} M)$ and ${}_{X_\beta}({}_{Y_\beta} M) = {}_{Y_\beta}({}_{X_\beta} M)$ for all $X,Y$, as well as ${}_{X_\alpha}({}_{Y_\beta}M) = {}_{Y_\beta}({}_{X_\alpha}M)$ for $X\neq Y$, but ${}_{X_\alpha}({}_{X_\beta}M) \neq {}_{X_\beta}({}_{X_\alpha}M)$. 

\noindent Unlike ${}_X\bullet$, the operators ${}_{X_\alpha}\bullet$ and ${}_{X_\beta}\bullet$ are not self-adjoint, but the following relation between them holds:
\begin{lemma}[Adjoint relation of Gibbs projectors]
\label{lem::adjoint}
Let $M, N \in \mathcal{L}(\mathcal{H})$. Then
\begin{equation}
    \tr\!\left[M^\dagger\, {}_{X_\alpha}N\right] = \tr\!\left[({}_{X_\beta}M)^\dagger N\right],
\end{equation}
i.e., ${}_{X_\beta}\bullet$ is the adjoint of ${}_{X_\alpha}\bullet$ -- denoted by ${}_{X_\beta^\dagger}\bullet = {}_{X_\alpha}\bullet$ -- with respect to the Hilbert--Schmidt inner product. Moreover,
\begin{equation}
    {}_{X_\beta^{\mathrm T}}\bullet = {}_{X_\alpha}\bullet,
\end{equation}
where ${}_{X_\beta^{\mathrm T}}[M] := ({}_{X_\alpha^\dagger}[M^*])^*$, and complex conjugation is taken in the basis defining the Choi--Jamio\l kowski isomorphism.
\end{lemma}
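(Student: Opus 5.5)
The plan is to verify both identities by direct computation from the definitions \eqref{eq:projGibbs_al1}--\eqref{eq:projGibbs_al2}. The only structural inputs are two facts. First, $\tau_X$ is Hermitian. Second, the canonical basis of $\mathcal{H}_X$ was fixed to be the eigenbasis of $\tau_X$, so $\tau_X$ is a real diagonal matrix in that basis and $\tau_X^*=\tau_X$. Throughout, write $\mathcal{H} = \mathcal{H}_{\bar X}\otimes\mathcal{H}_X$, where $\bar X$ denotes the complementary factor.

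For the adjoint relation, I will expand the left-hand side using ${}_{X_\alpha}N = \tr_X[N]\otimes\tau_X$. Splitting the full trace as $\tr_{\bar X}\circ\tr_X$ gives
\begin{equation*}
\tr\!\left[M^\dagger (\tr_X[N]\otimes\tau_X)\right] = \tr_{\bar X}\!\left[\tr_X\!\left[M^\dagger(\ident_{\bar X}\otimes\tau_X)\right]\tr_X[N]\right].
\end{equation*}
Here I use that $\tr_X[A(B\otimes\ident_X)] = \tr_X[A](B)$ for $B$ acting only on $\bar X$; the order inside $\tr_X$ is preserved. Next, the partial trace over $X$ is cyclic for factors supported on $X$ alone, so $\tr_X[M^\dagger(\ident\otimes\tau_X)] = \tr_X[(\ident\otimes\tau_X)M^\dagger]$. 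Using $\tau_X^\dagger=\tau_X$, this equals $\bigl(\tr_X[(\ident\otimes\tau_X)M]\bigr)^\dagger$. Finally, I will reabsorb $\tr_X[N]$ via the same identity run backwards:
\begin{equation*}
\tr_{\bar X}\!\left[C^\dagger \tr_X[N]\right] = \tr\!\left[(C\otimes\ident_X)^\dagger N\right], \qquad C = \tr_X[\tau_X M].
\end{equation*}
The right-hand side is exactly $\tr[({}_{X_\beta}M)^\dagger N]$. This shows that ${}_{X_\beta}$ is the Hilbert--Schmidt adjoint of ${}_{X_\alpha}$, and by symmetry of the adjoint also ${}_{X_\alpha}={}_{X_\beta}^\dagger$.

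For the transpose statement, I will use the convention of Lem.~\ref{lem::FullyGenProj}, namely $\mathcal{P}^{\mathrm T}[M] = (\mathcal{P}^\dagger[M^*])^*$. This is the sense in which the displayed definition should be read, because ${}_{X_\beta}^\dagger = {}_{X_\alpha}$ by the first part. With this convention,
\begin{equation*}
{}_{X_\beta^{\mathrm T}}M = \bigl({}_{X_\alpha}[M^*]\bigr)^* = \bigl(\tr_X[M^*]\otimes\tau_X\bigr)^* = \tr_X[M]\otimes\tau_X^* = {}_{X_\alpha}M .
\end{equation*}
The third equality uses that complex conjugation in a product basis commutes with the partial trace. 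The last equality uses $\tau_X^*=\tau_X$.

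The only delicate point I expect is bookkeeping rather than substance. The first part needs the operator ordering inside the partial traces and the cyclicity restricted to $X$. The second part needs care with which adjoint enters the definition of the transpose, and the observation that the Gibbs states are real in the chosen basis. Without that basis choice the identity would instead read ${}_{X_\beta^{\mathrm T}}M=\tr_X[M]\otimes\tau_X^{\mathrm T}$.
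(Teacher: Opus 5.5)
Your proof is correct and takes essentially the same route as the paper: a direct computation that uses Hermiticity of $\tau_X$ for the adjoint relation and $\tau_X^*=\tau_X$ (reality in the chosen eigenbasis) for the transpose relation, with your partial-trace bookkeeping spelling out what the paper does in one line. The only cosmetic difference is in the second part: the paper computes the mirror identity ${}_{X_\alpha^{\mathrm T}}\bullet={}_{X_\beta}\bullet$ from ${}_{X_\alpha^{\mathrm T}}[M]:=({}_{X_\alpha^\dagger}[M^*])^*$ (so the $\beta^{\mathrm T}$ on the left of the displayed definition is a typo) and then appeals to involutivity, whereas you derive ${}_{X_\beta^{\mathrm T}}\bullet={}_{X_\alpha}\bullet$ directly from the general convention $\mathcal{P}^{\mathrm T}[M]=(\mathcal{P}^\dagger[M^*])^*$, which is the right way to resolve that typo.
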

\noindent We emphasize that, ${}_{X_\beta^\dagger}\bullet = {}_{X_\alpha}\bullet$ is equivalent to ${}_{X_\alpha^\dagger}\bullet = {}_{X_\beta}\bullet$ and ${}_{X_\beta^{\mathrm T}}\bullet = {}_{X_\alpha}\bullet$ is equivalent to ${}_{X_\alpha^{\mathrm T}}\bullet = {}_{X_\beta}\bullet$, since the respective operations $\mathrm{T}$ and $\dagger$ are involutions.
\begin{proof}
The first identity follows by direct computation:
\begin{align}
    \tr\!\left[M^\dagger\, {}_{X_\alpha}N\right] &= \tr\!\left[M^\dagger (\tr_X[N]\otimes \tau_X)\right] \nonumber\\
    &= \tr\!\left[(\tr_X[\tau_X M^\dagger]\otimes \ident_X)\, N\right] \nonumber\\
    &= \tr\!\left[(\tr_X[\tau_X M]\otimes \ident_X)^\dagger N\right] \nonumber\\
    &= \tr\!\left[({}_{X_\beta}M)^\dagger N\right],
\end{align}
where we used that $\tau_X$ is Hermitian. 

\noindent For the second identity, we compute
\begin{gather}
\begin{split}
    {}_{X_\alpha^{\mathrm T}}[M] &:= ({}_{X_\alpha^\dagger}[M^*])^* = ({}_{X_\beta}[M^*])^* = \left(\tr_X[\tau_X M^*]\otimes \ident_X \right)^* \\
    &= \tr_X[\tau_X M]\otimes \ident_X = {}_{X_\beta}M,
\end{split}
\end{gather}
where we used that $\tau_X^*=\tau_X$ in the basis of the Choi isomorphism.
\end{proof}
In addition to the above projectors, we will frequently make use of the \textit{Gibbs transform} of maps when characterizing the sets $\mathsf{GP}(I,O
)$ and $\mathsf{GPTP}(I,O)$: 
\begin{definition}[Gibbs transform]
\label{def:mapGibbsRotated}
    Let $T \in \mathcal{L}(\mathcal{H}_I \otimes \mathcal{H}_O)$ be the Choi operator of a map $\mathcal{T} : \mathcal{L}(\mathcal{H}_I) \rightarrow \mathcal{L}(\mathcal{H}_O)$. The \emph{Gibbs transform} $\widetilde{T}$ of $T$ is defined as
    \begin{equation}
        \widetilde{T} = (\sqrt{\tau_{I}} \otimes \ident_O) T (\sqrt{\tau_{I}} \otimes \ident_O).
    \end{equation}
    We will use a tilde throughout to denote the Gibbs transform of a matrix.  Accordingly, we denote by $\mathsf{GP}_{\tau_I}$ and $\mathsf{GPTP}_{\tau_I}$ the sets of Gibbs-transformed Choi operators corresponding to GP and GPTP maps, respectively.
\end{definition}

\subsection{Characterization of GP and GPTP maps}
We can now characterize GP maps in the Choi representation. To this end, we first recall that a completely positive map $\mathcal{G}: \Lcal(\Hcal_I) \rightarrow \Lcal(\Hcal_O)$ is GP iff $\mathcal{G}[\tau_I] = \tau_O$ and GPTP if, in addition, $\mathcal{G}$ is trace preserving. In terms of the corresponding Choi state $G \in \Lcal(\Hcal_I \otimes \Hcal_O)$, these conditions read
\begin{definition}
\label{def::GPTP}
    A map $\mathcal{G}:\Lcal(\Hcal_I) \rightarrow \Lcal(\Hcal_O)$ is GP iff its Choi operator $G \in \Lcal(\Hcal_I\otimes \Hcal_O)$ satisfies
    \begin{gather}
         G \geq 0, \quad \text{and} \quad  G \star \tau_I = \tr_I[\tau_I G] = \tau_O.
    \end{gather}
It is GPTP iff, in addition, $G$ satisfies 
\begin{gather}
    \tr_O[G] = \ident_I
\end{gather}
\end{definition}
We can now prove the following Proposition: 

\begin{prop}[Characterization of GP maps]
\label{prop::GP}
    An operator $G\in \mathcal{L}(\mathcal{H}_I \otimes \mathcal{H}_O)$ is the Choi operator of a GP map iff its Gibbs transform $\widetilde{G} \in \mathcal{L}(\mathcal{H}_I \otimes \mathcal{H}_O)$ satisfies 
    \begin{align}
        \label{eqn::GP1} \widetilde{G} &\geq 0,\\
        \label{eqn::GP2} \tr[\widetilde{G}] &= 1, \\
        \label{eqn::GP3} \widetilde{G} &= \widetilde{G} - {}_{I_\alpha}\widetilde{G} + {}_{I_\alpha O_\alpha}\widetilde{G} =: \Pcal_{\textup{GP}}[\widetilde{G}]. 
    \end{align}
\end{prop}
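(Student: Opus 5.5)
We reduce the statement to the conditions of Def.~\ref{def::GPTP} and transport them through the Gibbs transform. Since $\tau_I$ is full rank, $\sqrt{\tau_I}$ is invertible, so the map $G\mapsto\widetilde G=(\sqrt{\tau_I}\otimes\ident_O)G(\sqrt{\tau_I}\otimes\ident_O)$ is a bijective congruence. Hence $G\geq 0$ iff $\widetilde G\geq 0$, which settles \eqref{eqn::GP1}. The only remaining task is to rewrite the Gibbs-preservation condition $\tr_I[\tau_I G]=\tau_O$ as the pair \eqref{eqn::GP2}--\eqref{eqn::GP3}.

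First, by cyclicity of the partial trace over $I$ (the factors $\sqrt{\tau_I}\otimes\ident_O$ act only on $I$), we get $\tr_I[\widetilde G]=\tr_I[\tau_I G]=G\star\tau_I$. Here we use $\tau_I^{\mathrm T}=\tau_I$ in the Gibbs eigenbasis, which is the Choi basis. So $\mathcal G$ is GP iff $\widetilde G\geq0$ and $\tr_I[\widetilde G]=\tau_O$.

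Next we show that $\tr_I[\widetilde G]=\tau_O$ is equivalent to \eqref{eqn::GP2} together with \eqref{eqn::GP3}. Unpacking the projectors, ${}_{I_\alpha}\widetilde G=\tau_I\otimes\tr_I[\widetilde G]$ and ${}_{I_\alpha O_\alpha}\widetilde G=\tr[\widetilde G]\,\tau_I\otimes\tau_O$. Thus \eqref{eqn::GP3} reads
\[
\tau_I\otimes\tr_I[\widetilde G]=\tr[\widetilde G]\,\tau_I\otimes\tau_O .
\]
Since $\tau_I\neq0$, this is equivalent to $\tr_I[\widetilde G]=\tr[\widetilde G]\,\tau_O$.

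For the direction ($\Rightarrow$), taking the full trace of $\tr_I[\widetilde G]=\tau_O$ gives $\tr[\widetilde G]=1$, which is \eqref{eqn::GP2}. Substituting back gives the displayed identity, which is \eqref{eqn::GP3}. For the direction ($\Leftarrow$), \eqref{eqn::GP3} gives $\tr_I[\widetilde G]=\tr[\widetilde G]\,\tau_O$, and \eqref{eqn::GP2} then yields $\tr_I[\widetilde G]=\tau_O$. This mirrors the proof of Prop.~\ref{prop::TP}, with $\ident/d$ replaced by Gibbs states. One may also check that $\Pcal_{\rm GP}$ is idempotent using ${}_{I_\alpha}{}_{I_\alpha O_\alpha}={}_{I_\alpha O_\alpha}$, although this is not needed for the equivalence.

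There is no real obstacle in this argument. The only points needing care are bookkeeping. First, the link product involves a transpose, which is harmless because $\tau_I$ is real diagonal in the Choi basis. Second, pulling $\sqrt{\tau_I}$ inside $\tr_I$ requires it to act on $I$ only. Third, invertibility of $\sqrt{\tau_I}$, guaranteed by the full-rank assumption on Gibbs states, is what makes positivity of $G$ and of $\widetilde G$ equivalent.
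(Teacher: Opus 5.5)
Your proof is correct and follows essentially the same route as the paper. You transfer positivity through the invertible congruence by $\sqrt{\tau_I}$, rewrite Gibbs preservation as $\tr_I[\widetilde G]=\tau_O$, and reduce the projector identity to ${}_{I_\alpha}\widetilde G={}_{I_\alpha O_\alpha}\widetilde G$, with the trace condition fixing the constant; your remarks on the transpose in the link product and on the full-rank assumption spell out points the paper uses only implicitly.
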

    \begin{proof}
        ("$\Rightarrow$") Let $G\in \Lcal(\Hcal_I \otimes \Hcal_O)$ be the Choi operator of a GP map, then (see Def.~\ref{def::GPTP})
        \begin{equation}
            G\geq 0, \qquad \tr_I[\tau_I G] = \tau_O.
        \end{equation}
        By Def. \ref{def:mapGibbsRotated}, we have 
        \begin{equation}
            \widetilde G = (\sqrt{\tau_I}\otimes \ident_O) G (\sqrt{\tau_I}\otimes \ident_O),
        \end{equation}
        so $\widetilde G \geq 0$. Moreover,
        \begin{gather}
            \nonumber \tr_I[\widetilde G] = \tr_I[(\tau_I\otimes \ident_O)G] = \tr_I[\tau_I G] = \tau_O.
        \end{gather}
        Taking the trace yields
        \begin{equation}
            \tr[\widetilde G] = 1.
        \end{equation}
        In addition, we obtain
        \begin{gather}
            {}_{I_\alpha}\widetilde G = \tr_I[\widetilde G]\otimes \tau_I = \tau_O\otimes \tau_I,
        \end{gather}
        and similarly
        \begin{gather}
            \nonumber {}_{I_\alpha O_\alpha}\widetilde G = \tr[\widetilde G]\,\tau_O\otimes \tau_I = \tau_O\otimes \tau_I.
        \end{gather}
        Together, this implies 
        \begin{equation}
            {}_{I_\alpha}\widetilde G = {}_{I_\alpha O_\alpha}\widetilde G,
        \end{equation}
        which is equivalent to the projector condition \eqref{eqn::GP3}.

        ("$\Leftarrow$") Assume the conditions \eqref{eqn::GP1}--\eqref{eqn::GP3}. Positivity of $\widetilde G$ implies $G\geq 0$.

        The projector identity \eqref{eqn::GP3} gives
        \begin{equation}
            {}_{I_\alpha}\widetilde G = {}_{I_\alpha O_\alpha}\widetilde G,
        \end{equation}
        which implies
        \begin{gather}
            \tr_I[\tau_I G] = \tr_I[\widetilde G] = \tr[\widetilde G]\,\tau_O 
            = \tau_O.
        \end{gather}
        Consequently, $G$ is the Choi operator of a GP map.
    \end{proof}
By direct insertion, it is easy to see that $\mathcal{P}_{\rm GP}$ is indeed a projector, i.e., $\mathcal{P}_{\rm GP}^2 = \mathcal{P}_{\rm GP}$. However, it is neither unital, nor self-adjoint. Note that, in the infinite-temperature limit, the Gibbs state reduces to the maximally mixed state, $\tau_X = \ident_X / d_X$, and Gibbs-preserving maps coincide with unital maps. In this case, the Gibbs transformation becomes trivial, $\widetilde G = G$, and Proposition~\ref{prop::GP} reduces to conditions \eqref{eqn::GP1}--\eqref{eqn::GP3} applied directly to the Choi operator $G$, with the normalization $\Tr[G] = d_I$.

Imposing trace preservation in addition leads to the following characterization of GPTP maps.

\begin{prop}[Characterization of GPTP maps]
\label{prop::GPTP}
An operator $G\in \mathcal{L}(\mathcal{H}_I \otimes \mathcal{H}_O)$ is the Choi operator of a GPTP map iff its Gibbs transform $\widetilde{G} \in \mathcal{L}(\mathcal{H}_I \otimes \mathcal{H}_O)$ satisfies 
\begin{align}
\label{eqn::GPTP1}
    \widetilde{G} &\geq 0,\\
    \label{eqn::GPTP2}
    \tr[\widetilde{G}] &=  1, \\
    \label{eqn::GPTP3}
    \widetilde{G} &= \widetilde{G} - {}_{I_\alpha}\widetilde{G} - {}_{O_\alpha }\widetilde{G} + 2{}_{I_\alpha O_\alpha}\widetilde{G} =: \Pcal_{\textup{GPTP}}[\widetilde{G}].
\end{align}

\begin{proof}
("$\Rightarrow$")
Let $G$ be the Choi operator of a GPTP map. Since a GPTP map is GP, we obtain from Prop.~\ref{prop::GP} that 
\begin{gather}
\widetilde G \geq 0, \quad \tr[\widetilde G] = 1, \quad \text{and} \quad {}_{I_{\alpha}}\widetilde{G} = {}_{I_\alpha O_\alpha}\widetilde G.
\end{gather}
In addition, the TP property $\tr_O[G] = \ident_I$ implies
\begin{align}
\tr_O[\widetilde G] = \sqrt{\tau_I}\,\tr_O[G]\,\sqrt{\tau_I} = \tau_I.
\end{align}
Consequently, we have
\begin{gather}
{}_{O_\alpha}\widetilde G = {}_{I_\alpha O_\alpha}\widetilde G, 
\end{gather}
which, together with ${}_{I_\alpha}\widetilde G = {}_{I_\alpha O_\alpha}\widetilde G$ yields the projector identity \eqref{eqn::GPTP3}.

\noindent ("$\Leftarrow$") First, $\widetilde {G} \geq 0$ implies $G \geq 0$. Second, applying ${}_{I_\alpha}\bullet$ to the constraint \eqref{eqn::GPTP3}, we obtain
\begin{equation}
{}_{I_\alpha}\widetilde G = {}_{I_\alpha O_\alpha}\widetilde G \quad \Rightarrow \quad \tr_I[\tau_I G] \otimes \tau_I = \tr[\widetilde G] \tau_I \otimes \tau_O. 
\end{equation}
Using $\tr[\widetilde G] = 1$, this implies $\tr_I[\tau_I G] = \tau_O$, i.e., $G$ is GP.
\begin{equation}
\tr_I[\widetilde G]=\tau_O.
\end{equation}
Similarly, applying ${}_{O_\alpha}\bullet$ to both sides of Eq.~\eqref{eqn::GPTP3} gives
\begin{equation}
{}_{O_\alpha}\widetilde G = {}_{I_\alpha O_\alpha}\widetilde G \quad \Rightarrow \quad \tr_O[\widetilde G]=\tau_I.
\end{equation}
The latter implies
\begin{equation}
\tr_O[\widetilde{G}] = \sqrt{\tau_I} \tr_O[G] \sqrt{\tau_I} = \tau_I
\end{equation}
and thus $\tr_O[G] = \ident_I$; consequently, $G$ is both GP \textit{and} trace-preserving, i.e., it is (the Choi operator of) a GPTP map.
\end{proof}
\end{prop}
Again, it is easy to see that $\Pcal_{\textup{GPTP}}$ is indeed a projector, i.e., $\Pcal_{\textup{GPTP}}^2 = \Pcal_{\textup{GPTP}}$. Similarly to the projector $\mathcal{P}_{\textup{GP}}$, the projector $ \Pcal_{\textup{GPTP}}$ is neither unital, nor self-adjoint. The above Propositions provide a characterization of the affine spaces $\mathsf{GP}_{\tau_I}(I,O)$ and $\mathsf{GPTP}_{\tau_I}(I,O)$ (as well as $\mathsf{GP}_{\tau_P}(P,F)$ and $\mathsf{GPTP}_{\tau_P}(P,F)$ via simple relabeling) in terms of projectors $\Pcal_{\textup{GP}}$ and $\Pcal_{\textup{GPTP}}$ (plus additional trace constraints). In turn, this fully pins down the properties of the respective non-Gibbs-transformed sets $\mathsf{GP}(I,O)$ and $\mathsf{GPTP}_{\tau_I}(I,O)$ (as well as $\mathsf{GP}(P,F)$ and $\mathsf{GPTP}(P,F)$). The characterization in terms of projectors is particularly useful, since it is well suited for deriving the properties of supermaps -- like GPP or GPTPP maps -- in a systematic manner, by employing Lem.~\ref{lem::FullyGenProj}. We note that Gibbs transform is crucial to be able to leverage Lem.~\ref{lem::FullyGenProj}, since, for example, not all maps in $\mathsf{GP}(I,O)$ have the same trace, while maps in $\mathsf{GPTP}(I,O)$ are difficult to characterize directly by means of a \textit{single} projector. Using the Gibbs transformed versions of both spaces alleviates these problems entirely.

\section{Characterization of GPP and GPTPP supermaps}
\label{app:freeTrans}

In this Appendix, we prove the structural characterizations of the free supermaps introduced in Sec.~\ref{subsec:free_transform}. 

\subsection{Gibbs transform of supermaps}
As shown above (see Props.~\ref{prop::GP} and~\ref{prop::GPTP}), GP and GPTP maps are most conveniently described using the Gibbs transforms of their Choi operators. Here, we first extend this notion of Gibbs transform to supermaps. In turn, this will allow us to apply Lem.~\ref{lem::FullyGenProj} directly to the affine spaces of Gibbs transforms of GP and GPTP Choi operators.

\begin{definition}[Gibbs transform of a supermap]
\label{def:rotatedSupermap}
    Let $S \in \mathcal{L}(\mathcal{H}_P \otimes \mathcal{H}_I \otimes \mathcal{H}_O \otimes \mathcal{H}_F)$ be the Choi operator of a supermap $\mathcal{S}: \mathsf{CP}(I,O) \to \mathsf{CP}(P,F)$. The \emph{Gibbs-transform} of $S$ is defined as
    \begin{equation}
        \breve{S} = (\sqrt{\tau_{P}}\otimes\sqrt{\tau_{I}}^{-1})\, S\, (\sqrt{\tau_{P}}\otimes\sqrt{\tau_{I}}^{-1}).
    \end{equation}
    We use the $\breve{\bullet}$ notation throughout to denote Gibbs transforms of (Choi operators of) supermaps.
\end{definition}

The following Lemma shows that the characterization of GPP and GPTPP supermaps can equivalently be phrased in terms of their Gibbs transform:
\begin{lemma}[Equivalence under Gibbs transform]
\label{lem::Stilde_S}
    Let $S \in \mathcal{L}(\mathcal{H}_P \otimes \mathcal{H}_I \otimes \mathcal{H}_O \otimes \mathcal{H}_F)$ be the Choi operator of a supermap $\mathcal{S}: \mathsf{CP}(I,O) \to \mathsf{CP}(P,F)$, and let $\breve{S}$ be the Gibbs transform of its Choi operator. Then the following are equivalent:
    \begin{enumerate}
        \item $\mathcal{S}(\mathsf{GP}(I,O)) \subseteq \mathsf{GP}(P,F)$,
        \item $\breve{\mathcal{S}}({\mathsf{GP}_{\tau_I}}(I,O)) \subseteq {\mathsf{GP}_{\tau_P}}(P,F)$,
    \end{enumerate}
    where $\breve{\mathcal{S}}$ denotes the supermap associated with $\breve{S}$. The same equivalence holds with $\mathsf{GP}$ replaced by $\mathsf{GPTP}$.
    \begin{proof}
        First, using the definition of the link product, one verifies that
        \begin{align}\label{eq:intertwining}
           \nonumber \sqrt{\tau_{P}}\,(S \star G)\,\sqrt{\tau_{P}} &= \sqrt{\tau_{P}}\,(S \star \sqrt{\tau_{I}}^{-1}\widetilde{G}\sqrt{\tau_{I}}^{-1} )\,\sqrt{\tau_{P}} \\
            \nonumber &=  (\sqrt{\tau_{P}} \otimes \sqrt{\tau_{I}}^{-1}) S (\sqrt{\tau_{P}} \otimes \sqrt{\tau_{I}}^{-1}) \star \widetilde{G} \\
            &= \breve{S} \star \widetilde G.
        \end{align}

        \medskip
        \noindent
        ("$1. \Rightarrow 2.$") Assume that $S$ maps $\mathsf{GP}(I,O)$ into $\mathsf{GP}(P,F)$. Let $\widetilde G \in {\mathsf{GP}}_{\tau_I}(I,O)$ and define $G := \sqrt{\tau_{I}}^{-1}\widetilde G\sqrt{\tau_{I}}^{-1}$. By definition, $G$ is GP, and hence $S \star G =: G' \in \textsf{GP}(P,F)$. Applying $\sqrt{\tau_P} \bullet \sqrt{\tau_P}$ on both sides of this equation and using Eq.~\eqref{eq:intertwining}, we obtain
        \begin{gather}
            \breve{S} \star \widetilde{G} = \sqrt{\tau_P} \,G' \sqrt{\tau_P}  \in \textsf{GP}_{\tau_P}(P,F), 
        \end{gather}
        which holds for all $\widetilde{G} \in  \textsf{GP}_{\tau_I}(I,O)$. We thus have the implication $"1. \Rightarrow 2.$". The proof of the implication $"2. \Rightarrow 1.$" follows in the same way.
    \end{proof} 
\end{lemma}
As a consequence of the above Lemma, the characterization of GPP and GPTPP supermaps $S$ is equivalent to the characterization of (the Choi operators of) supermaps $\breve{S}$ that transform $\mathsf{GP}_{\tau_I}(I,O)$ into $\mathsf{GP}_{\tau_P}(P,F)$ and  $\mathsf{GPTP}_{\tau_I}(I,O)$ into $\mathsf{GPTP}_{\tau_P}(P,F)$, respectively. Since all four of these latter sets can be characterized in terms of projectors and trace constraints (see Props.~\ref{prop::GP} and~\ref{prop::GPTP}), this allows us to use Lem.~\ref{lem::FullyGenProj} to deduce the projectors and trace constraints that define the respective maps $\breve{S}$.  

\subsection{Characterization of GPP transformations}
\label{app:GPP}
We start by providing a characterization of $\mathbf{GPP}$ supermaps:
\begin{lemma}[Projective characterization of $\mathbf{GPP}$ supermaps]\label{lem:app-GPP-projective}
    An operator $S \in \mathcal{L}(\mathcal{H}_{P} \otimes \mathcal{H}_{I} \otimes \mathcal{H}_{O} \otimes \mathcal{H}_{F})$ is the Choi operator of a $\mathbf{GPP}$ supermap iff its Gibbs transform $\breve{S} \in \mathcal{L}(\mathcal{H}_{P} \otimes \mathcal{H}_{I} \otimes \mathcal{H}_{O} \otimes \mathcal{H}_{F})$ satisfies
    \begin{align}
        \label{eqn::GPP1} \breve{S} &\geq 0,\\
        \nonumber \breve{S} &= \breve{S} -{}_{P_\alpha}\breve{S} + {}_{P_\alpha I_\beta}\breve{S} - {}_{P_\alpha I_\beta O_\beta}\breve{S} + {}_{P_\alpha F_\alpha}\breve{S} \\
        \label{eqn::GPP3}
        &\phantom{=,}- {}_{P_\alpha I_\beta F_\alpha}\breve{S} + {}_{P_\alpha I_\beta O_\beta F_\alpha}\breve{S} =: \Pcal_{\textup{GPP}}[\breve{S}].  \\
        \text{and} \quad \label{eqn::GPP2} \operatorname{Tr}_{PF}&[ \breve{S} - {}_{I_\beta}\breve{S} + {}_{I_\beta O_\beta}\breve{S} ] = \ident_{IO}.
    \end{align}
    \end{lemma}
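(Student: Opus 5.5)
The plan is to reduce to the general projective characterization, Lem.~\ref{lem::FullyGenProj}, applied to Gibbs-transformed objects. By Lem.~\ref{lem::Stilde_S}, $\mathcal{S}\in\mathbf{GPP}$ iff $\breve{\mathcal{S}}$ maps $\mathsf{GP}_{\tau_I}(I,O)$ into $\mathsf{GP}_{\tau_P}(P,F)$. By Prop.~\ref{prop::GP}, each of these sets is the intersection of the positive cone with an affine space. That affine space is defined by the projector $\mathcal{P}_{\rm GP}$ (on $IO$, respectively $PF$) and the trace value $\gamma=1$. Positivity is handled separately. Throughout the appendix we assume $S\geq 0$, which is complete positivity preservation, and this is equivalent to $\breve S\geq 0$ because $\breve S$ is a congruence of $S$ by the invertible operator $\sqrt{\tau_P}\otimes\sqrt{\tau_I}^{-1}$ (Gibbs states are full rank). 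This gives Eq.~\eqref{eqn::GPP1}. Since $\breve S\geq 0$, the output of every positive input is positive. So the only remaining requirement is that the affine part be preserved.

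To justify replacing ``maps the set $\mathsf{GP}_{\tau_I}$ into $\mathsf{GP}_{\tau_P}$'' by ``maps the affine space into the affine space'', I will use a spanning argument. The Gibbs transform of the completely thermalizing map has Choi operator $\tau_I\otimes\tau_O$. This operator is full rank and lies in the affine space, so it is an interior point of the positive cone there. Hence the positive elements of the affine space affinely span it. By linearity of $\breve{\mathcal{S}}$, preserving the set is equivalent to preserving the whole affine space. This is the step I expect to need the most care: the affine space must be generated by its positive elements, and the trace normalization $\gamma_{\rm i}=1\neq 0$ must be respected.

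Next I apply Lem.~\ref{lem::FullyGenProj} with the following choices:
\begin{itemize}
\item $\mathcal{P}^{\rm i}=\mathrm{id}-{}_{I_\alpha}\bullet+{}_{I_\alpha O_\alpha}\bullet$ on $IO$;
\item $\mathcal{P}^{\rm o}=\mathrm{id}-{}_{P_\alpha}\bullet+{}_{P_\alpha F_\alpha}\bullet$ on $PF$;
\item $\gamma_{\rm i}=\gamma_{\rm o}=1$.
\end{itemize}
By Lem.~\ref{lem::adjoint}, transposition exchanges $\alpha$ and $\beta$ factorwise, so $(\mathcal{P}^{\rm i})^{\rm T}=\mathrm{id}-{}_{I_\beta}\bullet+{}_{I_\beta O_\beta}\bullet$. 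The projector condition in Eq.~\eqref{eqn::FullyGenProj1} can be rewritten as
\[
M = M-(\mathcal{P}^{\rm i})^{\rm T}\circ(\mathrm{id}-\mathcal{P}^{\rm o})[M],
\]
with $(\mathrm{id}-\mathcal{P}^{\rm o})={}_{P_\alpha}\bullet-{}_{P_\alpha F_\alpha}\bullet$. Expanding the product, and using that projectors on distinct subsystems commute, gives exactly the seven-term expression $\Pcal_{\rm GPP}$ in Eq.~\eqref{eqn::GPP3}.

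Finally, the trace condition in Eq.~\eqref{eqn::FullyGenProj2} reads $(\mathcal{P}^{\rm i})^{\rm T}[\tr_{PF}\breve S]=(\mathcal{P}^{\rm i})^{\rm T}[\ident_{IO}]$. Direct evaluation gives $(\mathcal{P}^{\rm i})^{\rm T}[\ident_{IO}]=\ident-\ident+\ident=\ident_{IO}$, using $\tr[\tau_I]=\tr[\tau_O]=1$. The partial trace over $PF$ commutes with projectors acting on $IO$. Together these turn the condition into Eq.~\eqref{eqn::GPP2}.
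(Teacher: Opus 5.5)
Your proposal is correct and follows the paper's proof: Lem.~\ref{lem::Stilde_S} to pass to Gibbs transforms, Prop.~\ref{prop::GP} for the input and output projectors with $\gamma_{\rm i}=\gamma_{\rm o}=1$, Lem.~\ref{lem::FullyGenProj} with the $\alpha\mapsto\beta$ transpose from Lem.~\ref{lem::adjoint}, and direct expansion of the projector and trace conditions. Your spanning argument through the full-rank point $\tau_I\otimes\tau_O$ shows that preserving the positive part of the affine space is equivalent to preserving the whole affine space; the paper leaves this step implicit when it applies the affine-space lemma directly to $\mathsf{GP}_{\tau_I}$, so your version is slightly more complete.
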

    \begin{proof}
        By Lem.~\ref{lem::Stilde_S}, $S$ is the Choi operator of a $\mathbf{GPP}$ supermap $\mathcal{S}$, i.e.
        \begin{equation}
            \mathcal{S}\bigl(\mathsf{GP}(I,O)\bigr) \subseteq \mathsf{GP}(P,F),
        \end{equation}
        if and only if its Gibbs transform $\breve{S}$ maps ${\mathsf{GP}}_{\tau_I}(I,O)$ into ${\mathsf{GP}}_{\tau_P}(P,F)$. By Proposition \ref{prop::GP}, $\widetilde{G} \in \mathsf{GP}_{\tau_I}(I,O)$ iff 
        \begin{gather}
            \begin{split}
\widetilde{G} &\geq 0, \quad \tr[\widetilde{G}] = 1 := \gamma_{\rm{i}} \\
\text{and} \quad \widetilde{G}  &=   \widetilde{G} - {}_{I_\alpha} \widetilde{G} +{}_{I_\alpha O_\alpha} \widetilde{G} := \mathcal{P}_{\rm GP}^{\rm i}[\widetilde{G}],               
            \end{split}
        \end{gather}
        where $\rm{i} = IO$. Analogously, $\widetilde{G}' \in \mathsf{GP}_{\tau_P}(P,F)$ iff 
        \begin{gather}
            \begin{split}
                \widetilde{G}' &\geq 0, \quad \tr[\widetilde{G}'] = 1 := \gamma_{\rm{o}} \\
                \text{and} \quad \widetilde{G}' &= \widetilde{G}' - {}_{P_\alpha}\widetilde{G}' + {}_{P_\alpha F_\alpha}\widetilde{G}' :=  \mathcal P_{\rm GP}^{\rm o}[\widetilde G'],
        \end{split}
        \end{gather}
        where $\rm{o} = PF$. From Lem.~\ref{lem::FullyGenProj}, we know that $\breve S$ maps $\textsf{GP}_{\tau_I}(I,O)$ into $\textsf{GP}_{\tau_P}(P,F)$ iff
        \begin{align}
        \label{eq:app-affine-map-GPP-1}
            &\breve{S} = \breve{S} - (\mathcal{P}_{\rm GP}^{\rm i})^\mathrm{T} [\breve{S}] + ((\mathcal{P}_{\rm GP}^{\rm i})^\mathrm{T} \otimes \mathcal{P}_{\rm GP}^{\rm o})[\breve S] \\
            \label{eq:app-affine-map-GPP-2}
            \text{and} \quad &(\mathcal{P}_{\rm GP}^{\rm i})^\mathrm{T}\left[\tr_{PF}[\breve{S}]\right] = (\mathcal{P}_{\rm GP}^{\rm i})^\mathrm{T}[\ident_{\rm i}],    
        \end{align}
        where we have used $\gamma_{\rm o}/\gamma_{\rm i} = 1$. By Lem.~\ref{lem::adjoint}, the transpose of the input projector $\mathcal{P}_{\rm GP}^{\rm i}$ is obtained by replacing all instances of ${}_{X_\alpha}\bullet$ by ${}_{X_\beta}\bullet$, i.e., 
        \begin{equation}
            (\mathcal P_{\rm GP}^{\rm i})^{\rm T}[\breve S] = \breve{S} -{}_{I_\beta}\breve{S} + {}_{I_\beta O_\beta} \breve{S}.
        \end{equation}
        
        With this, expansion of Eq.~\eqref{eq:app-affine-map-GPP-1} gives Eq.~ \eqref{eqn::GPP3}. Similarly, expanding Eq.~\eqref{eq:app-affine-map-GPP-2} yields
        \begin{equation}\label{eq:proof2}
            \operatorname{Tr}_{PF} \left[\breve{S} - {}_{I_\beta}\breve{S} + {}_{I_\beta O_\beta}\breve{S} \right] = \ident_{IO} - {}_{I_\beta}\ident_{IO} + {}_{I_\beta O_\beta}\ident_{IO}.
        \end{equation}
        Since
        \begin{equation}
            {}_{I_\beta}\ident_{IO} = {}_{I_\beta O_\beta}\ident_{IO},
        \end{equation}
        the right-hand side of \eqref{eq:proof2} reduces to $\ident_{IO}$, yielding \eqref{eqn::GPP2}.

        Finally, complete positivity of the supermap is equivalent to positivity of its Choi operator $S$, which is equivalent to $\breve S \geq 0$, implying condition~\eqref{eqn::GPP1}. This proves the Lemma.
    \end{proof}
With the above Lemma, we can prove a more intuitive characterization of $\mathbf{GPP}$ supermaps as the set of \textit{thermal supermaps}, i.e., the set of supermaps that do not allow for an extraction of athermality from equilibrium. More rigorously, we have:

\begin{prop}[Thermality of $\mathbf{GPP}$ supermaps]
\label{prop:app-GPP-link}
    An operator $S \in \mathcal{L}(\mathcal{H}_{P} \otimes \mathcal{H}_{I} \otimes \mathcal{H}_{O} \otimes \mathcal{H}_{F})$ is the Choi operator of a $\mathbf{GPP}$ supermap iff $S\geq 0$ and there exists a Gibbs preserving map $ \mathcal{G} \in \mathsf{GP}(O, F)$ such that
    \begin{equation}
        \label{eq:app-GPP-action}
        (\mathcal{S}[\mathcal{T}])[\tau_{P}] = \mathcal{G}\!\left[\mathcal{T}[\tau_{I}]\right],
    \end{equation}
    for any $\mathcal{T} \in \mathsf{CP}(I,O)$; or, equivalently,
    \begin{align}\label{eq:app-GPP-link1}
        \tau_{P}\star S_{PIOF} &= \tau_{I}\otimes G_{OF}, \\
        \tau_{O}\star G_{OF} &= \tau_{F}\label{eq:app-GPP-link2},
    \end{align}
    where we have used subscripts to denote the spaces the respective operators are defined on.
\end{prop}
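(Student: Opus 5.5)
The plan is to prove the two directions separately, working mostly with the original Choi operator $S$ and translating to the Gibbs transform $\breve S$ only when invoking Lem.~\ref{lem:app-GPP-projective}. First I will show that Eq.~\eqref{eq:app-GPP-action} is equivalent to the link-product form \eqref{eq:app-GPP-link1}--\eqref{eq:app-GPP-link2}. Since $(\mathcal S[\mathcal T])[\tau_P] = S\star T\star \tau_P$ and $\mathcal G[\mathcal T[\tau_I]] = G_{OF}\star T_{IO}\star\tau_I$, the first equation holds for all $\mathcal T\in\mathsf{CP}(I,O)$ iff $\tau_P\star S = \tau_I\otimes G_{OF}$. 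This uses linearity and the fact that CP Choi operators span $\mathcal L(\mathcal H_I\otimes\mathcal H_O)$. Gibbs preservation of $\mathcal G$ is exactly $\tau_O\star G=\tau_F$, and $G\geq 0$ follows from $S\geq 0$: indeed $\tau_P\star S\geq0$, so $\tau_I\otimes G\geq 0$.

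($\Leftarrow$) Assume $S\geq0$ and Eq.~\eqref{eq:app-GPP-action}. For $\mathcal G'\in\mathsf{GP}(I,O)$, the map $\mathcal S[\mathcal G']$ is CP because $S\geq 0$. Moreover,
\[
(\mathcal S[\mathcal G'])[\tau_P]=\mathcal G[\mathcal G'[\tau_I]]=\mathcal G[\tau_O]=\tau_F,
\]
so $\mathcal S\in\mathbf{GPP}$. This direction is immediate and does not need the projector machinery.

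($\Rightarrow$) This is the main work. Let $\mathcal S\in\mathbf{GPP}$, so $\breve S$ satisfies \eqref{eqn::GPP1}--\eqref{eqn::GPP2}. I will compute $\tr_P[\breve S]$, which equals $\tau_P\star S$ up to conjugation by $\sqrt{\tau_I}^{-1}$; explicitly, $\tr_P[\breve S]=\sqrt{\tau_I}^{-1}(\tau_P\star S)^{(\mathrm T_P\text{-free})}\sqrt{\tau_I}^{-1}$, with care about transposes, which are harmless since $\tau_P$ is diagonal in the Choi basis.

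Applying $\tr_P$ to the projector identity \eqref{eqn::GPP3}, every term containing ${}_{P_\alpha}$ reduces to $\tr_P$ of the corresponding operator, since $\tr_P[{}_{P_\alpha}X]=\tr_P[X]$. The $\breve S$ and $-{}_{P_\alpha}\breve S$ terms then cancel, leaving
\[
\tr_P\breve S = {}_{I_\beta}A - {}_{I_\beta O_\beta}A + {}_{F_\alpha}A - {}_{I_\beta F_\alpha}A + {}_{I_\beta O_\beta F_\alpha}A,\qquad A:=\tr_P\breve S.
\]
The key observation is that ${}_{I_\beta}(\cdot)$ produces $\ident_I$ tensored with something. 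So $A={}_{I_\beta}A+\big({}_{F_\alpha}$-terms$\big)$, and I expect to show that $A=\ident_I\otimes X_{OF}$ after also using the trace condition \eqref{eqn::GPP2}. The plan for this step is to apply ${}_{I_\beta}$ and ${}_{F_\alpha}$ to both sides and to use $\tr_{PF}$ of \eqref{eqn::GPP2}, which pins down $\tr_F A$ as $\ident_{IO}$-type. The aim is to derive that $A-{}_{I_\beta}A$ vanishes. This requires showing that $B:=A-{}_{I_\beta}A$ satisfies $B={}_{F_\alpha}(B-{}_{O_\beta}B)$ and $\tr_F B=0$, which forces $B=0$ since $\tr_F B = 0$ kills ${}_{F_\alpha}$ terms.

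This disentangling of the projector identity is where I expect the main obstacle, because the $\alpha$ and $\beta$ projectors on the same system do not commute. Once $A=\ident_I\otimes X_{OF}$ is established, transforming back gives $\tau_P\star S=\tau_I\otimes G_{OF}$ with $G_{OF}\propto X_{OF}$. Finally, $\tau_O\star G=\tau_F$ follows either from the $\alpha$-projectors on $O,F$ in the identity, or more simply by inserting any $\mathcal G'\in\mathsf{GP}(I,O)$ (e.g. $\operatorname{Tr}[\cdot]\tau_O$) and using $(\mathcal S[\mathcal G'])[\tau_P]=\tau_F$.
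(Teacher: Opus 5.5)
Your argument is the paper's. Tracing out $P$ (equivalently, applying ${}_{P_\alpha}$) turns \eqref{eqn::GPP3} into exactly your identity for $A=\operatorname{Tr}_P\breve S$, and the paper then combines it with \eqref{eqn::GPP2} to conclude $A={}_{I_\beta}A$, i.e.\ $\tau_P\star S=\tau_I\otimes G_{OF}$.

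The step you flag as the main obstacle is easier than you fear. After the $P$-trace only ${}_{I_\beta}$, ${}_{O_\beta}$ and ${}_{F_\alpha}$ remain, and they act on distinct systems, so everything commutes. Applying ${}_{I_\beta}$ to your reduced identity gives ${}_{I_\beta O_\beta}A={}_{I_\beta O_\beta F_\alpha}A$; the paper gets the same relation by applying ${}_{P_\alpha I_\beta O_\beta}$ to \eqref{eqn::GPP3}. The reduced identity therefore collapses to $B={}_{F_\alpha}B$. Your stated target $B={}_{F_\alpha}(B-{}_{O_\beta}B)$ is not what drops out, and given $\operatorname{Tr}_F B=0$ it is merely a restatement of $B=0$. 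To finish, apply ${}_{I_\beta O_\beta}$ to \eqref{eqn::GPP2}: this gives ${}_{I_\beta O_\beta}\operatorname{Tr}_F A=\ident_{IO}$, hence $\operatorname{Tr}_F B=0$, so ${}_{F_\alpha}B=0$ and $B=0$.

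Your last step genuinely differs from the paper's. The paper reads off $\tau_O\star G_{OF}=\operatorname{Tr}[\tau_O G_{OF}]\,\tau_F$ from the projector identity and fixes the constant via \eqref{eqn::GPP2}. You instead insert the GP map $\rho\mapsto\operatorname{Tr}[\rho]\,\tau_O$ and use the factorization to get $\mathcal G[\tau_O]=(\mathcal S[\mathcal G'])[\tau_P]=\tau_F$ directly. This is valid and shorter.
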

    \begin{proof} 
    First, we note that equivalence of the two characterizations follows from the fact that, expressed in terms of the respective Choi operators, Eq.~\eqref{eq:app-GPP-action} reads 
    \begin{gather}
        S_{PIOF} \star T_{IO} \star \tau_{P} = G_{OF} \star T_{IO} \star \tau_{I},
    \end{gather}
    where $G_{OF}$ is the Choi operator of a GP map. Since the link product is commutative and associative, this implies $(S_{PIOF} \star \tau_P) \star T_{IO} = (\tau_I \otimes G_{OF}) \star T_{IO}$. Requiring this to hold for all CP maps $T_{IO}$ then yields $S_{PIOF} \star \tau_P = \tau_I \otimes G_{OF}$, while Gibbs preservation of $G_{OF}$ implies $\tau_O \star G_{OF} = \tau_F$. Proof of the converse direction follows along the same lines. Now, to prove the main statement of the Proposition, we start with the "only if" direction:\medskip\\
\noindent         ("$\Rightarrow$") Assume $\mathcal{S}$ is a $\mathbf{GPP}$ supermap. The strategy is now to find the properties of the corresponding Gibbs transform $\operatorname{Tr}_P[\breve S]$ and show that they imply thermality of $S$. By Lem.~\ref{lem:app-GPP-projective}, the corresponding $\breve{S}$ satisfies Eqs.~\eqref{eqn::GPP2} and~\eqref{eqn::GPP3}. Applying ${}_{P_\alpha I_\beta O_\beta} \bullet$ to~\eqref{eqn::GPP3} gives
        \begin{equation}\label{eq:ABC-ABCD}
            {}_{P_\alpha I_\beta O_\beta}\breve{S} = {}_{P_\alpha I_\beta O_\beta F_\alpha}\breve{S}.
        \end{equation}
        Furthermore, Eq.~\eqref{eqn::GPP2} implies
        \begin{equation}\label{eq:proof1}
            \operatorname{Tr}_{PF}[\breve{S}] -{}_{I_\beta}\operatorname{Tr}_{PF}[\breve{S}] + {}_{I_\beta O_\beta} \operatorname{Tr}_{PF}[\breve{S}] = \ident_{IO},
        \end{equation}
        Applying ${}_{I_\beta O_\beta}\bullet$ on both sides yields
        \begin{equation}
            {}_{I_\beta O_\beta}\operatorname{Tr}_{PF}[\breve{S}] = \ident_{IO}.
        \end{equation}
        Substituting this back into Eq.~\eqref{eq:proof1}, we obtain
        \begin{equation}
            \operatorname{Tr}_{PF}[\breve{S}] ={}_{I_\beta} \operatorname{Tr}_{PF}[\breve{S}], 
        \end{equation}
        which implies 
        \begin{equation}\label{eq:AD-ABD}
            {}_{P_\alpha F_\alpha}\breve{S} = {}_{P_\alpha I_\beta F_\alpha}\breve{S} .
        \end{equation}
        Applying ${}_{P_\alpha}\bullet$ to \eqref{eqn::GPP3} yields
        \begin{gather}
            {}_{P_\alpha}\breve{S} = {}_{P_\alpha I_\beta}\breve{S} - {}_{P_\alpha I_\beta O_\beta}\breve{S} + {}_{P_\alpha F_\alpha}\breve{S} - {}_{P_\alpha I_\beta F_\alpha}\breve{S} +  {}_{P_\alpha I_\beta O_\beta F_\alpha}\breve{S} .
        \end{gather}
        Using Eqs.~\eqref{eq:ABC-ABCD} and~\eqref{eq:AD-ABD}, this reduces to
        \begin{equation}\label{eq:Palpha-PIbeta}
            {}_{P_\alpha}\breve{S} = {}_{P_\alpha I_\beta}\breve{S}.
        \end{equation}
        We can now use the definitions of the projectors ${}_{X_\alpha}\bullet$ and ${}_{X_\beta}\bullet$ as well as $\breve S$ to recover the statement of the Proposition. First, using the definition of the projectors, the above Eq.~\eqref{eq:Palpha-PIbeta} is equivalent to
        \begin{equation}\label{eq:A-factor}
            \operatorname{Tr}_P[\breve{S}] = \ident_I\otimes G_{OF},
        \end{equation}
        where
        \begin{gather}\label{eq:chanGPS}
            \nonumber G_{OF} = \operatorname{Tr}_I\Bigl[\tau_{I} \operatorname{Tr}_P[\breve{S}]\Bigr] = \tau_{I} \star \operatorname{Tr}_P[\breve{S}].
        \end{gather}
        Since $\breve{S}\geq 0$, we have $\operatorname{Tr}_P[\breve{S}] \geq 0$, and therefore $G_{OF}\geq 0$. From Def.~\ref{def:rotatedSupermap}, we see that  
        \begin{align}
            \nonumber \operatorname{Tr}_P[\breve{S}] &= \operatorname{Tr}_P\Bigl[(\sqrt{\tau_{P}}\otimes\sqrt{\tau_{I}}^{-1}) S (\sqrt{\tau_{P}}\otimes\sqrt{\tau_{I}}^{-1})\Bigr] \\
            &= \sqrt{\tau_{I}}^{-1} \bigl(\tau_{P}\star S_{PIOF}\bigr) \sqrt{\tau_{I}}^{-1},
        \end{align}
        holds, or, equivalently,
        \begin{equation}
            \tau_{P}\star S_{PIOF} = \sqrt{\tau_{I}}\operatorname{Tr}_P[\breve{S}] \sqrt{\tau_{I}}.
        \end{equation}
        Substituting this into $\operatorname{Tr}_P[\breve{S}] = \ident_I\otimes G_{OF}$ gives
        \begin{equation}
            \tau_{P}\star S_{PIOF} = \tau_{I}\otimes G_{OF}, 
        \end{equation}
i.e., Eq.~\eqref{eq:app-GPP-link1} of the Proposition. It remains to show that $G_{OF}$ is Gibbs-preserving. Tracing out subsystem $P$ in \eqref{eq:ABC-ABCD}, we obtain
        \begin{equation}
            {}_{I_\beta O_\beta} \operatorname{Tr}_P[\breve{S}] = {}_{I_\beta O_\beta F_\alpha}\operatorname{Tr}_P[\breve{S}].
        \end{equation}
        Using $\operatorname{Tr}_P[\breve{S}] = \ident_I\otimes G_{OF}$, this becomes
        \begin{equation}
            \ident_I\otimes \ident_O \otimes (\tau_{O} \star G_{OF}) = (\ident_I\otimes \ident_O \otimes \tau_{F})\,\operatorname{Tr}[\tau_{O}G_{OF}].
        \end{equation}
        Therefore,
        \begin{equation} \label{eq:G-proportional-Gibbs}
            \tau_{O} \star G_{OF} = \operatorname{Tr}[\tau_{O}G_{OF}]\,\tau_{F},
        \end{equation}
        i.e., $G_{OF}$ is proportional to a Gibbs-preserving map. In turn, \eqref{eq:proof1} implies, after applying ${}_{I_\beta O_\beta} \bullet$ to both sides, that
        \begin{equation}
            {}_{I_\beta O_\beta}\operatorname{Tr}_{PF}[\breve{S}] = \ident_{IO},
        \end{equation}
        or, equivalently,
        \begin{equation}
            \operatorname{Tr}\Bigl[(\tau_{I} \otimes \tau_{O})\operatorname{Tr}_{PF}[\breve{S}]\Bigr] = 1.
        \end{equation}
        Using \eqref{eq:A-factor}, this becomes
        \begin{equation}
            \operatorname{Tr}_{OF}[(\tau_{O}\otimes \ident_F)G_{OF}] = 1,
        \end{equation}
        or, equivalently,
        \begin{equation}
            \operatorname{Tr}[\tau_{O}G_{OF}] = 1.
        \end{equation}
        Therefore,
        \begin{equation}
            \tau_{O}\star G_{OF} = \tau_{F},
        \end{equation}
        and thus $\mathcal{G}\in\mathsf{GP}(O,F)$.\medskip\\

\noindent        ("$\Leftarrow$") Conversely, assume that Eq.~\eqref{eq:app-GPP-action} holds. Let $\mathcal{T} \in \mathsf{GP}(I,O)$ be Gibbs-preserving. Then,
        \begin{equation}
            (\mathcal{S}[\mathcal{T}])[\tau_{P}] = \mathcal{G}[\mathcal{T}[\tau_{I}]] = \mathcal{G}[\tau_{O}] = \tau_{F}.
        \end{equation}
        Hence $\mathcal{S}[\mathcal{T}] \in \mathsf{GP}(P,F)$ for every GP map $\mathcal{T}$, and $\mathcal{S}$ is a $\mathbf{GPP}$ supermap.
    \end{proof}
\subsection{Characterization of GPTPP transformations}
\label{app:GPTPP}
In a similar fashion to the previous section, here we provide a complete characterization of GPTPP supermaps $\mathcal{S}: \textsf{GPTP}(I,O) \rightarrow \textsf{GPTP}(P,F)$ in terms of the Gibbs transforms $\breve S$ of their Choi states:
\begin{lemma}[Projective characterization of $\mathbf{GPTPP}$ supermaps] \label{lem:app-GPTPP}
    An operator $S \in \mathcal{L}(\mathcal{H}_{P} \otimes \mathcal{H}_{I} \otimes \mathcal{H}_{O} \otimes \mathcal{H}_{F})$ is the Choi operator of a $\mathbf{GPTPP}$ supermap $\mathcal{S}$ iff the corresponding Gibbs transform $\breve{S} \in \mathcal{L}(\mathcal{H}_{P} \otimes \mathcal{H}_{I} \otimes \mathcal{H}_{O} \otimes \mathcal{H}_{F})$ satisfies
    \begin{align}
        \label{eqn::GPTPP1} \breve{S} &\geq 0,\\
        \label{eqn::GPTPP2} \operatorname{Tr}_{PF}&[\breve{S} - {}_{I_\beta}\breve{S} - {}_{O_\beta}\breve{S} + 2{}_{I_\beta O_\beta}\breve{S}] = \ident_{IO}, \\
        \nonumber \breve{S} &= \breve{S} - {}_{P_\alpha}\breve{S} + {}_{P_\alpha I_\beta}\breve{S} + {}_{P_\alpha O_\beta}\breve{S}  - 2 {}_{P_\alpha I_\beta O_\beta}\breve{S} - {}_{F_\alpha}\breve{S}  \\
        \nonumber &\phantom{=,}+ {}_{I_\beta F_\alpha }\breve{S} + {}_{O_\beta F_\alpha }\breve{S} - 2{}_{I_\beta O_\beta F_\alpha}\breve{S} + 2{}_{P_\alpha F_\alpha}\breve{S} - 2{}_{P_\alpha I_\beta F_\alpha}\breve{S} \\
        &\nonumber \phantom{=,}- 2{}_{P_\alpha O_\beta F_\alpha}\breve{S} + 4{}_{P_\alpha I_\beta O_\beta F_\alpha}\breve{S}\\
        &:= \Pcal_{\textup{GPTPP}}[\breve{S}]. \label{eqn::GPTPP3}
    \end{align}
\end{lemma}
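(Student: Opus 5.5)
The proof follows the same template as Lem.~\ref{lem:app-GPP-projective}, with the $\mathsf{GP}$ projector replaced by the $\mathsf{GPTP}$ projector. First, by Lem.~\ref{lem::Stilde_S} (in its $\mathsf{GPTP}$ version), $S$ is the Choi operator of a $\mathbf{GPTPP}$ supermap iff $\breve S$ maps $\mathsf{GPTP}_{\tau_I}(I,O)$ into $\mathsf{GPTP}_{\tau_P}(P,F)$. Complete positivity is equivalent to $S\geq 0$, and hence to $\breve S \geq 0$, since $\breve S$ is obtained from $S$ by congruence with the invertible operator $\sqrt{\tau_P}\otimes\sqrt{\tau_I}^{-1}$. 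This gives Eq.~\eqref{eqn::GPTPP1}.

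Next, Prop.~\ref{prop::GPTP} describes both affine sets by trace and projector constraints. On the input side, with $\mathrm{i}=IO$, we have $\gamma_{\rm i}=1$ and
\[
\mathcal P^{\rm i}_{\rm GPTP}[\widetilde G]=\widetilde G-{}_{I_\alpha}\widetilde G-{}_{O_\alpha}\widetilde G+2\,{}_{I_\alpha O_\alpha}\widetilde G .
\]
On the output side, with $\mathrm{o}=PF$, we have $\gamma_{\rm o}=1$ and the analogous projector with $P,F$ in place of $I,O$. Neither projector is self-adjoint, so I apply the general Lem.~\ref{lem::FullyGenProj} rather than Lem.~\ref{lem::SpecProj}.

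By Lem.~\ref{lem::adjoint} together with linearity, the transpose $(\mathcal P^{\rm i}_{\rm GPTP})^{\rm T}$ is obtained by replacing every ${}_{X_\alpha}$ with ${}_{X_\beta}$. This gives
\[
(\mathcal P^{\rm i}_{\rm GPTP})^{\rm T}[\breve S]=\breve S-{}_{I_\beta}\breve S-{}_{O_\beta}\breve S+2\,{}_{I_\beta O_\beta}\breve S .
\]
Here one should check that the transpose of a product ${}_{I_\alpha O_\alpha}$ is ${}_{I_\beta O_\beta}$. This holds because the two factors act on different subsystems, so they commute and transposition respects the product.

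Condition \eqref{eqn::FullyGenProj1} then reads $\breve S=\breve S-(\mathcal P^{\rm i})^{\rm T}[\breve S]+((\mathcal P^{\rm i})^{\rm T}\otimes\mathcal P^{\rm o})[\breve S]$. I expand $((\mathcal P^{\rm i})^{\rm T}\otimes\mathcal P^{\rm o})$ as the product of two four-term sums, which gives sixteen terms. Projectors on distinct subsystems commute, so every product is simply the joint projector, for example ${}_{P_\alpha O_\beta F_\alpha}$. Collecting the terms together with $\breve S-(\mathcal P^{\rm i})^{\rm T}[\breve S]$ should reproduce exactly the coefficients in Eq.~\eqref{eqn::GPTPP3}, e.g. $+4$ on ${}_{P_\alpha I_\beta O_\beta F_\alpha}$ from $2\cdot 2$, and $-2$ on ${}_{P_\alpha I_\beta O_\beta}$ from $2\cdot(-1)$.

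For the trace condition \eqref{eqn::FullyGenProj2}, partial trace over $PF$ commutes with projectors acting on $I,O$, so the condition reads
\[
\operatorname{Tr}_{PF}\bigl[(\mathcal P^{\rm i})^{\rm T}[\breve S]\bigr]=(\mathcal P^{\rm i})^{\rm T}[\ident_{IO}] .
\]
Since $\operatorname{Tr}[\tau_X]=1$, we have ${}_{I_\beta}\ident={}_{O_\beta}\ident={}_{I_\beta O_\beta}\ident=\ident$. Hence the right-hand side equals $\ident-\ident-\ident+2\ident=\ident_{IO}$, which gives Eq.~\eqref{eqn::GPTPP2}.

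The main obstacle is purely bookkeeping. It lies in the sixteen-term expansion, and in ensuring that the non-commuting pairs ${}_{X_\alpha}{}_{X_\beta}$ on the same subsystem never arise. They do not, because the input projectors touch only $I,O$ and the output projectors touch only $P,F$. Once the coefficients are matched, the "iff" follows directly from Lem.~\ref{lem::FullyGenProj}, which applies since $\gamma_{\rm i}=1\neq0$.
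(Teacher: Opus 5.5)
Your proposal is correct and follows the paper's proof essentially step for step. You reduce to the Gibbs-transformed affine sets via Lem.~\ref{lem::Stilde_S}, apply Lem.~\ref{lem::FullyGenProj} with the transpose obtained by $\alpha\mapsto\beta$, and expand $\breve S-(\mathcal P^{\rm i})^{\rm T}[\breve S]+((\mathcal P^{\rm i})^{\rm T}\otimes\mathcal P^{\rm o})[\breve S]$. Your sixteen-term bookkeeping reproduces exactly the coefficients of Eq.~\eqref{eqn::GPTPP3}, and the trace condition reduces to Eq.~\eqref{eqn::GPTPP2} as you describe.

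Like the paper, you leave implicit why preserving the positive sets $\mathsf{GPTP}_{\tau}$ is equivalent to preserving the full affine spaces $\mathsf{A}_{\rm i},\mathsf{A}_{\rm o}$ of Lem.~\ref{lem::FullyGenProj}. This holds for two reasons: $\breve S\geq 0$ preserves positivity, and the full-rank element $\tau_I\otimes\tau_O\in\mathsf{GPTP}_{\tau_I}(I,O)$ ensures that the positive elements affinely span $\mathsf{A}_{\rm i}$.
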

    \begin{proof}
    The proof follows along the same lines as that of Lem.~\ref{lem:app-GPP-projective}
    First, we recall that, by Lem.~\ref{lem::Stilde_S}, $S$ is a $\mathbf{GPTPP}$ supermap, i.e.
        \begin{equation}
            \mathcal{S}\bigl(\mathsf{GPTP}(I,O)\bigr) \subseteq \mathsf{GPTP}(P,F),
        \end{equation}
        if and only if the Gibbs transform $\breve S$ of its Choi operator maps the affine space ${\mathsf{GPTP}}_{\tau_I}(I,O)$ into the affine space ${\mathsf{GPTP}}_{\tau_P}(P,F)$.

        \noindent By Proposition~\ref{prop::GPTP}, $\widetilde{G} \in {\mathsf{GPTP}}_{\tau_I}(I,O)$ iff
        \begin{gather}
        \begin{split}
            \widetilde{G} &\geq 0, \quad \tr[\widetilde{G}] = 1 =: \gamma_{\rm{i}}, \\
            \text{and} \quad \widetilde{G} &= \widetilde{G} - {}_{I_\alpha}\widetilde{G} - {}_{O_\alpha}\widetilde{G} + 2{}_{I_\alpha O_\alpha}\widetilde{G} =: P_{\rm GPTP}^{\rm i} [\widetilde{G}].
        \end{split}
        \end{gather}
        Similarly, $\widetilde{G}' \in {\mathsf{GPTP}}_{\tau_P}(P,F)$ iff
        \begin{gather}
        \begin{split}
            \widetilde{G}' &\geq 0, \quad \tr[\widetilde{G}'] = 1 =: \gamma_{\rm{o}}, \\
            \text{and} \quad  \widetilde{G}' &= \widetilde{G}' - {}_{P_\alpha}\widetilde{G}' - {}_{F_\alpha}\widetilde{G}' + 2{}_{P_\alpha F_\alpha}\widetilde{G}' =: P_{\rm GPTP}^{\rm o} [\widetilde{G}'].
        \end{split}
        \end{gather}
        From Lem.~\ref{lem::FullyGenProj}, we know that $\breve{S}$ maps ${\mathsf{GPTP}}_{\tau_I}(I,O)$ into ${\mathsf{GPTP}}_{\tau_P}(P,F)$ if and only if
        \begin{align}\label{eq:GPTPP-proof-proj}
            &\breve{S} = \breve{S} - (\mathcal P_{\rm GPTP}^{\rm i})^{\rm T}[\breve{S}] + \bigl((\mathcal P_{\rm GPTP}^{\rm i})^{\rm T} \otimes \mathcal P_{\rm GPTP}^{\rm o}\bigr)[\breve{S}],\\
            \label{eq:GPTPP-proof-affine}
            &\text{and} \quad  (\mathcal P_{\rm GPTP}^{\rm i})^{\rm T}\!\left[\operatorname{Tr}_{PF}\breve{S}\right] = (\mathcal P_{\rm GPTP}^{\rm i})^{\rm T}[\ident_{IO}].
        \end{align}
        By Lem.~\ref{lem::adjoint}, the transpose of the input projector is obtained by replacing $\alpha \mapsto \beta$, yielding
        \begin{equation}
            (\mathcal P_{\rm GPTP}^{\rm i})^{\rm T}[\breve S] = \breve{S} - {}_{I_\beta}\breve{S} - {}_{O_\beta}\breve{S} + 2{}_{I_\beta O_\beta}\breve{S}.
        \end{equation}

        \noindent Using this in the expansion of Eq.~\eqref{eq:GPTPP-proof-proj} yields Eq.~\eqref{eqn::GPTPP3}, while expanding Eq.~\eqref{eq:GPTPP-proof-affine} gives
        \begin{align}\label{eq:proof3}
            \nonumber \operatorname{Tr}_{PF}\!\left[ \breve{S} - {}_{I_\beta}\breve{S} - {}_{O_\beta}\widetilde{S} + 2{}_{I_\beta O_\beta}\breve{S} \right] &= \ident_{IO} - {}_{I_\beta}\ident_{IO} \\
            &\quad - {}_{O_\beta}\ident_{IO} + 2{}_{I_\beta O_\beta}\ident_{IO}.
        \end{align}
        Since
        \begin{equation}
            {}_{I_\beta}\ident_{IO} = {}_{O_\beta}\ident_{IO} = {}_{I_\beta O_\beta}\ident_{IO},
        \end{equation}
        the right-hand side of \eqref{eq:proof3} reduces to $\ident_{IO}$, yielding \eqref{eqn::GPTPP2}. Finally, complete positivity of the supermap is equivalent to $S\geq 0$, which is equivalent to $\breve S \geq 0$, giving \eqref{eqn::GPTPP1}.
    \end{proof}
Unlike $\mathbf{GPP}$, the characterization of $\mathbf{GPTPP}$ does not allow for an intuitive physical interpretation (say, in terms of thermality of the corresponding supermaps). This changes drastically when we require Gibbs preservation to hold in a \textit{complete} sense -- yielding the sets $\mathbf{CGPP}$ and $\mathbf{CGPTPP}$ -- in which case we have $\mathbf{GPP} = \mathbf{CGPP}$ and $\mathbf{pGPP} = \mathbf{pCGPP} = \mathbf{CGPTPP}$ (see App.~\ref{app:complete-free} for discussion and proof).

\subsection{Characterization of pGPP and pGPTPP transformations}
\label{app:properGPP}

We now impose, in addition to Gibbs preservation, a definite causal structure, yielding the sets of $\mathbf{pGPP}$ and $\mathbf{pGPTPP}$ supermaps, respectively. We know that the Choi operator $S\in \Lcal(\Hcal_P \otimes \Hcal_I \otimes \Hcal_O \otimes \Hcal_F)$ of a proper (i.e., causally ordered) supermap $\mathcal{S}: \mathsf{CPTP}(I,O) \rightarrow \mathsf{CPTP}(P,F)$ satisfies (see Prop.~\ref{prop::supchan}) 
\begin{gather}
\label{eqn::caus_req}
    S\geq 0, \quad \tr[S] = d_Pd_O, \quad \mathcal{P}_{\textup{SC}}[S] = S.
\end{gather}
Having characterized the sets $\mathbf{GPP}$ and $\mathbf{GPTPP}$ above, we can now simply add these constraints to obtain a characterization of the sets $\mathbf{pGPP}$ and $\mathbf{pGPTPP}$ based on Prop.~\ref{prop:app-GPP-link} and Lem.~\ref{lem:app-GPTPP}, respectively. 

\begin{prop}[Projective characterization of $\mathbf{pGPP}$ supermaps]
\label{prop:app-pGPP}
    An operator $S \in \mathcal{L}(\mathcal{H}_P \otimes \mathcal{H}_I \otimes \mathcal{H}_O \otimes \mathcal{H}_F)$ is the Choi operator of a $\mathbf{pGPP}$ supermap $\mathcal{S}$ if and only if the following conditions hold:
    \begin{align}
        S &\geq 0, \quad \tr[S] = d_P d_O, \quad  \mathcal{P}_{\mathrm{SC}}[S] = S,\\
        \tau_P \star S_{PIOF} &= \tau_I \otimes G_{OF} \quad \text{and} \quad \tau_O \star G_{OF} = \tau_F.\label{eqn::pGPP1}
    \end{align}
    Note that, in~\eqref{eqn::pGPP1} we have added subscripts on $S$ and $G$ to clarify what spaces they are defined on, respectively. 
    \begin{proof}
       The proof follows from combining the characterization of $\mathbf{GPP}$ maps given in Prop.~\ref{prop:app-GPP-link} with causality requirement of~\eqref{eqn::caus_req}.
    \end{proof}
\end{prop}
Consequently, $\mathbf{pGPP}$ consists precisely of thermal superchannels, i.e., causally ordered supermaps that cannot be used to extract athermality from equilibrium inputs. Similarly, we obtain a characterization of $\mathbf{pGPTPP}$ by combining Lem.~\ref{lem:app-GPTPP} with the causality constraint of~\eqref{eqn::caus_req}.

\begin{prop}[Projective characterization of $\mathbf{pGPTPP}$ supermaps] \label{prop:app-pGPTPP}
    An operator $S \in \mathcal{L}(\mathcal{H}_P \otimes \mathcal{H}_I \otimes \mathcal{H}_O \otimes \mathcal{H}_F)$ is the Choi operator of a $\mathbf{pGPTPP}$ supermap $\mathcal{S}$ if and only if the following conditions hold:
     \begin{align}
        S &\geq 0, \quad \tr[S] = d_P d_O, \quad  \mathcal{P}_{\mathrm{SC}}[S] = S,\\
         \breve{S} &= \mathcal{P}_{\mathrm{GPTPP}}[\breve{S}], \label{eqn::pGPTPP4}\\
        \operatorname{Tr}_{PF}&[\breve{S} - {}_{I_\beta}\breve{S} - {}_{O_\beta}\breve{S} + 2{}_{I_\beta O_\beta}\breve{S}] = \ident_{IO}, \label{eqn::pGPTPP5}
    \end{align}
    where $\breve{S}$ is the Gibbs transform of $S$.
\end{prop}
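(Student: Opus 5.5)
The plan is to prove the two inclusions directly, using the decomposition of the $\mathbf{pGPTPP}$ property into two independent constraints: causality and GPTP-preservation. The causal part is a constraint on $S$ alone. Since a $\mathbf{pGPTPP}$ supermap is by definition a superchannel, Prop.~\ref{prop::supchan} says its Choi operator satisfies $S\geq 0$, $\tr[S]=d_Pd_O$ and $\mathcal{P}_{\mathrm{SC}}[S]=S$; conversely, these three conditions make $\mathcal{S}$ a superchannel. The equilibrium part is handled by Lem.~\ref{lem:app-GPTPP}. It states that $\mathcal{S}$ maps $\mathsf{GPTP}(I,O)$ into $\mathsf{GPTP}(P,F)$ iff the Gibbs transform $\breve S$ satisfies $\breve S\geq 0$, the projector identity \eqref{eqn::GPTPP3}, and the trace condition \eqref{eqn::GPTPP2}. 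Conditions \eqref{eqn::GPTPP3} and \eqref{eqn::GPTPP2} are exactly \eqref{eqn::pGPTPP4} and \eqref{eqn::pGPTPP5}.

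For ($\Rightarrow$), I take $\mathcal{S}\in\mathbf{pGPTPP}$. Being a superchannel, it gives the first line of conditions by Prop.~\ref{prop::supchan}. Being in $\mathbf{GPTPP}$, it gives \eqref{eqn::pGPTPP4}–\eqref{eqn::pGPTPP5} by Lem.~\ref{lem:app-GPTPP}, whose hypothesis is only linearity of $\mathcal{S}$ and hence applies verbatim. For ($\Leftarrow$), the first line makes $\mathcal{S}$ a superchannel. Positivity carries over because $\breve S=(\sqrt{\tau_P}\otimes\sqrt{\tau_I}^{-1})S(\sqrt{\tau_P}\otimes\sqrt{\tau_I}^{-1})$ is a congruence by an invertible Hermitian matrix; here we use that the Gibbs states are full rank. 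So $S\geq 0$ gives $\breve S\geq 0$, and together with \eqref{eqn::pGPTPP4}–\eqref{eqn::pGPTPP5} all hypotheses of Lem.~\ref{lem:app-GPTPP} hold. Hence $\mathcal{S}\in\mathbf{GPTPP}$, and therefore $\mathcal{S}\in\mathbf{pGPTPP}$.

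The only point needing care, and the one I would spell out, is that the two characterizations can be imposed jointly without conflict. $\mathbf{pGPTPP}$ is defined as the intersection of the superchannel set and $\mathbf{GPTPP}$. Each membership is an if-and-only-if statement about the same operator $S$, one read directly and one read through $\breve S$. Lem.~\ref{lem::Stilde_S} guarantees that the $\breve S$-formulation of GPTP preservation is equivalent to the original one for $S$, so the intersection of the two solution sets is exactly the set described. No further reduction is needed: I would not try to simplify the redundancy between $\mathcal{P}_{\mathrm{SC}}$ and $\mathcal{P}_{\mathrm{GPTPP}}$, since the statement only asks for this joint characterization.
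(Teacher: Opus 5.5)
Your proposal is correct and takes the same approach as the paper, which obtains this characterization by intersecting the superchannel conditions of Prop.~\ref{prop::supchan} with the Gibbs-transformed GPTP-preservation conditions of Lem.~\ref{lem:app-GPTPP}. Your remark that $S\geq 0 \Leftrightarrow \breve{S}\geq 0$, because $\breve{S}$ is a congruence of $S$ by the invertible operator $\sqrt{\tau_P}\otimes\sqrt{\tau_I}^{-1}$ (full-rank Gibbs states), is exactly the step that explains why the lemma's positivity condition on $\breve{S}$ does not appear separately in the statement.
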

While we have $\mathbf{pGPP} \neq \mathbf{pGPTPP}$ (see App.~\ref{app:proofHierarchy}), this distinction between different sets of equilibrium superchannels vanishes when requiring Gibbs preservation to hold in a \textit{complete} sense, i.e., $\mathbf{pCGPP} = \mathbf{pCGPTPP}$ (see App.~\ref{app:complete-free} for discussion and proof).

\section{Hierachy of classes of equilibrium supermaps: Proof of Thm.~\ref{thm:hierarchy}}
\label{app:proofHierarchy}
Here, we discuss the relationship of the different possible sets of equilibrium supermaps stated in Thm.~\ref{thm:hierarchy} in more detail. We first note that the second part of the Theorem, i.e., $\mathbf{GPP}\setminus \mathbf{GPTPP} \neq \emptyset$ and $\mathbf{GPTPP}\setminus \mathbf{GPP} \neq \emptyset$ has already been shown explicitly in Sec.~\ref{subsec::GPTPP_pGPTPP}. To analyze the remaining statements, we start by providing a proof of the first part of the Theorem, i.e., the inclusions
\begin{gather} 
\mathbf{pGPP} \subset \mathbf{pGPTPP} \subset \mathbf{GPTPP}
\end{gather}
and demonstrate that they are strict. \\
\noindent \textit{Inclusions.} We first prove the inclusions. Let $\mathcal{S}\in \mathbf{pGPP}$. Then $\mathcal{S}$ is proper and maps every GP map to a GP map. Since every GPTP map is, in particular, GP, $\mathcal{S}$ maps GPTP maps to GP maps. Properness guarantees that channels are mapped to channels; hence GPTP maps are mapped to GPTP maps. Therefore
\begin{equation}    
    \mathbf{pGPP}\subseteq \mathbf{pGPTPP}.
\end{equation}
The inclusion
\begin{equation}
    \mathbf{pGPTPP}\subseteq \mathbf{GPTPP}
\end{equation}
is immediate, since $\mathbf{pGPTPP}$ is the subclass of $\mathbf{GPTPP}$ consisting of proper supermaps.

\medskip
\noindent
We now prove that both inclusions are strict, i.e., 
\begin{gather}
    \mathbf{pGPP}\subsetneq \mathbf{pGPTPP} \quad \text{and} \quad \mathbf{pGPTPP} \subsetneq \mathbf{GPTPP}.
\end{gather}
For the first part, consider the trace-and-replace supermap $\mathcal{S}_\mathsf{TR}$ of Ex.~\ref{example:TR}, with its action given by 
\begin{gather}
    (\mathcal{S}_\mathsf{\mathsf{TR}}[\mathcal{T}])[\rho] = \tr[\mathcal{T}[\ketbra{\Psi}{\Psi}]] \tr[\rho_P] \tau_F
\end{gather}
for all $\mathcal{T} \in \mathsf{CP}(I,O)$ and $\rho_P \in \Lcal(\Hcal_P)$ and some $\ketbra{\Psi}{\Psi} \in \Lcal(\Hcal_I)$. Now, let let $\mathcal{T} \in \mathsf{CPTP}(I,O)$. Then 
\begin{equation}
    \operatorname{Tr}\!\left[\mathcal{T}[|\Psi\rangle\langle\Psi|]\right] = 1,
\end{equation}
such that 
\begin{equation}
    \mathcal{S}_{\mathsf{TR}}[\mathcal{T}] = \overline{\mathcal{G}},
\end{equation}
where $\overline{\mathcal{G}}[\rho] = \operatorname{Tr}[\rho] \, \tau_{F}$ for all $\rho \in \mathcal{L}(\mathcal{H}_P)$ is the completely thermalizing channel. Thus $\mathcal{S}_{\mathsf{TR}}$ maps every channel to $\overline{\mathcal{G}}$ and is a proper supermap. In particular, it maps every GPTP map to a GPTP map, so
\begin{equation}
    \mathcal{S}_{\mathsf{TR}}\in \mathbf{pGPTPP}.
\end{equation}
However, $\mathcal{S}_{\mathsf{TR}}$ does \textit{not} preserve all GP maps. To see this, choose a positive operator $E$ on $\mathcal{H}_I$ such that
\begin{equation}
    \operatorname{Tr}[E\tau_{I}] = 1, \qquad \langle\Psi|E|\Psi\rangle\neq 1,
\end{equation}
and define the CP map $\mathcal{T}_\mathsf{E}$ such that
\begin{equation}
    \mathcal{T}_\mathsf{E}[\rho] = \operatorname{Tr}[E\rho]\,\tau_{O}.
\end{equation}
for any $\rho \in \mathcal{L}(\mathcal{H}_I)$. Then
\begin{equation}
    \mathcal{T}_\mathsf{E}[\tau_{I}] = \tau_{O},
\end{equation}
so $\mathcal{T}_\mathsf{E}$ is GP. But
\begin{equation}
    \mathcal{S}_{\mathsf{TR}}[\mathcal{T}_E] = \langle\Psi|E|\Psi\rangle\,\overline{\mathcal{G}}.
\end{equation}
Hence
\begin{equation}
    (\mathcal{S}_{\mathsf{TR}}[\mathcal{T}_\mathsf{E}])[\tau_{P}] = \langle\Psi|E|\Psi\rangle\,\tau_{F} \neq \tau_{F}.
\end{equation}
Therefore, $\mathcal{S}_{\mathsf{TR}}\notin\mathbf{pGPP}$, proving $\mathbf{pGPP}\subsetneq \mathbf{pGPTPP}$.

\medskip
\noindent
Similarly, to prove the strictness of the inclusion
\begin{gather}
    \mathbf{pGPTPP}\subsetneq\mathbf{GPTPP},
\end{gather}
choose a positive operator $A \geq 0$ on $\mathcal{H}_O$ such that
\begin{equation}
    \operatorname{Tr}[A\tau_{O}]=1, \qquad A\neq \ident_O.
\end{equation}
This is possible whenever $\dim\mathcal{H}_O > 1$. Define a supermap
\begin{equation}
\label{eq:witness-GPTPP-not-pGPTPP}
    (\mathcal{S}_\mathsf{A}[\mathcal{T}])[\rho_P] :=
    \operatorname{Tr}\!\left[A\,\mathcal{T}[\tau_{I}]\right]\,
    \operatorname{Tr}[\rho_P]\,\tau_{F}.
\end{equation}
The Choi operator of this supermap is of the form $S_\mathsf{A} = \ident_P \otimes \tau_I \otimes A^\mathrm{T} \otimes \tau_F \geq 0$ and thus $\mathcal{S}_\mathsf{A}$ is completely positive. It is easy to see that $\mathcal{S}_\mathsf{A} \in \mathbf{GPTPP}$: for any $\mathcal{T} \in \mathbf{GPTP}$, we have 
\begin{gather}
\begin{split}
    (\mathcal{S}_\mathsf{A}[\mathcal{T}])[\rho_P] &=
    \operatorname{Tr}\!\left[A\,\mathcal{T}[\tau_{I}]\right]\,
    \operatorname{Tr}[\rho_P]\,\tau_{F} \\
    &= \operatorname{Tr}\!\left[A\, \tau_{O}]\right]\, \tr[\rho_P] \tau_{F} = \tr[\rho_P] \tau_F, 
\end{split}
\end{gather}
i.e., $\mathcal{S}_\mathsf{A}[\mathcal{T}]$ is GPTP (in particular, $\mathcal{S}_\mathsf{A}$ maps every GPTP map onto the completely thermalizing channel $\overline{\mathcal{G}}$). Hence $\mathcal{S}_\mathsf{A}\in\mathbf{GPTPP}$.

\noindent However, $\mathcal{S}_\mathsf{A}$ is not a \textit{proper} supermap. Since $A\neq\ident_O$, there exists a quantum state $\sigma_O$ such that
\begin{equation}
    \operatorname{Tr}[A\sigma_O] \neq 1.
\end{equation}
Let $\mathcal{T}_\sigma$ be the trace-and-prepare channel
\begin{equation}
    \mathcal{T}_\sigma[\rho_I] := \operatorname{Tr}[\rho_I]\,\sigma_O.
\end{equation}
Then $\mathcal{T}_\sigma \in \mathsf{CPTP}(I,O)$, but
\begin{equation}
    (\mathcal{S}_\mathsf{A}[\mathcal{T}_\sigma])[\rho_P] = \operatorname{Tr}[A\sigma_O]\, \operatorname{Tr}[\rho_P]\,\tau_{F},
\end{equation}
which is not trace-preserving because $\operatorname{Tr}[A\sigma_O] \neq 1$. Hence $\mathcal{S}_\mathsf{A}$ does not map all channels to channels, so it is not proper. Therefore
\begin{equation}
    \mathcal{S}_\mathsf{A} \notin \mathbf{pGPTPP} \quad \Rightarrow \quad  \mathbf{pGPTPP} \subsetneq \mathbf{GPTPP}.
\end{equation}

\medskip
\noindent
For the second part of Thm.~\ref{thm:hierarchy}, we show that 
\begin{gather}
    \mathbf{GPP}\cap\mathbf{pGPTPP} = \mathbf{pGPP}
\end{gather}
holds. First, if $\mathcal{S}\in\mathbf{pGPP}$, then $\mathcal{S}$ is proper and $\mathbf{GPP}$. Moreover, since every GPTP map is GP and proper supermaps preserve trace preservation, $\mathcal{S}$ maps GPTP maps to GPTP maps. Hence
\begin{equation}
    \mathcal{S} \in \mathbf{GPP}\cap\mathbf{pGPTPP},
\end{equation}
and thus
\begin{equation}
    \mathbf{pGPP}\subseteq \mathbf{GPP}\cap\mathbf{pGPTPP}.
\end{equation}
Conversely, let
\begin{equation}
    \mathcal{S} \in \mathbf{GPP}\cap\mathbf{pGPTPP}.
\end{equation}
Since $\mathcal{S} \in \mathbf{GPP}$, it maps GP maps to GP maps. Since $\mathcal{S} \in \mathbf{pGPTPP}$, it is proper. Therefore $\mathcal{S}$ is both proper and $\mathbf{GPP}$, which is precisely the definition of $\mathbf{pGPP}$. Thus
\begin{equation}
    \mathbf{GPP} \cap \mathbf{pGPTPP} \subseteq \mathbf{pGPP}.
\end{equation}
Therefore
\begin{equation}
    \mathbf{GPP}\cap\mathbf{pGPTPP} = \mathbf{pGPP}.
\end{equation}

\section{Characterization of completely Gibbs preserving supermaps}
\label{app:complete-free}

In this Appendix we prove the characterization of the sets of completely Gibbs preserving supermaps, i.e., the sets $\mathbf{CGPP}, \mathbf{pCGPP}, \mathbf{CGPTPP},$ and $\mathbf{pCGPTPP}$ discussed in Sec.~\ref{subsec:completeSupermaps} in the main text. Throughout, all Gibbs states are assumed to be full rank.

\subsection{Proof of Thm.~\ref{theo:ComplGPP}: Completeness of GPP and pGPP transformations}
\label{app:completenessGPP}
First, we show that the additional requirement of \textit{complete} Gibbs preservation does not impose any additional restrictions on $\mathbf{GPP}$, the set of supermaps that preserve Gibbs preserving maps, i.e., we have 
\begin{gather}
      \mathbf{CGPP} = \mathbf{GPP} \quad \text{and} \quad \mathbf{pCGPP} = \mathbf{pGPP}.    
\end{gather}
For the proof, we first note that the former equality implies the latter, so we can restrict ourselves to demonstrating that $\mathbf{CGPP} = \mathbf{GPP}$ holds. Now, recall that, a supermap $\mathcal{S}:\mathsf{CP}(I,O) \rightarrow \mathsf{CP}(P,F)$ is CGPP iff it maps GP maps onto GP maps, even when only acting on a part of them. Expressed in terms of Choi operators $S \in \Lcal(\Hcal_P \otimes \Hcal_I \otimes \Hcal_O \otimes \Hcal_F)$ this requirement translates to  
\begin{gather}
    S_{PIOF}\star G_{IaOb} =:  G'_{PaFb} \in \mathsf{GP}(P\otimes a, F\otimes b)
\end{gather}
for all $G_{IaOb} \in \mathsf{GP}(I\otimes a, O \otimes b)$, where $a$ and $b$ denote auxiliary degrees of freedom that $S$ does not act on (see Fig.~\ref{fig::CompletenessAux} for reference). Note that $G'_{PaFb} \in \mathsf{GP}(P\otimes a, F\otimes b)$ is equivalent to 
\begin{gather}
    G'_{PaFb} \geq 0 \quad \text{and} \quad  G'_{PaFb} \star (\tau_P \otimes \tau_{a}) = \tau_F \otimes \tau_{b}, 
\end{gather}
where $\tau_a$ and $\tau_b$ are the Gibbs states on the auxiliary spaces, respectively.

\noindent Now, it is easy to see that we immediately obtain
\begin{gather}
    \mathbf{CGPP} \subseteq \mathbf{GPP}
\end{gather}
by choosing trivial auxiliary systems $a$ and $b$. 

\noindent For the converse inclusion, let $\mathcal{S}\in\mathbf{GPP}$. By Prop.~\ref{prop:app-GPP-link}, there exists a $M \in\mathsf{GP}(O,F)$ such that
\begin{gather}
    \tau_{P}\star S_{PIOF} = \tau_{I}\otimes M_{OF}, \label{eq:CGPP-proof-factor}  \quad \text{and} \quad 
    \tau_{O}\star M_{OF} = \tau_{F}.
\end{gather}
Using the factorization above, we can directly verify that $G'_{PaFb} =  S_{PIOF}\star G_{IaOb}$ is GP. In particular, we have 
\begin{align}
\notag
    &G'_{PaFb} \star(\tau_P \otimes \tau_a) \\
\notag    &\quad= S_{PIOF}\star G_{IaOb} \star (\tau_P \otimes \tau_a) = (\tau_P \star S_{PIOF}) \star G_{IaOb} \star \tau_a \\
    \notag&\quad= (\tau_I \otimes M_{OF}) \star G_{IaOb} \star \tau_a = M_{OF} \star G_{IaOb} \star (\tau_I \otimes \tau_a) \\
   &\quad= M_{OF} \star (\tau_O \otimes \tau_b) =  \tau_F \otimes \tau_b,
\end{align}
where we have used the commutativity and associativity of the link product, as well as the fact that $S\in \mathbf{GPP}$, $G \in \mathsf{GP}(I\otimes a, O\otimes b)$ and $M \in \mathsf{GP}(O,F)$. From the above, we see that $S_{PIOF}\star G_{IaOb} \in \mathsf{GP}(P\otimes a, F\otimes b)$ whenever $G_{IaOb} \in \mathsf{GP}(I\otimes a, O\otimes b)$, implying that $S \in \mathbf{CGPP}$. Consequently, we have the inclusion
\begin{gather}
    \mathbf{GPP} \subseteq \mathbf{CGPP},
\end{gather}
which, together with $\mathbf{CGPP} \subseteq \mathbf{GPP}$ proves $\mathbf{CGPP} = \mathbf{GPP}$. Imposing, in addition, that the corresponding supermaps are proper (i.e., they are superchannels), then yields the equality of sets 
\begin{gather}
    \mathbf{pGPP} =\mathbf{pCGPP}.
\end{gather}

\subsection{Proof of Thm.~\ref{theo:ClassesCollapse}: Complete GPTP-preservation enforces thermality and causal ordering}
\label{app:completenessCausal}
Here, we demonstrate that every supermap that transforms GPTP maps into GPTP maps, even when only acting on a part of them, is both thermal and causally ordered, i.e., we show the set equalities 
\begin{gather}
    \mathbf{CGPTPP} = \mathbf{pCGPTPP} = \mathbf{pCGPP} = \mathbf{pGPP}.
\end{gather}
Above, we have already shown that $ \mathbf{pCGPP} = \mathbf{pGPP}$, such that we can focus on the remaining equalities here. 

\noindent To this end, first, we note that, if a supermap $S$ is a superchannel, it maps CPTP maps to CPTP maps, even when only acting on a part of them~\cite{Chiribella2009}. Consequently, any $S\in \mathbf{pGPP}$ maps GPTP maps to GPTP maps even when only acting on a part of them. Hence
\begin{equation}
    \mathbf{pGPP}\subseteq \mathbf{pCGPTPP}\subseteq \mathbf{CGPTPP}.
\end{equation}
It remains to show that $\mathbf{CGPTPP}\subseteq \mathbf{pGPP}$, i.e., that any completely GPTP preserving supermap $S$ is a superchannel (i.e., it is causally ordered) and it is GPP, i.e., it satisfies 
\begin{gather}
    \tau_{P} \star S_{PIOF} = \tau_I \otimes G_{OF} \quad \text{and} \quad \tau_O \star G_{OF} = \tau_F.
\end{gather}

To show both of these properties, assume $\mathcal{S}\in\mathbf{CGPTPP}$. The proof now proceeds by exploiting the additional constraints imposed by \textit{complete} GPTP preservation. By attaching auxiliary input and output systems and considering the corresponding extended supermap $S^\#$, we obtain a family of independent constraints on the Gibbs transform of $S$. We will show that these constraints imply both the thermality condition characterizing $\mathbf{GPP}$ and the causal constraints characterizing superchannels. Specifically, for all auxiliary systems $a,b$, the extended supermap 
\begin{gather}
    \mathcal{S}^\# := \mathcal{S}\otimes\mathcal{I}_{a'\rightarrow a} \otimes \mathcal{I}_{b\rightarrow b'},
\end{gather} 
maps $\mathsf{GPTP}(I\otimes a,O\otimes b)$ into $\mathsf{GPTP}(P\otimes a',F\otimes b')$, where $\Hcal_{a'}\cong \Hcal_a$ and $\Hcal_{b'}\cong \Hcal_b$ are isomorphic output copies of the auxiliary systems. The Choi operator $S^{\#}$ corresponding to $\mathcal{S}^\#$ is given by 
\begin{gather}
S_{PIOFa'ab'b}^\# := S\otimes \kketbra{\ident}{\ident}_{aa'} \otimes \kketbra{\ident}{\ident}_{bb'},
\end{gather}
where $\kket{\ident}_{xx'} \in \Hcal_{x} \otimes \Hcal_{x'}$ is the unnormalised maximally entangled state. Using this factorized form of $S^{\#}$, we obtain its corresponding Gibbs transform (by making the replacements $P \rightarrow Pa'$, $I \rightarrow Ia$, $O\rightarrow Ob$ and $F \rightarrow Fb'$ in Def.~\ref{def:rotatedSupermap}) as
\begin{align}
    \nonumber \breve{S}_{PIOF a' a b b'}^\# &= (\sqrt{\tau_{P}}\otimes\sqrt{\tau_{I}}^{-1})S_{PIOF}(\sqrt{\tau_{P}}\otimes\sqrt{\tau_{I}}^{-1}) \\
    \nonumber &\qquad \otimes (\sqrt{\tau_{a'}} \otimes \sqrt{\tau_{a}}^{-1}) \kketbra{\ident}{\ident}_{a'a} (\sqrt{\tau_{a'}} \otimes \sqrt{\tau_{a}}^{-1}) \\
    &\qquad \otimes \kketbra{\ident}{\ident}_{bb'} \\
    &= \breve{S}_{PIOF} \otimes \kketbra{\ident}{\ident}_{a'a} \otimes \kketbra{\ident}{\ident}_{bb'},
\end{align}
where we used the identity $(A \otimes B) \kket{\ident} = (AB^{\mathrm{T}} \otimes \ident) \kket{\ident}$, together with the fact that the Gibbs states are diagonal in the chosen basis and identical on isomorphic auxiliary systems.
\begin{figure}[t!]
    \centering
    \includegraphics[width = 0.9\linewidth]{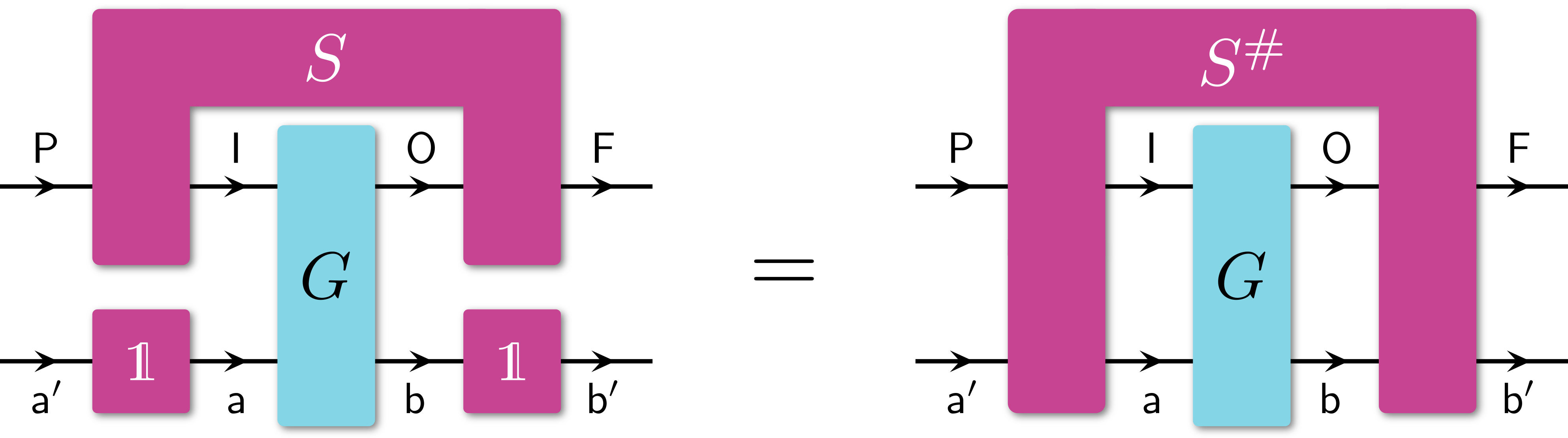}
    \caption{\textbf{Action of a supermap on a subsystem.}
    The action of a supermap $\mathcal{S}: \mathsf{CP}(I,O) \rightarrow \mathsf{CP}(P,F) $ with corresponding Choi operator $S \in \Lcal(\Hcal_P \otimes \Hcal_I \otimes \Hcal_O \otimes \Hcal_F)$ on a map $\mathcal{G} \in \mathsf{CP}(I\otimes a, O\otimes b)$ with corresponding Choi operator $G \in \Lcal(\Hcal_I \otimes \Hcal_a \otimes \Hcal_O \Hcal_b)$ can be equivalently described by the extended supermap $\mathcal{S}^{\#} = \mathcal{S} \otimes \mathcal{I}_{a'\rightarrow a} \otimes \Ical_{b\rightarrow b'}$ with corresponding Choi operator $S^\# = S\otimes \kketbra{\ident}{\ident}_{aa'} \otimes \kketbra{\ident}{\ident}_{bb'}$. The systems $a'$ and $b'$ denote isomorphic copies of $a$ and $b$, respectively.}
    \label{fig::CompletenessAux}
\end{figure}

Since $\breve S^{\#}$ maps $\mathsf{GPTP}_{\tau_I \otimes \tau_a}(I\otimes a, O\otimes b)$ into $\mathsf{GPTP}_{\tau_P \otimes \tau_{a'}}(P\otimes a', F\otimes b')$, we can apply Lem.~\ref{lem::FullyGenProj} to deduce the properties of $\breve S^{\#}$. In particular, the respective input and output projectors onto the sets $\mathsf{GPTP}_{\tau_I \otimes \tau_a}(I\otimes a, O\otimes b)$ are given by (see Prop.~\ref{prop::GPTP}
\begin{align}
    \Pcal^{\rm i}[\widetilde{G}] &= \widetilde{G} - {}_{I_\alpha a_\alpha}\widetilde{G} - {}_{O_\alpha b_\alpha }\widetilde{G} + 2{}_{I_\alpha a_\alpha O_\alpha b_\alpha}\widetilde{G} \\
    \text{and} \quad  \Pcal^{\rm o}[\widetilde{G}'] &= \widetilde{G}' - {}_{P_\alpha a'_\alpha}\widetilde{G}' - {}_{F_\alpha b'_\alpha }\widetilde{G}' + 2{}_{P_\alpha a'_\alpha F_\alpha b'_\alpha}\widetilde{G}'
\end{align}
Using Eq.~\eqref{eqn::FullyGenProj1} from  Lem.~\ref{lem::FullyGenProj} we then obtain
\begin{align}
\notag
    \breve{S} &= \breve{S} +\left(-{}_{P_\alpha}\breve{S} + {}_{P_\alpha I_\beta}\breve{S}\right) \otimes \ident_a \otimes \tau_{a'} \otimes \kketbra{\ident}{\ident}_{bb'} \\
    \notag
    &\phantom{=,}+\left(-{}_{F_\alpha}\breve{S}+ {}_{F_\alpha O_\beta}\breve{S}\right)\otimes \kketbra{\ident}{\ident}_{aa'} \otimes \ident_b \otimes \tau_{b'}\\
    \notag
    &\phantom{=,}+ \left({}_{P_\alpha O_\beta}\breve{S} - 2{}_{P_\alpha I_\beta O_\beta}\breve{S} + {}_{I_\beta F_\alpha }\breve{S} - 2{}_{I_\beta O_\beta F_\alpha}\breve{S} \right.\\
    \notag &\phantom{=ba,}\left.+ 2{}_{P_\alpha F_\alpha}\breve{S}  - 2{}_{P_\alpha I_\beta F_\alpha}\breve{S} - 2{}_{P_\alpha O_\beta F_\alpha}\breve{S} \right.\\
    &\phantom{=ba,} \left.+ 4{}_{P_\alpha I_\beta O_\beta F_\alpha}\breve{S}\right)
    \otimes  \ident_{a} \otimes \tau_{a'} \otimes \ident_b \otimes \tau_{b'},
\end{align}
where we have sorted the expression according to the respective terms on the $aa'bb'$ subspace, and we have used that ${}_{x_\beta}\kketbra{\ident}{\ident}_{xx'} =  {}_{x'_\alpha}\kketbra{\ident}{\ident}_{xx'}$. Since the respective terms on the $aa'bb'$ subspace are linearly independent, the above yields
\begin{align}
    {}_{P_\alpha}\breve S &= {}_{P_\alpha I_\beta}\breve S, \label{eq:CGPTPP-id1}\\
    {}_{F_\alpha}\breve S &= {}_{F_\alpha O_\beta}\breve S,  \label{eq:CGPTPP-id2}\\
    \notag
    \text{and} \quad  0&={}_{P_\alpha O_\beta}\breve{S} - 2{}_{P_\alpha I_\beta O_\beta}\breve{S} + {}_{I_\beta F_\alpha }\breve{S} - 2{}_{I_\beta O_\beta F_\alpha}\breve{S} + 2{}_{P_\alpha F_\alpha}\breve{S}  \\
    &\phantom{=,}- 2{}_{P_\alpha I_\beta F_\alpha}\breve{S} - 2{}_{P_\alpha O_\beta F_\alpha}\breve{S} + 4{}_{P_\alpha I_\beta O_\beta F_\alpha}\breve{S}.  \label{eq:CGPTPP-id3}
\end{align}
Inserting Eqs.~\eqref{eq:CGPTPP-id1} and~\eqref{eq:CGPTPP-id2} into Eq.~\eqref{eq:CGPTPP-id3} further simplifies the latter to 
\begin{gather}
     {}_{P_\alpha I_\beta O_\beta}\breve{S} +  {}_{I_\beta O_\beta F_\alpha}\breve{S}  = 2{}_{P_\alpha I_\beta O_\beta F_\alpha} {}\breve{S},
\end{gather}
which implies 
\begin{align}
\label{eqn::aux_eq_1}
    {}_{P_\alpha I_\beta O_\beta}\breve{S} &= {}_{P_\alpha I_\beta O_\beta F_\alpha} {}\breve{S}\\
    \text{and} \quad {}_{I_\beta O_\beta F_\alpha}\breve{S} &= {}_{P_\alpha I_\beta O_\beta F_\alpha} {}\breve{S}
    \label{eqn::aux_eq_2}
\end{align} 
With this, we can show that that $\mathcal{S}\in\mathbf{GPP}$. Eq.~\eqref{eq:CGPTPP-id1} implies, after undoing the Gibbs transform,
\begin{equation}
    \tau_P \star S_{PIOF} = \tau_I \otimes G_{OF},
\end{equation}
where $G_{OF} := \tr_{PI}[\tau_P S]$. From Eq.~\eqref{eqn::aux_eq_1}, we see that 
\begin{equation}
    \tr_{PIO}[(\tau_P \otimes \tau_O)S] =  \tr[(\tau_P \otimes \tau_O)S] \tau_F.
\end{equation}
Since $\tr_{PIO}[(\tau_P \otimes \tau_O)S] = \tau_O \star G_{OF}$, this implies 
\begin{gather}
     \tau_O \star G_{OF} = c\tau_F,
\end{gather}
where $c = \tr[(\tau_P \otimes \ident_I \otimes \tau_O \otimes \tau_F)S]$. To fix $c$, we note that $(\ident_I \otimes \tau_O)$ is the Choi operator of the completely thermalizing channel
\begin{equation}
    \overline{\mathcal{G}}[\rho] = \operatorname{Tr}[\rho]\tau_O,
\end{equation}
which is GPTP. Consequently,  
\begin{gather}
    c = \tr[(\tau_P \otimes \ident_I \otimes \tau_O \otimes \tau_F)S] = \tr[\mathcal{S}[\overline{\mathcal{G}}][\tau_P]] = 1,
\end{gather}
where we have used that $\mathcal{S}\in \mathbf{CGPTPP}$, and thus $\mathcal{S}[\overline{\mathcal{G}}]$ is TP. As a result, $c=1$, and we obtain
\begin{equation}
    \tau_O \star G_{OF} = \tau_F.
\end{equation}
$S$ thus satisfies \eqref{eq:app-GPP-link1}--\eqref{eq:app-GPP-link2}, making it the Choi operator the Choi operator of a $\mathbf{GPP}$ supermap by Prop.~\ref{prop:app-GPP-link}, i.e., $\mathbf{CGPTPP} \subseteq \mathbf{GPP}$.\\
\noindent It remains to show that $S$ is proper, i.e., causally ordered. First, undoing the Gibbs transform in Eq.~\eqref{eq:CGPTPP-id2}, we directly obtain
\begin{gather}
\label{eqn::caus_proof_1}
    \tr_F[S] = \tr_{FO}[\tau_O S] \otimes \ident_O =: S_{PI} \otimes \ident_O.
\end{gather}
In addition, from Eq.~\eqref{eqn::aux_eq_2}, we see that 
\begin{align}
\notag
    {}_{I_\beta O_\beta F_\alpha}\breve{S} &= \tr_{IOF}[\tau_O S] \otimes \ident_{IO} \otimes \tau_{F_\alpha} \\
    \notag
    &= \sqrt{\tau_P}\tr_I[S_{PI}]\sqrt{\tau_P} \otimes \ident_{IO} \otimes \tau_{F} \\
    &={}_{P_\alpha I_\beta O_\beta F}\breve{S} = \tr[(\tau_P \otimes \tau_O)S]\tau_P \otimes \ident_{IO} \otimes \tau_{F}, 
\end{align}
which implies $\tr_I[S_{PI}] = \tr[(\tau_P \otimes \tau_O)S]\ident_P =: g\ident_P$. It remains to show that $g=1$ holds. To this end, we use that, from Eq.~\eqref{eqn::FullyGenProj2}, the second part of Lem.~\ref{lem::FullyGenProj}, we obtain after some straight forward algebra
\begin{gather}
{}_{P_\alpha F_\alpha} \breve{S} - {}_{P_\alpha I_\beta F_\alpha}\breve{S} - {}_{P_\alpha O_\beta F_\alpha}\breve{S} + 2{}_{P_\alpha I_\beta O_\beta F_\alpha}\breve{S}= \tau_P \otimes \ident_{IO} \otimes \tau_F,
\end{gather}
where we have used $\gamma_{\rm i} = \gamma_{\rm{o}} = 1$. Applying ${}_{P_\alpha I_\beta O_\beta F_\alpha}\bullet$ on both sides yields 
\begin{gather}
    {}_{P_\alpha I_\beta O_\beta F_\alpha}\breve{S}= \tau_P \otimes \ident_{IO} \otimes \tau_F,
\end{gather}
and thus we have $g =  \tr[(\tau_P \otimes \tau_O)S] = 1$. Combining the above results, we have shown that 
\begin{align}
    \operatorname{Tr}_F[S_{PIOF}] &= S_{PI}\otimes\ident_O, \\
    \operatorname{Tr}_I[S_{PI}] &= \ident_P,
\end{align}
which is equivalent to the causality conditions \eqref{eqn::UTP2}--\eqref{eqn::UTP_proj} of superchannels. Hence $S$ is a superchannel. Therefore
\begin{equation}
    \mathbf{CGPTPP}\subseteq \mathbf{pGPP}.
\end{equation}
Taking into account Thm.~\ref{theo:ComplGPP} (i.e., $\mathbf{pCGPP} = \mathbf{pGPP}$), we thus conclude that $\mathbf{CGPTPP} = \mathbf{pCGPTPP} = \mathbf{pCGPP} = \mathbf{pGPP}$.

\section{Realization of pGPP transformations}

\subsection{Proof of Thm.~\ref{theo:encdecpGPP}: Superchannels from GPTP maps}
\label{app:encdecpGPPproof}
Here, we show that a superchannel $\mathcal{S}[\mathcal{T}] = \mathcal{D} \circ \mathcal{T} \circ \mathcal{E}$ lies in $\mathbf{pGPP}$ if $\mathcal{E} \in \textsf{GPTP}(P, I \otimes c)$ and $\mathcal{D} \in \textsf{GPTP}(O\otimes c,F)$. To this end, we first note that maps $\mathcal E$ and $\mathcal D$ are CPTP, therefore $\mathcal{S}$ -- with corresponding Choi operator $S_{PIOF} = E_{PIc} \star D_{cOF}$ -- is a superchannel. In turn, since $\mathcal E \ \in \mathsf{GP}(P, I \otimes c)$, we have
\begin{equation}
    \tau_P\star E_{PIc} = \tau_I\otimes\tau_c,
\end{equation}
and therefore
\begin{align}
    \nonumber \tau_P\star S_{PIOF} &= \tau_P\star(E_{PIc}\star D_{cOF}) = (\tau_P\star E_{PIc})\star D_{cOF} \\
    \notag&= (\tau_I\otimes\tau_c)\star D_{cOF} = \tau_I\otimes(\tau_c\star D_{cOF})\\
    &=: \tau_I \otimes G_{OF}.
\end{align}
Since $\mathcal{D} \in \textsf{GPTP}(O\otimes c,F)$, the matrix $G_{OF}$ defined above is a Gibbs preserving map. Consequently, we have (see Prop.~\ref{prop:app-GPP-link}) $\mathcal{S} \in \mathbf{GPP}$, which, together with the fact that $\mathcal{S}$ is proper implies $\mathcal{S} \in \mathbf{pGPP}$.

\subsection{Proof of Thm.~\ref{theo:decomppGPP}: Decomposition of a pGPP supermap}
\label{app:decomppGPPproof}
Here, we show that, if $\mathcal{S}\in\mathbf{pGPP}$, with $\mathcal{S}[\mathcal{T}] = \mathcal{D} \circ \mathcal{T} \circ \mathcal{E}$, then $\mathcal{D} \in \mathsf{GPTP}(O\otimes c,F)$ and $\tr_c[\mathcal{E}[\tau_P]] = \tau_I$ as well as $\tr_I[\mathcal{E}[\tau_P]] = \tau_c$. To show this, we first recall that, if $\mathcal{S}\in\mathbf{pGPP}$, then there exists a positive operator $G_{OF}$ such that
\begin{align}
    \tau_P\star S_{PIOF} &= \tau_I\otimes G_{OF}, \label{eq:encdec-pGPP-factor}\\
    \text{and} \quad \tau_O\star G_{OF} &= \tau_F. \label{eq:encdec-pGPP-GP}
\end{align}
Using the decomposition of a superchannel,
\begin{align}
    \nonumber \tau_P\star S_{PIOF} &= \tau_P\star(E_{PIc}\star D_{cOF}) \\&
    = (\tau_P\star E_{PIc})\star D_{cOF} =: \xi_{Ic}\star D_{cOF},
\end{align}
where $\xi_{Ic} = \mathcal{E}[\tau_P]$. Combining the above with \eqref{eq:encdec-pGPP-factor} gives
\begin{equation}
    \xi_{Ic}\star D_{cOF} = \tau_I\otimes G_{OF}.
\end{equation}
Taking the partial trace over $F$ and using that $\mathcal{D}$ is trace-preserving,
\begin{equation}
    \operatorname{Tr}_F[D_{cOF}] = \ident_{cO},
\end{equation}
we obtain
\begin{equation}
    \operatorname{Tr}_F[\xi_{Ic}\star D_{cOF}] = \operatorname{Tr}_c[\xi_{Ic}]\otimes\ident_O.
\end{equation}
On the other hand,
\begin{equation}
    \operatorname{Tr}_F[\tau_I\otimes G_{OF}] = \tau_I\otimes\ident_O,
\end{equation}
since $G_{OF}$ is trace-preserving as is stems from a superchannel. Hence
\begin{equation}
    \operatorname{Tr}_c[\xi_{Ic}] = \tau_I.
\end{equation}
Now define
\begin{equation}
\label{eqn::tau_c}
    \tau_c := \operatorname{Tr}_I[\xi_{Ic}].
\end{equation}
Applying a partial trace over $I$ on both sides of 
\begin{equation}
    \xi_{Ic}\star D_{cOF} = \tau_I\otimes G_{\mathcal{S},OF}
\end{equation}
gives
\begin{equation}
    \tau_c\star D_{cOF} = G_{OF}.
\end{equation}
Using \eqref{eq:encdec-pGPP-GP}, we conclude
\begin{equation}
    (\tau_O\otimes\tau_c)\star D_{cOF} = \tau_O\star G_{OF} = \tau_F,
\end{equation}
i.e., $\mathcal{D} \in \mathsf{GPTP}(O \otimes c,F)$. Finally, we note that, $\tau_c$ as defined in Eq.~\eqref{eqn::tau_c} can indeed be considered a Gibbs state on the system $c$. In particular, if $\tau_c$ is full rank, then it can be written as a Gibbs state
\begin{equation}
    \tau_c=\frac{e^{-\beta H_c}}{\operatorname{Tr}[e^{-\beta H_c}]}
\end{equation}
for a Hamiltonian
\begin{equation}
    H_c = -\beta^{-1} \log\tau_c + \lambda\ident_c,
\end{equation}
where $\lambda$ fixes an energy scale.

\section{Multi-slot equilibrium transformations}
\label{app:multislot}
\noindent In this Appendix we prove the characterization of equilibrium transformations, i.e., completely Gibbs-and-trace-preserving $n$-slot maps $S^{(n)}\in \mathbf{CGPTPP}^{(n)}$.

\subsection{Preliminaries}
\noindent Recall that, in the main text, such $n$-slot maps $\mathcal{S}^{(n)} \in \mathbf{CGPTPP}^{(n)}(0, 1, \dots, 2n, 2n+1)$ were defined as the maps that transform any ${\mathcal{S}}^{(n-1)} \in \mathbf{CGPTPP}^{(n-1)}(1 \otimes a, \dots,2n-1, 2n \otimes b)$ into a GPTP map $\mathcal{S}^{(0)} \in \mathsf{GPTP}(0\otimes a, 2n+1 \otimes b) \equiv \mathbf{CGPTPP}^{(0)}(1 \otimes a, 2n \otimes b)$ even when only acting on a part of them (see Fig.~\ref{fig::multi_app}). In terms of the corresponding Choi operators, we can succinctly phrase this as 
\begin{gather}
\label{eqn::defSn}
    S^{(n)} \star S^{(n-1)} =: S^{(0)} \in \mathsf{GPTP}(0\otimes a, 2n+1 \otimes b)
\end{gather}
for all $S^{(n)}\in \mathbf{CGPTPP}^{(n)}$ and $S^{(n-1)} \in \mathbf{CGPTPP}^{(n-1)}(1 \otimes a, \dots,2n-1, 2n \otimes b)$. Evidently, unlike in the previous sections, here, we do not label the respective spaces by $P,I,O,$ and $F$, but rather by arabic numerals (see Fig.~\ref{fig::multi_app}). In addition, throughout, we make the spaces that maps in the respective sets $\mathsf{GPTP},  \mathbf{CGPTPP}^{(n)}$ and $\mathbf{CGPTPP}^{(n-1)}$ are defined on explicit when necessary, and oftentimes $S^{(n-1)}$ will start on space $1$ or $1\otimes a$ and end on $2n$ or $2n \otimes b$. Whenever it is helpful to make this distinction to $S^{(n)}$, which begins on $0$ and ends on $2n+1$ manifest, we will denote the latter by $R^{(n)}$ and the resulting map $R^{(n)} \star S^{(n-1)}$ by $T^{(0)}$ (see Sec.~\ref{app::Comp_n_slot} below).
\begin{figure}[t!]
    \centering
    \includegraphics[width = \linewidth]{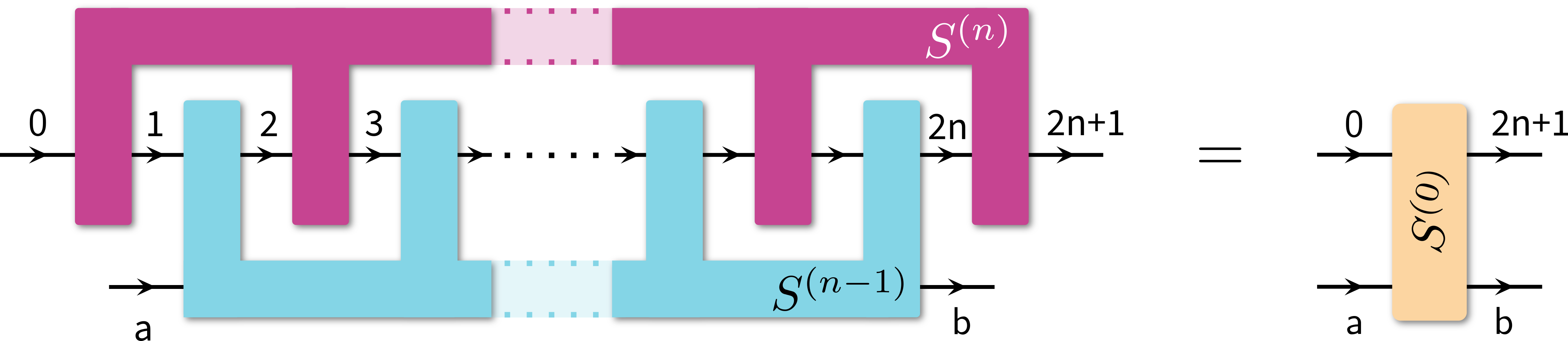}
    \caption{\textbf{CGPTPP property}. An $n$-slot supermap $S^{(n)} \in \Lcal(\Hcal_0 \otimes \Hcal_1 \otimes \cdots \Hcal_{2n+1})$ satisfies $S^{(n)} \in \mathbf{CGPTPP}^{(n)}(0,1,\dots, 2n+1)$ iff $S^{(0)} := S^{(n)} \star S^{(n-1)} \in \textsf{GPTP}(0\otimes a, 2n+1 \otimes b)$ for all $S^{(n-1)} \in  \mathbf{CGPTPP}^{(n-1)}(1\otimes a,\dots, 2n+1 \otimes b)$. Note that, in the proof of Prop.~\ref{prop::CGPTPP_n_slot}, we denote the $n$-slot supermap by $R^{(n)}$, with $n=k$.}
    \label{fig::multi_app}
\end{figure}

Above, we have already provided a characterization for the set $\mathbf{CGPTPP}^{(1)}(0,1,2,3)$, i.e., the set of superchannels that preserve GPTP maps in a complete sense. In particular, we showed in App.~\ref{app:complete-free} that $S^{(1)} \in \mathcal{L}(\mathcal{H}_{0} \otimes \mathcal{H}_{1} \otimes \mathcal{H}_2 \otimes \mathcal{H}_3)$ is a $1$-slot CGPTPP supermap iff it is causally ordered and thermal, i.e., 
\begin{align}
\label{eqn::relabelCGPP1}
    S^{(1)} &\geq 0, \; \tr[S^{(1)}] = d_0 d_2, \\
    \label{eqn::relabelCGPP2}
    S^{(1)} &=  S^{(1)} - {}_3S^{(1)} +  {}_{23}S^{(1)} \\
    &\phantom{=,}- {}_{123}S^{(1)} + {}_{0123}S^{(1)} = \Pcal^{(1)}_{\textup{SC}}[S^{(1)}],\\
    \label{eqn::thermal_arabic}
    \text{and} \quad \tau_0 \star S^{(1)} &= \tau_1 \otimes G_{23} \quad \& \quad \tau_2 \star G_{23} = \tau_3, 
\end{align}
where we have written the projector $\Pcal^{(1)}_{\textup{SC}}$ onto the (span of the) space of superchannels as well as the thermality constraints~\eqref{eqn::thermal_arabic} in agreement with the labels of spaces that we employ throughout this appendix (see Props.~\ref{prop::supchan} and~\ref{prop:app-pGPP} for the respective definitions in terms of the labels $P,I,O,F$). Recall that we can equivalently replace Eqs.~\eqref{eqn::relabelCGPP1} and~\eqref{eqn::relabelCGPP2} by 
\begin{gather}
\begin{split}
    S^{(1)} &\geq 0,\\
    \tr_3[S^{(1)}] &= S^{\prime (0)} \otimes \ident_2\quad \text{and} \quad \tr_2[S^{\prime(0)}] = \ident_0, 
\end{split}
\end{gather}
see Prop.~\ref{prop::alt_charac_caus}. Below, we will use both characterizations of $1$-slot causal superchannels.

\noindent Analogously, for the multi-slot case $S^{(n)} \in \Lcal(\Hcal_0 \otimes \Hcal_1 \otimes \cdots \otimes \Hcal_{2n} \otimes \Hcal_{2n+1})$ and it is causally ordered (with ordering $0\prec 1 \prec 2 \prec \cdots \prec 2n+1$ if (see Prop.~\ref{prop::n_slot_def})
\begin{align}
\label{eqn::causal_ord_app2}
S^{(n)}&\geq 0, \quad \tr[S^{(n)}] = d_0d_2\cdots d_{(2n)},\\
\label{eqn::caus_order_app}
   S^{(n)} &= S^{(n)} - {}_{(2n+1)}S^{(n)} +  {}_{(2n)(2n+1)}S^{(n)} \\
    &\phantom{=}- {}_{(2n-1)(2n)(2n+1)}S^{(n)} + \dots + {}_{12\dots (2n)(2n+1)}S^{(n)}\\
    &=  \Pcal^{(n)}_{\textup{SC}}[S^{(n)}].
\end{align}
Equivalently, $S^{(n)}$ is causally ordered (with ordering $0\prec 1 \prec 2 \prec \cdots \prec 2n+1$ (see Prop.~\ref{prop::alt_charac_caus}) iff
\begin{align}
    S^{(n)} &\geq 0, \\
    \tr_{2n+1} [S^{(n)}] &= S^{\prime (n-1)} \otimes \ident_{2n},\\
     \tr_{2n-1} [S^{\prime (n-1)}] &= S^{\prime (n-2)} \otimes \ident_{2n-2},\\
     \notag &\vdots\\ 
     \tr_{3}[S^{\prime(1)}] &= S^{\prime(0)}] \otimes \ident_2,\\
     \tr_{1}[S^{\prime(0)}] &= \ident_0.
\end{align}
We will both characterizations of causally ordered $n$-slot supermaps below.

\noindent Finally, based on the definition of thermality for a $1$-slot case (see Prop. \ref{prop:app-GPP-link}), we have the following generalization to the multi-slot case:
\begin{definition}[Thermal $n$-slot supermaps]
\label{def::thermal_n_slot}
We call an $n$-slot supermap $S^{(n)} \in \Lcal(\Hcal_0 \otimes \Hcal_1 \otimes \cdots \otimes \Hcal_{2n} \otimes \Hcal_{2n+1})$ thermal (with respect to the ordering $0\prec 1 \prec 2 \prec \cdots \prec 2n \prec 2n+1$) if it satisfies 
\begin{align}
    \tau_0 \star S^{(n)}_{01\cdots (2n)(2n+1)} &= \tau_1 \otimes S^{(n-1)}_{2\cdots (2n)(2n+1)}, \\
    \tau_2 \star S^{(n-1)}_{2\cdots (2n)(2n+1)} &= \tau_3 \otimes S^{(n-2)}_{4\cdots (2n)(2n+1)}, \\
    \notag &\vdots\\
     \tau_{2n} \star S^{(0)}_{(2n)(2n+1)} &= \tau_{2n+1}. 
\end{align}
\end{definition}
We will spend the rest of this appendix proving that $S^{(n)} \in \mathbf{CGPTPP}^{(n)}$ if and only if it is causally ordered and thermal with respect to the ordering $0\prec 1 \prec 2 \prec \cdots \prec 2n \prec 2n+1$. To do so, we require the following generalization of Lem.~\ref{lem::Stilde_S}, which states that the recursive composition defining $\mathbf{CGPTPP}^{(n)}$ is preserved by the corresponding Gibbs transformations:
\begin{lemma}[Equivalence under Gibbs transform -- $n$-slot case]
\label{lem::Stilde_S_gen}
    Let 
    \begin{align}
    \notag
        R^{\#(k)} &\in \mathcal{L}(\Hcal_a \otimes \Hcal_{a'} \otimes \Hcal_0 \otimes \Hcal_1 \otimes \cdots \\
        \notag &\phantom{\in.\mathcal{L}(} \cdots \otimes \Hcal_{2k} \otimes \Hcal_{2k+1} \otimes \Hcal_b \otimes \Hcal_{b'} ), \\ \notag S^{(k-1)} &\in \mathcal{L}(\Hcal_a \otimes \Hcal_1 \otimes \Hcal_2 \otimes \cdots \otimes \Hcal_{2k-1} \otimes \Hcal_{2k}\otimes \Hcal_b),\\
        \notag\text{and} \quad T^{(0)} &\in \mathcal{L}(\Hcal_{a'} \otimes \Hcal_0 \otimes \Hcal_{2k+1} \otimes \Hcal_{b'})
    \end{align}
    Then the following statements are equivalent:
    \begin{enumerate}
        \item $R^{\#(k)} \star S^{(k-1)} = T^{(0)}$,
        \item $\breve{R}^{\#(k)} \star \widetilde{S}^{(k-1)} = \widetilde{T}^{(0)}$,
    \end{enumerate}
    where, 
        \begin{align}
        \breve{R}^{\#(k)}&:= (\sqrt{\xi_{a'0a}} \otimes \sqrt{\tau_{1::2k-1}}^{-1})R^{\#(k)} (\sqrt{\xi_{a'0a}} \otimes \sqrt{\tau_{1::2k-1}}^{-1} ), \\
    \widetilde{S}^{(k-1)}&:= (\sqrt{\tau_a} \otimes \sqrt{\tau_{1::2k-1}}) S^{(k-1)} (\sqrt{\tau_a} \otimes \sqrt{\tau_{1::2k-1}}), \\
    \widetilde{T}^{(0)} &:= (\sqrt{\tau_{a'}} \otimes \sqrt{\tau_0}) T^{(0)} (\sqrt{\tau_{a'}} \otimes \sqrt{\tau_0}), 
\end{align}
with $\sqrt{\tau_{1::2k-1}} = \sqrt{\tau_1} \otimes \sqrt{\tau_3} \otimes \cdots \otimes \sqrt{\tau_{2k-1}}$ and $\sqrt{\xi_{a'0a}} = \sqrt{\tau_a'} \otimes \sqrt{\tau_0}\otimes \sqrt{\tau_a}^{-1}$. 
\end{lemma}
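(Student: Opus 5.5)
The plan is to reduce the statement to a single intertwining identity for the link product, mirroring Eq.~\eqref{eq:intertwining} in the proof of Lem.~\ref{lem::Stilde_S}. Concretely, I will show that
\begin{gather}
(\sqrt{\tau_{a'}}\otimes\sqrt{\tau_0})\,\bigl(R^{\#(k)}\star S^{(k-1)}\bigr)\,(\sqrt{\tau_{a'}}\otimes\sqrt{\tau_0}) = \breve{R}^{\#(k)}\star\widetilde{S}^{(k-1)}
\end{gather}
holds for \emph{all} operators $R^{\#(k)}$ and $S^{(k-1)}$ on the indicated spaces. Since conjugation by the invertible operator $\sqrt{\tau_{a'}}\otimes\sqrt{\tau_0}$ is a bijection (all Gibbs states are full rank), this identity immediately gives $1.\Leftrightarrow 2.$: applying the conjugation to $T^{(0)}$ yields exactly $\widetilde{T}^{(0)}$, and the inverse conjugation recovers statement 1 from statement 2.

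To prove the identity, I will split the spaces into three groups and treat each separately, using the definition of the link product.

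First, the open spaces $a'$ and $0$, which do not appear in $S^{(k-1)}$. Operators acting locally on these spaces pass through the link product unchanged. Hence the factors $\sqrt{\tau_{a'}}\otimes\sqrt{\tau_0}$ inside $\sqrt{\xi_{a'0a}}$ in $\breve R^{\#(k)}$ come out as the outer conjugation.

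Second, the contracted spaces $1,3,\dots,2k-1$ and $a$. Here I use the elementary rule
\begin{gather}
(X_Y A X_Y)\star B = A\star(X_Y^{\mathrm T} B X_Y^{\mathrm T})
\end{gather}
for an operator $X_Y$ on a shared space $Y$. This rule follows from the partial transpose and cyclicity of the partial trace in the definition of the link product. Because the Gibbs states are diagonal in the Choi basis, $\sqrt{\tau}^{\mathrm T}=\sqrt{\tau}$. Therefore the factor $\sqrt{\tau_{1::2k-1}}^{-1}$ on $R^{\#(k)}$ cancels the factor $\sqrt{\tau_{1::2k-1}}$ on $S^{(k-1)}$. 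Likewise, $\sqrt{\tau_a}^{-1}$ on $R^{\#(k)}$ cancels $\sqrt{\tau_a}$ on $\widetilde S^{(k-1)}$.

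Third, the remaining spaces $2,4,\dots,2k$, $b$, $2k+1$ and $b'$. These carry no conjugation on either side, so nothing needs to be done.

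Putting the three groups together gives $\breve R^{\#(k)}\star\widetilde S^{(k-1)} = (\sqrt{\tau_{a'}}\otimes\sqrt{\tau_0})\,(R^{\#(k)}\star S^{(k-1)})\,(\sqrt{\tau_{a'}}\otimes\sqrt{\tau_0})$, as required.

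The only point I expect to require care is bookkeeping rather than a genuine obstacle. I must check that the transposes and the left/right placement of the conjugating factors match, especially on the auxiliary space $a$, which enters $R^{\#(k)}$ through the $\kketbra{\ident}{\ident}_{aa'}$ structure. This is handled by the diagonality of $\tau_a$ together with the identity $(A\otimes B)\kket{\ident}=(AB^{\mathrm T}\otimes\ident)\kket{\ident}$ already used in App.~\ref{app:completenessCausal}. No positivity or normalization input is needed, since the statement is purely algebraic.
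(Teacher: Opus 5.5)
Your proposal is correct and follows the paper's proof. Both arguments derive the intertwining identity $(\sqrt{\tau_{a'}}\otimes\sqrt{\tau_0})(R^{\#(k)}\star S^{(k-1)})(\sqrt{\tau_{a'}}\otimes\sqrt{\tau_0}) = \breve{R}^{\#(k)}\star\widetilde{S}^{(k-1)}$ from the definition of the link product, then use invertibility of the full-rank conjugation to obtain both directions. Your closing concern about the $\kketbra{\ident}{\ident}_{aa'}$ structure is unnecessary, because the identity holds for arbitrary $R^{\#(k)}$ and $a$ is just another contracted space handled by your transpose rule.
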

    \begin{proof}
        First, using the definition of the link product, one verifies that
        \begin{align}\label{eq:intertwining2}
           \nonumber &R^{\#(k)} \star S^{(k-1)} \\
           \nonumber&=  [(\sqrt{\xi_{a'0a}}^{-1} \otimes \sqrt{\tau_{1::2k-1}}) \breve{R}^{\#(k)} (\sqrt{\xi_{a'0a}}^{-1} \otimes \sqrt{\tau_{1::2k-1}}^{-1} )] \\
           \notag &\phantom{=,}\star (\sqrt{\tau_a}^{-1} \otimes \sqrt{\tau_{1::2k-1}})^{-1} \widetilde{S}^{(k-1)} (\sqrt{\tau_a}^{-1} \otimes \sqrt{\tau_{1::2k-1}}^{-1})\\
           &= (\sqrt{\tau_{a'}}^{-1} \otimes \sqrt{\tau_0}^{-1})(\breve{R}^{\#(k)} \star \widetilde{S}^{(k-1)})  (\sqrt{\tau_{a'}}^{-1} \otimes \sqrt{\tau_0}^{-1}).
        \end{align}
Consequently, if $R^{\#(k)} \star S^{(k-1)} = T^{(0)}$, then the above implies 
\begin{gather}
\begin{split}
    \breve{R}^{\#(k)} \star \widetilde{S}^{(k-1)} &=  (\sqrt{\tau_{a'}} \otimes \sqrt{\tau_0}) T^{(0)}  (\sqrt{\tau_{a'}} \otimes \sqrt{\tau_0}) \\
    &= \widetilde{T}^{(0)}. 
\end{split}
\end{gather}
Similarly, if $\breve{R}^{\#(k)} \star \widetilde{S}^{(k-1)} = \widetilde{T}^{(0)}$, then we obtain from Eq.~\eqref{eq:intertwining2}:
\begin{align}
    \notag R^{\#(k)} \star S^{(k-1)} &= (\sqrt{\tau_{a'}}^{-1} \otimes \sqrt{\tau_0}^{-1})\widetilde{T}^{(0)}  (\sqrt{\tau_{a'}}^{-1} \otimes \sqrt{\tau_0}^{-1})\\ &= T^{(0)}.
\end{align}
\end{proof}

\subsection{Causally ordered thermal \texorpdfstring{$n$}{n}-slot supermaps} 
To derive a complete characterization of $\mathbf{CGPTPP}^{(n)}$ -- and thus give a proof of Prop.~\ref{prop::multi-slot} from the main text, we now proceed as follows. First, we deduce a characterization of processes $S^{(n)}$ that are both causally ordered and thermal (according to Prop.~\ref{prop::n_slot_def} and Def.~\ref{def::thermal_n_slot} by means of a trace constraint and a projector (see Lem.~\ref{lem::supermap_k}). This Lemma then provides a helpful characterization for $\mathbf{CGPTPP}^{(0)}$ and $\mathbf{CGPTPP}^{(1)}$, which we can, in turn, employ to show that  $\mathbf{CGPTPP}^{(n)}$ coincides exactly with the set of causally ordered $n$-slot supermaps. In analogy with the one-slot case, causal order and thermality correspond to two commuting projector constraints, whose intersection completely characterizes the set of interest:
\begin{lemma}[Necessary and sufficient conditions for causally ordered thermal supermaps]
\label{lem::supermap_k}
A supermap $S^{(n)} \in \Lcal(\Hcal_0 \otimes \Hcal_1 \otimes \cdots \otimes \Hcal_{2n+1})$ is causally ordered and thermal with respect to the ordering $0\prec 1 \prec \cdots \prec 2n+1$ iff it satisfies
\begin{align}
    &\widetilde{S}^{(n)} \geq 0, \quad \tr[\widetilde{S}^{(n)}] = 1, \\
    &\widetilde{S}^{(n)} = (\Pcal^{(n)}_{\textup{CO}} \circ \Pcal^{(n)}_{\textup{TH}})[\widetilde{S}^{(n)}] = (\Pcal^{(n)}_{\textup{TH}}\circ \Pcal^{(n)}_{\textup{CO}})[\widetilde{S}^{(n)}], 
\end{align}
where 
\begin{equation}
    \widetilde{S}^{(n)} := \sqrt{\tau_{0::2n}} S^{(n)} \sqrt{\tau_{0::2n}},
\end{equation}
with $\sqrt{\tau_{0::2n}} = \sqrt{\tau_0} \otimes \sqrt{\tau_2} \otimes \cdots \otimes \sqrt{\tau_{2n}}$, and the action of the two projectors $\Pcal^{(n)}_{\textup{CO}}$ and $\Pcal^{(n)}_{\textup{TH}}$ is given by
\begin{align}
\notag
    \Pcal^{(n)}_{\textup{CO}}[\widetilde{S}^{(n)}] &= \widetilde{S}^{(n)} - {}_{(2n+1)_\alpha}\widetilde{S}^{(n)} + {}_{(2n)_\alpha(2n+1)_\alpha}\widetilde{S}^{(n)}\\
    \notag
    &\phantom{=}- {}_{(2n)_\alpha(2n-1)_\alpha(2n+1)_\alpha }\widetilde{S}^{(n)} + \cdots -{}_{1_\alpha 2_\alpha \cdots (2n+1)_\alpha}\widetilde{S}^{(n)} \\
\label{eqn::Proj_CO}
    &\phantom{=}+ {}_{0_\alpha 1_\alpha 2_\alpha \cdots (2n+1)_\alpha}\widetilde{S}^{(n)},\\
\notag
\Pcal^{(n)}_{\textup{TH}}[\widetilde{S}^{(n)}] &= \widetilde{S}^{(n)} - {}_{0_\alpha}\widetilde{S}^{(n)} + {}_{0_\alpha 1_\alpha}\widetilde{S}^{(n)} - {}_{0_\alpha 1_\alpha 2_\alpha}\widetilde{S}^{(n)} \\
\label{eqn::Proj_TH}
&\phantom{=}+ \cdots - {}_{0_\alpha 1_\alpha \cdots (2n)_\alpha}\widetilde{S}^{(n)} 
+ {}_{0_\alpha 1_\alpha 2_\alpha \cdots (2n+1)_\alpha}\widetilde{S}^{(n)}.
\end{align}
\end{lemma}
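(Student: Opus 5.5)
The plan is to translate both defining properties, causal order in the form of Prop.~\ref{prop::alt_charac_caus} and thermality as in Def.~\ref{def::thermal_n_slot}, into statements about the Gibbs transform $\widetilde{S}^{(n)}=\sqrt{\tau_{0::2n}}\,S^{(n)}\sqrt{\tau_{0::2n}}$. Each property will become a chain of equalities between nested $\alpha$-projections of $\widetilde{S}^{(n)}$. These chains can then be repackaged as the telescoping projectors $\Pcal^{(n)}_{\textup{CO}}$ and $\Pcal^{(n)}_{\textup{TH}}$.

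Positivity is immediate, since $\widetilde{S}^{(n)}\geq 0$ iff $S^{(n)}\geq 0$ ($\tau$'s full rank). For causal order, I will conjugate each layer of Prop.~\ref{prop::alt_charac_caus} with the even-labelled $\sqrt{\tau}$'s. The condition $\tr_{2n+1}[S^{(n)}]=S'^{(n-1)}\otimes\ident_{2n}$ becomes $\tr_{2n+1}[\widetilde{S}^{(n)}]=\widetilde{S}'^{(n-1)}\otimes\tau_{2n}$. Equivalently,
\begin{equation*}
{}_{(2n+1)_\alpha}\widetilde{S}^{(n)}={}_{(2n)_\alpha(2n+1)_\alpha}\widetilde{S}^{(n)},
\end{equation*}
and analogously at every lower layer. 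The last condition, $\tr_1[S'^{(0)}]=\ident_0$, becomes the statement that the marginal on $0$ is $\tau_0$; this supplies both the final equality ${}_{1_\alpha\cdots(2n+1)_\alpha}\widetilde{S}^{(n)}={}_{0_\alpha\cdots(2n+1)_\alpha}\widetilde{S}^{(n)}$ and $\tr[\widetilde{S}^{(n)}]=1$.

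Thermality is handled the same way. Since $\tau_0\star S^{(n)}=\tr_0[\tau_0 S^{(n)}]$, the relation $\tau_0\star S^{(n)}=\tau_1\otimes S^{(n-1)}$ reads $\tr_0[\widetilde{S}^{(n)}]=\tau_1\otimes\widetilde{S}^{(n-1)}$, where the remaining even $\sqrt{\tau}$'s conjugate $S^{(n-1)}$. This gives ${}_{0_\alpha}\widetilde{S}^{(n)}={}_{0_\alpha1_\alpha}\widetilde{S}^{(n)}$. Iterating yields
\begin{equation*}
{}_{0_\alpha\cdots(2k)_\alpha}\widetilde{S}^{(n)}={}_{0_\alpha\cdots(2k+1)_\alpha}\widetilde{S}^{(n)}, \qquad k=0,\dots,n,
\end{equation*}
and the last step, $\tau_{2n}\star S^{(0)}=\tau_{2n+1}$, fixes the normalization consistently with trace $1$. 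Summing each chain telescopically shows that the causal-order chain is equivalent to $\widetilde{S}^{(n)}=\Pcal^{(n)}_{\textup{CO}}[\widetilde{S}^{(n)}]$, and the thermality chain to $\widetilde{S}^{(n)}=\Pcal^{(n)}_{\textup{TH}}[\widetilde{S}^{(n)}]$. In the converse direction, any proportionality constants appearing when undoing the transform are fixed to $1$ by $\tr[\widetilde{S}^{(n)}]=1$, exactly as was done with $c$ and $g$ in App.~\ref{app:completenessCausal}.

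The step I expect to be the main obstacle is the converse extraction: recovering each individual equality of a chain from the single alternating-sum identity $\widetilde{S}^{(n)}=\Pcal[\widetilde{S}^{(n)}]$. I will do this by induction, following the GPP proof. First apply the largest nested projector, which gives the outermost equality because ${}_{X_\alpha}$ is idempotent and all the projectors in $\Pcal$ are nested. Substitute this back to cancel a pair of terms, then apply the next smaller projector, and continue until every equality has been isolated.

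Finally, all the operators involved are $\alpha$-projectors on distinct subsystems, so they commute; hence $\Pcal^{(n)}_{\textup{CO}}$ and $\Pcal^{(n)}_{\textup{TH}}$ commute. For commuting projectors $P$ and $Q$, one has $PQ[W]=W$ iff $P[W]=W$ and $Q[W]=W$: for the nontrivial direction, $P[W]=P[PQ[W]]=PQ[W]=W$, and likewise for $Q$ using $QP=PQ$. This yields the combined condition $\widetilde{S}^{(n)}=(\Pcal^{(n)}_{\textup{CO}}\circ\Pcal^{(n)}_{\textup{TH}})[\widetilde{S}^{(n)}]=(\Pcal^{(n)}_{\textup{TH}}\circ\Pcal^{(n)}_{\textup{CO}})[\widetilde{S}^{(n)}]$ stated in the Lemma.
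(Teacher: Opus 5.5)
Your proposal is correct and follows essentially the paper's route. Like the paper, you rewrite causal order (Prop.~\ref{prop::alt_charac_caus}) and thermality (Def.~\ref{def::thermal_n_slot}) as chains of equalities between nested $\alpha$-projections of $\widetilde{S}^{(n)}$, telescope them into $\Pcal^{(n)}_{\textup{CO}}$ and $\Pcal^{(n)}_{\textup{TH}}$, fix the normalizations with $\tr[\widetilde{S}^{(n)}]=1$, and use commutativity to split the combined condition. You argue directly for general $n$, where the paper writes out only $n=1$, and you make the extraction and commuting-projector steps explicit; the latter uses idempotency of the two maps, which the paper verifies by induction.

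One small slip: applying the largest projector ${}_{0_\alpha\cdots(2n+1)_\alpha}$ to the fixed-point identity only gives $0=0$. For $\Pcal^{(n)}_{\textup{CO}}$ the extraction should instead start with ${}_{1_\alpha\cdots(2n+1)_\alpha}$, then ${}_{3_\alpha\cdots(2n+1)_\alpha}$, and so on. For $\Pcal^{(n)}_{\textup{TH}}$ it should start with ${}_{0_\alpha\cdots(2n)_\alpha}$, then ${}_{0_\alpha\cdots(2n-2)_\alpha}$, and so on.
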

\noindent Up to the replacement ${}_{X_\alpha}\bullet \leftrightarrow {}_{X}\bullet$, the projector $\Pcal^{(n)}_{\textup{CO}}$ above would coincide with the projector onto the set of causally ordered $n$-slot supermaps, hence the subscript CO (causally ordered). Similarly, $\Pcal^{(n)}_{\textup{TH}}$ plays the role of the projector onto the set of $n$-slot thermal supermaps, but for the Gibbs-transformed version $\widetilde{S}^{(n)}$ instead of $S^{(n)}$ (hence the subscript TH for 'thermal').

\begin{proof}
We first prove that $\Pcal^{(n)}_{\textup{CO}}$ and $\Pcal^{(n)}_{\textup{TH}}$ are indeed projectors. To this end, first assume that $\Pcal^{(k)}_{\textup{CO}}$ is a projector for some $k\in \mathbbm{N}_0$. From Eq.~\eqref{eqn::Proj_CO}, it is easy to see that 
\begin{gather}
    \Pcal^{(k+1)}_{\textup{CO}}[M]= M - {}_{(2k+3)_\alpha}M + {}_{(2k+2)_\alpha(2k+3)_\alpha}[\Pcal^{(k)}_{\textup{CO}}[M]]
\end{gather} for all $M$. Using that, by assumption, $\Pcal^{(k)}_{\textup{CO}}\circ \Pcal^{(k)}_\textup{CO} = \Pcal^{(k)}_\textup{CO}$, we can then verify $\Pcal^{(k+1)}_{\textup{CO}} \circ \Pcal^{(k+1)}_{\textup{CO}} = \Pcal^{(k+1)}_{\textup{CO}}$ by direct insertion. It is easy to see that $\Pcal^{(0)}_{\textup{CO}}$ is indeed a projector, providing the start of the induction. \\
\noindent Analogously, let $\Pcal^{(k)}_{\textup{TH}}$ be a projector for some $k\in \mathbbm{N}_0$. It is easy to see that $\Pcal^{(k+1)}_{\textup{TH}}[M] = \Pcal^{(k)}_{\textup{TH}}[M] - {}_{0_\alpha 1_\alpha \cdots (2k+1)_\alpha(2k+2)_\alpha}M + {}_{0_\alpha 1_\alpha \cdots (2k+2)_\alpha(2k+3)_\alpha}M$. Using 
\begin{align}
    {}_{0_\alpha 1_\alpha \cdots (2k+1)_\alpha(2k+2)_\alpha}[\Pcal^{(k)}_{\textup{TH}}[M]] &= {}_{0_\alpha 1_\alpha \cdots (2k+1)_\alpha(2k+2)_\alpha}M,\\
    {}_{0_\alpha 1_\alpha \cdots (2k+2)_\alpha(2k+3)_\alpha}[\Pcal^{(k)}_{\textup{TH}}[M]] &= {}_{0_\alpha 1_\alpha \cdots (2k+2)_\alpha(2k+3)_\alpha}M, \\
    \text{and} \quad \Pcal^{(k)}_{\textup{TH}} \circ \Pcal^{(k)}_{\textup{TH}} &= \Pcal^{(k)}_{\textup{TH}},
\end{align}
the projector property $\Pcal^{(k+1)}_{\textup{TH}} \circ \Pcal^{(k+1)}_{\textup{TH}} = \Pcal^{(k+1)}_{\textup{TH}}$ can then be seen by direct insertion. The start of the induction ($k=0$) follows from verifying that $\Pcal^{(0)}_{\textup{TH}}$ is a projector. Since both $\Pcal^{(k)}_{\text{CO}}$ and $\Pcal^{(k)}_{\text{TH}}$ only consist of operators of the form ${}_{X_\alpha}\bullet$ and ${}_{Y_\alpha}\bullet$, and we have ${}_{X_\alpha Y_\alpha}\bullet = {}_{Y_\alpha X_\alpha}\bullet$, we also have $\Pcal^{(k)}_{\text{CO}} \circ \Pcal^{(k)}_{\text{TH}} =\Pcal^{(k)}_{\text{TH}} \circ \Pcal^{(k)}_{\text{CO}}$.

\medskip
\noindent We can now prove the main statement. For simplicity, we first restrict the proof to the one-slot case $n=1$, such that 
\begin{gather}
\begin{split}
    &S^{(1)} \in \Lcal(\Hcal_0\otimes \Hcal_1\otimes\Hcal_2\otimes\Hcal_3)\\ 
    \text{and} \quad &\widetilde{S}^{(1)} = (\sqrt{\tau_0} \otimes \sqrt{\tau_2}) S^{(1)} (\sqrt{\tau_0} \otimes \sqrt{\tau_2}).
\end{split}
\end{gather}

\medskip
\noindent \emph{Only if.} ("$\Rightarrow$") Suppose that $S^{(1)}$ is causally ordered and thermal. By definition,
\begin{gather}
    \tr[\widetilde{S}^{(1)}] = \tr[(\tau_0 \otimes \tau_2) \star S^{(1)}].
\end{gather}
Since $S^{(1)}$ is thermal, we have $(\tau_0 \otimes \tau_2) \star S^{(1)} = \tau_1 \otimes \tau_3$ and the above thus implies $\tr[\widetilde{S}^{(1)}] = 1$, yielding the trace constraint of the Lemma.

\noindent Causal order of $S^{(1)}$ implies that it satisfies $S^{(1)} = \mathcal{P}^{(1)}_{\textup{SC}}[S^{(1)}]$ [see Eq.~\eqref{eqn::caus_order_app}]. Therefore,
\begin{gather}
    {}_{3_\alpha}\widetilde{S}^{(1)} = {}_{2_\alpha 3_\alpha}\widetilde{S}^{(1)} \quad \text{and} \quad {}_{1_\alpha 2_\alpha 3_\alpha}\widetilde{S}^{(1)} = {}_{0_\alpha 1_\alpha 2_\alpha 3_\alpha}\widetilde{S}^{(1)},
\end{gather} 
as can be verified by direct insertion. Combining the above equations, we obtain 
\begin{align}
\notag
    \widetilde{S}^{(1)} &= \widetilde{S}^{(1)} - {}_{3_\alpha}\widetilde{S}^{(1)} + {}_{2_\alpha 3_\alpha}\widetilde{S}^{(1)} - {}_{1_\alpha 2_\alpha 3_\alpha}\widetilde{S}^{(1)} + {}_{0_\alpha 1_\alpha 2_\alpha 3_\alpha}\widetilde{S}^{(1)}\\
    &= \Pcal_{\textup{CO}}^{(1)}[\widetilde{S}^{(1)}]
\end{align}
In a similar vein, from the thermality constraint $\tau_0 \star S^{(1)} = \tau_1 \otimes S^{(0)}$, we obtain ${}_{0_\alpha}\widetilde{S}^{(1)} = {}_{0_\alpha 1_\alpha} \widetilde{S}^{(1)}$. Combined with $\tau_2 \star S^{(0)} = \tau_3$, this yields ${}_{0_\alpha1_\alpha 2_\alpha}\widetilde{S}^{(1)} = {}_{0_\alpha 1_\alpha 2_\alpha 3_\alpha} \widetilde{S}^{(1)}$. In combination we thus obtain
\begin{align}
\notag
    \widetilde{S}^{(1)}&= \widetilde{S}^{(1)} - {}_{0_\alpha}\widetilde{S}^{(1)} + {}_{0_\alpha 1_\alpha}\widetilde{S}^{(1)} - {}_{0_\alpha 1_\alpha 2_\alpha}\widetilde{S}^{(1)} + {}_{0_\alpha 1_\alpha 2_\alpha 3_\alpha}\widetilde{S}^{(1)}\\
    &=\Pcal_{\text{TH}}^{(1)}[\widetilde{S}^{(1)}].
\end{align}
Consequently, causally ordered and thermal supermaps $S^{(1)}$ satisfy $\tr[\widetilde{S}^{(1)}] = 1$ and $\widetilde{S}^{(1)} = (\Pcal^{(1)}_{\textup{CO}} \circ \Pcal^{(1)}_{\textup{TH}})[\widetilde{S}^{(1)}] = (\Pcal^{(1)}_{\textup{TH}} \circ \Pcal^{(1)}_{\textup{CO}})[\widetilde{S}^{(1)}] $. The property $\widetilde{S}^{(1)} \geq 0$ follows from $S^{(1)} \geq 0$, which holds for causally ordered $S^{(1)}$. In exactly the same manner we can show that, for the case of $n\neq 1$, causal ordering plus thermality implies $\widetilde{S}^{(n)} \geq 0$, $\tr[\widetilde{S}^{(n)}] = 1$, as well as $\widetilde{S}^{(n)} = (\Pcal^{(n)}_{\textup{CO}} \circ \Pcal^{(n)}_{\textup{TH}})[\widetilde{S}^{(n)}]$.  

\medskip
\noindent \emph{If.} ("$\Leftarrow$") Conversely, for the "if" direction (with $n=1$), assume that $\tr[\widetilde{S}^{(1)}] = 1$ and 
$\widetilde{S}^{(1)} = (\Pcal^{(1)}_{\textup{CO}} \circ \Pcal^{(1)}_{\textup{TH}})[\widetilde{S}^{(1)}] = (\Pcal^{(1)}_{\textup{TH}}\circ \Pcal^{(1)}_{\textup{CO}})[\widetilde{S}^{(1)}]$. Since the respective projectors commute, the latter implies that $\widetilde{S}^{(1)} = \Pcal^{(1)}_{\textup{TH}}[\widetilde{S}^{(1)}]$ and $\widetilde{S}^{(1)} = \Pcal^{(1)}_{\textup{CO}}[\widetilde{S}^{(1)}]$ hold individually. From $\widetilde{S}^{(1)} = \Pcal^{(1)}_{\textup{CO}}[\widetilde{S}^{(1)}]$, we obtain 
\begin{gather} 
\label{eqn::proj_CO_aux}
{}_{3_\alpha} \widetilde{S}^{(1)} = {}_{2_\alpha 3_\alpha} \widetilde{S}^{(1)} \quad \text{and} \quad {}_{1_\alpha 2_\alpha 3_\alpha} \widetilde{S}^{(1)} = {}_{0_\alpha 1_\alpha 2_\alpha 3_\alpha} \widetilde{S}^{(1)}.
\end{gather}
Using the definition of $\widetilde{S}^{(1)}$, the first of these equations implies $\tr_{3}[{S}^{(1)}] = \tr_{23}[\tau_2 S^{(1)}] \otimes \ident_2 =: S_{01}^{\prime(0)} \otimes \ident_2$, such that 
\begin{gather}
\label{eqn::caus_eq_app1}
    {}_3S^{(1)} = {}_{23}S^{(1)} 
\end{gather}
holds. The second equation in~\eqref{eqn::proj_CO_aux} implies 
\begin{gather}
    \tr_{123}[\tau_2 S^{(1)}] = \tr[(\tau_0 \otimes \tau_2)S^{(1)}]\ident_0,
\end{gather}
which yields 
\begin{align}
\notag
     \tr_1[S_{01}^{\prime(0)}] &= \tr_{123}[\tau_2 S^{(1)}] = \tr[(\tau_0 \otimes \tau_2)S^{(1)}]\ident_0 \\
     &= \tr[\widetilde{S}^{(1)}]\ident_0 = \ident_0.
\end{align}
Consequently, we have $\tr_{123}[S^{(1)}] = d_2\ident_0$ and hence 
\begin{gather}
\label{eqn::caus_eq_app2}
    {}_{123}S^{(1)} = {}_{0123}S^{(1)}
\end{gather}
and $\tr[S^{(1)}] = d_0d_2$. Combining Eqs.~\eqref{eqn::caus_eq_app1} and~\eqref{eqn::caus_eq_app2} then shows that $S^{(1)} = \mathcal{P}^{(1)}_{\rm{SC}}[S^{(1)}]$ holds. In addition, by assumption $\widetilde{S}^{(1)} \geq 0$, such that $S^{(1)} \geq 0$. Consequently, $S^{(1)}$ is a causally ordered supermap.\\

\noindent With respect to the thermal projector, we see that $\widetilde{S}^{(1)} = \Pcal_{\textup{TH}}^{(1)}[\widetilde{S}^{(1)}]$ implies 
\begin{gather}
    {}_{0_\alpha 1_\alpha 2_\alpha}\widetilde{S}^{(1)} = {}_{0_\alpha 1_\alpha 2_\alpha 3_\alpha}\widetilde{S}^{(1)} \quad \text{and} \quad  {}_{0_\alpha}\widetilde{S}^{(1)} = {}_{0_\alpha 1_\alpha}\widetilde{S}^{(1)}.
\end{gather}
The latter yields 
\begin{gather}
\label{eqn::thermal1}
   \tau_0 \star S^{(1)} = \tau_1 \otimes \tr_{01}[\tau_0 S^{(1)}] =: \tau_1 \otimes S^{(0)}_{23}. 
\end{gather}
In addition, it is easy to see that $\tau_2 \star S_{23}^{(0)} = \tr_{012}[\widetilde{S}^{(1)}]$. From ${}_{0_\alpha 1_\alpha 2_\alpha}\widetilde{S}^{(1)} = {}_{0_\alpha 1_\alpha 2_\alpha 3_\alpha}\widetilde{S}^{(1)}$ together with $\tr[\widetilde{S}^{(1)}] = 1$ we deduce that $\tr_{012}[\widetilde{S}^{(1)}] = \tau_3$, and thus
\begin{gather}
\label{eqn::thermal2}
    \tau_2 \star S^{(0)}_{23} = \tau_3. 
\end{gather}
From Eqs.~\eqref{eqn::thermal1} and~\eqref{eqn::thermal2}, we can conclude that $S^{(1)}$ is a thermal $1$-slot supermap according to Def.~\ref{def::thermal_n_slot}. The argument for general $n$ follows along the same lines: each causality constraint translates into one of the equalities appearing in $\Pcal_{\rm CO}^{(n)}$, while each thermality constraint translates into one equality appearing in $\Pcal_{\rm TH}^{(n)}$, yielding the stated result for $n\neq 1$.
\end{proof}

\subsection{Completely GPTP-preserving \texorpdfstring{$n$}{n}-slot supermaps}
\label{app::Comp_n_slot}
Recall that, by definition, the set $\mathsf{GPTP}$ of Gibbs- and trace-preserving maps exactly coincides with $\mathbf{CGPTPP}^{(0)}$. Similarly, as a consequence of Thm.~\ref{theo:ClassesCollapse} and Prop.~\ref{prop:app-GPP-link}, we know that $\mathbf{CGPTPP}^{(1)}$ exactly coincides with the set causally ordered and thermal $1$-slot supermaps. Consequently, for $n=0,1$, Lem.~\ref{lem::supermap_k} provides an immediate characterization of $\mathbf{CGPTPP}^{(0)}$ and $\mathbf{CGPTPP}^{(1)}$ in terms of a trace constraint and a projector constraint. Importantly, this characterization is amenable to generalization to the multi-slot case (via Lem.~\ref{lem::FullyGenProj}) and can now be used to prove that \textit{all} completely GPTP-preserving multi-slot maps $S^{(n)}$ are causally ordered and thermal. In particular, we have the rigorous version of Prop.~\ref{prop::multi-slot} from the main text:
\renewcommand{\theprop}{\mainpropnum$'$}
\begin{prop}
\label{prop::CGPTPP_n_slot}
    A supermap $S^{(n)} \in \Lcal(\Hcal_0 \otimes \Hcal_1 \otimes \cdots \otimes \Hcal_{2n+1})$ satisfies $S^{(n)}  \in \mathbf{CGPTPP}^{(n)}(0,1,\dots, 2n,2n+1)$ iff it is causally ordered and thermal (with respect to the ordering $0\prec 1 \prec 2 \prec \cdots \prec 2n\prec 2n+1$). 
\end{prop}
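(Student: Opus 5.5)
We argue by induction on $n$. The base cases are already in hand: $n=0$ is the definition $\mathbf{CGPTPP}^{(0)}=\mathsf{GPTP}$, which coincides with the causally ordered thermal zero-slot objects by Lem.~\ref{lem::supermap_k}. For $n=1$ the statement is Thm.~\ref{theo:ClassesCollapse} together with Prop.~\ref{prop:app-GPP-link}. For the inductive step we assume that $\mathbf{CGPTPP}^{(n-1)}$ is exactly the set of causally ordered thermal $(n-1)$-slot supermaps. Lem.~\ref{lem::supermap_k} then describes its Gibbs transforms $\widetilde S^{(n-1)}$ as an affine set: positivity, $\tr[\widetilde S^{(n-1)}]=1$, and the fixed-point condition of the commuting projector $\Pcal^{(n-1)}_{\rm CO}\circ\Pcal^{(n-1)}_{\rm TH}$ (relabelled to spaces $1,\dots,2n$, with ancillas $a,b$ attached to $1$ and $2n$). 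The target set $\mathsf{GPTP}(0\otimes a',(2n+1)\otimes b')$ is likewise described by $\Pcal_{\rm GPTP}$ in its Gibbs transform (Prop.~\ref{prop::GPTP}).

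\emph{Sufficiency ("if").} Let $S^{(n)}$ be causally ordered and thermal, and write it as a circuit $\mathcal F_0,\dots,\mathcal F_n$. Plugging in any extended $S^{(n-1)}$ that is a comb yields a CPTP map; this is the standard comb composition. For Gibbs preservation we link $\tau_0\otimes\tau_{a'}$ and peel off the thermality recursion of Def.~\ref{def::thermal_n_slot}. Each step feeds $\tau_{2k+1}$, with the ancilla still thermal, into the thermal $S^{(n-1)}$, which returns $\tau_{2k+2}$ and a lower thermal object. The final output is $\tau_{2n+1}\otimes\tau_{b'}$. This is a direct generalisation of the link-product computation in App.~\ref{app:completenessGPP}.

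\emph{Necessity ("only if").} Let $S^{(n)}\in\mathbf{CGPTPP}^{(n)}$. We form $R^{\#(n)}=S^{(n)}\otimes\kketbra{\ident}{\ident}_{aa'}\otimes\kketbra{\ident}{\ident}_{bb'}$ and pass to Gibbs transforms via Lem.~\ref{lem::Stilde_S_gen}. Then $\breve R^{\#(n)}$ maps the affine set of $\widetilde S^{(n-1)}$ into the Gibbs-transformed GPTP set, and Lem.~\ref{lem::FullyGenProj} applies. Its transposed input projector is obtained by $\alpha\to\beta$ (Lem.~\ref{lem::adjoint}), so we get a projector identity for $\breve S^{(n)}$ tensored with ancilla operators.

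As in App.~\ref{app:completenessCausal}, we then sort the terms by their ancilla factor, namely $\kketbra{\ident}{\ident}$ versus $\ident\otimes\tau$ on $aa'$ and on $bb'$. Linear independence of these factors splits the identity into separate equations. The $aa'$-factor equation should give ${}_{0_\alpha}\breve S={}_{0_\alpha 1_\beta}\breve S$, i.e.\ $\tau_0\star S^{(n)}=\tau_1\otimes S^{(n-1)}_{2\cdots}$. The $bb'$-factor equation should give ${}_{(2n+1)_\alpha}\breve S={}_{(2n)_\beta(2n+1)_\alpha}\breve S$, i.e.\ $\tr_{2n+1}S^{(n)}=S'^{(n-1)}\otimes\ident_{2n}$. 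The remaining mixed terms should reduce to the inner causal and thermal conditions.

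To close the recursion we will not try to decode every mixed term. Instead we observe that the reduced objects inherit membership. The object $S^{(n-1)}_{2\cdots(2n+1)}$ is obtained by feeding $\tau_0$; the inner object $\tr_{2n+1}$ is obtained by discarding. We then show, by composing $S^{(n)}$ with suitable fixed thermal preparations or discards tensored with the inserted $(n-2)$-slot objects, that these reduced objects lie in $\mathbf{CGPTPP}^{(n-1)}$ on the appropriate spaces. The induction hypothesis then gives their causal order and thermality. Stitching these together through Prop.~\ref{prop::alt_charac_caus} and Def.~\ref{def::thermal_n_slot} yields the claim. The normalisations (for example $\tr_1 S'^{(0)}=\ident_0$) come from the trace condition \eqref{eqn::FullyGenProj2}, as in the one-slot proof.

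\emph{Main obstacle.} The hard step is this inheritance argument for the reduced objects. It is not automatic that every causally ordered thermal $(n-2)$-slot object, padded appropriately, is a legitimate $(n-1)$-slot input for $S^{(n)}$. If it fails, the fallback is brute force: expand Lem.~\ref{lem::FullyGenProj} with the full $\Pcal^{(n-1)}_{\rm CO}\Pcal^{(n-1)}_{\rm TH}$ projector, and verify that the extracted identities are exactly the fixed-point conditions of $\Pcal^{(n)}_{\rm CO}\Pcal^{(n)}_{\rm TH}$ in Lem.~\ref{lem::supermap_k}.
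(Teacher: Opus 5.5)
Your base cases, your sufficiency argument, and your reduction of necessity to Lem.~\ref{lem::FullyGenProj} via $R^{\#(n)}$ with sorting by ancilla factors are exactly the paper's. You in fact undersell that step. In the paper, the coefficient of $\ident_a\otimes\tau_{a'}\otimes\kketbra{\ident}{\ident}_{bb'}$ already gives the whole thermality chain except its last link. Likewise, the coefficient of $\kketbra{\ident}{\ident}_{aa'}\otimes\ident_b\otimes\tau_{b'}$ gives the whole causality chain except its last link. The paper then reads off the two last links from the fully mixed coefficient and fixes their normalisations with \eqref{eqn::FullyGenProj2}.

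Your inheritance route replaces everything after the two outermost identities, and it does go through. The ``main obstacle'' disappears once the padding uses identity wires to the ancillas, not only fixed preparations and discards. A fixed $\tau_2$ on $2$ would only probe the reduced object on the Gibbs input, never trace preservation on arbitrary inputs.

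\emph{Thermal side.} Take any $Q\in\mathbf{CGPTPP}^{(n-2)}(3\otimes a,\dots,2n\otimes b)$ and insert $P=\ident_1\otimes\kketbra{\ident}{\ident}_{2'2}\otimes Q$, with input ancilla $2'\otimes a$ and $\tau_{2'}:=\tau_2$. Then $P$ is causally ordered and thermal, hence admissible by the induction hypothesis. Moreover, $\tau_0\star(S^{(n)}\star P)$ equals $S^{(n-1)}_{2\cdots(2n+1)}\star Q$ up to relabelling $2\to 2'$, so the latter is GPTP.

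\emph{Causal side.} Take any $Q\in\mathbf{CGPTPP}^{(n-2)}(1\otimes a,\dots,(2n-2)\otimes b)$ and insert $P=Q\otimes\kketbra{\ident}{\ident}_{(2n-1)(2n-1)'}\otimes\tau_{2n}$, with output ancilla $(2n-1)'\otimes b$. Again $P$ is causally ordered and thermal. Then $\tr_{2n+1}(S^{(n)}\star P)$ equals $S'^{(n-1)}\star Q$ up to relabelling. This is GPTP, because discarding $2n+1$ from a GPTP map with product Gibbs output preserves both properties.

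Hence both reduced objects lie in $\mathbf{CGPTPP}^{(n-1)}$. The hypothesis makes them normalised, causally ordered, thermal combs, and Prop.~\ref{prop::alt_charac_caus} together with Def.~\ref{def::thermal_n_slot} give the claim for $S^{(n)}$. In particular, neither the mixed coefficient nor \eqref{eqn::FullyGenProj2} is needed. Your remark that the normalisations come from the trace condition is therefore superfluous on this route, since they are inherited.

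The paper extracts every constraint from a single affine-map identity and never has to construct test objects. The cost is long bookkeeping with the non-commuting $\alpha/\beta$ projectors, including an extra projection step to split each alternating sum into separate equalities. Your route uses that machinery only for the two outermost constraints, which genuinely need the full family of ancilla-assisted tests. It turns the rest into a short operational recursion. The price is that the argument leans on the hypothesis at two levels: $n-1$ for certifying $P$ and for the reduced objects, and $n-2$ for $Q$. It should therefore be run as strong induction on the full equivalence, which is how you phrased it. It also uses the freedom to choose ancilla Hamiltonians so that $\tau_{2'}=\tau_2$ and $\tau_{(2n-1)'}=\tau_{2n-1}$, which the definition permits.
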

\makeatletter
\renewcommand{\theprop}{\arabic{prop}}
\makeatother

\begin{proof}
The proof follows by induction. To this end, let us assume that for some $k \in \mathbbm{N}$ all $S^{(k-1)} \in \mathbf{CGPTPP}^{(k-1)}(1\otimes a, 2, \dots, 2k-2, 2k \otimes b)$ are causally ordered and thermal. Note that, here, the $(k-1)$-slot supermap $S^{(k-1)}$ starts on the space $1\otimes a$ and ends on the space $2k \otimes b$. We now show that this implies that $R^{(k)} \in \mathbf{CGPTPP}^{(k)}(0,1,\dots,2k, 2k+1)$ iff it is causally ordered and thermal. For notational convenience and better distinction of the involved objects, we denote the $k$-slot supermaps by $R^{(k)}$ and the resulting maps by $T^{(0)}$ in what follows. 

\medskip
\noindent \emph{If.} ("$\Leftarrow$") For the "if" part, assume $R^{(k)} \in \Lcal(\Hcal_0 \otimes \Hcal_1 \otimes \cdots \otimes \Hcal_{2k} \otimes \Hcal_{2k+1})$ is causally ordered and thermal. Setting 
\begin{gather}
   T^{(0)}:= R^{(k)}\star S^{(k-1)}, 
\end{gather} we first note that $T^{(0)} \geq 0$, since $R^{(k)} \geq 0$ and $S^{(k-1)} \geq 0$. In addition, $T^{(0)}$ is TP, since it results from contracting two proper quantum combs with compatible causal ordering, yielding a quantum channel, i.e., $T^{(0)}\in \mathsf{CPTP}(0\otimes a, 2k+1 \otimes b)$ (see Fig.~\ref{fig::multi_app} for a depiction for $k=n$). Furthermore, $T^{(0)}$ is also Gibbs-preserving, since 
\begin{align}
    (\tau_0 \otimes \tau_a) \star T^{(0)} &= (\tau_0 \star R^{(k)}) \star (\tau_a \star S^{(k-1)}) \\
    &= (\tau_1 \otimes R^{(k-1)}) \star (\tau_a \star S^{(k-1)}) \\
    &= R^{(k-1)} \star [(\tau_1 \otimes \tau_a) \star S^{(k-1)}] \\
    &=  R^{(k-1)} \star (\tau_2 \otimes S^{(k-2)}) \\\
    &=  (\tau_2 \star R^{(k-1)}) \star S^{(k-2)} \\
    &= (\tau_3 \otimes R^{(k-2)}) \star  S^{(k-2)} \\
    &= \dots = R^{(0)} \star (\tau_{2k-1} \star S^{(0)}) \\
    &= (\tau_{2k} \star R^{(0)}) \otimes \tau_b = \tau_{2k+1} \otimes \tau_b,
\end{align}
where we have alternately used the fact that both $R^{(k)}$ and $S^{(k-1)}$ are thermal. Consequently, if $R^{(k)}$ is causal and thermal, then $R^{(k)} \in \mathbf{CGPTPP}^{(k)}(0,1,\dots, 2k,2k+1)$.

\medskip
\noindent \emph{Only if.} ("$\Rightarrow$") For the "only if" part, assume that $R^{(k)}\in \mathbf{CGPTPP}^{(k)}(0,1,\dots, 2k,2k+1)$, i.e., it satisfies $R^{(k)} \geq 0$ and 
\begin{gather}
\label{eqn::proof_only_if_multi1}
    R^{(k)} \star S^{(k-1)} \in \mathsf{GPTP}(0\otimes a, 2k+1 \otimes b) 
\end{gather}
for all $S^{(k-1)} \in \mathbf{CGPTPP}^{(k-1)}(1\otimes a,\dots, 2k-1,2k\otimes b)$. 
We now first use Lem.~\ref{lem::FullyGenProj} to derive the constraints satisfied by the Gibbs-transformed operator $\breve R^{\#(k)}$. These constraints decompose into three independent parts, which we show imply, respectively, (i) thermality, (ii) causal order, and (iii) the remaining normalization conditions. \\
\noindent Following the analogous proof for the properties of CGPTPP maps (see Sec.~\ref{app:completenessCausal}), we set 
\begin{gather}
    R^{\#(k)} =: R^{(k)} \otimes \kketbra{\ident}{\ident}_{a'a} \otimes \kketbra{\ident}{\ident}_{bb'},
\end{gather}
with $\mathcal{H}_{x'} \cong \mathcal{H}_{x}$ for $x\in \{a,b\}$ and $\kketbra{\ident}{\ident}_{a'a}$ and $\kketbra{\ident}{\ident}_{bb'}$ are Choi operators of identity channels $\mathcal{I}_{a' \rightarrow a}$ and $\mathcal{I}_{b\rightarrow b'}$, respectively. $R^{\#(k)}$ acts on \textit{all} degrees of freedom of $S^{(k-1)}$ and Eq.~\eqref{eqn::proof_only_if_multi1} equivalently reads 
\begin{gather}
T^{(0)} := R^{\#(k)} \star S^{(k-1)} \in \mathsf{GPTP}(0\otimes a', 2k+1 \otimes b') 
\end{gather}
for all $S^{(k-1)} \in \mathbf{CGPTPP}^{(k-1)}(1\otimes a,\dots, 2k-1,2k\otimes b)$. In what follows, we will use Lem.~\ref{lem::FullyGenProj} to first find the properties of $R^{\#(k)}$ and subsequently those of $R^{(k)}$. 

\noindent Specifically, we first express the input and output spaces in the Gibbs-transformed representations, allowing us to apply Lem.~\ref{lem::FullyGenProj}. To this end, following Lem.~\ref{lem::supermap_k} we set 
\begin{gather}
    \widetilde{S}^{(k-1)}:= (\sqrt{\tau_a} \otimes \sqrt{\tau_{1::2k-1}}) S^{(k-1)} (\sqrt{\tau_a} \otimes \sqrt{\tau_{1::2k-1}}), 
\end{gather}
where $\sqrt{\tau_{1::2k-1}} = \sqrt{\tau_1} \otimes \sqrt{\tau_3} \otimes \cdots \otimes \sqrt{\tau_{2k-1}}$, and 
\begin{gather}
\widetilde{T}^{(0)} := (\sqrt{\tau_{a'}} \otimes \sqrt{\tau_0}) T^{(0)} (\sqrt{\tau_{a'}} \otimes \sqrt{\tau_0}).
\end{gather}
 We recall that, by induction assumption, all $S^{(k-1)} \in \mathbf{CGPTPP}^{(k-1)}$ are causally ordered and thermal. From Lem.~\ref{lem::supermap_k}, we know that $S^{(k-1)}$ is causal and thermal, if and only if $\widetilde{S}^{(k-1)}\geq 0$ and 
\begin{gather}
\label{eqn::input_space_proof_n_slot}
\begin{split}
    \tr[\widetilde{S}^{(k-1)}] &= 1 =:\gamma_{\rm i}, \\
    \text{and} \quad  \widetilde{S}^{(k-1)} &= (\Pcal_{\textup{TH}}^{(k-1)\#} \circ \Pcal_{\textup{CO}}^{(k-1)\#})[\widetilde{S}^{(k-1)}] \\
    &=:\Pcal^{\texttt{i}}[\widetilde{S}^{(k-1)}],
    \end{split}
\end{gather}
where $\Pcal_{\textup{TH}}^{(k-1)\#}$ and $\Pcal_{\textup{CO}}^{(k-1)\#}$ are the projectors defined in Lem.~\ref{lem::supermap_k}, adjusted for the spaces that $\widetilde{S}^{(k-1)}$ is defined on, and we have added a "$\#$ superscript do emphasize that these projectors also act on the auxiliary spaces $a$ and $b$. Concretely, we have, 
\begin{gather}
\label{eqn::Proj_CO_hash}
\begin{split}
 &\Pcal^{(k-1)\#}_{\textup{CO}}[\widetilde{S}^{(k-1)}] \\
    &= \widetilde{S}^{(k-1)} - {}_{(2k)_\alpha b_\alpha}\widetilde{S}^{(k-1)} + {}_{(2k-1)_\alpha(2k)_\alpha b_\alpha}\widetilde{S}^{(k-1)} \\
   &\phantom{=,}- \cdots -{}_{2_\alpha 3_\alpha \cdots (2k)_\alpha b_\alpha}\widetilde{S}^{(k-1)} + {}_{a_\alpha 1_\alpha 2_\alpha \cdots (2k)_\alpha b_\alpha}\widetilde{S}^{(k-1)},
\end{split}
\end{gather}
as well as 
\begin{gather}
\label{eqn::Proj_TH_hash}
\begin{split}
&\Pcal^{(k-1)\#}_{\textup{TH}}[\widetilde{S}^{(k-1)}] \\
&= \widetilde{S}^{(k-1)} - {}_{a_\alpha 1_\alpha}\widetilde{S}^{(k-1)} + {}_{a_\alpha 1_\alpha 2_\alpha}\widetilde{S}^{(k-1)} \\
&\phantom{=,}- \cdots - {}_{a_\alpha 1_\alpha 2_\alpha \cdots (2k-1)_\alpha} \widetilde{S}^{(k-1)} + {}_{a_\alpha 1_\alpha 2_\alpha \cdots (2k)_\alpha b_\alpha}\widetilde{S}^{(k-1)}.
\end{split}
\end{gather}
In what follows, we denote the set of matrices $S^{(k-1)}$ satisfying the conditions of~\eqref{eqn::input_space_proof_n_slot} by $\widetilde{\mathsf{A}}_{\rm i}$. In addition, for consise notation, we will write the corresponding projectors simply as $\Pcal^{\#}_{\textup{CO}}$ and $\Pcal^{\#}_{\textup{TH}}$, respectively.

\noindent Analogously, from Prop.~\ref{prop::GPTP} we know that $T^{(0)} \in \mathsf{GPTP}(0\otimes a', 2k+1 \otimes b'$ iff $\widetilde{T}^{(0)} \geq 0$ and 
\begin{gather}
\label{eqn::GPTP_proof}
\begin{split}
 \tr[\widetilde{T}^{(0)}] &=1 =: \gamma_\texttt{o},\\
   \widetilde{T}^{(0)} &= \widetilde{T}^{(0)} - {}_{a_\alpha' 0_\alpha}\widetilde{T}^{(0)}  - {}_{(2k+1)_\alpha b_\alpha'}\widetilde{T}^{(0)} \\
   &\phantom{=,}+ 2 {}_{a_\alpha' 0_\alpha (2k+1)_\alpha b_\alpha'}\widetilde{T}^{(0)} =: \Pcal^{\rm o}[\widetilde{T}^{(0)}].
\end{split}
\end{gather}
we denote the set of matrices $\widetilde{T}^{(0)}$ satisfying the conditions of~\eqref{eqn::GPTP_proof} by $\widetilde{\mathsf{A}}_{\rm o}$.\\ 
\noindent From Lem.~\ref{lem::Stilde_S_gen}, we know that $R^{\#(k)} \star S^{(k-1)} = T^{(0)}$ iff $\breve{R}^{\#(k)} \star \widetilde{S}^{(k-1)} = \widetilde{T}^{(0)}$, with 
\begin{gather}
\breve{R}^{\#(k)} := (\sqrt{\xi_{a'0a}} \otimes \sqrt{\tau_{1::2k-1}}^{-1}) R^{\#(k)} (\sqrt{\xi_{a'0a}} \otimes \sqrt{\tau_{1::2k-1}}^{-1}),
\end{gather}
where $\sqrt{\xi_{a'0a}} = \sqrt{\tau_a'} \otimes \sqrt{\tau_0}\otimes \sqrt{\tau_a}^{-1}$. Note that, since $(\sqrt{\tau_a} \otimes \sqrt{\tau_{a'}}^{-1})\kket{\ident_{aa'}} = \kket{\ident}_{aa'}$, we have 
\begin{gather}
\begin{split}
    \breve{R}^{\#(k)} &= (\sqrt{\tau_0} \otimes \sqrt{\tau_{1::2k-1}}^{-1}) R^{(k)} (\sqrt{\tau_0} \otimes \sqrt{\tau_{1::2k-1}}^{-1}) \\
    &\phantom{=(}\otimes \kketbra{\ident}{\ident}_{aa'} \otimes \kketbra{\ident}{\ident}_{bb'} \\
    &=:  \breve{R}^{(k)}\otimes \kketbra{\ident}{\ident}_{aa'} \otimes \kketbra{\ident}{\ident}_{bb'}
\end{split}
\end{gather}Consequently, we can now first characterize the matrices $\breve{R}^{\#(k)}$ that map $\widetilde{\mathsf{A}}_{\rm i}$ to $ \widetilde{\mathsf{A}}_{\rm o}$, and subsequently find the properties of $\breve{R}^{(k)}$ and $R^{(k)}$, respectively. \\
From Lem.~\ref{lem::FullyGenProj}, we know that $\breve{R}^{\#(k)}$ maps $\widetilde{\mathsf{A}}_{\rm i}$ into $ \widetilde{\mathsf{A}}_{\rm o}$ iff
\begin{align}
\label{eqn::Prop_kcombs1}
    &\breve{R}^{\#(k)} = \breve{R}^{\#(k)} - (\Pcal^{\rm{i}})^{\mathrm{T}}[\breve{R}^{\#(k)}] + [(\Pcal^{\rm{i}})^{\mathrm{T}} \otimes \Pcal^{\rm{o}}][\breve{R}^{\#(k)}] \\
\label{eqn::Prop_kcombs2}
&\text{and} \quad (\Pcal^{\rm{i}})^{\mathrm{T}}[\ident_{\rm{i}}] = (\Pcal_{\rm{i}})^{\mathrm{T}}[\tr_{0a'(2k+1)b'}[\breve{R}^{\#(k)}]],
\end{align}
where $\ident_{\rm{i}} = \ident_{a12\dots (2k) b}$. Using ${}_{X_\alpha^{\mathrm{T}}} \bullet = {}_{X_\beta}\bullet$ (see Lem.~\ref{lem::adjoint}), we obtain 
\begin{gather}
    (\Pcal^{\rm{i}})^{\mathrm{T}} = (\Pcal_{\textup{TH}}^{\#})^\mathrm{T} \circ (\Pcal_{\textup{CO}}^{\#})^\mathrm{T}
\end{gather} 
 via the replacement $\alpha \mapsto \beta$ in the two projectors $\Pcal_{\textup{TH}}^{(k-1)}$ and $\Pcal_{\textup{CO}}^{(k-1)}$ given in Eqs.~\eqref{eqn::Proj_CO_hash} and~\eqref{eqn::Proj_TH_hash}. We denote the resulting projectors by $\Pcal_{\textup{TH}}^{\beta\#}$ and $\Pcal_{\textup{CO}}^{\beta\#}$, respectively. Insertion of the explicit forms of the projectors and $\breve{R}^{\#(k)}$ into Eq.~\eqref{eqn::Prop_kcombs1} yields
\begin{align}
\notag
    0 &= (\Pcal_{\textup{TH}}^{\beta\#} \circ \Pcal_{\textup{CO}}^{\beta\#}) [-{}_{a'_\alpha 0_\alpha}\breve{R}^{\#(k)} - {}_{(2k+1)_\alpha b'_\alpha}\breve{R}^{\#(k)} \\
    &\phantom{=  \Pcal_{\textup{TH}}^{\beta\#} \circ \Pcal_{\textup{CO}}^{\beta\#} a.} + 2{}_{a'_\alpha 0_\alpha (2k+1)_\alpha b_\alpha'}\breve{R}^{\#(k)}]\\
    \notag
    &=(\Pcal_{\textup{TH}}^{\beta\#} \circ \Pcal_{\textup{CO}}^{\beta\#}) [-{}_{0_\alpha}\breve{R}^{(k)} \otimes \ident_a 
    \otimes \tau_{a'} \otimes \kketbra{\ident}{\ident}_{bb'} \\
    \notag
    &\phantom{=aaaaaaaaa}- {}_{(2k+1)_\alpha}\breve{R}^{(k)} \otimes \kketbra{\ident}{\ident}_{aa'} \otimes \ident_b \otimes \tau_{b'} \\
    \label{eqn::fund_eq1}
    &\phantom{=aaaaaaaaa}+ 2{}_{0_\alpha (2k+1)_\alpha}\breve{R}^{(k)} \otimes \ident_a \otimes \tau_{a'} \otimes \ident_b \otimes \tau_{b'}]
\end{align}
In order to represent the action of the above projectors on the degrees of freedom of $\breve{R}^{(k)}$ alone (i.e., without the action on the auxiliary degrees of freedom $\{a,a',b,b'\}$), we note that 
\begin{gather}
\begin{split}
    \Pcal_{\textup{TH}}^{\beta\#}[M] &= {}_{a_\beta} \Pcal_{\textup{TH}}^{\beta(k-1)}[M] + M + {}_{a_\beta 1_\beta \cdots(2k)_\beta b_\beta}M \\
    &\phantom{=}- {}_{a_\beta}M - {}_{a_\beta 1_\beta \cdots (2k)_\beta}M 
\end{split}
\end{gather}
holds for all $M$, where $ \Pcal_{\textup{TH}}^{\beta(k-1)}$ is the special case of  $\Pcal_{\textup{TH}}^{\beta(k-1)\#}$ in Eq.~\eqref{eqn::Proj_TH_hash} for trivial auxiliary degrees of freedom (i.e., without $a_\beta$ and $b_\beta$). Analogously, we have 
\begin{gather}
\begin{split}
    \Pcal_{\textup{CO}}^{\beta\#}[M] &= {}_{b_\beta} \Pcal_{\textup{CO}}^{\beta(k-1)}[M] + M + {}_{a_\beta 1_\beta \cdots(2k)_\beta b_\beta}M \\
    &\phantom{=}- {}_{b_\beta}M - {}_{1_\beta \cdots (2k)_\beta b_\beta}M.
\end{split}
\end{gather}
Using ${}_{1_\beta \dots (2k)_\beta}P_{\textup{X}}^{\beta(k-1)}[M] = {}_{1_\beta \dots (2k)_\beta}M$ for $\textup{X} \in \{\textup{TH}, \textup{CO}\}$, their concatenation yields after short calculation
\begin{align}
\notag
&(\Pcal_{\textup{TH}}^{\beta\#}\circ \Pcal_{\textup{CO}}^{\beta\#})[M] \\
\notag
&={}_{a_\beta b_\beta}(\Pcal_{\textup{TH}}^{\beta(k-1)} \circ \Pcal_{\textup{CO}}^{\beta(k-1)})[M] + {}_{a_\beta}\Pcal_{\textup{TH}}^{\beta(k-1)}[M] \\
\notag 
&\phantom{=} - {}_{a_\beta b_\beta}\Pcal_{\textup{TH}}^{\beta(k-1)}[M]  + {}_{b_\beta}\Pcal_{\textup{CO}}^{\beta(k-1)}[M] - {}_{a_\beta b_\beta}\Pcal_{\textup{CO}}^{\beta(k-1)}[M] \\
\notag 
&\phantom{=}+ M + 2{}_{a_\beta 1_\beta \cdots (2k)_\beta b_\beta}M - {}_{b_\beta}M - {}_{a_\beta}M + {}_{a_\beta b_\beta}M \\
\label{eqn::ConcTH_CO} &\phantom{=}- {}_{1_\beta\cdots(2k)_\beta b_\beta}M 
- {}_{a_\beta 1_\beta\cdots (2k)_\beta}M.
\end{align}
Insertion of this projector into Eq.~\eqref{eqn::fund_eq1} and grouping it according to the respective terms on the auxiliary space yields an equation of the form 
\begin{equation}
\label{eqn::f_terms}
\begin{split}
0 &= f_1(\breve{R}^{(k)}) \otimes \ident_a 
    \otimes \tau_{a'} \otimes \kketbra{\ident}{\ident}_{bb'} \\
    &\phantom{=}+ f_2(\breve{R}^{(k)}) \otimes \kketbra{\ident}{\ident}_{aa'} \otimes \ident_b \otimes \tau_{b'} \\
    &\phantom{=}+ f_3(\breve{R}^{(k)}) \otimes \ident_a \otimes \tau_{a'} \otimes \ident_b \otimes \tau_{b'}
\end{split}
\end{equation}
Since the three terms on the auxiliary spaces are linearly independent, we obtain the three independent conditions \[ f_1(\breve R^{(k)})=0,\qquad f_2(\breve R^{(k)})=0,\qquad f_3(\breve R^{(k)})=0. \] We now proceed to show that these imply thermality, causal order, and the remaining normalization conditions, respectively.
For $f_1$, explicit evaluation yields 
\begin{gather}
\label{eqn::f1_expl}
0 = f_1(\breve{R}^{(k)}) = -{}_{0_\alpha}\Pcal_{\textup{TH}}^{\beta(k-1)}[\breve{R}^{(k)}] + {}_{0_\alpha 1_\beta \cdots (2k)_\beta}\breve{R}^{(k)}
\end{gather}
Inserting the definition of $\Pcal_\textup{TH}^{\beta(k-1)}$ into the above equation, we obtain
\begin{gather}
\begin{split}
    0 &= {}_{0_\alpha}\breve{R}^{(k)} - {}_{0_\alpha 1_\beta}\breve{R}^{(k)} + {}_{0_\alpha 1_\beta 2_\beta}\breve{R}^{(k)} - {}_{0_\alpha 1_\beta 2_\beta 3_\beta}\breve{R}^{(k)} \\
    &\phantom{=,}+ \dots - {}_{0_\alpha 1\beta \dots (2k-1)_\beta}\breve{R}^{(k)},   
\end{split}
\end{gather}
which is equivalent to
\begin{align}
    {}_{0_\alpha}\breve{R}^{(k)} &= {}_{0_\alpha 1_\beta}\breve{R}^{(k)}, \\
    {}_{0_\alpha 1_\beta 2_\beta} \breve{R}^{(k)} &= {}_{0_\alpha 1_\beta 2_\beta 3_\beta}\breve{R}^{(k)}\\\notag
    &\vdots\\
    {}_{0_\alpha 1_\beta \dots (2k-2)_\beta}\breve{R}^{(k)} &= {}_{0_\alpha 1_\beta \dots (2k-2)_\beta (2k-1)_\beta}\breve{R}^{(k)},   
\end{align}
holding simultaneously. Writing the above explicitly in terms of $R^{(k)}$ yields
\begin{align}
    \tr_0[\tau_0 R^{(k)}] &= \tau_1 \otimes \tr_{01}[\tau_0 R^{(k)}]  , \\
    \tr_{012}[(\tau_0 \otimes \tau_2)R^{(k)}] &= \tau_3 \otimes \tr_{0123}[(\tau_0 \otimes \tau_2)R^{(k)}]\\ \notag
    &\vdots\\
    \tr_{0\dots(2k-2)}[\tau_{0::2k-2}R^{(k)}] &= \tau_{2k-1} \otimes \tr_{0\dots(2k-1)}[\tau_{0::2k-2}R^{(k)}] ,   
\end{align}
where $\tau_{0::2k-2} = \tau_0 \otimes \tau_2 \otimes \tau_4 \otimes \cdots \otimes \tau_{2k-2}$. Setting $R^{(k-j)}:= \tr_{0\dots (2j-1)}[\tau_{0::2j-2} R^{(k)}]$, we thus have shown that $R^{(k)}$ satisfies 
\begin{align}
    \tau_0 \star R^{(k)} &= \tau_1 \otimes R^{(k-1)}, \\
    \tau_2 \star R^{(k-1)} &= \tau_3 \otimes R^{(k-2)} \\ \notag
    &\vdots\\
    \tau_{2k-2} \star R^{(1)} &= \tau_{2k-1} \otimes R^{(0)},   
\end{align}
i.e., we have shown that $R^{(k)}$ satisfies all thermality constraints (see Def.~\ref{def::thermal_n_slot}) except for the final one ($\tau_{2k} \star R^{(0)} = \tau_{2k+1}$).

To show satisfaction of (all but the final) causality constraints, we now evaluate term $f_2(\bar{S}^{(k)})$ in Eq.~\eqref{eqn::f_terms}, which reads
\begin{gather}
\label{eqn::f2_expl}
    0 = f_2(\breve{R}^{(k)}) = -{}_{(2k+1)_\alpha}\Pcal_{\textup{CO}}^{\beta(k-1)}[\breve{R}^{(k)}] + {}_{1_\beta \cdots (2k)_\beta (2k+1)_\alpha}\breve{R}^{(k)}. 
\end{gather}
Using the explicit form of $\Pcal_{\textup{CO}}^{\beta(k-1)}$, we thus obtain
\begin{align}
\notag 
0 &= {}_{(2k+1)_\alpha}\breve{R}^{(k)} - {}_{(2k)_\beta (2k+1)_\alpha} \breve{R}^{(k)} +  {}_{(2k-1)_\beta (2k)_\beta (2k+1)_\alpha} \breve{R}^{(k)}  \\
&\phantom{=.}-\cdots +{}_{3_\beta 4_\beta \dots (2k)_\beta (2k+1)_\alpha}R^{(k)} - {}_{2_\beta 3_\beta \dots (2k)_\beta (2k+1)_\alpha}R^{(k)}. 
\end{align}
It easy to see that this is equivalent to 
\begin{align}
\label{eqn::causality_multi_thermal1}
    {}_{(2k+1)_\alpha}\breve{R}^{(k)} &= {}_{(2k)_\beta (2k+1)_\alpha}\breve{R}^{(k)}, \\
    \label{eqn::causality_multi_thermal2}
    {}_{(2k-1)_\beta (2k)_\beta (2k+1)_\alpha} \breve{R}^{(k)} &= {}_{(2k-2)_\beta (2k-1)_\beta (2k)_\beta (2k+1)_\alpha} \breve{R}^{(k)}\\
    \notag &\vdots\\
    \label{eqn::causality_multi_thermal3}
    {}_{3_\beta 4_\beta \dots (2k)_\beta (2k+1)_\alpha} \breve{R}^{(k)} &= {}_{2_\beta 3_\beta \dots (2k)_\beta (2k+1)_\alpha} \breve{R}^{(k)} 
\end{align}
Expressing the condition of Eq.~\eqref{eqn::causality_multi_thermal1} explicitly in terms of $R^{(k)}$, we obtain 
\begin{gather}
\begin{split}
    \tr_{(2k+1)}[R^{(k)}] &= \tr_{(2k)(2k+1)}[\tau_{2k}R^{(k)}]\otimes \ident_{2k}\\
    &=: R^{\prime (k-1)}\otimes \ident_{2k},
\end{split}
\end{gather}
which corresponds to the first causality constraint in Prop.~\ref{prop::alt_charac_caus}. From Eq.~\eqref{eqn::causality_multi_thermal2}, we see that
\begin{gather}
\label{eqn::prime_op_def}
\begin{split}
    &\tr_{(2k-1)(2k)(2k+1)}[\tau_{2k}R^{(k)}]\\
    &= \tr_{(2k-1)}[R^{\prime(k-1)}]\\
    &= \tr_{(2k-2)(2k-1)(2k)(2k+1)}[(\tau_{2k-2} \otimes \tau_{2k})R^{(k)}] \otimes \ident_{2k-2}\\
    &=: R^{\prime(k-2)} \otimes \ident_{2k-2}
\end{split}
\end{gather}
holds. That is, $R^{(k)}$ also satisfies the second causality constraint of Prop.~\ref{prop::alt_charac_caus}. We can show the remaining causality constraints, i.e., 
\begin{gather}
    \tr_{(2j+1)}[R^{\prime(j)}] = R^{\prime(j-1)} \otimes \ident_{2j} \quad \forall j\in\{1,2,\dots,k-1\},
\end{gather}
where
\begin{gather}
\label{eqn::def_red_comb}
\begin{split}
    R^{\prime(j)} &= \tr_{(2j+2)(2j+3)\dots (2k+1)}[(\tau_{2j+2} \otimes \tau_{2j+4} \otimes \cdots \otimes \tau_{2k})R^{(k)}]\\
    &= \tr_{(2j+2)(2j+3)}[\tau_{2j+2}R^{\prime(j+1)}]
\end{split}
\end{gather}
in the same manner, going down the conditions~\eqref{eqn::causality_multi_thermal1} to~\eqref{eqn::causality_multi_thermal3}, ending on $\tr_{3}[R^{\prime(1)}] = R^{\prime(0)}$. That is, we have shown all but the final causality constraint. 

\noindent To obtain the remaining causality and thermality constraint, we evaluate $f_3(\breve R^{(k)})$ in Eq.~\eqref{eqn::f_terms}. In particular, identifying the terms in Eq.~\eqref{eqn::fund_eq1} with $\ident_{a}\otimes \tau_{a'} \otimes \ident_b \otimes \tau_{b'}$ on the auxiliary space yields, after short calculation:
\begin{gather}
\begin{split}
    &0 = f_3(\breve{R}^{(k)}) \\
    &= (\Pcal_{\textup{TH}}^{\beta(k-1)} \circ \Pcal_{\textup{CO}}^{\beta(k-1)} - \Pcal_{\textup{TH}}^{\beta(k-1)} + {}_{1\beta \dots (2k)_\beta}\bullet) [-{}_{0_\alpha}\breve{R}^{(k)}]\\
    &\phantom{=}+ (\Pcal_{\textup{TH}}^{\beta(k-1)} \circ \Pcal_{\textup{CO}}^{\beta(k-1)} - \Pcal_{\textup{CO}}^{\beta(k-1)} + {}_{1\beta \dots (2k)_\beta}\bullet) [-{}_{(2k+1)_\alpha}\breve{R}^{(k)}]\\
    &\phantom{=} +2 (\Pcal_{\textup{TH}}^{\beta(k-1)} \circ \Pcal_{\textup{CO}}^{\beta(k-1)})[{}_{0_\alpha (2k+1)_\alpha}\breve{R}^{(k)}].
\end{split}
\end{gather}
Using Eqs.~\eqref{eqn::f1_expl} and ~\eqref{eqn::f2_expl}, we see that some of the terms in the above expression vanish, leaving us with 
\begin{gather}
\begin{split}
    0 = (\Pcal_{\textup{TH}}^{\beta(k-1)} \circ \Pcal_{\textup{CO}}^{\beta(k-1)})&[-{}_{0_\alpha}\breve{R}^{(k)} - {}_{(2k+1)_\alpha}\breve{R}^{(k)} \\
    &\phantom{[}+ 2{}_{0_\alpha (2k+1)_\alpha}\breve{R}^{(k)}]. 
\end{split}
\end{gather}
Since $\Pcal_{\textup{TH}}^{\beta(k-1)}$ and $\Pcal_{\textup{CO}}^{\beta(k-1)}$ commute, the above is equivalent to 
\begin{align}
\label{eqn::thermal_final1}
0 &= \Pcal_{\textup{TH}}^{\beta(k-1)}[-{}_{0_\alpha}\breve{R}^{(k)} - {}_{(2k+1)_\alpha}\breve{R}^{(k)} + 2{}_{0_\alpha (2k+1)_\alpha}\breve{R}^{(k)}],\\
\label{eqn::thermal_final12}
0 &= \Pcal_{\textup{CO}}^{\beta(k-1)}[-{}_{0_\alpha}\breve{R}^{(k)} - {}_{(2k+1)_\alpha}\breve{R}^{(k)} + 2{}_{0_\alpha (2k+1)_\alpha}\breve{R}^{(k)}].
\end{align}
Applying ${}_{0_\alpha}\bullet$ on both sides of Eq.~\eqref{eqn::thermal_final1} and using ${}_{0_\alpha}\Pcal_{\textup{TH}}^{\beta(k-1)}[\breve{R}^{(k)}] = {}_{0_\alpha 1_\beta \cdots (2k)_\beta}\breve{R}^{(k)}$ [see Eq.~\eqref{eqn::f1_expl}], we obtain
\begin{gather}
\label{eqn::thermal_final_final1}
\begin{split}
    {}_{0_\alpha 1_\beta\dots (2k)_\beta}\breve{R}^{(k)} &= {}_{(2k+1)}[{}_{0_\alpha} \Pcal_{\textup{TH}}^{\beta(k-1)}[\breve{R}^{(k)}]] \\
    &= {}_{0_\alpha 1_\beta\dots (2k)_\beta (2k+1)_\alpha}\breve{R}^{(k)}.
\end{split}
\end{gather}
Using the definition $R^{(0)} = \tr_{01\dots (2k-1)}[\tau_0 \otimes \tau_2 \otimes \cdots \otimes \tau_{2k-2} R^{(k)}]$ from above, Eq.~\eqref{eqn::thermal_final_final1} implies 
\begin{gather}
    \tr_{2k}[\tau_{2k} R^{(0)}] = \tau_{2k} \star R^{(0)} = \tr[\tau_{2k} R^{(0)}] \tau_{2k+1},
\end{gather}
i.e., the remaining condition for thermality, modulus the additional factor $\tr[\tau_{2k} R^{(0)}]$ (we show $\tr[\tau_{2k} R^{(0)}] = 1$ below). \\
\noindent In a similar vein, applying ${}_{(2k+1)_\alpha}\bullet$ to both sides of Eq.~\eqref{eqn::thermal_final12} and using ${}_{(2k+1)_\alpha}\Pcal_{\textup{CO}}^{\beta(k-1)}[\breve{R}^{(k)}] = {}_{1_\beta \cdots (2k)_\beta (2k+1)_\alpha}\breve{R}^{(k)}$ [see Eq.~\eqref{eqn::f2_expl}] yields
\begin{gather}
\label{eqn::causality_constr_final}
\begin{split}
    {}_{1_\beta \cdots (2k)_\beta (2k+1)_\alpha}\breve{R}^{(k)} &= {}_{0_\alpha}[{}_{(2k+1)_\alpha}\Pcal_{\textup{CO}}^{\beta(k-1)}[\breve{R}^{(k)}]] \\
    &={}_{0_\alpha 1_\beta \cdots (2k)_\beta (2k+1)_\alpha}\breve{R}^{(k)}.
\end{split}
\end{gather}
From Eq.~\eqref{eqn::def_red_comb}, we have 
\begin{gather}
    R^{\prime (0)} = \tr_{23\dots (2k+1)}[(\tau_2 \otimes \tau_4 \otimes \cdots \otimes \tau_{2k})R^{(k)}].
\end{gather}
Consequently, expressing $\breve{R}^{(k)}$ in explicitly in terms of $R^{(k)}$, Eq.~\eqref{eqn::causality_constr_final} tells us that 
\begin{gather}
\tr_1[R^{\prime (0)}] = \tr[\tau_0 R^{\prime (0)}]\ident_0,
\end{gather}
holds, i.e., we obtain the final causality constraint, up to the factor $ \tr[\tau_0 R^{\prime (0)}]$ (we show below that $ \tr[\tau_0 R^{\prime (0)}]= 1$).

\noindent In order to show the final normalisation constraints, 
\begin{gather}
\label{eqn::causality_thermal_final_constr}
\begin{split}
\tr[\tau_{2k}R^{(0)}] &= 1 \ \text{(thermality)} \\
\text{and} \quad \tr[\tau_0 R^{\prime (0)}] &= 1 \ \text{(causality)},
\end{split}
\end{gather}
we use the final condition~\eqref{eqn::Prop_kcombs2} that $\breve{R}^{\#(k)}$ must satisfy, namely
\begin{gather}
\label{eqn::trace_and_thermality}
     (\Pcal^{\rm{i}})^{\mathrm{T}}[\ident_{a1\dots (2k)b}] = (\Pcal_{\rm{i}})^{\mathrm{T}}[\tr_{0a'(2k+1)b'}[\breve{R}^{\#(k)}]],
\end{gather}
with $(\Pcal^{\rm{i}})^{\mathrm{T}} = \Pcal_{\textup{TH}}^{\beta\#}\circ \Pcal_{\textup{CO}}^{\beta\#}$. To this end, first, it is easy to see that we have 
\begin{gather}
     (\Pcal_{\textup{TH}}^{\beta\#}\circ \Pcal_{\textup{CO}}^{\beta\#})[\ident_{a1\dots (2k)b}] = \ident_{a1\dots (2k)b}. 
\end{gather}
For the rhs of Eq.~\eqref{eqn::trace_and_thermality}, we first note that 
\begin{gather}
    \tr_{0a'(2k+1)b'}[\breve{R}^{\#(k)}] = \tr_{0(2k+1)}[\breve{R}^{(k)}] \otimes \ident_a \otimes \ident_b 
\end{gather}
holds. As a consequence, we obtain
\begin{gather}
\begin{split}
     &(\Pcal_{\textup{TH}}^{\beta\#}\circ \Pcal_{\textup{CO}}^{\beta\#})[\tr_{0a'(2k+1)b'}[\breve{R}^{\#(k)}]] \\
     &= (\Pcal_{\textup{TH}}^{\beta(k-1)}\circ \Pcal_{\textup{CO}}^{\beta(k-1)})[\tr_{0(2k+1)}[\breve{R}^{(k-1)}]]\otimes \ident_a \otimes \ident_b.
\end{split}
\end{gather}
With this, Eq.~\eqref{eqn::trace_and_thermality} simplifies to
\begin{gather}
     \ident_{1\dots (2k)} = (\Pcal_{\textup{TH}}^{\beta(k-1)}\circ \Pcal_{\textup{CO}}^{\beta(k-1)})[\tr_{0(2k+1)}[\breve{R}^{(k-1)}]],
\end{gather}
which implies 
\begin{gather}
    \tr_{(2k+1)}[ \Pcal_{\textup{CO}}^{\beta(k-1)}[{}_{0_\alpha}\Pcal_{\textup{TH}}^{\beta(k-1)}[\breve{R}^{(k)}]] = \tau_0 \otimes \ident_{1\dots (2k)}
\end{gather}
Using ${}_{0_\alpha}\Pcal_{\textup{TH}}^{\beta(k-1)}[\breve{R}^{(k)}] = {}_{0_\alpha 1_\beta \cdots (2k)_\beta}\breve{R}^{(k)}$ [see Eq.~\eqref{eqn::f1_expl}] and $\Pcal_{\textup{CO}}^{\beta(k-1)}[{}_{0_\alpha 1_\beta \cdots (2k)_\beta}\breve{R}^{(k)}] = {}_{0_\alpha 1_\beta \cdots (2k)_\beta}\breve{R}^{(k)}$, the above yields
\begin{gather}
    \tr_{(2k+1)}[{}_{0_\alpha 1_\beta \cdots (2k)_\beta}\breve{R}^{(k)}]] = \tau_0 \otimes \ident_{1\dots (2k)},
\end{gather}
and thus
\begin{gather}
    \tr[(\tau_0 \otimes \tau_2 \otimes \cdots \otimes \tau_{2k})\breve{R}^{(k)}]\tau_0 = \tau_0,
\end{gather}
which implies 
\begin{gather}
\label{eqn::normalisation_final}
    \tr[(\tau_0 \otimes \tau_2 \otimes \cdots \otimes \tau_{2k})\breve{R}^{(k)}] = 1.
\end{gather}
Since we have 
\begin{align}
    R^{(0)} &= \tr_{01\dots (2k-1)}[(\tau_{0} \otimes \tau_2 \otimes \cdots \otimes \tau_{2k-2}) R^{(k)}],\\
    \& \quad R^{\prime (0)} &= \tr_{23\dots (2k+1)}[(\tau_{2} \otimes \tau_{4} \otimes \cdots \otimes \tau_{2k})R^{(k)}], 
\end{align}
Eq.~\eqref{eqn::normalisation_final} yields
\begin{gather}
    \tr[\tau_{2k}R^{(0)}] = 1\quad \text{and} \quad \tr[\tau_{0}R^{\prime(0)}] = 1, 
\end{gather}
which are exactly the two final thermality and causality constraints (see Eq.~\eqref{eqn::causality_thermal_final_constr}). \\
\noindent Since, throughout, we assume $R^{(k)}\geq 0$, we have thus shown that $R^{(k)} \star S^{(k-1)} \in \mathsf{GPTP}(0\otimes a, 2k+1 \otimes b)$ for all $S^{(k-1)}$ that are causally ordered and thermal (with ordering $1\otimes a \prec 2 \prec \cdots \prec 2k-1 \prec 2k \otimes b$) if and only if $R^{(k)}$ is causally ordered and thermal (with ordering $0\prec 1 \prec \cdots \prec 2k \prec 2k+1$) itself. We have already shown that $\mathbf{CGPTPP}^{(1)}(1,2,3,4)$ coincides with the set of causally ordered and thermal supermaps with ordering $1\prec 2 \prec 3 \prec 4$ (see Sec.~\ref{app:completenessCausal} with the relabeling $P \mapsto 1, I \mapsto 2, O \mapsto 3,$ and $F\mapsto 4$), providing a starting point for the induction. Setting $S^{(n)} \equiv R^{(n)}$ then yields the statement of the Theorem. 
\end{proof}

\section{Free energy of quantum channels}

\subsection{\texorpdfstring{$\mathfrak{F}$}{F} contains identity supermap and is closed under composition}
\label{app:compositionId}

    \begin{lemma}\label{lem:composition}
        Let $\mathfrak{F}\in\{\mathbf{pGPP},\mathbf{pGPTPP}\}$. Then $\mathfrak{F}$ contains the identity supermap and is closed under composition.

        \begin{proof}
            First, the identity supermap belongs to both classes. Indeed, if $\mathcal{G}\in\mathsf{GP}(I,O)$, then
            \begin{equation}
                \mathfrak{id}[\mathcal{G}]=\mathcal{G}\in\mathsf{GP}(I,O),
            \end{equation}
            so $\mathfrak{id}\in\mathbf{pGPP}$. Similarly, if $\mathcal{G}\in\mathsf{GPTP}(I,O)$, then
            \begin{equation}
                \mathfrak{id}[\mathcal{G}]=\mathcal{G}\in\mathsf{GPTP}(I,O),
            \end{equation}
            so $\mathfrak{id}\in\mathbf{pGPTPP}$.

            Now let $\mathcal{S}_2: \mathsf{GP}(I', O') \rightarrow \mathsf{GP}(P,F)$ and $\mathcal{S}_1: \mathsf{GP}(I, O) \rightarrow \mathsf{GP}(I',O')$ be $\mathbf{pGPP}$ supermaps. For every $\mathcal{G}\in\mathsf{GP}(I,O)$, we have
            \begin{equation}
                \mathcal{S}_1[\mathcal{G}]\in\mathsf{GP}(I',O'),
            \end{equation}
            and then, since $\mathcal{S}_2$ is also GP-preserving,
            \begin{equation}
                (\mathcal{S}_2\circ\mathcal{S}_1)[\mathcal{G}] = \mathcal{S}_2[\mathcal{S}_1[\mathcal{G}]] \in\mathsf{GP}(P,F).
            \end{equation}
            Moreover, the composition of proper supermaps is proper. Hence
            \begin{equation}
                \mathcal{S}_2\circ\mathcal{S}_1\in\mathbf{pGPP}.
            \end{equation}

            The same argument applies to $\mathbf{pGPTPP}$. If $\mathcal{S}_1,\mathcal{S}_2\in\mathbf{pGPTPP}$ and $\mathcal{G}\in\mathsf{GPTP}(I,O)$, then
            \begin{equation}
                \mathcal{S}_2[\mathcal{G}]\in\mathsf{GPTP}(I',O'),
            \end{equation}
            and therefore
            \begin{equation}
                (\mathcal{S}_1\circ\mathcal{S}_2)[\mathcal{G}] = \mathcal{S}_1[\mathcal{S}_2[\mathcal{G}]] \in\mathsf{GPTP}(P,F).
            \end{equation}
            Again, properness is preserved under composition. Hence
            \begin{equation}
                \mathcal{S}_1\circ\mathcal{S}_2\in\mathbf{pGPTPP}.
            \end{equation}
        \end{proof}
\end{lemma}

\subsection{Proof of Lem.\ref{lem:entirepGPP}: \texorpdfstring{$\mathtt{F}_{\rm st}^{\mathfrak{F}}$}{FstF} as Gibbs-response free energy}
\label{app:sub:lemmaResp1}

Let $\mathcal{S} \in \mathfrak{F}$ and define a trace-and-replace supermap $\mathcal{S}'$ by
\begin{equation}
    \mathcal{S}'[\mathcal{T}][\rho_P] := \operatorname{Tr}[\rho_P] \Bigl(\mathcal{S}[\mathcal{T}][\tau_{P}]\Bigr),
\end{equation}
for any $\rho_P \in \mathcal{L}(\mathcal{H}_P)$. Since $\mathcal{S}\in\mathfrak{F}$ is a proper superchannel, $\mathcal{S}[\mathcal{T}] \in \mathsf{CPTP}(P,F)$ whenever $\mathcal{T} \in \mathsf{CPTP}(I,O)$. Hence, for every $\mathcal{T}\in\mathsf{CPTP}(I,O)$, the resulting $\mathcal{S}[\mathcal{T}][\tau_P]$ is a normalized state, and therefore $\mathcal{S}'[\mathcal{T}]$ is a trace-and-replace channel. Thus $\mathcal{S}'$ is atrace-and-replace superchannel.

It remains to show that $\mathcal{S}' \in \mathfrak{F}$. If $\mathfrak{F} = \mathbf{pGPP}$, then for every $\mathcal{G}\in\mathsf{GP}(I,O)$ we have $\mathcal{S}[\mathcal{G}]\in\mathsf{GP}(P,F)$, and hence
\begin{align}
    \nonumber \mathcal{S}'[\mathcal{G}][\rho_P] &=
    \operatorname{Tr}[\rho_P] \Bigl(\mathcal{S}[\mathcal{G}][\tau_P]\Bigr) \\
    &= \operatorname{Tr}[\rho_P]\tau_F.
\end{align}
Therefore $\mathcal{S}'[\mathcal{G}] \in \mathsf{GP}(P,F)$. Similarly, if $\mathfrak{F} = \mathbf{pGPTPP}$, then for every $\mathcal{G}\in\mathsf{GPTP}(I,O)$ we have $\mathcal{S}[\mathcal{G}]\in\mathsf{GPTP}(P,F)$, and the same computation as above gives
\begin{equation}
    \mathcal{S}'[\mathcal{G}][\rho_P] = \operatorname{Tr}[\rho_P]\tau_F.
\end{equation}
Thus $\mathcal{S}'[\mathcal{G}] \in \mathsf{GPTP}(P,F)$. Hence, in both cases, $\mathcal{S}'\in\mathfrak{F}_{\rm TR}$.
Now, let $\mathcal{T}\in\mathsf{CPTP}(I,O)$. For arbitrary $\rho_P$, we have
\begin{align}
    \nonumber \mathcal{F}\Bigl[\mathcal{S}[\mathcal{T}][\tau_P]\Bigr] &= \mathcal{F}\Bigl[\mathcal{S}'[\mathcal{T}][\rho_P]\Bigr] \\
    \nonumber &\leq \sup_{\mathcal{S}' \in \mathfrak{F}_{\rm TR}}\mathcal{F}\Bigl[(\mathcal{S}'[\mathcal{T}])[\rho_P]\Bigr] \\
    &= \mathtt{F}_{\rm st}^{\mathfrak{F}}[\mathcal{T}].
\end{align}
Since $\mathcal{S}\in\mathfrak{F}$ was arbitrary, this implies
\begin{equation}\label{eq:opt_pGPP_proof1}
    \sup_{\mathcal{S}\in\mathfrak{F}} \mathcal{F}\Bigl[\mathcal{S}[\mathcal{T}][\tau_P]\Bigr] \leq \mathtt{F}_{\rm st}^{\mathfrak{F}}[\mathcal{T}].
\end{equation}

Conversely, since $\mathfrak{F}_{\rm TR}\subset\mathfrak{F}$, taking $\rho_P = \tau_P$ in \eqref{eqn::free_energy} gives
\begin{align}\label{eq:opt_pGPP_proof2}
    \nonumber
    \mathtt{F}_{\rm st}^{\mathfrak{F}}[\mathcal{T}] &= \sup_{\mathcal{S}\in\mathfrak{F}_{\rm TR}}
    \mathcal{F}\Bigl[\mathcal{S}[\mathcal{T}][\tau_P]\Bigr] \\
    &\leq \sup_{\mathcal{S}\in\mathfrak{F}}
    \mathcal{F}\Bigl[\tilde{\mathcal{S}}[\mathcal{T}][\tau_P]\Bigr].
\end{align}
Combining \eqref{eq:opt_pGPP_proof1} and \eqref{eq:opt_pGPP_proof2} proves Eq.~\eqref{eq:opt_pGPP}.

\subsection{Proof of Thm.~\ref{theo:GrespClosed}: Monotonicity and state free energy correspondence of \texorpdfstring{$\mathtt{F}_{\rm st}^{\mathfrak{F}}$}{FstF} and its closed form for
\texorpdfstring{$\mathbf{pGPP}$}{pGPP}}
\label{app:sub:lemmaResp2}
We first prove monotonicity of $\mathtt{F}_{\rm st}^{\mathfrak{F}}$ for $\mathfrak{F}\in\{\mathbf{pGPP},\mathbf{pGPTPP}\}$. Let $\mathcal{S}_0\in\mathfrak{F}$. Using Lem.~\ref{lem:entirepGPP}, we obtain
    \begin{align}
        \nonumber \mathtt{F}_{\rm st}^{\mathfrak{F}}[\mathcal{S}_0[\mathcal{T}]] &= \sup_{\mathcal{S}\in\mathfrak{F}} \mathcal{F}\Bigl[
        (\mathcal{S}[\mathcal{S}_0[\mathcal{T}]])[\tau_P] \Bigr] \\
        \nonumber &= \sup_{\mathcal{S}\in\mathfrak{F}} \mathcal{F}\Bigl[ ((\mathcal{S}\circ\mathcal{S}_0)[\mathcal{T}])[\tau_P] \Bigr] \\
        \nonumber &\leq \sup_{\mathcal{S}'\in\mathfrak{F}} \mathcal{F}\Bigl[ (\mathcal{S}'[\mathcal{T}])[\tau_P] \Bigr] \\
        &= \mathtt{F}_{\rm st}^{\mathfrak{F}}[\mathcal{T}],
    \end{align}
    where we used Lem.~\ref{lem:composition}: $\{\mathcal{S}\circ\mathcal{S}_0:\mathcal{S}\in\mathfrak{F}\} \subset \mathfrak{F}$, and taking the supremum over this subset cannot exceed taking the supremum over all of $\mathfrak{F}$. Therefore, we conclude that $\mathtt{F}_{\rm st}^{\mathfrak{F}}$ is monotone under transformations from $\mathfrak{F}$.

    We now prove correspondence to the standard free energy of states. Let $\mathcal{T}_{\eta}$ be the trace-and-replace channel
    \begin{equation}
        \mathcal{T}_{\eta}[\rho_I] = \operatorname{Tr}[\rho_I]\eta_O.
    \end{equation}
    Since the identity supermap belongs to $\mathfrak{F}$ by Lem.~\ref{lem:composition}, we immediately obtain
    \begin{equation}
        \mathtt{F}_{\rm st}^{\mathfrak{F}}[\mathcal{T}_{\eta}] \geq \mathcal{F}[\eta_O].
    \end{equation}
    
    To prove the converse inequality, fix $\mathcal{S}\in\mathfrak{F}$ and define the induced map on states
    \begin{equation}
        \Phi_{\mathcal{S}}[\eta_O] := (\mathcal{S}[\mathcal{T}_{\eta}])[\tau_P].
    \end{equation}
    Since $\mathcal{S}$ is a superchannel, it maps CPTP maps to CPTP maps. Therefore $\Phi_{\mathcal{S}} \in \mathsf{CPTP}(O,F)$.

    Now observe that, for $\eta_O=\tau_O$, the channel $\mathcal{T}_{\tau_O}$ is the completely thermalizing channel from $I$ to $O$. Since this channel is both GP and GPTP, and since $\mathcal{S}\in\mathfrak{F}$, the output channel $\mathcal{S}[\mathcal{T}_{\tau_O}]$ belongs to $\mathsf{GP}(P,F)$ or $\mathsf{GPTP}(P,F)$, respectively. Therefore
    \begin{gather}
        \Phi_{\mathcal{S}}[\tau_O] = (\mathcal{S}[\mathcal{T}_{\tau_O}])[\tau_P] = \tau_F.
    \end{gather}
    Hence $\Phi_{\mathcal{S}} \in \mathsf{GP}(O,F)$, and, being a CPTP map, 
    \begin{equation}
        \Phi_{\mathcal{S}}\in\mathsf{GPTP}(O,F).
    \end{equation}
    In turn,
    \begin{align}
        \nonumber \mathcal{F}\Bigl[ (\mathcal{S}[\mathcal{T}_{\eta}])[\tau_P] \Bigr] &= \beta^{-1}D\Bigl(\Phi_{\mathcal{S}}[\eta_O] \,\Big\|\, \tau_F \Bigr) \\
        \nonumber &=
        \beta^{-1}D\Bigl( \Phi_{\mathcal{S}}[\eta_O] \,\Big\|\, \Phi_{\mathcal{S}}[\tau_O] \Bigr) \\
        &\leq \beta^{-1}D(\eta_O\|\tau_O) \\
        &= \mathcal{F}[\eta_O],
    \end{align}
    where we used the data processing inequality 
    \begin{equation}\label{eq:proof:StateDPI}
        D\bigl(\mathcal{T}[\rho]\|\mathcal{T}[\sigma]\bigr) \leq D(\rho\|\sigma),
    \end{equation}
    valid for any states $\rho, \sigma$ and CPTP map $\mathcal{T}$. Taking the supremum over $\mathcal{S}\in\mathfrak{F}$ gives
    \begin{equation}
        \mathtt{F}_{\rm st}^{\mathfrak{F}}[\mathcal{T}_{\eta}] \leq \mathcal{F}[\eta_O].
    \end{equation}
    Therefore,
    \begin{equation}
        \mathtt{F}_{\rm st}^{\mathfrak{F}}[\mathcal{T}_{\eta}] = \mathcal{F}[\eta_O],
    \end{equation}
    for both $\mathfrak{F} = \mathbf{pGPP}$ and $\mathfrak{F} = \mathbf{pGPTPP}$.

    Finally, we prove the closed-form expression for $\mathfrak{F} = \mathbf{pGPP}$. Since $\mathbf{pGPP} \subset \mathbf{GPP}$, for every $\mathcal{S}\in\mathbf{pGPP}$, Eq.~\eqref{eq:PropGPP} implies the existence of $\mathcal{G}_{\mathcal{S}}\in\mathsf{GPTP}(O,F)$ such that
    \begin{equation}
        (\mathcal{S}[\mathcal{T}])[\tau_P] = \mathcal{G}_{\mathcal{S}}\bigl[\mathcal{T}[\tau_I]\bigr].
    \end{equation}
    Using Lem.~\ref{lem:entirepGPP}, we obtain
    \begin{align}
        \nonumber \mathtt{F}_{\rm st}^{\mathbf{pGPP}}[\mathcal{T}] &= \sup_{\mathcal{S}\in\mathbf{pGPP}} \mathcal{F}\Bigl[ (\mathcal{S}[\mathcal{T}])[\tau_P] \Bigr] \\
        \nonumber &= \sup_{\mathcal{S}\in\mathbf{pGPP}} \beta^{-1}D\Bigl( \mathcal{G}_{\mathcal{S}}\bigl[\mathcal{T}[\tau_I]\bigr] \,\Big\|\, \tau_F \Bigr) \\
        \nonumber &= \sup_{\mathcal{S}\in\mathbf{pGPP}} \beta^{-1}D\Bigl( \mathcal{G}_{\mathcal{S}}\bigl[\mathcal{T}[\tau_I]\bigr] \,\Big\|\, \mathcal{G}_{\mathcal{S}}[\tau_O] \Bigr) \\
        &\leq \beta^{-1}D\Bigl( \mathcal{T}[\tau_I] \,\Big\|\, \tau_O \Bigr) \\
        &= \mathcal{F}\bigl[\mathcal{T}[\tau_I]\bigr],
    \end{align}
    where we used the data processing inequality \eqref{eq:proof:StateDPI}.

    The converse inequality follows because the identity supermap belongs to $\mathbf{pGPP}$ by Lem.~\ref{lem:composition}:
    \begin{equation}
        \mathtt{F}_{\rm st}^{\mathbf{pGPP}}[\mathcal{T}] \geq \mathcal{F}\bigl[\mathcal{T}[\tau_I]\bigr].
    \end{equation}
    Hence
    \begin{equation}
        \mathtt{F}_{\rm st}^{\mathbf{pGPP}}[\mathcal{T}] = \mathcal{F}\bigl[\mathcal{T}[\tau_I]\bigr].
    \end{equation}

\subsection{Monotonicity and hierarchy of \texorpdfstring{$\mathtt{F}_{\rm ch}$}{Fch} and
\texorpdfstring{$\mathtt{F}_{\rm am}$}{Fam}}
\label{app:freeEnHier}

We start by proving the monotonicity of $\mathtt{F}_{\rm ch}$ and $\mathtt{F}_{\rm am}$ under equilibrium superchannels $\mathcal{S} \in \mathfrak{F}$. In particular, we have the following Proposition, encapsulating the first part of Thm.~\ref{thm::hier_monotones} in the main text:
\begin{prop}[Monotonicity of $\mathtt{F}_{\rm ch}^{\mathfrak{F}}$ and $\mathtt{F}_{\rm am}^{\mathfrak{F}}$]
    Let $\mathfrak{F}\in\{\mathbf{pGPP},\mathbf{pGPTPP}\}$. Then, for every $\mathcal{S}_0\in\mathfrak{F}$ and every $\mathcal{T}\in\mathsf{CPTP}(I,O)$,
    \begin{align}
        \mathtt{F}_{\rm ch}^{\mathfrak{F}}[\mathcal{S}_0[\mathcal{T}]] &\leq \mathtt{F}_{\rm ch}^{\mathfrak{F}}[\mathcal{T}], \\
        \mathtt{F}_{\rm am}^{\mathfrak{F}}[\mathcal{S}_0[\mathcal{T}]] &\leq \mathtt{F}_{\rm am}^{\mathfrak{F}}[\mathcal{T}].
    \end{align}

    \begin{proof}
        We first prove monotonicity of $\mathtt{F}_{\rm ch}$. By definition,
        \begin{align}
            \nonumber \mathtt{F}_{\rm ch}^{\mathfrak{F}}[\mathcal{S}_0[\mathcal{T}]] &= \sup_{\mathcal{S}\in\mathfrak{F}} \beta^{-1}D_{\rm ch}\!\left( \mathcal{S}[\mathcal{S}_0[\mathcal{T}]] \middle\| \overline{\mathcal{G}} \right)\\
            &= \sup_{\mathcal{S}\in\mathfrak{F}} \beta^{-1}D_{\rm ch}\!\left( (\mathcal{S}\circ\mathcal{S}_0)[\mathcal{T}] \middle\| \overline{\mathcal{G}} \right).
        \end{align}
        Since $\mathfrak{F}$ is closed under composition by Lem.~\ref{lem:composition}, for every $\mathcal{S}\in\mathfrak{F}$ we have
        \begin{equation}
            \mathcal{S}\circ\mathcal{S}_0\in\mathfrak{F}.
        \end{equation}
        Therefore,
        \begin{align}
            \nonumber \mathtt{F}_{\rm ch}^{\mathfrak{F}}[\mathcal{S}_0[\mathcal{T}]] &\leq \sup_{\mathcal{S}'\in\mathfrak{F}} \beta^{-1}D_{\rm ch}\!\left( \mathcal{S}'[\mathcal{T}] \middle\| \overline{\mathcal{G}} \right)\\
            &= \mathtt{F}_{\rm ch}^{\mathfrak{F}}[\mathcal{T}].
        \end{align}

        The proof for $\mathtt{F}_{\rm am}^{\mathfrak{F}}$ is identical. Namely,
        \begin{align}
            \nonumber \mathtt{F}_{\rm am}^{\mathfrak{F}}[\mathcal{S}_0[\mathcal{T}]] &= \sup_{\mathcal{S}\in\mathfrak{F}} \beta^{-1}D_{\rm am}\!\left( \mathcal{S}[\mathcal{S}_0[\mathcal{T}]] \middle\| \overline{\mathcal{G}} \right)\\
            \nonumber &= \sup_{\mathcal{S}\in\mathfrak{F}} \beta^{-1}D_{\rm am}\!\left( (\mathcal{S}\circ\mathcal{S}_0)[\mathcal{T}] \middle\| \overline{\mathcal{G}} \right)\\
            \nonumber &\leq \sup_{\mathcal{S}'\in\mathfrak{F}} \beta^{-1}D_{\rm am}\!\left( \mathcal{S}'[\mathcal{T}] \middle\| \overline{\mathcal{G}} \right)\\
            &= \mathtt{F}_{\rm am}^{\mathfrak{F}}[\mathcal{T}].
        \end{align}        
    \end{proof}
\end{prop}
Now, we proceed with proving the hierarchy $\mathtt{F}_{\rm st}^{\mathfrak{F}}[\mathcal{T}] \leq \mathtt{F}_{\rm ch}^{\mathfrak{F}}[\mathcal{T}] \leq \mathtt{F}_{\rm am}^{\mathfrak{F}}[\mathcal{T}]$ of Eq.~\eqref{eq:freeEnHier}. For every $\mathcal{S}\in\mathfrak{F}$, choosing a trivial reference system and the input state $\rho_P = \tau_P$ in the definition \eqref{eq:chanDiv} of $D_{\rm ch}$ gives
\begin{align}
    \nonumber D_{\rm ch}(\mathcal{S}[\mathcal{T}]\|\overline{\mathcal{G}}) &\geq D\!\left( (\mathcal{S}[\mathcal{T}])[\tau_P] \middle\| \overline{\mathcal{G}}[\tau_P] \right) \\
    &= D\!\left( (\mathcal{S}[\mathcal{T}])[\tau_P] \middle\| \tau_F \right).
\end{align}
Taking the supremum over $\mathcal{S}\in\mathfrak{F}$ yields
\begin{equation}
    \mathtt{F}_{\rm st}^{\mathfrak{F}}[\mathcal{T}] \leq \mathtt{F}_{\rm ch}^{\mathfrak{F}}[\mathcal{T}].
\end{equation}

On the other hand, consider $\sigma_{RI}=\rho_{RI}$ in \eqref{eq:chanAmorDiv}. Then
\begin{align}
    \nonumber D_{\rm am}(\mathcal{M}\|\mathcal{N}) &= \sup_{\rho_{RI},\sigma_{RI}} \Bigl[ D\!\left( (\mathcal{I}_R\otimes\mathcal{M})[\rho_{RI}] \middle\| (\mathcal{I}_R\otimes\mathcal{N})[\sigma_{RI}] \right) \\
    \nonumber &\qquad - D(\rho_{RI}\|\sigma_{RI}) \Bigr] \\
    \nonumber &\geq \sup_{\rho_{RI}} D\!\left( (\mathcal{I}_R\otimes\mathcal{M})[\rho_{RI}] \middle\| (\mathcal{I}_R\otimes\mathcal{N})[\rho_{RI}] \right) \\
    &= D_{\rm ch}(\mathcal{M}\|\mathcal{N}), \label{eq:domChAm}
\end{align}
where we used $D(\rho_{RI}\|\rho_{RI}) = 0$. Applying this to $\mathcal{M} = \mathcal{S}[\mathcal{T}]$ and $\mathcal{N}=\overline{\mathcal{G}}$, we get, for every $\mathcal{S}\in\mathfrak{F}$,
\begin{equation}
    D_{\rm ch}(\mathcal{S}[\mathcal{T}]\|\overline{\mathcal{G}}) \leq D_{\rm am}(\mathcal{S}[\mathcal{T}]\|\overline{\mathcal{G}}).
\end{equation}
Taking the supremum over $\mathcal{S}\in\mathfrak{F}$ gives
\begin{equation}
    \mathtt{F}_{\rm ch}^{\mathfrak{F}}[\mathcal{T}] \leq \mathtt{F}_{\rm am}^{\mathfrak{F}}[\mathcal{T}].
\end{equation}
Combining both inequalities, we obtain Eq.~\eqref{eq:freeEnHier}.

\subsection{Convertibility and complete family of monotones}
\label{app::Convertibility}
The free-energy-like quantities introduced in Secs.~\ref{sec::Gibbs_res} and~\ref{sec::free_en_div} are monotone under equilibrium superchannels. However, monotonicity alone does not generally characterize convertibility of channels. This mirrors the situation in standard thermodynamics, where a single quantity such as the relative-entropy-based free energy is insufficient to characterize state transitions in general. Instead, complete characterizations are typically expressed in terms of families of monotones, such as generalized free energies or R\'enyi divergences.

This suggests that, in the higher-order setting, one should likewise not expect a single free energy of channels to capture the full convertibility structure. Rather, one is naturally led to consider \textit{families} of monotones derived from generalized channel divergences.

To formalize convertibility and find a complete set of monotones, we first introduce the reachable set generated by free higher-order transformations.

\begin{definition}[Reachable set]
    Let $\mathcal{T}\in\mathsf{CPTP}(I,O)$. The reachable set of $\mathcal{T}$ is
    \begin{equation}
        \mathsf{R}_{\mathfrak{F}}(\mathcal{T}) := \{\mathcal{S}[\mathcal{T}] \mid \mathcal{S}\in\mathfrak{F}\}.
    \end{equation}
\end{definition}

Thus, $\mathsf{R}_{\mathfrak{F}}(\mathcal{T})$ consists of all channels obtainable exactly from $\mathcal{T}$ via equilibrium superchannels $\mathcal{S} \in \mathfrak{F}$, while its closure contains all channels that can be approximated arbitrarily well. This naturally induces a convertibility preorder.

\begin{lemma}[Preorder on the set of channels]\label{lem:preorder}
    The relation $\succeq_{\mathfrak{F}}$ defined by
    \begin{equation}
        \mathcal{T}\succeq_{\mathfrak{F}}\mathcal{T}' \quad\Longleftrightarrow\quad \mathcal{T}'\in\overline{\mathsf{R}_{\mathfrak{F}}(\mathcal{T})}
    \end{equation}
    is a preorder on $\mathsf{CPTP}(I,O)$, where the closure is taken in any equivalent finite-dimensional topology on channels, e.g. the Choi norm or the diamond norm.
    \begin{proof}
        See App.~\ref{app:preorder}.
    \end{proof}
\end{lemma}

The preorder interpretation immediately implies an inclusion relation between reachable sets.

\begin{corollary}[Inclusion of reachable sets]\label{lem:inclusionR}
    Let $\mathcal{T}, \mathcal{T}' \in \mathsf{CPTP}(I,O)$. If $\mathcal{T} \succeq_{\mathfrak{F}} \mathcal{T}'$ then
    \begin{equation}
        \overline{\mathsf{R}_{\mathfrak{F}}(\mathcal{T}')} \subseteq \overline{\mathsf{R}_{\mathfrak{F}}(\mathcal{T})}.
    \end{equation}
    \begin{proof}
        Let $\mathcal{X}\in \overline{\mathsf{R}_{\mathfrak{F}}(\mathcal{T}')}$. Then $\mathcal{T}'\succeq_{\mathfrak{F}}\mathcal{X}$. Since $\mathcal{T}\succeq_{\mathfrak{F}}\mathcal{T}'$, transitivity of $\succeq_{\mathfrak{F}}$ implies $\mathcal{T}\succeq_{\mathfrak{F}}\mathcal{X}$. Hence $\mathcal{X}\in \overline{\mathsf{R}_{\mathfrak{F}}(\mathcal{T})}$.        
    \end{proof}
\end{corollary}

We are therefore interested in families of monotones that not only provide necessary conditions for convertibility, but fully characterize the preorder induced by free higher-order transformations.

\begin{definition}[Complete family of monotones]\label{def:completeFamily}
    Let $\mathfrak{F}$ be a set of superchannels inducing the preorder $\succeq_{\mathfrak{F}}$ on $\mathsf{CPTP}(I,O)$. A family of monotones $\{M_i\}_{i\in I}$ is called \emph{complete} if the preorder $\succeq_{\mathfrak{F}}$ is fully characterized by the ordering induced by the family, i.e.,
    \begin{equation}
        \mathcal{T}\succeq_{\mathfrak{F}}\mathcal{T}' \quad\Longleftrightarrow\quad M_i(\mathcal{T})\preceq M_i(\mathcal{T}') \qquad \forall i\in I,
    \end{equation}
    where the same relation $\preceq$ is used for all monotones in the family, being either $\leq$ or $\geq$, depending on the chosen monotonicity convention.
\end{definition}

We now show that natural free-energy-like families of monotones do not generally characterize exact channel convertibility.

We begin with monotones derived from Gibbs responses.

\begin{prop}[No-go theorem for $\mathtt{F}_{\rm st}^{\mathfrak{F}}$-like monotone families]\label{theo:nogoFst}
    
    Let $\tau$ be a full-rank Gibbs state that is not proportional to the identity (i.e., $\beta \neq 0$).

    Let $\{f_i\}_{i\in I}$ be any family of state functionals, and define a family of Gibbs-response monotones
    \begin{equation}\label{eq:familySt}
        \{\mathtt{F}_{i,\mathrm{st}}^{\mathfrak{F}}[\mathcal{T}]\}_{i \in I} := \Bigl\{ \sup_{\mathcal{S}\in\mathfrak{F}} f_i\bigl((\mathcal{S}[\mathcal{T}])[\tau_P]) \bigr)\Bigr\}_{i\in I}.
    \end{equation}
    Then the family \eqref{eq:familySt} is not complete for the convertibility preorder $\succeq_\mathfrak{F}$.

    \begin{proof}
        See App.~\ref{app:nogoFst}.
    \end{proof}
\end{prop}

We now turn to families of monotones constructed from channel divergences. Motivated by the channel free energies introduced in Def.~\ref{def:chanFreeEn}, let $\{\mathfrak{D}_i\}_{i\in I}$ be a family of \textit{state} divergences satisfying the data-processing inequality
\begin{equation}\label{eq:stateDPI}
    \mathfrak{D}_i\bigl(\mathcal{T}[\rho]\|\mathcal{T}[\sigma]\bigr) \leq \mathfrak{D}_i(\rho\|\sigma),
\end{equation}
for all states $\rho,\sigma\in\mathcal{L}(\mathcal{H}_I)$ and all channels $\mathcal{T}\in\mathsf{CPTP}(I,O)$. For every $i\in I$, we define the associated channel divergence
\begin{equation}
    D_{i,\mathrm{ch}}(\mathcal{T}\|\mathcal{T}') := \sup_{\rho_{RI}} \mathfrak{D}_i\!\left((\mathcal{I}_R\otimes\mathcal{T})[\rho_{RI}] \middle\| (\mathcal{I}_R\otimes\mathcal{T}')[\rho_{RI}] \right),
\end{equation}
and the associated amortized channel divergence
\begin{align}
    \nonumber D_{i,\mathrm{am}}(\mathcal{T}\|\mathcal{T}') &:= \sup_{\rho_{RI},\sigma_{RI}} \Bigl[ \mathfrak{D}_i\!\left((\mathcal{I}_R\otimes\mathcal{T})[\rho_{RI}] \middle\| (\mathcal{I}_R\otimes\mathcal{T}')[\sigma_{RI}] \right) \\
    &\qquad -\mathfrak{D}_i(\rho_{RI}\|\sigma_{RI}) \Bigr].
\end{align}

Using the completely thermalizing channel $\overline{\mathcal{G}}$ defined in \eqref{eq:channelTherm}, we define the corresponding families of free-energy-like monotones
\begin{align}\label{eq:familyCh}
    \mathtt{F}_{i,\mathrm{ch}}^{\mathfrak{F}}[\mathcal{T}] := \sup_{\mathcal{S}\in\mathfrak{F}} \beta^{-1}D_{i,\mathrm{ch}}( \mathcal{S}[\mathcal{T}] \| \overline{\mathcal{G}} ),\\
 \label{eq:familyAm}
\text{and}\quad \mathtt{F}_{i, \mathrm{am}}^{\mathfrak{F}}[\mathcal{T}] := \sup_{\mathcal{S}\in\mathfrak{F}} \beta^{-1}D_{i,\mathrm{am}}( \mathcal{S}[\mathcal{T}] \| \overline{\mathcal{G}} ).
\end{align}

The next theorem shows that these divergence-based families are likewise insufficient to characterize convertibility.

\begin{prop}[No-go theorem for $\mathtt{F}_{\rm ch}^{\mathfrak{F}}$- and $\mathtt{F}_{\rm am}^{\mathfrak{F}}$-like monotone families]
\label{theo:nogoFchFam}
    Let $\tau$ be a full-rank Gibbs state that is not proportional to the identity (i.e., $\beta \neq 0$). Then neither family $\{\mathtt{F}_{i,\mathrm{ch}}^{\mathfrak{F}}\}_{i\in I}$ and $\{\mathtt{F}_{i,\mathrm{am}}^{\mathfrak{F}}\}_{i\in I}$ are complete for the convertibility preorder $\succeq_\mathfrak{F}$.
    
    \begin{proof}
        See App.~\ref{app:nogoFchFam}.
    \end{proof}
\end{prop}

The no-go theorems above show that monotones defined relative to a fixed reference channel cannot characterize channel convertibility. The key idea to obtain a complete family of monotones is therefore to allow the reference channel itself to vary.

To formulate this construction, we first define general quantum channel divergences:
\begin{definition}[Channel divergence {\cite{Gour2021}}]
A real-valued function $\mathbb{D}$ on pairs $\mathcal{T}, \mathcal{T}' \in \mathsf{CPTP}(I,O)$ of quantum channels is called a \textit{channel divergence} if it satisfies the generalized data-processing inequality
\begin{equation}\label{eq:DPI}
    \mathbb{D}\bigl(\mathcal{S}[\mathcal{T}] \| \mathcal{S}[\mathcal{T}']\bigr) \leq \mathbb{D}(\mathcal{T} \| \mathcal{T}'),
\end{equation}
for every proper supermap $\mathcal{S}: \mathsf{CPTP}(I,O) \rightarrow \mathsf{CPTP}(P,F)$.
\end{definition}

Starting from this general Definition, we restrict attention to the following class of \textit{admissible} divergences.

\begin{definition}[Admissible divergence]\label{def:admDiv}
    Let $\mathbb{D}: \mathsf{CPTP}(I,O) \times \mathsf{CPTP}(I,O) \rightarrow \mathbb{R}$ be a channel divergence. We say that $\mathbb{D}$ is admissible if, for any $\mathcal{T}, \mathcal{T}' \in \mathsf{CPTP}(I,O)$, $\mathbb{D}$ satisfies
    \begin{enumerate}
        \item \textbf{Nonnegativity:} $\mathbb{D}(\mathcal{T}\|\mathcal{T}') \geq 0$,
        \item \textbf{Faithfulness:} $\mathbb{D}(\mathcal{T}\|\mathcal{T}') = 0$ iff $\mathcal{T} = \mathcal{T}'$,
        \item \textbf{Topological detectability:} whenever $\mathbb{D}(\mathcal{T}_n\|\mathcal{T}')\to 0$, one has $\mathcal{T}_n\to \mathcal{T}'$ in the topology on channels.
    \end{enumerate}
\end{definition}

The first two properties ensure that $\mathbb{D}$ behaves as a genuine distinguishability measure, while the third guarantees compatibility with the topology underlying approximate channel conversion.

The channel relative entropy $D_{\rm ch}$ and the amortized channel relative entropy $D_{\rm am}$ are admissible in this sense (see App.~\ref{app:admissibilityDchDam}).

Using an admissible divergence, we now define a monotone that measures the minimal divergence between a target channel and the set of channels reachable from a given resource channel under free processing.

\begin{definition}[Residual divergence to a target channel]
\label{def:deltaDiv}
    Let $\mathcal{T}, \mathcal{R}\in\mathsf{CPTP}(I,O)$ and let $\mathbb{D}$ be an admissible divergence. Define
    \begin{equation}
        \Delta_{\mathbb{D},\mathcal{R}}(\mathcal{T}) := \inf_{\mathcal{X} \in \overline{\mathsf{R}_{\mathfrak{F}}(\mathcal{T})}} \mathbb{D}(\mathcal{X}\|\mathcal{R}).
    \end{equation}
\end{definition}

Operationally, $\Delta_{\mathbb{D},\mathcal{R}}(\mathcal{T})$ quantifies how well the target channel $\mathcal{R}$ can be approximated using the resource channel $\mathcal{T}$ together with free higher-order processing, as measured by the divergence $\mathbb{D}$.

Unlike the free-energy-like quantities introduced previously, the reference channel $\mathcal{R}$ is now allowed to vary. The resulting family of monotones fully characterizes channel convertibility.

\begin{prop}[Completeness of the $\Delta_{\mathbb{D}, \mathcal{R}}$-family of monotones]\label{theo:completeFamily}
    Let $\mathcal{T}, \mathcal{T}' \in \mathsf{CPTP}(I,O)$ and $\mathbb{D}$ be an admissible divergence. Then $\mathcal{T} \succeq_{\mathfrak{F}} \mathcal{T}'$ if and only if, for every $\mathcal{R} \in \mathsf{CPTP}(I,O)$,
    \begin{equation}\label{eq:complMonotone}
        \Delta_{\mathbb{D}, \mathcal{R}}(\mathcal{T}) \leq \Delta_{\mathbb{D}, \mathcal{R}}(\mathcal{T}').
    \end{equation}
    \begin{proof}
        See App.~\ref{app:sub:theoCompl}.
    \end{proof}

\end{prop}

In particular, choosing $\mathbb{D} = D_{\rm ch}$ or $\mathbb{D} = D_{\rm am}$ yields two explicit complete families $\{ \Delta_{D_{\rm ch}, \mathcal{R}}\}_{\mathcal{R} \in \mathsf{CPTP}(I,O)}$ and $\{ \Delta_{D_{\rm am}, \mathcal{R}}\}_{\mathcal{R} \in \mathsf{CPTP}(I,O)}$. Moreover, for the channel relative entropy, the family can always be reduced to a countable complete subfamily. More precisely, there exists a countable set of reference channels $\{ \mathcal{R}_n \}_{n\in\mathbb{N}}$ such that $\{ \Delta_{D_{\rm ch}, \mathcal{R}_n}\}_{n \in \mathbb{N}}$ remains complete (see App.~\ref{app:countableComplete}).

\subsection{Proof of Lem.~\ref{lem:preorder}: Preorder on the set of channels}
\label{app:preorder}

As a first step to show that $\succeq_{\mathfrak{F}}$ is a preorder, we  prove reflexivity. By Lem.~\ref{lem:composition}, the identity supermap belongs to $\mathfrak{F}$. Hence
\begin{equation}
    \mathcal{T} \in \mathsf{R}_{\mathfrak{F}}(\mathcal{T}) \subseteq \overline{\mathsf{R}_{\mathfrak{F}}(\mathcal{T})}.
\end{equation}
Therefore,
\begin{equation}
    \mathcal{T} \succeq_{\mathfrak{F}} \mathcal{T}.
\end{equation}

We now prove transitivity. Suppose that
\begin{gather}
    \mathcal{T} \succeq_{\mathfrak{F}} \mathcal{T}' \quad \text{and} \quad \mathcal{T}' \succeq_{\mathfrak{F}} \mathcal{T}''.
\end{gather}
Equivalently,
\begin{gather}
    \mathcal{T}' \in \overline{\mathsf{R}_{\mathfrak{F}}(\mathcal{T})} \quad \text{and} \quad 
    \mathcal{T}'' \in \overline{\mathsf{R}_{\mathfrak{F}}(\mathcal{T}')}.
\end{gather}

Let $\varepsilon > 0$. Since $\mathcal{T}'' \in \overline{\mathsf{R}_{\mathfrak{F}}(\mathcal{T}')}$, there exists $\mathcal{S}'\in\mathfrak{F}$ such that
\begin{equation}
    \| \mathcal{S}'[\mathcal{T}'] - \mathcal{T}'' \| < \frac{\varepsilon}{2}.
\end{equation}

Since $\mathcal{S}'$ is a proper supermap, it is a linear map between finite-dimensional vector spaces of linear maps. Therefore $\mathcal{S}'$ is continuous with respect to any norm on channels, e.g., the diamond norm or the Choi norm. Hence there exists $\delta > 0$ such that, for every channel $\mathcal{M}$,
\begin{equation}
    \| \mathcal{M}-\mathcal{T}' \| < \delta \quad\Longrightarrow\quad \| \mathcal{S}'[\mathcal{M}] - \mathcal{S}'[\mathcal{T}'] \| < \frac{\varepsilon}{2}.
\end{equation}

Since $\mathcal{T}'\in\overline{\mathsf{R}_{\mathfrak{F}}(\mathcal{T})}$, there exists $\mathcal{S}\in\mathfrak{F}$ such that
\begin{equation}
    \| \mathcal{S}[\mathcal{T}] - \mathcal{T}' \| < \delta.
\end{equation}
Therefore,
\begin{equation}
    \| \mathcal{S}'[\mathcal{S}[\mathcal{T}]] - \mathcal{S}'[\mathcal{T}'] \| < \frac{\varepsilon}{2}.
\end{equation}

By Lem.~\ref{lem:composition}, the class $\mathfrak{F}$ is closed under composition. Hence
\begin{equation}
    \mathcal{S}'\circ\mathcal{S} \in \mathfrak{F},
\end{equation}
and therefore
\begin{equation}
    (\mathcal{S}'\circ\mathcal{S})[\mathcal{T}] \in \mathsf{R}_{\mathfrak{F}}(\mathcal{T}).
\end{equation}

Using the triangle inequality, we obtain
\begin{align}
    \nonumber \| (\mathcal{S}'\circ\mathcal{S})[\mathcal{T}] - \mathcal{T}'' \| &\leq \| \mathcal{S}'[\mathcal{S}[\mathcal{T}]] - \mathcal{S}'[\mathcal{T}'] \| \\
    \nonumber &\quad + \| \mathcal{S}'[\mathcal{T}'] - \mathcal{T}'' \| \\ 
    &< \varepsilon.
\end{align}

Since $\varepsilon > 0$ was arbitrary, it follows that
\begin{equation}
    \mathcal{T}'' \in \overline{\mathsf{R}_{\mathfrak{F}}(\mathcal{T})}.
\end{equation}
Therefore,
\begin{equation}
    \mathcal{T} \succeq_{\mathfrak{F}} \mathcal{T}''.
\end{equation}
Hence $\succeq_{\mathfrak{F}}$ is a preorder.

\subsection{Proof of Prop.~\ref{theo:nogoFst}: No-go theorem for \texorpdfstring{$\mathtt{F}_{\rm st}^{\mathfrak{F}}$}{FstF}-like monotone families}
\label{app:nogoFst}

    Here, we show that no family of Gibbs response free energies is complete. Before proceeding with the proof, we prove the following useful Lemmas:
    
    \begin{lemma}[Optimized $\mathtt{F}_{\rm st}^{\mathfrak{F}}$-like monotones depend only on the Gibbs response]
    \label{lem:FstDependsOnlyOnGibbsResponse}
        Let $\mathfrak{F} \in \{\mathbf{pGPP}, \mathbf{pGPTPP}\}$, and let
        \begin{equation}
            \mathtt{F}_{i,\rm st}^{\mathfrak{F}}[\mathcal{T}] := \sup_{\mathcal{S}\in\mathfrak{F}} f_i\Bigl((\mathcal{S}[\mathcal{T}])[\tau_P]\Bigr)
        \end{equation}
        be an optimized $\mathtt{F}_{\rm st}^{\mathfrak{F}}$-like monotone family generated by state functionals $\{f_i\}_{i\in I}$.

        Then, for every pair of channels $\mathcal{T},\mathcal{T}'\in\mathsf{CPTP}(I,O)$,
        \begin{equation}
            \mathcal{T}[\tau_I] = \mathcal{T}'[\tau_I]
        \end{equation}
        implies
        \begin{equation}
            \mathtt{F}_{i,\rm st}^{\mathfrak{F}}[\mathcal{T}] = \mathtt{F}_{i,\rm st}^{\mathfrak{F}}[\mathcal{T}'],
        \end{equation}
        for all $i\in I$.

        \begin{proof}
            Let
            \begin{equation}
                \mathcal{X} := \mathcal{T} - \mathcal{T}'.
            \end{equation}
            Since
            \begin{equation}
                \mathcal{T}[\tau_I] = \mathcal{T}'[\tau_I],
            \end{equation}
            we have
            \begin{equation}
                \mathcal{X}[\tau_I] = 0.
            \end{equation}
            Moreover, because both channels are trace-preserving,
            \begin{equation}
                \operatorname{Tr}[\mathcal{X}[\rho]] = 0
            \end{equation}
            for every state $\rho$.

            Let $\overline{\mathcal{G}}$ denote the completely thermalizing channel defined in \eqref{eq:channelTherm}. Since its Choi operator is full rank, there exists $\varepsilon > 0$ such that
            \begin{equation}
                \overline{\mathcal{G}}\pm\varepsilon\mathcal{X}
            \end{equation}
            remain completely positive. Furthermore, they are trace-preserving and Gibbs-preserving because $\mathcal{X}$ is trace-annihilating and satisfies $\mathcal{X}[\tau_I] = 0$. Hence $\overline{\mathcal{G}}\pm\varepsilon\mathcal{X} \in \mathsf{GPTP}(I,O)$, and therefore also $\overline{\mathcal{G}}\pm\varepsilon\mathcal{X} \in \mathsf{GP}(I,O)$.

            Now let $\mathcal{S}\in\mathfrak{F}$. Since $\mathfrak{F}$ is a subclass of $\mathbf{pGPP}$ or $\mathbf{pGPTPP}$, we have
            \begin{equation}
                \Bigl( \mathcal{S}[\overline{\mathcal{G}}\pm \varepsilon\mathcal{X}] \Bigr)[\tau_P] = \tau_F.
            \end{equation}
            Using linearity of $\mathcal{S}$, we obtain
            \begin{equation}
                ( \mathcal{S}[\overline{\mathcal{G}}] )[\tau_P] \pm \varepsilon ( \mathcal{S}[\mathcal{X}] )[\tau_P] = \tau_F.
            \end{equation}
            Subtracting the two equations gives
            \begin{equation}
                ( \mathcal{S}[\mathcal{X}] )[\tau_P] = 0.
            \end{equation}
            Hence
            \begin{equation}
                (\mathcal{S}[\mathcal{T}])[\tau_P] = (\mathcal{S}[\mathcal{T}'])[\tau_P].
            \end{equation}
            Since $\mathcal{S}$ was arbitrary,
            \begin{equation}
                f_i\Bigl((\mathcal{S}[\mathcal{T}])[\tau_P]\Bigr) = f_i\Bigl((\mathcal{S}[\mathcal{T}'])[\tau_P] \Bigr).
            \end{equation}
            Taking the supremum over $\mathcal{S}\in\mathfrak{F}$ proves
            \begin{equation}
                \mathtt{F}_{i,\rm st}^{\mathfrak{F}}[\mathcal{T}] = \mathtt{F}_{i,\rm st}^{\mathfrak{F}}[\mathcal{T}'],
            \end{equation}
            for all $i\in I$.
        \end{proof}
    \end{lemma}
    
        \begin{lemma}\label{lem:rankoneChoi}
            Let $\mathcal{S}'$ be a linear map acting on Choi operators $X$ that is positive, i.e., $\mathcal{S}'[X]\geq 0$ if $X\geq 0$. Suppose $A>0$ and $\mathcal{S}'[A]$ has rank one. Then, for every $B\geq 0$, there exists $c_B\geq 0$ such that
            \begin{equation}
                \mathcal{S}'[B] = c_B \mathcal{S}'[A].
            \end{equation}
            \begin{proof}
                Since $A > 0$, it is invertible. For every $B \geq 0$, define
                \begin{equation}
                    C := A^{-1/2} B A^{-1/2} \geq 0.
                \end{equation}
                Let $\lambda := \|C\|_\infty$. Then $C \leq \lambda \ident$, and therefore
                \begin{equation}
                    B \leq \lambda A.
                \end{equation}
                Applying the positive map $\mathcal{S}'$, we obtain
                \begin{equation}
                    0 \leq \mathcal{S}'[B] \leq \lambda \mathcal{S}'[A].
                \end{equation}
                Since $\mathcal{S}'[A]$ has rank one, any positive operator dominated by $\lambda \mathcal{S}'[A]$ has support contained in the one-dimensional support of $\mathcal{S}'[A]$. Hence there exists $c_B\geq 0$ such that
                \begin{equation}
                    \mathcal{S}'[B] = c_B \mathcal{S}'[A].
                \end{equation}
            \end{proof}
        \end{lemma}
            
       Now, to show that families of Gibbs-response free energies are incomplete, we focus on the case $\mathcal{H}_I \cong \mathcal{H}_O$. For the proof, we first note that there exists a unitary $\hat{U}$ such that
        \begin{equation}\label{eq:proofUnitTau1}
            \mathcal{U}[\tau_I] := \hat{U}\tau_I \hat{U}^\dagger \neq \tau_O.
        \end{equation}
        On the other hand, denoting $\eta := \hat{U}\tau_I \hat{U}^\dagger$, consider a trace-and-replace channel
        \begin{equation}
            \mathcal{T}_\eta[\rho] := \operatorname{Tr}[\rho]\eta.
        \end{equation}
        These channels have the same Gibbs response:
        \begin{equation}
            \mathcal{U}[\tau_I] = \eta = \mathcal{T}_\eta[\tau_I].
        \end{equation}
        Therefore, by Lem.~\ref{lem:FstDependsOnlyOnGibbsResponse}, for every $i\in I$,
        \begin{equation}
            \mathtt{F}_i^{\mathrm{st}}[\mathcal{U}] = \mathtt{F}_i^{\mathrm{st}}[\mathcal{T}_\eta].
        \end{equation}
        Now, suppose that there exists a free transformation $\mathcal{S}\in\mathfrak{F}$ such that
        \begin{equation}
            \mathcal{S}[\mathcal{T}_\eta] = \mathcal{U}.
        \end{equation}
        Let $\mathcal{S}'$ be the linear map induced by $\mathcal{S}$ on Choi operators. Since $\mathcal{S}$ is CP-preserving, $\mathcal{S}'$ is completely positive.
        
        Since $\mathcal{T}_{\eta} \in \mathsf{CP}(I,O)$ is completely positive, its Choi operator $T_{\eta, IO}$ satisfies
        \begin{equation}
            T_{\eta,IO} \geq 0
        \end{equation}
        by \eqref{eq:condCP}. Moreover, the Choi operator of the trace-and-replace channel $\mathcal{T}_\eta$ is
        \begin{equation}
            T_{\eta,IO} = \ident_I \otimes \eta_O.
        \end{equation}
        Since $\eta_O = \hat{U}\tau_I\hat{U}^\dagger$ and $\tau_I$ is full rank, the state $\eta_O$ is full rank. Hence
        \begin{equation}
            T_{\eta,IO}>0.
        \end{equation}
        On the other hand, $\mathcal{S}'[T_{\eta,IO}]$ is the Choi operator
        \begin{equation}
            U_{IO} = (\ident \otimes \hat{U})|\ident\rangle\!\rangle\langle\!\langle\ident| (\ident \otimes \hat{U}^\dagger)
        \end{equation}
        of $\mathcal{U}$, which has rank one.
        
        By Lem.~\ref{lem:rankoneChoi}, for every positive semidefinite Choi operator $B$ there exists $c_B\geq 0$ such that
        \begin{equation}
            \mathcal{S}'[B] = c_B \mathcal{S}'[T_{\eta,IO}].
        \end{equation}
        Since the Choi representation is one-to-one, this implies that for every $\mathcal{M} \in \mathsf{CPTP}(I,O)$ there exists $c_{\mathcal{M}}\geq 0$ such that
        \begin{equation}
            \mathcal{S}[\mathcal{M}] = c_{\mathcal{M}}\mathcal{U}.
        \end{equation}
        Now consider the completely thermalizing channel \eqref{eq:channelTherm}, which is both GP and GPTP. By the previous argument,
        \begin{equation}
            \mathcal{S}[\overline{\mathcal{G}}] = c_{\overline{\mathcal{G}}}\mathcal{U}.
        \end{equation}

        If $\mathfrak{F} = \mathbf{pGPP}$, then $\mathcal{S}[\overline{\mathcal{G}}]$ must be GP. Hence
        \begin{equation}
            c_{\overline{\mathcal{G}}}\mathcal{U}[\tau_I] = c_{\overline{\mathcal{G}}}\eta = \tau_O.
        \end{equation}
        Taking traces gives $c_{\overline{\mathcal{G}}} = 1$, and therefore
        \begin{equation}
            \eta = \tau_O,
        \end{equation}
        contradicting \eqref{eq:proofUnitTau1}.

        If $\mathfrak{F} = \mathbf{pGPTPP}$, then $\mathcal{S}[\overline{\mathcal{G}}]$ must be GPTP. In particular, it must be trace-preserving. Since a scalar multiple of a unitary channel is trace-preserving only if
        \begin{equation}
            c_{\overline{\mathcal{G}}} = 1,
        \end{equation}
        we obtain
        \begin{equation}
            \mathcal{S}[\overline{\mathcal{G}}] = \mathcal{U}.
        \end{equation}
        But $\mathcal{U} \notin \mathsf{GP}(I,O)$, since
        \begin{equation}
            \mathcal{U}[\tau_I] = \eta \neq \tau_O.
        \end{equation}
        Hence $\mathcal{U}$ is not GPTP either, again contradicting the defining property of $\mathfrak{F}$.

        Thus, in both cases, no $\mathcal{S} \in \mathfrak{F}$ can satisfy
        \begin{equation}
            \mathcal{S}[\mathcal{T}_\eta] = \mathcal{U}.
        \end{equation}
        Therefore,
        \begin{equation}
            \mathcal{T}_\eta \not\succeq_{\mathfrak{F}} \mathcal{U}.
        \end{equation}
        Since the family assigns the same values to $\mathcal{T}_\eta$ and $\mathcal{U}$ for all $i\in I$, it cannot be complete.

\subsection{Proof of Prop.~\ref{theo:nogoFchFam}: No-go theorem for \texorpdfstring{$\mathtt{F}_{\rm ch}^{\mathfrak{F}}$}{FchF}- and \texorpdfstring{$\mathtt{F}_{\rm am}^{\mathfrak{F}}$}{FamF}-like monotone families}
\label{app:nogoFchFam}

Here, we show that families of channel free energies based on $\mathtt{F}_{\rm ch}^{\mathfrak{F}}$ and $\mathtt{F}_{\rm am}^{\mathfrak{F}}$ are similarly incomplete as those based on Gibbs-response free energies. Before proceeding, we state the following simple consequence of data processing.

    \begin{lemma}[Unitary invariance from data processing]
    \label{lem:unitaryInvFromDPI}
        Let $\mathfrak D$ be a state divergence satisfying the data-processing inequality. Then, for every unitary $\hat{V}$,
        \begin{equation}
            \mathfrak{D}(\hat{V} \rho \hat{V}^\dagger \| \hat{V} \sigma \hat{V}^\dagger) = \mathfrak{D}(\rho\|\sigma).
        \end{equation}
        \begin{proof}
            Applying data processing inequality \eqref{eq:stateDPI} to the unitary channel $\mathcal{V}[\cdot] = \hat{V}(\cdot)\hat{V}^\dagger$, we obtain
            \begin{equation}
                \mathfrak{D}(\hat{V} \rho \hat{V}^\dagger \| \hat{V} \sigma \hat{V}^\dagger) \leq \mathfrak{D}(\rho\|\sigma).
            \end{equation}
            Applying \eqref{eq:stateDPI} again to the inverse unitary channel $\mathcal{V}^{-1}[\cdot] = \hat{V}^\dagger(\cdot)\hat{V}$ gives
            \begin{equation}
                \mathfrak{D}(\rho\|\sigma) \leq \mathfrak{D} (\hat{V}\rho \hat{V}^\dagger\|\hat{V}\sigma \hat{V}^\dagger).
            \end{equation}
            Combining the two inequalities proves the claim.
        \end{proof}
\end{lemma}

Throughout, we consider the case $\mathcal H_I\simeq\mathcal H_O$ and follow a similar line of argument as in the previous Section. Since $\tau$ is not proportional to the identity, there exists a unitary channel $\mathcal{U}$ such that
\begin{equation}\label{eq:proofUnitTau}
    \mathcal{U}[\tau_I] = \hat{U}\tau_I\hat{U}^\dagger \neq \tau_O.
\end{equation}
Thus $\mathcal{U}\notin\mathsf{GP}(I,O)$ and therefore $\mathcal{U}\notin\mathsf{GPTP}(I,O)$.

By contrast, the identity channel $\mathcal{I}$ is both GP and GPTP. If $\mathfrak{F} = \mathbf{pGPP}$, then every element of
$\mathsf{R}_{\mathfrak{F}}(\mathcal{I})$ is GP. Since the set of GP channels is closed, being defined by finite-dimensional linear constraints,
\begin{equation}
    \overline{\mathsf{R}_{\mathfrak{F}}(\mathcal{I})} \subseteq \mathsf{GP}(I,O).
\end{equation}
Similarly, if $\mathfrak{F}=\mathbf{pGPTPP}$, then
\begin{equation}
    \overline{\mathsf{R}_{\mathfrak{F}}(\mathcal{I})} \subseteq \mathsf{GPTP}(I,O).
\end{equation}
Hence, in both cases,
\begin{equation}
    \mathcal{I}\not\succeq_{\mathfrak{F}}\mathcal{U}.
\end{equation}

We now show that the monotone families assign the same values to $\mathcal{I}$ and $\mathcal{U}$. First consider the $\mathtt{F}_{\rm ch}$-family. For every state $\rho_{RI}$, \begin{equation}
    (\mathcal{I}_R\otimes\overline{\mathcal{G}})[\rho_{RI}] = \rho_R\otimes\tau_O. 
\end{equation} 
Since $\mathcal{I}_R\otimes\mathcal{U}$ is a unitary channel from $\mathcal{L}(\mathcal{H}_R\otimes \mathcal{H}_I)$ to $\mathcal{L}(\mathcal{H}_R\otimes \mathcal{H}_O)$, the change of variables
\begin{equation}
    \omega_{RO} := (\mathcal{I}_R\otimes\mathcal{U})[\rho_{RI}] 
\end{equation}
is bijective on states and preserves the marginal on $\mathcal{L}(\mathcal{H}_R)$, i.e., $\tr_I(\rho_{RI}) = \tr_O(\omega_{RO})$. Hence
\begin{align} 
    \nonumber D_{i,\mathrm{ch}}(\mathcal{U}\|\overline{\mathcal{G}}) &= \sup_{\rho_{RI}} \mathfrak D_i\!\left( (\mathcal{I}_R\otimes\mathcal{U})[\rho_{RI}] \middle\| \rho_R\otimes\tau_O \right) \\ 
    &= \sup_{\omega_{RO}} \mathfrak D_i\!\left( \omega_{RO} \middle\| \omega_R\otimes\tau_O \right) \\ 
    &= D_{i,\mathrm{ch}}(\mathcal{I}\|\overline{\mathcal{G}}). 
\end{align}

Next, let $\mathcal{T}\in\mathsf{CPTP}(I,O)$ be arbitrary and define 
\begin{equation} 
    \omega_{RO} := (\mathcal{I}_R\otimes\mathcal{T})[\rho_{RI}]. 
\end{equation} 
Since $\mathcal{T}$ is trace-preserving, we have $\omega_R=\rho_R$. Therefore 
\begin{align} 
    \nonumber D_{i,\mathrm{ch}}(\mathcal{T}\|\overline{\mathcal{G}}) &= \sup_{\rho_{RI}} \mathfrak D_i\!\left( \omega_{RO} \middle\| \omega_R\otimes\tau_O \right) \\ 
    &\leq \sup_{\omega_{RO}} \mathfrak D_i\!\left( \omega_{RO} \middle\| \omega_R\otimes\tau_O \right) \\ 
    &= D_{i,\mathrm{ch}}(\mathcal{I}\|\overline{\mathcal{G}}).
\end{align}
Applying this to $\mathcal{T} = \mathcal{S}[\mathcal{I}]$ and $\mathcal{T} = \mathcal{S}[\mathcal{U}]$ for arbitrary $\mathcal{S}\in\mathfrak{F}$ gives 
\begin{align} 
    \mathtt{F}_{i,\mathrm{ch}}^{\mathfrak{F}}[\mathcal{I}] &\leq \beta^{-1} D_{i,\mathrm{ch}}(\mathcal{I}\|\overline{\mathcal{G}}), \\ 
    \mathtt{F}_{i,\mathrm{ch}}^{\mathfrak{F}}[\mathcal{U}] &\leq \beta^{-1} D_{i,\mathrm{ch}}(\mathcal{I}\|\overline{\mathcal{G}}).
\end{align} 
Since the identity supermap belongs to $\mathfrak{F}$ by Lem.~\ref{lem:composition}, we have
\begin{align}
    \mathtt{F}_{i,\mathrm{ch}}^{\mathfrak{F}}[\mathcal{I}] &\geq \beta^{-1} D_{i,\mathrm{ch}}(\mathcal{I}\|\overline{\mathcal{G}}),\\
    \mathtt{F}_{i,\mathrm{ch}}^{\mathfrak{F}}[\mathcal{U}] &\geq \beta^{-1} D_{i,\mathrm{ch}}(\mathcal{U}\|\overline{\mathcal{G}}).
\end{align}
Together with the upper bounds above and the equality
\begin{equation}
    D_{i,\mathrm{ch}}(\mathcal{U}\|\overline{\mathcal{G}}) = D_{i,\mathrm{ch}}(\mathcal{I}\|\overline{\mathcal{G}}),
\end{equation}
this gives
\begin{equation}
    \mathtt{F}_{i,\mathrm{ch}}^{\mathfrak{F}}[\mathcal{I}] = \mathtt{F}_{i,\mathrm{ch}}^{\mathfrak{F}}[\mathcal{U}].
\end{equation}
for all $i\in I$.

We now consider the $\mathtt{F}_{\rm am}^\mathfrak{F}$-family. For arbitrary $\mathcal{T}\in\mathsf{CPTP}(I,O)$ and arbitrary states $\rho_{RI},\sigma_{RI}$, define 
\begin{align} 
    \omega_{RO} &:= (\mathcal{I}_R\otimes\mathcal{T})[\rho_{RI}],\\ 
    \zeta_{RO} &:= (\mathcal{I}_R\otimes\mathcal{T})[\sigma_{RI}]. 
\end{align} 
Then $\omega_R=\rho_R$ and $\zeta_R=\sigma_R$. By data processing inequality \eqref{eq:stateDPI}, 
\begin{equation}
    \mathfrak D_i(\omega_{RO}\|\zeta_{RO}) \leq \mathfrak D_i(\rho_{RI}\|\sigma_{RI}). 
\end{equation} 
Therefore,
\begin{align} 
    \nonumber &\mathfrak D_i\!\left( (\mathcal{I}_R\otimes\mathcal{T})[\rho_{RI}] \middle\| (\mathcal{I}_R\otimes\overline{\mathcal{G}})[\sigma_{RI}] \right) - \mathfrak D_i(\rho_{RI}\|\sigma_{RI})\\ 
    \nonumber &\leq \mathfrak D_i\!\left( \omega_{RO} \middle\| \zeta_R\otimes\tau_O \right) - \mathfrak D_i(\omega_{RO}\|\zeta_{RO})\\ 
    &\leq D_{i,\mathrm{am}}(\mathcal{I}\|\overline{\mathcal{G}}). 
\end{align} 
Taking the supremum over $\rho_{RI},\sigma_{RI}$ gives
\begin{equation} 
    D_{i,\mathrm{am}}(\mathcal{T}\|\overline{\mathcal{G}}) \leq D_{i,\mathrm{am}}(\mathcal{I}\|\overline{\mathcal{G}}). 
\end{equation}

It remains to compare $\mathcal{U}$ and $\mathcal{I}$. For the unitary channel, define 
\begin{align} 
    \omega_{RO} &:= (\mathcal{I}_R\otimes\mathcal{U})[\rho_{RI}],\\ 
    \zeta_{RO} &:= (\mathcal{I}_R\otimes\mathcal{U})[\sigma_{RI}]. 
\end{align}
This change of variables is bijective on pairs of states and preserves the $R$-marginals. Moreover, by Lem.~\ref{lem:unitaryInvFromDPI}, 
\begin{equation} 
    \mathfrak D_i(\rho_{RI}\|\sigma_{RI}) = \mathfrak D_i(\omega_{RO}\|\zeta_{RO}). 
\end{equation} 
Thus
\begin{align} 
    \nonumber D_{i,\mathrm{am}}(\mathcal{U}\|\overline{\mathcal{G}}) &= \sup_{\omega_{RO},\zeta_{RO}} \Big[ \mathfrak D_i( \omega_{RO} \| \zeta_R\otimes\tau_O ) - \mathfrak D_i(\omega_{RO}\|\zeta_{RO}) \Big]\\ 
    &= D_{i,\mathrm{am}}(\mathcal{I}\|\overline{\mathcal{G}}).
\end{align} 
Since the identity supermap belongs to $\mathfrak{F}$, by Lem.~\ref{lem:composition} the lower bounds
\begin{align}
    \mathtt{F}_{i,\mathrm{am}}^{\mathfrak{F}}[\mathcal{I}] &\geq \beta^{-1} D_{i,\mathrm{am}}(\mathcal{I}\|\overline{\mathcal{G}}),\\
    \mathtt{F}_{i,\mathrm{am}}^{\mathfrak{F}}[\mathcal{U}] &\geq \beta^{-1} D_{i,\mathrm{am}}(\mathcal{U}\|\overline{\mathcal{G}})
\end{align}
hold. Combining them with the universal upper bound and with
\begin{equation}
    D_{i,\mathrm{am}}(\mathcal{U}\|\overline{\mathcal{G}}) = D_{i,\mathrm{am}}(\mathcal{I}\|\overline{\mathcal{G}})
\end{equation}
yields
\begin{equation}
    \mathtt{F}_{i,\mathrm{am}}^{\mathfrak{F}}[\mathcal{I}] = \mathtt{F}_{i,\mathrm{am}}^{\mathfrak{F}}[\mathcal{U}].
\end{equation}
for all $i\in I$.

Hence both families assign the same values to $\mathcal{I}$ and $\mathcal{U}$, while
\begin{equation}
    \mathcal{I}\not\succeq_{\mathfrak{F}}\mathcal{U}.
\end{equation}
Therefore neither family is complete.

\subsection{Admissibility of \texorpdfstring{$D_{\rm ch}$}{Dch} and \texorpdfstring{$D_{\rm am}$}{Dam}}
\label{app:admissibilityDchDam}

We prove that the channel relative entropy $D_{\rm ch}$ and the amortized channel relative entropy $D_{\rm am}$ are admissible in the sense of Def.~\ref{def:admDiv}. Both quantities are channel divergences, i.e., they satisfy generalized data processing inequality 
\begin{align}
\notag D_{\rm ch}(\mathcal{S}[\mathcal{T}]\| \mathcal{S}[\mathcal{T}']) &\leq D_{\rm ch}(\mathcal{T}\| \mathcal{T}')\\
\text{and} \quad D_{\rm am}(\mathcal{S}[\mathcal{T}]\| \mathcal{S}[\mathcal{T}']) &\leq D_{\rm am}(\mathcal{T}\| \mathcal{T}')
\end{align}
under general superchannels $\mathcal{S}$. Thus it remains to verify nonnegativity, faithfulness, and topological detectability.

We use the following standard facts about the quantum relative entropy. For states $\rho,\sigma$,
\begin{equation}
    D(\rho\|\sigma) \geq 0,
\end{equation}
with equality if and only if $\rho = \sigma$. Moreover, Pinsker's inequality implies that
\begin{equation}
    D(\rho_n\|\sigma)\to 0
    \quad\Longrightarrow\quad
    \rho_n\to\sigma
\end{equation}
in trace norm.

\begin{prop}[Admissibility of $D_{\rm ch}$]
\label{prop:DchAdmissible}
    The channel relative entropy
    \begin{equation}
        D_{\rm ch}(\mathcal{T}\|\mathcal{T}') = \sup_{\rho_{RI}} D\!\left((\mathcal{I}_R\otimes\mathcal{T})[\rho_{RI}] \middle\| (\mathcal{I}_R\otimes\mathcal{T}')[\rho_{RI}] \right)
    \end{equation}
    is admissible.

    \begin{proof}
        Nonnegativity follows immediately from nonnegativity of the state relative entropy:
        \begin{equation}
            D_{\rm ch}(\mathcal{T}\|\mathcal{T}')\geq 0.
        \end{equation}

        We next prove faithfulness. If $\mathcal{T} = \mathcal{T}'$, then clearly $D_{\rm ch}(\mathcal{T}\|\mathcal{T}') = 0$. Conversely, suppose
        \begin{equation}
            D_{\rm ch}(\mathcal{T}\|\mathcal{T}') = 0.
        \end{equation}
        Since every term in the supremum is nonnegative, for every input state $\rho_{RI}$ we must have
        \begin{equation}
            D\!\left((\mathcal{I}_R\otimes\mathcal{T})[\rho_{RI}] \middle\| (\mathcal{I}_R\otimes\mathcal{T}')[\rho_{RI}]\right) = 0.
        \end{equation}
        By faithfulness of the state relative entropy,
        \begin{equation}
            (\mathcal{I}_R\otimes\mathcal{T})[\rho_{RI}] = (\mathcal{I}_R\otimes\mathcal{T}')[\rho_{RI}]
        \end{equation}
        for every $\rho_{RI}$. Hence the two extended channels coincide on all states, and therefore
        \begin{equation}
            \mathcal{T} = \mathcal{T}'.
        \end{equation}

        It remains to prove topological detectability. Suppose
        \begin{equation}
            D_{\rm ch}(\mathcal{T}_n\|\mathcal{T}')\to 0.
        \end{equation}
        Let $\mathcal{H}_R \simeq \mathcal{H}_I$ and let $\Phi_{RI} = \frac{1}{d_I}|\ident\rangle\!\rangle\langle\!\langle\ident|_{RI}$ be a normalized maximally entangled state. Since the supremum defining $D_{\rm ch}$ is bounded below by its value on this particular input,
        \begin{equation}
            D\!\left((\mathcal{I}_R\otimes\mathcal{T}_n)[\Phi_{RI}] \middle\| (\mathcal{I}_R\otimes\mathcal{T}')[\Phi_{RI}]\right) \leq D_{\rm ch}(\mathcal{T}_n\|\mathcal{T}').
        \end{equation}
        Hence
        \begin{equation}
            D\!\left((\mathcal{I}_R\otimes\mathcal{T}_n)[\Phi_{RI}] \middle\| (\mathcal{I}_R\otimes\mathcal{T}')[\Phi_{RI}] \right) \to 0.
        \end{equation}
        By Pinsker's inequality,
        \begin{equation}
            \left\| (\mathcal{I}_R\otimes\mathcal{T}_n)[\Phi_{RI}] - (\mathcal{I}_R\otimes\mathcal{T}')[\Phi_{RI}] \right\|_1 \to 0.
        \end{equation}
        The states $(\mathcal{I}_R\otimes\mathcal{T}_n)[\Phi_{RI}]$ and $(\mathcal{I}_R\otimes\mathcal{T}')[\Phi_{RI}]$ are the normalized Choi operators of $\mathcal{T}_n$ and $\mathcal{T}'$, namely, $\frac{1}{d_I}T_{RO}$ and $\frac{1}{d_I}T'_{RO}$. Therefore the corresponding Choi operators converge in trace norm. Since the space of linear maps is finite-dimensional, all norms on it are equivalent. Consequently,
        \begin{equation}
            \mathcal{T}_n \to \mathcal{T}'
        \end{equation}
        in any equivalent channel topology, e.g., the Choi norm or the diamond norm. This proves topological detectability, and hence $D_{\rm ch}$ is admissible.
    \end{proof}
\end{prop}

\begin{prop}[Admissibility of $D_{\rm am}$]
\label{prop:DamAdmissible}
    The amortized channel relative entropy
    \begin{align}
        \nonumber D_{\rm am}(\mathcal{T}\|\mathcal{T}') &:= \sup_{\rho_{RI},\sigma_{RI}} \Bigl[ D\!\left((\mathcal{I}_R\otimes\mathcal{T})[\rho_{RI}] \middle\| (\mathcal{I}_R\otimes\mathcal{T}')[\sigma_{RI}] \right) \\
        &\qquad - D(\rho_{RI}\|\sigma_{RI}) \Bigr]
    \end{align}
    is admissible.

    \begin{proof}
        Nonnegativity follows immediately from nonnegativity of $D_{\rm ch}$, which is dominated by $D_{\rm am}$ by \eqref{eq:domChAm},
        \begin{equation}
            D_{\rm am}(\mathcal{T}\|\mathcal{T}') \geq D_{\rm ch}(\mathcal{T}\|\mathcal{T}') \geq 0.
        \end{equation}

        We now prove faithfulness. If $\mathcal{T} = \mathcal{T}'$, then by data processing inequality \eqref{eq:DPI},
        \begin{equation}
            D\!\left((\mathcal{I}_R \otimes \mathcal{T})[\rho_{RI}] \middle\| (\mathcal{I}_R \otimes \mathcal{T})[\sigma_{RI}] \right) - D(\rho_{RI}\|\sigma_{RI}) \leq 0,
        \end{equation}
        for all $\rho_{RI},\sigma_{RI}$. Taking $\rho_{RI} = \sigma_{RI}$ gives value $0$, and hence $D_{\rm am}(\mathcal{T}\|\mathcal{T}) = 0$. Conversely, if
        \begin{equation}
            D_{\rm am}(\mathcal{T}\|\mathcal{T}') = 0,
        \end{equation}
        then
        \begin{equation}
            0 \leq D_{\rm ch}(\mathcal{T}\|\mathcal{T}') \leq D_{\rm am}(\mathcal{T}\|\mathcal{T}') = 0.
        \end{equation}
        Thus
        \begin{equation}
            D_{\rm ch}(\mathcal{T}\|\mathcal{T}') = 0.
        \end{equation}
        By Proposition~\ref{prop:DchAdmissible},
        \begin{equation}
            \mathcal{T}=\mathcal{T}'.
        \end{equation}
        Finally, suppose
        \begin{equation}
            D_{\rm am}(\mathcal{T}_n\|\mathcal{T}')\to 0.
        \end{equation}
        Using
        \begin{equation}
            0 \leq D_{\rm ch}(\mathcal{T}_n\|\mathcal{T}') \leq D_{\rm am}(\mathcal{T}_n\|\mathcal{T}'),
        \end{equation}
        we obtain
        \begin{equation}
            D_{\rm ch}(\mathcal{T}_n\|\mathcal{T}')\to 0.
        \end{equation}
        By Proposition~\ref{prop:DchAdmissible}, this implies
        \begin{equation}
            \mathcal{T}_n\to\mathcal{T}'
        \end{equation}
        in any equivalent finite-dimensional topology on channels. Therefore $D_{\rm am}$ is admissible.
    \end{proof}
\end{prop}

\subsection{Proof of Prop.~\ref{theo:completeFamily}: Completeness of the \texorpdfstring{$\Delta_{\mathbb{D}, \mathcal{R}}$}{DeltaD,R}-family of monotones}
\label{app:sub:theoCompl}
        Assume first that $\mathcal{T} \succeq_{\mathfrak{F}} \mathcal{T}'$. By Corollary \ref{lem:inclusionR},
        \begin{equation}
            \overline{\mathsf{R}_{\mathfrak{F}}(\mathcal{T})} \supseteq \overline{\mathsf{R}_{\mathfrak{F}}(\mathcal{T}')}
        \end{equation}
        Therefore, for every $\mathcal{R}\in\mathsf{CPTP}(I,O)$, taking the infimum of $\mathbb{D}(\bullet\|\mathcal{R})$ over the larger set can only decrease the value:
        \begin{align}
            \nonumber \Delta_{\mathbb{D},\mathcal{R}}(\mathcal{T}) &= \inf_{\mathcal{X}\in\overline{\mathsf{R}_{\mathfrak{F}}(\mathcal{T})}} \mathbb{D}(\mathcal{X}\|\mathcal{R}) \\
            &\leq \inf_{\mathcal{X}\in\overline{\mathsf{R}_{\mathfrak{F}}(\mathcal{T}')}} \mathbb{D}(\mathcal{X}\|\mathcal{R}) \\
            &= \Delta_{\mathbb{D},\mathcal{R}}(\mathcal{T}').
        \end{align}
        This proves \eqref{eq:complMonotone}. Conversely, suppose that 
        \begin{equation}
            \Delta_{\mathbb{D},\mathcal{R}}(\mathcal{T}) \leq \Delta_{\mathbb{D},\mathcal{R}}(\mathcal{T}')
        \end{equation}
        for every $\mathcal{R}\in\mathsf{CPTP}(I,O)$. Choosing $\mathcal{R} = \mathcal{T}'$, we obtain
        \begin{equation}
            \Delta_{\mathbb{D},\mathcal{T}'}(\mathcal{T}) \leq \Delta_{\mathbb{D},\mathcal{T}'}(\mathcal{T}').
        \end{equation}
        
        Since $\mathcal{T}' \in \overline{\mathsf{R}_{\mathfrak{F}}(\mathcal{T}')}$, it follows from faithfulness and nonnegativity that
        \begin{equation}
            \Delta_{\mathbb{D},\mathcal{T}'}(\mathcal{T}') = 0.
        \end{equation}
        Hence
        \begin{equation}\label{eq:deltaZero}
            \Delta_{\mathbb{D},\mathcal{T}'}(\mathcal{T}) = 0.
        \end{equation}
By definition of the infimum, there exists a sequence $\{ \mathcal{X}_k \}_{k\in\mathbb{N}} \subseteq \overline{\mathsf{R}_{\mathfrak{F}}(\mathcal{T})}$ such that
        \begin{equation}
            \mathbb{D}(\mathcal{X}_k\|\mathcal{T}') \to 0.
        \end{equation}
        Since $\mathbb{D}$ is admissible, topological detectability implies
        \begin{equation}
            \mathcal{X}_k \to \mathcal{T}'.
        \end{equation}
        Because $\overline{\mathsf{R}_{\mathfrak{F}}(\mathcal{T})}$ is closed, we conclude that
        \begin{equation}
            \mathcal{T}' \in \overline{\mathsf{R}_{\mathfrak{F}}(\mathcal{T})}.
        \end{equation}
        Therefore
        \begin{equation}
            \mathcal{T} \succeq_{\mathfrak{F}} \mathcal{T}'.
        \end{equation}

\subsection{Countable complete subfamilies for \texorpdfstring{$D_{\rm ch}$}{Dch}}
\label{app:countableComplete}
We prove that the complete family
\begin{equation}
    \{\Delta_{D_{\rm ch},\mathcal{R}}\}_{\mathcal{R}\in\mathsf{CPTP}(I,O)}
\end{equation}
admits a countable complete subfamily. We begin with a continuity property of the channel relative entropy under full-rank regularization.

\begin{lemma}[Continuity after full-rank regularization]
\label{lem:fullRankContinuity}
    Let $\eta_O > 0$ be a full-rank state and let $\mathcal{T}_\eta$ be the trace-and-replace channel
    \begin{equation}
        \mathcal{T}_\eta[\rho_I] = \operatorname{Tr}[\rho_I]\eta_O .
    \end{equation}
    Fix $\varepsilon > 0$ and define
    \begin{equation}
        \mathcal{R}_\varepsilon := (1-\varepsilon)\mathcal{R} + \varepsilon\mathcal{T}_\eta .
    \end{equation}
    Then, for fixed $\mathcal{T}'$, the map
    \begin{equation}
        \mathcal{R} \mapsto D_{\rm ch}(\mathcal{T}'\|\mathcal{R}_\varepsilon)
    \end{equation}
    is continuous on $\mathsf{CPTP}(I,O)$.

    \begin{proof}
        For every input state $\rho_{RI}$,
        \begin{align}
            \nonumber (\mathcal{I}_R\otimes\mathcal{R}_\varepsilon)[\rho_{RI}] &= (1-\varepsilon) (\mathcal{I}_R\otimes\mathcal{R})[\rho_{RI}] + \varepsilon\,\rho_R\otimes\eta_O \\
            &\geq \varepsilon\,\rho_R\otimes\eta_O .
        \end{align}
        Hence the second argument of the relative entropy has a uniformly full-rank component on the support compatible with the $R$-marginal. In finite dimensions this removes possible support singularities.

        Therefore the function
        \begin{equation}
            (\rho_{RI},\mathcal{R}) \mapsto D\!\left((\mathcal{I}_R\otimes\mathcal{T}')[\rho_{RI}] \middle\| (\mathcal{I}_R\otimes\mathcal{R}_\varepsilon)[\rho_{RI}] \right)
        \end{equation}
        is continuous. By the maximum theorem,
        \begin{equation}
            \mathcal{R} \mapsto \sup_{\rho_{RI}} D\!\left((\mathcal{I}_R\otimes\mathcal{T}')[\rho_{RI}] \middle\| (\mathcal{I}_R\otimes\mathcal{R}_\varepsilon)[\rho_{RI}]\right)
        \end{equation}
        is continuous. This is precisely $D_{\rm ch}(\mathcal{T}'\|\mathcal{R}_\varepsilon)$.
    \end{proof}
\end{lemma}

Since $\mathsf{CPTP}(I,O)$ is compact and separable in finite dimensions, let
\begin{equation}
    \{\mathcal Q_n\}_{n\in\mathbb{N}} \subset \mathsf{CPTP}(I,O)
\end{equation}
be a countable dense subset.

Fix a full-rank trace-and-replace channel $\mathcal{T}_\eta$ as in Lem.~\ref{lem:fullRankContinuity}. For $n\in\mathbb{N}$ and $k\geq2$, define
\begin{equation}
    \mathcal{R}_{n,k} := \left(1-\frac{1}{k}\right)\mathcal Q_n + \frac{1}{k}\mathcal{T}_\eta .
\end{equation}
The set $\{\mathcal{R}_{n,k}\}_{n\in\mathbb{N},\ k\geq 2}$ is countable. We claim that
\begin{equation}
    \{\Delta_{D_{\rm ch},\mathcal{R}_{n,k}}\}_{n\in\mathbb{N},\ k\geq 2}
\end{equation}
is complete. The necessary direction follows immediately from Thm.~\ref{theo:completeFamily}. Indeed, if
\begin{equation}
    \mathcal{T}\succeq_{\mathfrak{F}}\mathcal{T}',
\end{equation}
then
\begin{equation}
    \Delta_{D_{\rm ch},\mathcal{R}_{n,k}}(\mathcal{T}) \leq \Delta_{D_{\rm ch},\mathcal{R}_{n,k}}(\mathcal{T}')
\end{equation}
for all $n,k$. Conversely, assume that
\begin{equation}
    \Delta_{D_{\rm ch},\mathcal{R}_{n,k}}(\mathcal{T}) \leq \Delta_{D_{\rm ch},\mathcal{R}_{n,k}}(\mathcal{T}')
\end{equation}
for all $n,k$. We prove that then
\begin{equation}
    \mathcal{T}\succeq_{\mathfrak{F}}\mathcal{T}'
\end{equation}
holds, or, equivalently,
\begin{equation}
    \mathcal{T}' \in \overline{\mathsf{R}_{\mathfrak{F}}(\mathcal{T})}.
\end{equation}
For each $k\geq2$, define
\begin{equation}
    \mathcal{R}_k^\star := \left(1-\frac{1}{k}\right)\mathcal{T}' + \frac{1}{k}\mathcal{T}_\eta .
\end{equation}
Then
\begin{equation}
    \mathcal{R}_k^\star\to\mathcal{T}'.
\end{equation}
Moreover, for every input state $\rho_{RI}$,
\begin{equation}
    (\mathcal{I}_R\otimes\mathcal{R}_k^\star)[\rho_{RI}] \geq \left(1-\frac{1}{k}\right) (\mathcal{I}_R\otimes\mathcal{T}')[\rho_{RI}].
\end{equation}
Hence, by operator monotonicity of the logarithm,
\begin{equation}
    D_{\rm ch}(\mathcal{T}'\|\mathcal{R}_k^\star) \leq -\log\left(1-\frac{1}{k}\right) \longrightarrow 0.
\end{equation}
Fix $k\geq2$. By density of $\{\mathcal Q_n\}_{n\in\mathbb{N}}$, choose $n_k$ sufficiently large so that
\begin{equation}
    \|\mathcal Q_{n_k} - \mathcal{T}'\| < \frac{1}{k}.
\end{equation}
Moreover, by Lem.~\ref{lem:fullRankContinuity}, applied with
$\varepsilon=1/k$, we may choose $n_k$ such that also
\begin{equation}
    D_{\rm ch}(\mathcal{T}'\|\mathcal{R}_{n_k,k}) \leq D_{\rm ch}(\mathcal{T}'\|\mathcal{R}_k^\star) + \frac{1}{k} .
\end{equation}
Set
\begin{equation}
    \mathcal{R}_k := \mathcal{R}_{n_k,k}.
\end{equation}
Then
\begin{equation}
    \mathcal{R}_k\to\mathcal{T}'
\end{equation}
and
\begin{equation}
    D_{\rm ch}(\mathcal{T}'\|\mathcal{R}_k)\to0.
\end{equation}
Since
\begin{equation}
    \mathcal{T}' \in \overline{\mathsf{R}_{\mathfrak{F}}(\mathcal{T}')},
\end{equation}
we obtain
\begin{equation}
    \Delta_{D_{\rm ch},\mathcal{R}_k}(\mathcal{T}') \leq D_{\rm ch}(\mathcal{T}'\|\mathcal{R}_k) \longrightarrow 0.
\end{equation}
By assumption,
\begin{equation}
    \Delta_{D_{\rm ch},\mathcal{R}_k}(\mathcal{T}) \leq \Delta_{D_{\rm ch},\mathcal{R}_k}(\mathcal{T}'),
\end{equation}
and therefore
\begin{equation}
    \Delta_{D_{\rm ch},\mathcal{R}_k}(\mathcal{T}) \longrightarrow 0.
\end{equation}
By definition of the infimum, for each $k$ there exists
\begin{equation}
    \mathcal{X}_k \in \overline{\mathsf{R}_{\mathfrak{F}}(\mathcal{T})}
\end{equation}
such that
\begin{equation}
    D_{\rm ch}(\mathcal{X}_k\|\mathcal{R}_k) \leq \Delta_{D_{\rm ch},\mathcal{R}_k}(\mathcal{T}) + \frac{1}{k} .
\end{equation}
Hence
\begin{equation}
    D_{\rm ch}(\mathcal{X}_k\|\mathcal{R}_k) \to 0.
\end{equation}
Let $\Phi_{RI}$ be a normalized maximally entangled state. Evaluating the defining supremum of $D_{\rm ch}$ on $\Phi_{RI}$ and applying Pinsker's inequality yields
\begin{equation}
    \left\| (\mathcal{I}_R\otimes\mathcal{X}_k)[\Phi_{RI}] - (\mathcal{I}_R\otimes\mathcal{R}_k)[\Phi_{RI}] \right\|_1 \to 0.
\end{equation}
Since $\Phi_{RI}$ is normalized,
\begin{align}
    (\mathcal{I}_R\otimes\mathcal{X}_k)[\Phi_{RI}] &=
    \frac{1}{d_I} X_{k,RO}, \\
    (\mathcal{I}_R\otimes\mathcal{R}_k)[\Phi_{RI}] &=
    \frac{1}{d_I} R_{k,RO},
\end{align}
where $X_{k,RO}$ and $R_{k,RO}$ are Choi operators of $\mathcal{X}_k$ and $\mathcal{R}_k$, respectively. Hence,
\begin{equation}
    \|X_{k,RO} - R_{k,RO} \|_1 \to 0.
\end{equation}
Because
\begin{equation}
    \mathcal{R}_k\to\mathcal{T}',
\end{equation}
we also have
\begin{equation}
    \|R_{k,RO} - T'_{RO}\|_1 \to 0.
\end{equation}
Therefore,
\begin{equation}
    \|X_{k,RO} - T'_{RO} \|_1 \to 0,
\end{equation}
and hence
\begin{equation}
    \mathcal{X}_k\to\mathcal{T}'.
\end{equation}

Since
\begin{equation}
    \overline{\mathsf{R}_{\mathfrak{F}}(\mathcal{T})}
\end{equation}
is closed and each $\mathcal{X}_k$ belongs to it, the limit $\mathcal{T}'$ also belongs to it. Thus
\begin{equation}
    \mathcal{T}' \in \overline{\mathsf{R}_{\mathfrak{F}}(\mathcal{T})},
\end{equation}
i.e.
\begin{equation}
    \mathcal{T}\succeq_{\mathfrak{F}}\mathcal{T}'.
\end{equation}
Therefore the countable family $\{\Delta_{D_{\rm ch},\mathcal{R}_{n,k}}\}_{n\in\mathbb{N},\ k\geq 2}$ is complete.

\end{document}